\def\z{\mathbf{z}}
\def\w{\mathbf{w}}
\def\x{\mathbf{x}}
\def\E{\mathbb{E}}
\def\y{\mathbf{y}}
\def\v{\mathbf{v}}
\def\u{\mathbf{u}}
\def\g{\mathbf{g}}
\def\h{\mathbf{h}}
\def\a{\mathbf{a}}
\def\e{\mathbf{e}}
\def\T{\mathcal{T}}
\def\R{\mathcal{R}}
\def\P{\mathbb{P}}
\def\TT{{\mathcal T}} 
\def\C{{\mathbb C}}
\DeclareMathOperator*{\supp}{supp}
\newtheoremstyle{IEEEThmStyle}{}{}{}{\parindent}{\itshape}{:}{ }{}
\theoremstyle{IEEEThmStyle}
\newtheorem{thm}{Theorem}
\newtheorem{lem}{Lemma}
\newtheorem{prop}{Proposition}
\newtheorem{claim}{Claim}
\newcommand{\black}[1]{\textcolor{black}{#1}}
\definecolor{darkspringgreen}{rgb}{0.09, 0.45, 0.27}
\begin{document}
\title{Subspace Phase Retrieval}

\author{Mengchu~Xu,~
Dekuan~Dong~
and~Jian Wang,~\IEEEmembership{Member,~IEEE}    	
\thanks{
The authors are with the School of Data Science, Fudan University, Shanghai 200433, China. E-mail: \{mcxu21, dkdong21\}@m.fudan.edu.cn;  jian\_wang@fudan.edu.cn. Corresponding author: Jian Wang.} 
}

\maketitle

\maketitle

\begin{abstract}
In recent years, phase retrieval has received much attention in statistics, applied mathematics and optical engineering. In this paper, we propose an efficient algorithm, termed Subspace Phase Retrieval (SPR), which can accurately recover an $n$-dimensional $k$-sparse complex-valued signal $\x$ given its $\Omega(k^2\log n)$ magnitude-only Gaussian samples if the minimum nonzero entry of $\x$ satisfies $|x_{\min}| = \Omega(\|\x\|/\sqrt{k})$. Furthermore, if the energy sum of the most significant $\sqrt{k}$ elements in $\x$ is comparable to $\|\x\|^2$, the SPR algorithm can exactly recover $\x$ with $\Omega(k \log n)$ magnitude-only samples, which attains the information-theoretic sampling complexity for sparse phase retrieval. Numerical Experiments demonstrate that the proposed algorithm achieves the state-of-the-art reconstruction performance compared to existing ones.

\end{abstract}
 
\begin{IEEEkeywords}
Phase retrieval, information-theoretic bound, nonconvex optimization, sparsity, support index.
\end{IEEEkeywords}

\IEEEpeerreviewmaketitle

\section{Introduction}\label{sec:intro}

\IEEEPARstart{P}{hase} retrieval arises in dozens of optical imaging applications, such as X-ray imaging, crystallography, coherent diffraction imaging, and atmospheric imaging~\cite{XrayLit,OpitcalLit,Lit3,Lit4,Lit5}. The goal of phase retrieval is to reconstruct a signal $\x \in \mathbb{C}^n$ from its phaseless samples:
\begin{equation} \label{eq:yi}
y_i = |\langle \a_i, \x \rangle|,~i = 1,\cdots, m,
\end{equation}
where $\a_i \in \mathbb{C}^n$, $i = 1, \cdots, m$, are sampling vectors, such as the discrete Fourier transform basis. Put it in the matrix form, we can write

\begin{equation} \label{eq:yAxmatrixPR}
\mathbf y= | \mathbf A \x |.
\end{equation}
In general, the signal reconstruction process involves the following nonconvex  optimization problem~\cite{lossfun1}:
\begin{equation} \label{eq:optimization1}
\min_{\x \in \mathbb{C}^n}~\frac{1}{2m} \sum_{i=1}^{m} \left(y_{i}^2-|\a_i^*\x|^2\right)^{2}. 
\end{equation}

Over the years, much effort has been made to find out the optimal solution for this problem. Gerchberg and Saxton first proposed the GS algorithm~\cite{GS}, which alternates the projections between the image and Fourier domains for error reduction. Fienup suggested a refined version of GS called the hybrid-input output (HIO) algorithm~\cite{HIO}. Although these methods can reconstruct the original signal effectively, they suffer from many drawbacks, such as slow convergence, requiring strong priors (e.g., knowing the support of input signal in advance), and lack of theoretical guarantee~\cite{criticalreviewForHIO}.

Convex optimization-based approaches for phase retrieval usually enjoy rigorous performance guarantees~\cite{PhaseliftOnlogn,PhaseCut,PhaseMax,PhaseEqual}.  PhaseLift rewrites the phaseless samples in~\eqref{eq:yi} in a linear form (i.e., 
$y_{i}^2 = \mathbf{x}^{*} \mathbf{a}_{i} \mathbf{a}_{i}^{*} \mathbf{x} = \operatorname{Tr}\left(\mathbf{x}^{*} \mathbf{a}_{i} \mathbf{a}_{i}^{*} \mathbf{x}\right) = \operatorname{Tr}\left(\mathbf{a}_{i} \mathbf{a}_{i}^{*} \mathbf{x}\mathbf{x}^*\right)$),
thus relaxing~\eqref{eq:optimization1} to a convex optimization problem~\cite{PhaseliftOnlogn}: 
\begin{eqnarray}
& \underset{\mathbf{X} \in \mathbb{C}^{n \times n}}{\min} & \operatorname{Tr}(\mathbf X) \nonumber \\
& \text { s.t. }    & y_{i}^2 = \operatorname{Tr}\left(\mathbf{a}_{i} \mathbf{a}_{i}^{*} \mathbf{X}\right),~i=1, \cdots, m, \\
&                   &\mathbf{X} \succeq 0. \nonumber
\end{eqnarray}
where $\mathbf{X} \doteq \x\x^* \in \mathbb{C}^{n \times n}$ and $\operatorname{Tr}(\cdot)$ denotes the trace of a matrix. It is shown that when $\a_i$'s are random Gaussian vectors, the PhaseLift algorithm perfectly recovers the original signal (up to global phase) provided that~\cite{PhaseliftOn}
\begin{equation} \label{eq:On}
m = \Omega (n).\footnotemark
\end{equation}\footnotetext{\label{ftnt:asynotation}The notation $\phi(n) =  \Omega(g(n))$ means that there exists a constant $c > 0$ such that $\phi(n)\geq c g(n)$ when $n$ goes to infinity. In the latter context, we use the notation $\phi(n) = \mathcal O(g(n))$ to represent that there exists a constant $c > 0$ such that $\phi(n)\leq c g(n)$ when $n$ goes to infinity, and use the notation $\phi(n) =  \Theta (g(n))$ to represent that $\phi(n)$ is of the same order as $g(n)$, i.e., there exist constants $c_1, c_2 > 0$ such that $c_1 g(n) \leq \phi(n)\leq c_2 g(n)$ when $n$ goes to infinity.}However, this algorithm relies on a semi-definite programming (SDP) approach, which could be computationally burdensome for large-scale applications. 

In order to improve the computational efficiency, gradient descent-based methods have been suggested. A well-known representative is Wirtinger flow (WF)~\cite{WF}. It has been shown that the WF algorithm with an elaborated initialization can accurately recover the target signal when the number of samples satisfies
\begin{equation} \label{eq:Onlogn}
m = \Omega (n \log n).
\end{equation}
There have also been some powerful variants of WF, such as the reshaped WF (RWF)~\cite{RWF} and truncated WF (TWF)~\cite{TrWF}. Compared to those in the convex optimization family, the gradient descent-based methods have demonstrated to be more efficient both in theory and practice. 

Recently, there has been much evidence~\cite{XuZQ} that exploiting the sparse prior of input signals can facilitate phase retrieval. In~\cite{infobound1, infobound2}, the authors show that it suffices to recover an $n$-dimensional $k$-sparse signal with the information-theoretic sampling complexity\footnote{This information-theoretic result is obtained only for the real case. The more difficult complex case should require at least this sampling complexity.}
	\begin{equation} \label{eq:klognForPR}
		m = \Omega \left (k \log n \right ).
	\end{equation}
	However, the sampling complexity required by practical algorithms, such as sparse truncated amplitude flow (SPARTA)~\cite{SPARTA}, and sparse WF (SWF)~\cite{SWF}, tends to exceed this  bound. 
	Indeed, it has been shown that both SPARTA and SWF allow the exact recovery of a $k$-sparse signal from 
	\begin{equation} \label{eq:k2logn}
		m = \Omega (k^2 \log n)
	\end{equation}
	phaseless Gaussian samples when the minimum nonzero entry of the target signal $\x$, denoted by $x_{\min}$, obeys 
	\begin{equation} \label{eq:xmin}
		{|x_{\min}|} = \Omega \left(\frac{\|\x\|}{\sqrt{k}}\right).
	\end{equation}

	The gap of sampling complexity between~\eqref{eq:klognForPR} and~\eqref{eq:k2logn} is known as the computational-to-statistical gap. It is also present in other well-known non-convex optimization approaches for sparse phase retrieval. See, for instance, Compressive Phase Retrieval with Alternating Minimization (CoPRAM)~\cite{CoPRAM}, Hard Thresholding Pursuit (HTP)~\cite{HTP}, and Stochastic Alternating Minimization (SAM)~\cite{SAM}. Those algorithms usually have a two-stage structure: i) initialization and ii) non-initial iteration. For the non-initial stage, their desired sampling complexity is approximately
	\begin{equation} 
		m = \Omega \left (k \log n \right ),
	\end{equation}
	which coincides exactly with the information-theoretic bound in~\cite{infobound1, infobound2}. Therefore, the computational-to-statistical gap is believed to arise due to the initialization stage.

	More recently, Wu and Rebeschini~\cite{HWF} put forward a new point of view on the spectral initialization used in SPARTA, which suggests a good initialization (accurate support recovery) when  
	\begin{equation}
		m = \Omega \left ( \max \left \{k\log n, \log^3 n \right \}\frac{\|\x\|^2}{|x_{\max}|^2} \right ), 
	\end{equation} 
	under the same assumption on $|x_{\min}|$ as~\eqref{eq:xmin}. 
 Then, Cai~{\it et~al.}~\cite{CJF} employed the truncated power method~\cite{TP1,SPCA2021} and derived the best initialization result to date.  
 Without relying on any assumption on $|x_{\min}|$, this method initializes an $\hat{\x}$ that falls into the $\delta$-neighborhood of $\x$ if 
	\begin{equation}\label{eq:xmaxcomplexiy}
		m = \Omega\left(\frac{\|\x\|^2}{|x_{\max}|^2 }k \log n\right).
	\end{equation}

	Clearly the result of~\eqref{eq:xmaxcomplexiy} 
	depends on the term ${\|\x\|}/{|x_{\max}|}$. For example, in the ideal case where 
	\begin{equation}
		\frac{\|\x\|}{|x_{\max}|} = \Theta (1)
	\end{equation}
	(i.e., when the maximum nonzero entry $|x_{\max}|$ dominates the entire signal power of $\x$), the sampling complexity in~\eqref{eq:xmaxcomplexiy} can be reduced to 
	\begin{equation} \label{eq:reducedcomplexity}
		m = \Omega(k\log n).
	\end{equation}
	However, in the typical case where 
	\begin{equation}
		{|x_j|} = \Theta \left ( \frac{\|\x\|}{\sqrt{k}}  \right ),~j \in \supp(\x)
	\end{equation}
	(i.e., when $\x$ is a ``flat'' signal), the sampling complexity is still $m = \Omega (k^2\log n)$.

In this paper, with an aim of enhancing the phase retrieval performance and meanwhile optimizing the theoretical guarantee, we propose a novel algorithm termed subspace phase retrieval (SPR). The principle behind the SPR algorithm is rather simple. Initializing with a correlation-promoting method, it maintains an estimated support (i.e., index set of the nonzero entries in the input signal $\x$) of size $k$, while refining this set iteratively until convergence.  

\begin{itemize}
\item 
In the initialization stage, SPR does not seek a good estimation that falls into the $\delta$-neighborhood of $\x$, or an accurate support recovery. Instead, it only requires to capture a subset of support indices with {\it sufficient energy}.  Define $\bar{s} := |\bar S|$ with $\bar S \subseteq \supp(\x)$ satisfying  
		\begin{equation} \label{eq:Sbar}
			\bar{S} = {\arg \min}_{S:  \|\x_S\|^2 \geq 0.999 \|\x\|^2}~|S|, 
		\end{equation}
		where $\x_S$ keeps the elements of $\x$ indexed by ${S}$ and sets others to zero. Then, our theoretical analysis shows that
		\begin{equation}
			m = \Omega  (\bar{s}^2 \log n) \label{eq:ourinitialization}
		\end{equation} 
		guarantees the initialization of SPR to catch a certain amount of support indices, whose corresponding nonzero entries occupy at least $90\%$ energy of $\x$ ({\bf Theorem~\ref{thm:1}}). 
		Our initialization has the following advantages: 
		
		\begin{itemize} 
			\item On the one hand, to achieve the information-theoretic sampling complexity $\Omega (k \log n)$, the value of $|x_{\max}|$ in~\eqref{eq:xmaxcomplexiy} must be comparable to $\|\x\|$. 
					Whereas in~\eqref{eq:ourinitialization},  $|x_{\max}|$ is not necessarily comparable to $\|\x\|^2$, but can be much smaller, e.g., 
					\begin{equation}\label{eq:forrebucomment11}
						{|x_{\max}|} = \Theta \left(\frac{\|\x\|}{\sqrt[4]{k}}\right),
					\end{equation}
					since it only requires $\bar{s} = \sqrt k$ nonzeros in $\x$ to have a squared energy comparable to $\|\x\|$. In this sense,~\eqref{eq:ourinitialization} may allow a wider range of $\x$ to attain this optimal sampling complexity.
			
			\vspace{1mm}
			
			\item On the other hand, when recovering a signal $\x$ with some specific structures,~\eqref{eq:ourinitialization} may indicate a lower sampling complexity than~\eqref{eq:xmaxcomplexiy}. 
					For example, if $\sqrt k$ most significant nonzero elements of $\x$ are all on the order of ${\|\x\|}/{\sqrt[4]{k}}$, then
					\begin{equation}\label{eq:forrebucomment12}
						\bar{s} = \sqrt{k}~~\text{and}~~|x_{\max}| = \Theta \left(\frac{\|\x\|}{\sqrt[4]{k}}\right).
					\end{equation}
					In this case, our result in~\eqref{eq:ourinitialization} can be reduced to $\Omega (k \log n)$, which outperforms the result 
					\begin{equation}
						m = \Omega  (k^{3/2}\log n)
					\end{equation}
					derived from~\eqref{eq:xmaxcomplexiy}. More detailed comparisons on the sampling complexity can be found in Section~\ref{sec:disTobars}. 
		\end{itemize}

\item In the non-initial stage, SPR iteratively refines the estimated support set of $\x$ with matching and pruning operations, while estimating the signal itself by minimizing a quartic objective function (i.e., a natural least-squares formulation for signal estimating; see~\eqref{eq:objfunc}). Interestingly, capturing sufficient energy in initialization directly leads to a benign geometric property of the target subspace for estimating the sparse signal ({\bf Proposition~\ref{prop:geoproOnsub}}). In particular, all local minima of our objective function are clustered around the expected global optimum with arbitrarily small distances when 
\begin{equation}
	m  =  \Omega  (k \log^3 k). \label{eq:ourgeomatric}
\end{equation}
~~The geometric structure allows the nonconvex signal estimation problem to be ``globally'' optimized by efficient iterative methods, which plays a vital role in analyzing the non-initial step of SPR. Notably, this property, in conjunction with the concentration of gradient ({\bf Proposition~\ref{lm:elemenEzxtbound}}) with an assumption on $|x_{\min}| = \Theta  ({\|\x\|}/{\sqrt{k}})$, ensures SPR to catch all the remaining support indices via the matching operation when 
\begin{equation}
	m = \Omega  \left(\max\left\{k\log n,\sqrt{k} \log^3 n\right\}\right). \label{eq:matchingcomplexity}
\end{equation}
As long as the true support is selected, the SPR algorithm exactly recovers the original signal with high probability. In summary, the sampling complexity for the non-initial stage of SPR approximately attains the information-theoretic bound in~\eqref{eq:klognForPR}.

\end{itemize}

It is worth noting that the non-initial step of SPR has a similar flavor to greedy sparse phase retrieval (GESPAR)~\cite{Gespar} in that they both refine an estimated support set iteratively through a replacement-type operation. Nevertheless, our greedy principle is sufficiently distinct. In contrast to the GESPAR algorithm that maintains an estimated support set by adding and removing one candidate at each iteration through a 2-opt approach, SPR allows multiple candidates to be updated at a time with a properly designed pruning strategy. Furthermore, we provide a sophisticated initialization as well as a recovery guarantee for SPR, for which there are no counterparts in the GESPAR study.

The rest of this paper is organized as follows. In Section~\ref{sec:SPRresult}, we introduce the sparse phase retrieval problem and the proposed SPR algorithm, and present our main results. In Section~\ref{sec:analysis}, we provide the geometric property for a subproblem of sparse phase retrieval and a probabilistic guarantee for the matching operation, and present a detailed proofs for exact recovery of sparse signals via SPR. In Section~\ref{sec:discussion}, we discuss our analysis and several issues. Numerical results are illustrated in Section~\ref{sec:simulation}. Finally, conclusion remarks are provided in Section~\ref{sec:conclusion}.
Table~\ref{tab:nota} summarizes some notations used throughout this paper. 

\begin{table}[t!]
\centering
\caption{\textsc{Summary of Notations}} 
\label{tab:nota}
\begin{tabular}{ll}
\toprule 
Notation             & Description \\
\midrule 
$|\cdot|$         & absolute value or modulus \\
$\mathsf{j}$    & imaginary unit, i.e., $\mathsf{j}^2=-1$\\
$\overline{\x}$  & conjugate of $\x\in \C^n$\\
$\x^*$            & conjugate transpose of $\x \in \mathbb {C}^n$ \\
$x_j$             & $j$th entry of $\x \in \mathbb{C}^n$ \\		
$\|\cdot\|$       & $\ell_2$-norm for a vector or spectral norm for a matrix\\
$\|\x\|_0$        & number of nonzero entries in $\x \in \mathbb{C}^n$ \\
$\text{supp}(\x)$ & support set (i.e., index set of non-zero indices) of $\x \in \mathbb{C}^n$ \\
$\x_D$           & keep the elements of $\x$ indexed by ${D}$ and set others to $0$. \\
$\hat{\x}$        & recovered signal \\
$a_{ij}$          & ($i,j$)th entry of $\mathbf A \in \mathbb{C}^{m \times n}$ \\	
$\mathbf A_D$    & keep all the rows and columns of $\mathbf A$ indexed by $D$ and \\ &  set  others to $0$ \\
$\mathcal{CN}(n)$& $n$-dimensional vector whose i.i.d. entries obey\\
& standard Complex distribution, i.e., $\mathcal{N}\left(0,\frac{1}{2}\right) + \mathsf{j}\mathcal{N}\left(0,\frac{1}{2}\right)$\\
$\nabla f$        & (Writinger) gradient\\
$\nabla^2 f$      & (Writinger) Hessian\\
$\mathbb C^D$ & subspace $\left\{\x \in \C^n | \operatorname{supp}(\x) \subseteq D \right\}$\\ 
$\mathcal C (\x, k)$ & index set of the $k$ largest entries of $\x$ in magnitude \\
$\Re[\x]$ & the real part of $\x \in \C^n$\\
$\Im[\x]$ & the imaginary part of $\x \in \C^n$\\
$\langle \x, \y\rangle$ & the inner product of $\x$ and $\y$, i.e., $\x^*\y$\\
$\mathbb{C}\mathbb{B}^k(r)$ & the sphere on the $k$-dimension space with radius $r$\\
\bottomrule 
\end{tabular}
\end{table}

\section{Phase Retrieval via Subspace Phase Retrieval}
\label{sec:SPRresult}

\subsection{Preliminaries}
Consider the sparse phase retrieval problem: 
\begin{eqnarray}
\label{prob:SPR}
& \text{Find}       & \x\in \C^n \nonumber \\ 
& \text{s.t.}	    & y_i = |\a_i^*\x|,~i = 1, 2, \cdots, m, \\
&                   & \|\x\|_0 \leq k, \nonumber
\end{eqnarray}  
which can be reformulated in a natural least-squares form~\cite{WF}:
\begin{eqnarray}
\label{prob:SPR1}
\underset{\x \in \C^n}{\min}~ \frac{1}{2m} \sum_{i=1}^{m} \left(y_i^2 - \left|\mathbf a_i^* \x \right|^2 \right)^2~~\text{s.t.}~~\|\x\|_0 \leq k. 
\end{eqnarray} 
Due to the nonconvex sparse constraint, however, it requires a combinatorial search over all possible cases of $\text{supp}(\x)$, and thus is NP hard. In this paper, we propose to solve $\x$ in a greedy fashion. Before we proceed to the details of our algorithm, we give some useful definitions. 

For simplicity, define the loss function of problem~\eqref{prob:SPR1} as:
\begin{equation} \label{eq:objfunc}
f:\C^n \mapsto \mathbb R,~f_m(\z; \mathbf A) = \frac{1}{2m} \sum_{i=1}^{m} (y_i^2 -  |\mathbf a_i^* \z |^2 )^2,
\end{equation}
where $\mathbf A$ represents the matrix composed of the measurement vectors, i.e., $\a_i$'s. We abbreviate $f_m(\z; \mathbf A)$ to $f(\z)$ when there is no ambiguity. Observe that $f(\x \mathrm{e}^{\mathsf{j}\phi}) = f(\x)$, $\forall \phi$. In other words, $\x \mathrm{e}^{\mathsf{j}\phi}$ and $\x$ are equivalent solutions to the problem~\eqref{prob:SPR1}. Thus, it is necessary to define the distance between two points in $\C^n$ under such equivalence class. A nature definition on the distance between the point $\z \in\C^n$ and $\x$ is given by
\begin{equation}
\operatorname{dist}(\mathbf{z}, \x )  \doteq
\underset{\phi \in[0,2 \pi)}{\min} \|\mathbf{z}-\mathbf{x} \mathrm{e}^{\mathsf{j}\phi}\|.
\end{equation} 

Furthermore, the derivative $\nabla f(\x)$ does not exist in general since it does not satisfy the Cauchy-Riemann equation. In other words, $f(\x)$ is not a holomorphic function. Thus, we introduce the Wirtinger derivative: 
\begin{eqnarray}
\frac{\partial f}{\partial \z} & \hspace{-2mm} \doteq & \hspace{-2mm}  \frac{\partial f(\z, \overline{\z})}{\partial \z}
= \left[\frac{\partial f(\z, \overline{\z})}{\partial z_{1}}, \cdots, \frac{\partial f(\z, \overline{{\z}})}{\partial z_{n}}\right]; \\
\frac{\partial f}{\partial \overline\z} & \hspace{-2mm}  \doteq & \hspace{-2mm}  \frac{\partial f(\z, \overline{\z})}{\partial \overline \z}
= \left[\frac{\partial f(\z, \overline{\z})}{\partial \overline z_{1}}, \cdots, \frac{\partial f(\z, \overline{{\z}})}{\partial \overline z_{n}}\right]. 
\end{eqnarray} 
The Wirtinger derivative treats $f$ as a binary function of $\z$ and $\overline \z$, and computes their derivatives, respectively; see~\cite{wirtingerref} for a comprehensive tutorial of Wirtinger Calculus. 

So far, the Wirtinger derivative has been widely adopted to analyze the phase retrieval problem where the input signal is complex~\cite{WF, sunju}. 
In terms of the Wirtinger derivative, the Wirtinger gradient and Hessian matrix can be given by
\begin{equation}\label{eq:wirtingerG}
\nabla f(\z)  =\left[ \frac{\partial f}{\partial \z}, \frac{\partial f}{\partial \overline{\z}} \right]^{*}  
\end{equation}
and 
\begin{eqnarray} \label{eq:wirtingerH}
\nabla^{2} f(\z) = 
\begin{bmatrix}
{ \frac{\partial}{\partial \z} \left( \frac{\partial f}{\partial \z} \right)^{*} } 
& 
{\frac{\partial}{\partial \overline{\z}}\left(\frac{\partial f}{\partial \z}\right)^{*}} \\ 
{\frac{\partial}{\partial \z}\left(\frac{\partial f}{\partial \overline{\z}}\right)^{*}} &
{\frac{\partial}{\partial \overline{\z}}\left(\frac{\partial f}{\partial \overline{\z}}\right)^{*}}
\end{bmatrix},
\end{eqnarray}
respectively. Hereafter, we write 
\begin{equation}
\nabla_1 f(\z)=  \left(\frac{\partial f}{\partial \z}\right)^*~\text{and}~~\nabla_2 f(\z)=  \left(\frac{\partial f}{\partial \overline{\z}}\right)^* 
\end{equation}
for notational convenience.  
The concrete forms of Wirtinger gradient and Hessian can be given by
\begin{eqnarray} \label{eq:10daoshu}
\nabla f(\z) &\hspace{-2mm}=&\hspace{-2mm} \frac1m \sum_{i=1}^m \begin{bmatrix}((|\a_i^* \z|^2 - y_i^2)\a_i\a_i^*)\z\\
((|\a_i^* \z|^2 - y_i^2)(\a_i\a_i^*)^\top)\overline{\z}\end{bmatrix} \\
\nabla^2 f(\z) &\hspace{-2mm}=&\hspace{-2mm}    \frac1m  \nonumber  \\
&\hspace{-2mm} &\hspace{-2mm} \times \sum_{i=1}^m \begin{bmatrix}(2|\a_i^* \z|^2 - y_i ^ 2)\a_i\a_i^* &\hspace{-2mm} (\a_i ^* \z)^2 \a_i\a_i^\top \\ (\z^*\a_i)^2 \overline{\a}_i \a_i^* &\hspace{-2mm} (2|\a_i^* \z|^2 - y_i^2) \overline{\a}_i\a_i^\top\end{bmatrix}\hspace{-1mm}. \nonumber \\ \label{eq:11hessian}
\end{eqnarray}

Particularly, for the Wirtinger gradient and Hessian matrix 
indexed on a set $D \subseteq \{1, \cdots, n\}$, we denote
\begin{equation}\label{eq:wirtingerGindexedT}
\nabla f(\z)_D  =\left[ \left(\frac{\partial f}{\partial \z}\right)_D, \left(\frac{\partial f}{\partial \overline{\z}}\right)_D \right]^{*},
\end{equation}
and
\begin{eqnarray} \label{eq:wirtingerHindexedT}
\nabla^{2} f(\z)_D = 
\begin{bmatrix}
{ \left(\frac{\partial}{\partial \z} \left( \frac{\partial f}{\partial \z} \right)^{*} \right)_D} 
& 
{\left(\frac{\partial}{\partial \overline{\z}}\left(\frac{\partial f}{\partial \z}\right)^{*}\right)_D} \\ 
{\left(\frac{\partial}{\partial \z}\left(\frac{\partial f}{\partial \overline{\z}}\right)^{*}\right)_D} &
{\left(\frac{\partial}{\partial \overline{\z}}\left(\frac{\partial f}{\partial \overline{\z}}\right)^{*}\right)_D}
\end{bmatrix}.
\end{eqnarray}

\subsection{The SPR algorithm}\label{SEC:IIB}

Now we are ready to describe the SPR algorithm, which consists of two major steps: i) initialization and ii) non-initial step.  {In the initialization step, a support estimate is generated by means of a correlation-promoting initialization; see Algorithm~\ref{alg:sp_in}. Specifically, 
it computes the correlations
\begin{equation}
    Z_j = \frac{1}{m}\sum_{i=1}^m y_i |a_{ij}|,~j = 1, \cdots, n
\end{equation}
and sorts them to obtain an index set ${S}^{0}$, whose elements correspond to the $k$ largest values of $Z_j$'s. Intuitively, this step can be viewed as finding out the strongest $k$ correlations between $\y$ and the columns of $|\mathbf A|$, where $|\mathbf A| \in \mathbb{R}^{m \times n}$ has each element being the modulus of that in $\mathbf A$. The correlation-promoting strategy for generating the initialized support is different from that of SPARTA~\cite{SPARTA}. }

In the non-initial step of SPR, three major operations are involved. 
\begin{enumerate}[i)]
\item The first operation is called ``matching'', in which the Wirtinger gradient $\nabla_1 f(\x^{t - 1})$ is computed. Then, indices corresponding to the largest $k$ elements (in magnitude) are chosen as the new elements of the estimated support set:
\begin{equation}
\hat {S}^{t} \leftarrow S^{t - 1} \cup \mathcal C(\nabla_1 f(  \x^{t - 1}), k),
\end{equation}
where $\mathcal C (\cdot, k)$ is a function that returns the index set of the largest $k$ entries (in magnitude) of a vector. 

\item In the second operation, a signal estimate whose support is $\hat S^t$ is obtained via the following optimization:
\begin{equation} \label{eq:subproblem1}
\mathbf x^{t}\gets   {\arg \min}_{\mathbf z : \operatorname{supp}(\mathbf z) = \hat{S}^{t}}  f(\mathbf z),
\end{equation}
This operation, termed ``estimation'', can be viewed as solving a subproblem of~\eqref{prob:SPR1}. The solution almost surely exists since $f(\z)$ is almost surely coercive\footnote{\label{ftnt:coercive}This is because the probability of $m$ $k$-dimensional independent Gaussian random vectors generating the $k$-dimensional space is $1$ when $m\geq k$. Thus, we have $\P (\lim_{\operatorname{supp}(\mathbf z) = \hat{S}^{t}, \|\z\|\to \infty} f(\z)=\infty )=1$, which means $f(\z)$ is almost surely coercive.}. 

~~We shall show from theoretical analysis that the geometric property of this subproblem is incredibly sound when the estimated support set $\hat{S}^{t}$ contains a certain amount of support indices of $\x$. As a result, a wide range of iterative algorithms (e.g., the perturbed gradient descent (PGD)~\cite{PGD} and Barzilai-Borwein method (BB)~\cite{BBalg}) can solve~\eqref{eq:subproblem1} efficiently.

\item The third operation, which is referred to as ``pruning'', narrows down the number of candidates in $\hat S^t$ to $k$ corresponding to the most significant $k$ entries in the signal estimate $\x^{t}$:
\begin{equation}
S^{t} \gets \mathcal C (\x^{t}, k). 
\end{equation}
The role of pruning is to prevent catching too many
indices. Specifically, if we keep performing the matching and projection operations, the size of the estimated support set will increase fast, which brings difficulties to the signal estimation. For example, when $|\hat{S}^{t}| > m$, it would by no means be possible to obtain an accurate estimation of signal even for the linear model $\y = \mathbf A \x$, as the least square projection (i.e., $(\mathbf A_{\hat{S}^{t}}^*\mathbf A_{\hat{S}^{t}})^{-1}\mathbf A_{\hat{S}^{t}}^*  \y$) does not apply in this under-determined case. 
\end{enumerate}

\begin{algorithm}[t!]
\caption{\textsc{Correlation-Promoting Initialization}}
\label{alg:sp_in}
\begin{algorithmic}[1]
\STATE {\textbf{Input}: sparsity $k$, $\mathbf y, \mathbf A=\{a_{ij}\}_{i=1,j=1}^{m\times n}$.} 
\STATE Compute  $Z_j = \frac{1}{m}\sum_{i=1}^m y_i |a_{ij}|$.
\STATE Sort $\{Z_j\}_{j=1}^n$ to generate an index set ${S}^{0}$ whose elements correspond to the $k$ largest values of $Z_j$'s.
\STATE \textbf{Output}: ${S}^{0}$
\end{algorithmic}
\end{algorithm} 

\begin{algorithm}[t!]
\caption{\textsc{{Subspace Phase Retrieval}} }
\label{alg:SPR}
\begin{algorithmic}[1]
\STATE {\textbf{Input}: sparsity $k$, samples $\mathbf y$, sampling matrix $\mathbf A$, tolerance $\delta$, maximal iteration number $t_{\max}$.} 
\STATE {\textbf{Initialize}: iteration count $t \gets 0$,\\
	${S}^{0}  \gets $ correlation-promoting initialization, \\  
	$\mathbf x^{0} \gets   {\arg \min}_{\mathbf z : \operatorname{supp}(\mathbf z) = {S}^{0}} f(\mathbf z)$}.

\WHILE {$t < t_{\max}$ and \(f ( \mathbf x^{t} )\geq \delta \)}	
\STATE $t \gets t + 1$.
\STATE \textbf{Matching}: \(\hat{S}^{t} \leftarrow S^{t - 1} \cup \mathcal C(\nabla_1 f(\x^{t - 1}), k)\).
\STATE \textbf{Estimation}: \(\mathbf \x^{t} \gets {\arg \min}_{\mathbf z: \operatorname{supp}(\mathbf z)=\hat{S}^{t}} f(\mathbf z)\).
\STATE \textbf{Pruning}: \(S^{t} \gets \mathcal C (\x^{t}, k)\).
\ENDWHILE
\STATE \textbf{Output}: estimated signal $\hat{\mathbf x} = \mathbf x^{t}$.
\end{algorithmic}
\end{algorithm}

The mathematical formulation of SPR is formally specified in Algorithm~\ref{alg:SPR}. In the non-initial step, the idea of maintaining an estimated support of size $k$ with a pruning strategy is inspired from Compressive Sampling Matching Pursuit (CoSaMP)~\cite{Cosamp} and Subspace Pursuit (SP)~\cite{SP_CS}, which are well-known algorithms for recovering sparse signals from linear samples.

\subsection{Main Results} 
\label{sec:TheoreticalResults}
In order to demonstrate the effectiveness of SPR, we analyze the initialization and non-initial steps separately. As in~\cite{sunju},  we consider that the sampling matrix $\mathbf A$ is generic with independent identically distributed ({\it i.i.d.}) entries obeying the standard complex Gaussian distribution (i.e., $a_{ij} \overset{i.i.d.}{\sim}  \mathcal{CN}(1) = \mathcal N\left(0, \frac{1}{2}\right)+\mathsf{j}\mathcal N\left(0, \frac{1}{2}\right)$).

First, the following theorem gives a condition that guarantees the SPR initialization to catch a subset of $\supp(\x)$ with sufficient energy. 
 
\begin{thm}[\textit{Initialization}]\label{thm:1}
Consider the sparse phase retrieval problem~\eqref{prob:SPR1}. With probability exceeding $1- \exp\left( -c \bar{s} \log  \frac{n}{\bar{s}} \right)$, SPR initializes an estimated support $S^0$ of size $k$ satisfying $\frac{ \| \x_{S^0} \|^2}{\| \x \|^2} > \frac{9}{10}$ provided that $m \geq C \bar{s}^2\log \frac{n}{\bar{s}}$. 
\end{thm}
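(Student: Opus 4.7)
The statistic $Z_j = \frac{1}{m}\sum_{i=1}^m y_i|a_{ij}|$ is a sample-mean estimator of $\mu_j := \E[y_i|a_{ij}|]$, so the plan is to compute $\mu_j$ in closed form, concentrate $Z_j$ around $\mu_j$, and translate the concentration into a lower bound on $\|\x_{S^0}\|^2$. For the expectation, because $\a_i$ is circularly symmetric complex Gaussian I can decompose $\a_i^*\x = \overline{a_{ij}}\,x_j + \tilde\xi_i$ where $\tilde\xi_i$ is a scalar complex Gaussian with variance $\|\x\|^2 - |x_j|^2$, independent of $a_{ij}$. Conditioning on $a_{ij}$, the modulus $|\a_i^*\x|$ is Rician, so $\mu_j$ admits a closed form that is strictly increasing in $|x_j|^2$: at $x_j = 0$ one has $\mu_j = \pi\|\x\|/4$; the gap $\mu_j - \pi\|\x\|/4$ is of order $|x_j|^2/\|\x\|$ for small $|x_j|$ and grows to $\|\x\|(1-\pi/4)$ as $|x_j|\to\|\x\|$. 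Consequently, top-$k$ selection by $Z_j$ is a noisy proxy for top-$k$ selection by $|x_j|^2$.

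For concentration, $y_i$ is sub-Gaussian with norm $O(\|\x\|)$ and $|a_{ij}|$ is sub-Gaussian with norm $O(1)$, so the product $y_i|a_{ij}|$ is sub-exponential with $\psi_1$-norm $O(\|\x\|)$. A Bernstein-type tail bound combined with a union bound over the $n$ coordinates yields
\begin{equation*}
\P\!\left(\max_{j\in[n]}|Z_j-\mu_j| \geq c\|\x\|\sqrt{\tfrac{\log n}{m}}\right) \leq n^{-1},
\end{equation*}
so under $m\geq Ck\log n$ the uniform error is $\epsilon = O(\|\x\|/\sqrt{k})$. To extract the energy claim, I pair each missed support index $j\in T\setminus S^0$ (with $T := \supp(\x)$) with a falsely captured non-support index $\sigma(j)\in S^0\setminus T$ satisfying $Z_{\sigma(j)}\geq Z_j$. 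Summing the inequality $\mu_j-\mu_{\sigma(j)} \leq (Z_{\sigma(j)}-\mu_{\sigma(j)}) - (Z_j-\mu_j)$ and using $\mu_j-\mu_{\sigma(j)} \geq c|x_j|^2/\|\x\|$ yields
\begin{equation*}
\|\x_{T\setminus S^0}\|^2 \leq C\|\x\|\left(\Bigl|\sum_{j\in S^0\setminus T}\eta_j\Bigr| + \Bigl|\sum_{j\in T\setminus S^0}\eta_j\Bigr|\right),
\end{equation*}
with $\eta_j := Z_j-\mu_j$. Controlling these \emph{aggregated} sums of at most $k$ centered variables, rather than bounding them by $k\epsilon$, should produce the factor-$k$ improvement and yield $\|\x_{T\setminus S^0}\|^2 \leq \tfrac{1}{10}\|\x\|^2$ whenever $m \geq Ck\log n$.

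\textbf{Main obstacle.} The crux is making this aggregation tight. The set $S^0\setminus T$ is random and depends on all $n$ coordinates, so a naive union bound over the $\binom{n-k}{k}$ size-$k$ subsets of $T^c$ costs an extra $\sqrt{k\log(n/k)}$ factor and would revert the complexity to the standard $k^2\log n$; this precisely reproduces the bottleneck that previous SPARTA/SWF analyses hit. Avoiding the blow-up will require exploiting that $\eta$ has small $\ell_2$-norm (not merely small $\ell_\infty$-norm)---for instance via chaining or a Gaussian-width argument adapted to the top-$k$ combinatorial structure, or a direct bound on the ranked fluctuations of $(\eta_j)_{j\notin T}$. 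Making this step linear rather than quadratic in $k$ is the principal technical hurdle and accounts for the advertised factor-$k$ improvement of SPR's initialization.
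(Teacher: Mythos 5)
Your expectation computation (the Rician/hypergeometric gap of order $|x_j|^2/\|\x\|$ between support and non-support coordinates) and your overall strategy of aggregating per-coordinate gaps both match the paper's. But your proposal stops exactly at the step that carries the whole theorem: you concede that controlling the error sums $\sum_{j\in S^0\setminus T}\eta_j$ over the data-dependent set $S^0$ is an unresolved hurdle, and you offer no argument that closes it, so as written this is not a proof. The paper closes the gap by reversing the order of aggregation and concentration. Instead of concentrating each $Z_j$ around $\mu_j$ and then summing the errors over a random index set, it forms the \emph{per-sample} aggregates $V_i=\frac1k\sum_{j\in\supp(\x)}|\a_i^*\x|\,|a_{ij}|$ and $U_i=\frac1k\sum_{j\in S^0}|\a_i^*\x|\,|a_{ij}|$ and applies Bernstein's inequality to $\frac1m\sum_iV_i$ and $\frac1m\sum_iU_i$ directly. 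The point is that $V_i$ (resp. $U_i$) is the product of the sub-gaussian variable $|\a_i^*\x|$, with parameter $\asymp\|\x\|^2$, and the average $\frac1k\sum_j|a_{ij}|$ of $k$ i.i.d.\ sub-gaussians, with parameter $\asymp 1/k$; hence it is sub-exponential with squared norm $\asymp\|\x\|^2/k$. This $1/k$ variance reduction is precisely what lets a deviation of order $\|\x\|/k$ --- the size of the gap between $\E[V_i]$ and $\E[U_i]$ under the contradiction hypothesis $\|\x_{S^0}\|^2\le\frac{9}{10}\|\x\|^2$ --- be detected with only $m\gtrsim k\log n$ samples, whereas your per-coordinate uniform bound of order $\|\x\|\sqrt{\log n/m}\asymp\|\x\|/\sqrt k$ is too coarse by a factor of $\sqrt k$. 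The top-$k$ selection rule gives $\sum_iU_i\ge\sum_iV_i$ deterministically, so the two one-sided Bernstein bounds produce the desired contradiction.

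You should also be aware that the obstacle you flag is genuine and is not dispatched in your sense by the paper either: the paper applies Bernstein to $\frac1m\sum_iU_i$ with $S^0$ treated as a fixed set satisfying the contradiction hypothesis, without a union bound over the $\binom{n}{k}$-many candidate sets (a literal uniform argument at scale $\epsilon\asymp\|\x\|/k$ would reintroduce a $k\log(n/k)$ factor and revert to quadratic-in-$k$ sampling). So if you want to complete your own route, the productive move is to adopt the paper's per-sample aggregation --- which reduces the problem to a single scalar Bernstein bound for a (worst-case) candidate set --- rather than to pursue chaining or Gaussian-width bounds over the top-$k$ selection, which is the direction your proposal currently points and which you correctly predict will not beat $k^2\log n$ without additional structure.
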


Throughout the paper, we follow the convention that letters $c$ and $C$, and their indexed versions (e.g., $c_1$, $C_1$, and etc.) indicate positive, universal constants that may vary at each appearance.

Next, we move on to characterizing the behavior of SPR in the non-initial step, given that $S^0$ was initialized with sufficient energy. 
If $S^0$
is sufficiently accurate, we expect the sequence
{$S^t$} to converge toward the true support of $\x$. 
However, deriving a guarantee for this step is more complex. One reason is that the estimation operation of SPR in~\eqref{eq:subproblem1} has no closed-form solution, since it minimizes a quartic objective function. This is in contrast to the WF algorithm~\cite{WF} that produces a signal estimate in each iteration with an explicit expression (i.e., the gradient descent form)~\cite{WF}. 

Our estimation operation also differs with the conventional least squares projection in the linear sampling model, which directly has a pseudo-inverse solution over the estimated support set (see compressive sensing, e.g.,~\cite{OMP1,davenport2010analysis,OMP2,livshitz2014sparse, SP_CS, Cosamp,D2DPeng}). In fact, the estimation operation of SPR having no analytical solution has posted a major challenge for evaluating the recovery performance theoretically.

To deal with this challenge, we connect it to an ideal case where there are infinitely many samples available, which is essentially the expectation case~\cite{sunju}. In this case, the estimation operation of SPR does have an analytical solution. We will show that if at least one support index was caught in the previous iteration (i.e., $S^0 \cap \text{supp}(\x) \neq \emptyset$), then the original signal $\x$ can be recovered correctly.

\begin{thm}[{\it Expectation case}]
\label{thm:2}
Consider the sparse phase retrieval problem~\eqref{prob:SPR1} with $m \rightarrow \infty$. If the estimated support in the initialization of SPR satisfies $S^0 \cap \operatorname{supp}(\x) \neq \emptyset$, then 
$ 
\x^{1} = \x \mathrm{e}^{\mathsf{j}\phi}
$
for some~$\phi \in [0, 2\pi)$. 
\end{thm}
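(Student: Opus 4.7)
The plan is to replace each empirical average in SPR with its population counterpart and then trace one full iteration of the algorithm. First, using Isserlis'/Wick's identity for $\a \sim \mathcal{CN}(n)$, namely $\E[|\a^*\u|^2|\a^*\v|^2] = \|\u\|^2\|\v\|^2 + |\u^*\v|^2$, the population loss admits the closed form
\[
f(\z) \;=\; \tfrac{1}{2}\,\E\!\left[(|\a^*\x|^2 - |\a^*\z|^2)^2\right] \;=\; (\|\x\|^2 - \|\z\|^2)^2 + \|\x\|^2\|\z\|^2 - |\x^*\z|^2,
\]
which by Cauchy--Schwarz is nonnegative and vanishes iff $\z = \x\mathrm{e}^{\mathsf{j}\phi}$. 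A parallel computation from \eqref{eq:10daoshu} yields $\nabla_1 f(\z) = (2\|\z\|^2 - \|\x\|^2)\z - (\x^*\z)\,\x$.

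Second, I would solve $\x^0 = \arg\min_{\z \in \C^{S^0}} f(\z)$ exactly. Since $\x^*\z = \x_{S^0}^*\z$ on this subspace, the stationarity condition $\nabla_1 f(\z)_{S^0} = \mathbf{0}$ collapses to $(2\|\z\|^2 - \|\x\|^2)\,\z = (\x^*\z)\,\x_{S^0}$. Because $\x_{S^0} \neq \mathbf{0}$ by the hypothesis $S^0 \cap \operatorname{supp}(\x) \neq \emptyset$, every nontrivial critical point must be parallel to $\x_{S^0}$; comparing values of $f$ eliminates the orthogonal saddle branch and identifies $\x^0 = c\,\x_{S^0}$ (up to the inherent global phase) with $|c|^2 = (\|\x\|^2 + \|\x_{S^0}\|^2)/(2\|\x_{S^0}\|^2)$.

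Third, substituting $\x^0$ into the closed form of $\nabla_1 f$ and using $2\|\x^0\|^2 - \|\x\|^2 = \|\x_{S^0}\|^2$ and $\x^*\x^0 = c\|\x_{S^0}\|^2$, the $S^0$-contributions cancel and
\[
\nabla_1 f(\x^0) \;=\; -\,c\,\|\x_{S^0}\|^2\,(\x - \x_{S^0}).
\]
This vector is supported on $\operatorname{supp}(\x)\setminus S^0$, with magnitudes proportional to the entries of $\x$ on that set. Since $S^0 \cap \operatorname{supp}(\x)\neq\emptyset$ and $|\operatorname{supp}(\x)| \leq k$, it has at most $k-1$ nonzero entries, so $\mathcal C(\nabla_1 f(\x^0), k) \supseteq \operatorname{supp}(\x) \setminus S^0$ and hence $\hat{S}^{1} \supseteq \operatorname{supp}(\x)$. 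The estimation subproblem $\min_{\z \in \C^{\hat{S}^1}} f(\z)$ now admits $\x$ itself as a feasible point attaining $f(\x) = 0$, so by the Cauchy--Schwarz characterization every minimizer equals $\x\mathrm{e}^{\mathsf{j}\phi}$, giving $\x^1 = \x\mathrm{e}^{\mathsf{j}\phi}$.

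The main obstacle is the second step: the subspace minimization is nonconvex and generally has no closed form. What makes it tractable is the observation that $\x^*\z$ depends on $\x$ only through $\x_{S^0}$, which turns the stationarity equation into a rank-one relation and forces any nontrivial critical point to be parallel to $\x_{S^0}$. Once this reduction is in hand, the remainder is a direct algebraic chase, and no concentration inequalities are needed since we work entirely in the infinite-sample regime.
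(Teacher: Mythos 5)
Your proposal is correct and follows essentially the same route as the paper: it derives the closed-form expected loss and gradient, solves the subspace stationarity equation to get $\x^0 = \omega_{S^0}\x_{S^0}$ with $|\omega_{S^0}|^2 = (\|\x\|^2+\|\x_{S^0}\|^2)/(2\|\x_{S^0}\|^2)$ (the content of the paper's Proposition~\ref{prop:expectedGeo}), shows $\nabla_1\mathbb{E}[f(\x^0)]$ is supported exactly on $\supp(\x)\setminus S^0$, and concludes exact recovery after the second estimation. The only cosmetic differences are that you dispose of the saddle branch by comparing objective values rather than via the paper's curvature classification, and you close with the zero-loss Cauchy--Schwarz characterization instead of re-invoking the proposition with $\T=S^1$; both are valid.
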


Then, we will show that SPR does not really need infinite number of samples to ensure the success of recovery. Our result is formally described in the following theorem.

\begin{thm}[\textit{Finite sampling case}]
\label{thm:3}
Consider the sparse phase retrieval problem~\eqref{prob:SPR1}, where the minimum nonzero entry of $\x$ satisfies $|x_{\min}| = \Omega (\frac{\|\x\|}{\sqrt{k}})$ and 
\begin{equation}
m \geq  C\max\left\{k\log^3 k, k\log n,\sqrt{k}\log^3 n\right\}.
\end{equation}
If the estimated support in the initialization of SPR satisfies $|S^0|=k$ and $\frac{ \| \x_{S^0} \|^2}{\| \x \|^2} > \frac{9}{10}$, then with probability exceeding $1-cm^{-1}$,
$
\x^{1} = \x \mathrm{e}^{\mathsf{j}\phi}
$ 
for some~$\phi \in [0, 2\pi)$.  
\end{thm}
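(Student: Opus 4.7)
The plan is to decompose the single iteration of SPR into its three sub-operations and bound the errors propagating through each, culminating in a proof that the support estimate $\hat S^1$ contains $\operatorname{supp}(\x)$ with high probability, after which the estimation step returns $\x e^{\mathsf{j}\phi}$ exactly.

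First, I would analyze the initial estimation $\x^0=\arg\min_{\operatorname{supp}(\z)=S^0} f(\z)$. Since $\|\x_{S^0}\|^2 / \|\x\|^2 > 9/10$, the vector $\x_{S^0}$ is an almost-true sparse signal whose ``phase retrieval problem'' restricted to the $k$-dimensional subspace $\mathbb{C}^{S^0}$ is close (in the least-squares sense) to that of $\x$. Invoking the benign geometry of the subproblem on $\mathbb{C}^{S^0}$ (Proposition~\ref{prop:geoproOnsub}), which is valid under $m\geq \mathcal{O}(k\log^3 k)$, every local (hence the global) minimum of $f$ restricted to $\mathbb{C}^{S^0}$ lies within arbitrarily small distance of $\x_{S^0} e^{\mathsf{j}\phi_0}$ for some phase $\phi_0$. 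Thus there is a phase $\phi_0$ with $\operatorname{dist}(\x^0,\x_{S^0} e^{\mathsf{j}\phi_0})$ small, and consequently $\operatorname{dist}(\x^0,\x e^{\mathsf{j}\phi_0})^2 \le \|\x_{\overline{S^0}}\|^2 + o(\|\x\|^2) \le \tfrac{1}{10}\|\x\|^2 + o(\|\x\|^2)$. Without loss of generality absorb $e^{\mathsf{j}\phi_0}$ into $\x$ and treat $\x^0$ as close to $\x$ on the subspace $\mathbb{C}^{S^0}$.

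Next, I would study the matching step. The Wirtinger gradient at $\x^0$ decomposes into an expectation term and a fluctuation term. Using the explicit form~\eqref{eq:10daoshu} and standard Gaussian computations one shows that $\mathbb{E}[\nabla_1 f(\x^0)]$ is a linear combination of $\x^0$ and $\x$, so that on the complement of $S^0$ the expected gradient is proportional to $\x_{\overline{S^0}}$. Hence, for indices $i\in \operatorname{supp}(\x)\setminus S^0$, we have $|\mathbb{E}\nabla_1 f(\x^0)_i| \gtrsim |x_i| \gtrsim \|\x\|/\sqrt{k}$ by the minimum-entry assumption, while for $i\notin \operatorname{supp}(\x)\cup S^0$ the expectation vanishes. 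The concentration of the gradient (Proposition~\ref{lm:elemenEzxtbound}) controls the deviation of $\nabla_1 f(\x^0)$ from its expectation uniformly in $i$ at a scale smaller than $\|\x\|/\sqrt{k}$ once $m\geq \mathcal{O}(k\log^3 n)$. A simple comparison then shows every missing support index produces a strictly larger gradient entry than every off-support index, so the top-$k$ indices of $|\nabla_1 f(\x^0)|$ contain $\operatorname{supp}(\x)\setminus S^0$, and therefore
\[
\hat S^1 = S^0 \cup \mathcal{C}(\nabla_1 f(\x^0),k) \supseteq \operatorname{supp}(\x).
\]

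Finally, with $\hat S^1 \supseteq \operatorname{supp}(\x)$ and $|\hat S^1|\leq 2k$, the estimation step minimizes $f$ over the subspace $\mathbb{C}^{\hat S^1}$ containing $\x$. Applying the geometric property (Proposition~\ref{prop:geoproOnsub}) on this at-most-$2k$-dimensional subspace—again valid under $m\geq Ck\log^3 n$—I conclude that all local minima cluster around $\x e^{\mathsf{j}\phi}$, and since $\x e^{\mathsf{j}\phi}$ makes $f$ vanish, it is the global minimizer, whence $\x^1 = \x e^{\mathsf{j}\phi}$ as claimed. Union-bounding the failure probabilities of the geometric property and the gradient concentration over the at-most $\binom{n}{2k}$ candidate supports yields the overall probability $1-cm^{-1}$.

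The main obstacle will be the second step: showing that the gradient at $\x^0$, despite $\x^0$ not being close to $\x$ in the usual small-error regime (the initial distance can be a constant fraction of $\|\x\|$), still has its leading entries aligned with the missing support. This requires a careful decomposition of $\mathbb{E}\nabla_1 f(\x^0)$ into $\x$- and $\x^0$-components, control of cross-terms, and a uniform concentration bound whose scale $\|\x\|/\sqrt{k}$ matches the minimum-entry assumption—precisely the scale at which the $\sqrt{k}\log^3 n$ branch in~\eqref{eq:matchingcomplexity} becomes binding.
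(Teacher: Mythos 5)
Your overall route is the same as the paper's: apply Proposition~\ref{prop:geoproOnsub} on $\C^{S^0}$ to pin $\x^0$ near the expected minimizer, use Proposition~\ref{lm:elemenEzxtbound} to show every index of $\supp(\x)\setminus S^0$ produces a strictly larger entry of $|\nabla_1 f(\x^0)|$ than every off-support index, conclude $\hat S^1\supseteq\supp(\x)$ with $|\hat S^1|\le 2k$, and apply Proposition~\ref{prop:geoproOnsub} again on $\C^{\hat S^1}$ (the $\|\x_\T\|=\|\x\|$ case) to get exact recovery. Two points, however, need repair.

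First, the expected minimizer on $\C^{S^0}$ is not $\x_{S^0}\mathrm{e}^{\mathsf{j}\phi_0}$ but $\omega_{S^0}\x_{S^0}$ with $|\omega_{S^0}|=\sqrt{(\|\x\|^2+\|\x_{S^0}\|^2)/(2\|\x_{S^0}\|^2)}>1$. This amplitude inflation is what the paper actually uses to obtain the two quantitative inputs to the matching analysis, namely $|\x^*\x^0|>(\tfrac{9}{10}-\epsilon)\|\x\|^2$ and $\|\x^0\|^2\le\tfrac{121}{100}\|\x\|^2$; the latter is needed to invoke Proposition~\ref{lm:elemenEzxtbound} with a fixed $\kappa$, and the former is what makes the separating gap $|\x^*\x^0|\,|x_l|\gtrsim\|\x\|^2|x_{\min}|$ dominate the concentration radius $\epsilon\|\x\|^2|x_{\min}|$. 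Your sketch omits both bounds, and your stated gradient scale drops the $\|\x\|^2$ factor.

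Second, and more seriously, your closing union bound over the $\binom{n}{2k}$ candidate supports cannot deliver the claimed probability: the failure probability of Proposition~\ref{prop:geoproOnsub} for a single support is only $cm^{-1}$ (polynomial, not exponential), so multiplying by $\binom{n}{2k}\approx (n/2k)^{2k}$ destroys the bound entirely when $m\asymp k\log^3 n$. The paper instead handles the dependence of $S^0$, $\x^0$, and $S^1$ on $\mathbf A$ by sample splitting: the rows of $\mathbf A$ are partitioned into four blocks, one per operation, so each proposition is applied to a support or iterate that is statistically independent of the block it is evaluated on (this is stated as tentatively ignored in the proof of Theorem~\ref{thm:3} and formalized in the proof of Theorem~\ref{thm:4}). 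You should adopt that device, or an equivalent decoupling argument, rather than the union bound.
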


One can interpret from this theorem that as long as the initialization step catches a set of support indices with sufficient energy (which can be readily achieved via our correlation-promoting initialization, as shown in Theorem~\ref{thm:1}),  SPR can recover the input signal exactly up to a global phase with high probability.  
Combining Theorems \ref{thm:1} and \ref{thm:3} leads to an overall guarantee for the SPR algorithm. 
\begin{thm}[\textit{Overall condition}]
\label{thm:4}
Consider the sparse phase retrieval problem~\eqref{prob:SPR1}. If 
\begin{equation}
m \geq  C\max\left\{\bar{s}^2\log (n/\bar{s}), k\log^3 k, k\log n,\sqrt{k}\log^3 n\right\}
\end{equation}
and the minimum nonzero entry of $\x$ satisfies $|x_{\min}| = \Omega ({\|\x\|}/{\sqrt{k}})$, then with probability exceeding $1-cm^{-1}$, SPR returns the true signal 
$
\hat{\x} =  \x^{1} = \x \mathrm{e}^{\mathsf{j}\phi}
$ 
for some~$\phi \in [0, 2\pi)$.  
\end{thm}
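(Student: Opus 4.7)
The plan is to obtain Theorem~\ref{thm:4} as a direct composition of Theorems~\ref{thm:1} and~\ref{thm:3} through a union bound, since the two earlier results were designed precisely to interlock: Theorem~\ref{thm:1} certifies that the spectral initialization of SPR produces a $k$-subset $S^0$ with $\|\x_{S^0}\|^2/\|\x\|^2>9/10$, and Theorem~\ref{thm:3} converts exactly this ``sufficient energy'' hypothesis into exact recovery in one non-initial iteration.

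First I would verify that the sample budget $m\geq Ck\log^3 n$ is, up to absorbing the universal constant, stronger than the $m\geq Ck\log n$ requirement of Theorem~\ref{thm:1}, so that the initialization guarantee applies. Writing
\begin{equation*}
\mathcal E_{\mathrm{init}}\doteq\left\{|S^0|=k~\text{and}~\|\x_{S^0}\|^2/\|\x\|^2>9/10\right\},
\end{equation*}
Theorem~\ref{thm:1} yields $\P(\mathcal E_{\mathrm{init}})\geq 1-n^{-1}$. On $\mathcal E_{\mathrm{init}}$, the hypotheses of Theorem~\ref{thm:3} are exactly met---the minimum-nonzero-entry condition and the sample-complexity assumption are inherited verbatim from Theorem~\ref{thm:4}. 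Theorem~\ref{thm:3} then supplies an event $\mathcal E_{\mathrm{rec}}$ of conditional probability at least $1-cm^{-1}$ on which $\x^{1}=\x\mathrm{e}^{\mathsf{j}\phi}$ for some $\phi\in[0,2\pi)$. Since this forces $f(\x^{1})=0$, the while loop of Algorithm~\ref{alg:SPR} halts at $t=1$ and SPR returns $\hat{\x}=\x^{1}$.

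A union bound on the complements of $\mathcal E_{\mathrm{init}}$ and $\mathcal E_{\mathrm{rec}}$ bounds the overall failure probability by $n^{-1}+cm^{-1}$. To restate this as $1-cm^{-1}$ in the advertised form, I would observe that $m\geq Ck\log^3 n$ together with $k\geq 1$ implies $n\leq\exp\bigl((m/C)^{1/3}\bigr)$, so $n^{-1}$ decays super-polynomially in $m$ and can be absorbed into $cm^{-1}$ after redefining the constant $c$.

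The main ``obstacle'' here is essentially bookkeeping rather than analysis: we must be careful that both probabilistic statements are evaluated on the same random draw of $\mathbf A$, even though $S^0$ and $\x^1$ are both functions of that single draw. This is painless because Theorem~\ref{thm:3} is phrased as a conditional guarantee valid for \emph{any} $S^0$ meeting the energy threshold, so no independence between the initialization and the non-initial stages is needed---we simply intersect the two good events on the common probability space. All the genuine technical content (the spectral concentration behind Theorem~\ref{thm:1}, the expectation-case geometry behind Theorem~\ref{thm:2}, and the finite-sample matching/estimation/pruning analysis behind Theorem~\ref{thm:3}) has already been packaged, leaving Theorem~\ref{thm:4} as essentially a chaining statement.
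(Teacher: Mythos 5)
There is a genuine gap, and it is precisely the point you dismiss as ``painless bookkeeping.'' The paper's proof of Theorem~\ref{thm:4} is not a mere union bound over Theorems~\ref{thm:1} and~\ref{thm:3}: its entire content is a sample-splitting (partition) argument that repairs a dependence problem which the proof of Theorem~\ref{thm:3} deliberately leaves open. The paper states explicitly that the proof of Theorem~\ref{thm:3} ``tentatively ignores'' the dependency of $\x^0$, $S^0$, $S^1$ on the random matrix $\mathbf A$, and that this issue ``will be formally addressed in the proof of Theorem~\ref{thm:4} via a simple partition strategy.'' The difficulty is not, as you suggest, between the initialization and the non-initial stage; it sits inside the non-initial stage itself. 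Proposition~\ref{lm:elemenEzxtbound} bounds $|\nabla_1 f(\z)_l-\nabla_1\mathbb E[f(\z)]_l|$ for a \emph{fixed} vector $\z$, with the expectation computed by the closed form~\eqref{eq:Expg}. Applying it with $\z=\x^0$ is only legitimate if $\x^0$ is statistically independent of the measurements used to evaluate $\nabla_1 f$; otherwise $\nabla_1\mathbb E[f(\x^0)]$ is not given by~\eqref{eq:1231}, and the whole separation argument between $j\in\supp(\x)\backslash S^0$ and $j'\notin\supp(\x)\backslash S^0$ collapses. The same issue arises for Proposition~\ref{prop:geoproOnsub}, which is stated for a \emph{given} subspace $\mathbb C^\T$ but is invoked with the random, $\mathbf A$-dependent sets $\T=S^0$ and $\T=S^1$. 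The paper resolves this by splitting $\mathbf A$ and $\y$ into four blocks $M_1,\dots,M_4$, using a fresh block for each of initialization, estimation of $\x^0$, matching, and estimation of $\x^1$ (Algorithm~\ref{alg:SPR_theo}), and then chaining conditional probabilities $\mathbb P(\Sigma_1)\mathbb P(\Sigma_2|\Sigma_1)\mathbb P(\Sigma_3|\Sigma_1,\Sigma_2)\mathbb P(\Sigma_4|\Sigma_1,\Sigma_2,\Sigma_3)$. Your proposal, by asserting that ``no independence between the stages is needed,'' omits the one idea that the paper's proof of this theorem actually supplies.

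A secondary, smaller issue: your absorption of the $n^{-1}$ failure probability into $cm^{-1}$ is argued backwards. From $m\geq Ck\log^3 n$ you deduce $n\leq\exp\bigl((m/C)^{1/3}\bigr)$, which is an upper bound on $n$ and hence a \emph{lower} bound on $n^{-1}$; it does not show $n^{-1}=\mathcal O(m^{-1})$ (indeed, in an oversampled regime $m\gg n$ one has $n^{-1}\gg m^{-1}$). This does not affect the substance of the result, but the justification as written does not establish the advertised $1-cm^{-1}$ form.
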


\section{Analysis} 
\label{sec:analysis}

\subsection{Proof of Theorem \ref{thm:1}} \label{sec:proofofthm1} 
We first introduce Bernstein's inequality~\cite{hdpBook}, which is useful for our analysis.
\begin{lem}[\textit{Bernstein's inequality~\cite[Chap. 2.8]{hdpBook}}]\label{lam:B}
Let $X_1, X_2,\cdots, X_m$ be {\it i.i.d.} copies of the subexponential variable $X$ with parameter $\sigma^2$ (i.e., squared subexponential norm), then
\begin{eqnarray}
\lefteqn{\P\left( \left| \frac{1}{m}\left(\sum_{i=1}^m X_i - \E[X]\right) \right|\leq \epsilon\right) } \nonumber \\
&&~~~~~~\leq 2\exp\left(-cm\min\left(\frac{\epsilon^2}{\sigma^2}, \frac{\epsilon}{\sigma}\right)
\right).
\end{eqnarray}
holds for any positive $\epsilon$, where $c$ is an absolute positive constant.
\end{lem}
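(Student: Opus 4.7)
The plan is the standard exponential-moment (Chernoff-bound) proof for subexponential concentration, which produces exactly the two-regime $\min(\epsilon^2/\sigma^2,\epsilon/\sigma)$ exponent in the lemma. First I would center, setting $Y_i = X_i - \E[X]$, so that the $Y_i$ are i.i.d., mean zero, and subexponential with parameter of the same order as $\sigma$. For any $\lambda>0$, Markov's inequality applied to $\exp(\lambda\sum_i Y_i)$, together with independence, yields the one-sided bound
\begin{equation*}
\P\!\left(\tfrac{1}{m}\sum_{i=1}^m Y_i \ge \epsilon\right) \le \exp(-\lambda m\epsilon)\,\prod_{i=1}^{m}\E[\exp(\lambda Y_i)].
\end{equation*}

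Next I would invoke the defining MGF estimate for centered subexponential random variables: there exist absolute constants $c_1,c_2>0$ such that $\E[\exp(\lambda Y_i)]\le \exp(c_2\lambda^2\sigma^2)$ for all $|\lambda|\le c_1/\sigma$. Substituting gives the cumulant bound $\exp(-\lambda m\epsilon + c_2 m\lambda^2\sigma^2)$, which I would minimize in $\lambda$ over the admissible range $(0,c_1/\sigma]$. The unconstrained optimum $\lambda^\star = \epsilon/(2c_2\sigma^2)$ lies in the interior precisely when $\epsilon/\sigma$ is small, producing an exponent of order $-m\epsilon^2/\sigma^2$; otherwise the optimum sits at the boundary $\lambda = c_1/\sigma$, producing an exponent of order $-m\epsilon/\sigma$. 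The worse of the two cases is $-cm\min(\epsilon^2/\sigma^2,\epsilon/\sigma)$. A symmetric argument applied to $-Y_i$ handles the lower tail, and a union bound over the two tails accounts for the factor of $2$.

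The only real obstacle is the subexponential MGF estimate itself, which I would derive from the equivalent characterizations of subexponentiality: the tail bound $\P(|Y|>t)\le 2\exp(-ct/\sigma)$ gives the moment estimates $\E|Y|^p\le C^p\,p!\,\sigma^p$ by integrating tails, and then Taylor-expanding $\exp(\lambda Y)=\sum_{p\ge 0}\lambda^p Y^p/p!$ and taking expectations produces a geometric series in $\lambda\sigma$ that converges and is majorized by $\exp(c_2\lambda^2\sigma^2)$ provided $|\lambda|\le c_1/\sigma$. All of this is classical material (cf.\ \cite[Chap.~2.8]{hdpBook}) and presents no novel technical difficulty, which is why the lemma is quoted rather than reproved in the paper.
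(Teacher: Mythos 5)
Your proposal is correct and is precisely the standard Chernoff/MGF argument given in the cited source \cite[Chap.~2.8]{hdpBook}; the paper itself quotes the lemma without proof, so there is nothing to diverge from, and your optimization over $\lambda$ in the truncated range $(0, c_1/\sigma]$ correctly produces the two-regime exponent $\min(\epsilon^2/\sigma^2, \epsilon/\sigma)$. One minor remark: the inequality as stated in the paper has a typo --- the event should be $\left|\frac{1}{m}\sum_{i=1}^m X_i - \E[X]\right| \geq \epsilon$ rather than $\leq \epsilon$ --- and your proof establishes the (correct) upper-tail-plus-lower-tail version.
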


Now we present the proof for Theorem~\ref{thm:1}.
\begin{proof}
Denote 
\begin{equation} \label{eq:ZjZij}
Z_{i, j} = |\a_{i}^*\x| |a_{ij}|~~\text{so that}~~Z_j = \frac{1}{m}\sum_{i=1}^m 	Z_{i, j},
\end{equation} 
where $i \in \{1, \cdots, m\}$ and $j \in \{1, \cdots, n\}$.
One can view $\frac{\a_{i}^*\x}{\|\x\|}$ and $a_{ij}$ as the random variables at two different time points of a complex Gaussian process, respectively. According to~\cite[Chap. 2.1.4]{PrinMobCom}, the expectation of $Z_{i, j}$
is equivalent to the auto-correlation of the envelope
of the complex Gaussian process. Thus, by following the result in~\cite[Eq.~(2.70)]{PrinMobCom}, we have
\begin{eqnarray}\label{eq:EZij}
\mathbb{E}\left[Z_{ i,j}\right]  &=&  \frac{\pi}{4}  \|\mathbf{x}\|
F\left(-\frac{1}{2}, - \frac{1}{2}; 1; \frac{|x_j|^2}{\|\x\|^2} \right)  \nonumber \\
&=& \frac{\pi}{4} \|\mathbf{x}\|  \left (1 + \frac{1}{4} \frac{|x_j|^2}{\|\x\|^2} + \frac{1}{64} \frac{|x_j|^4}{\|\x\|^4}  + \cdots  \right ),~~~~~\label{eq:Expectation}
\end{eqnarray} 
where $F(a,b,c;z)$ is the hypergeometric function defined by the power series:\footnote{Strictly speaking, the hypergeometric function is defined on the complex plane $\{z\in\mathbb C: |z|<1\}$. But it can be analytically continued to $\{z\in\mathbb C:|z|\geq 1\}$. The principal branch is obtained by introducing a cut from $1$ to $+\infty$ on the real $z$-axis. We refer the interested readers to~\cite[\href{https://dlmf.nist.gov/15}{Chapter 15}]{NIST:DLMF} and~\cite{geofunctionSlater}.}
\begin{align}
F(a,b,c;z) &= \sum_{l = 0}^\infty \frac{(a)_l (b)_l} {(c)_l} \frac{z^l}{l !} \nonumber \\
&= 1 + \frac{ab}{c} \frac{z}{1!} + \frac{a (a + 1) b (b + 1)}{c (c + 1)} \frac{z^2}{2 !} + \cdots. 
\end{align}  
Here
\begin{equation}
(a)_l = \begin{cases}
	1&l=0,\\
	\prod_{s=0}^{l-1} (a+s)&l\geq 1,
\end{cases}
\end{equation}
$(b)_l$ and  $(c)_l $ are defined in the same way. 

Next, we shall estimate $\mathbb{E}\left[Z_{ i,j}\right]$ by considering two cases. 
\begin{itemize}
\item {\bf Case 1:}  When $j\not \in \supp(\x)$,~\eqref{eq:Expectation} directly implies
\begin{equation}
	\mathbb{E}\left[Z_{ i,j}\right]   = \frac{\pi}{4} \|\mathbf{x}\|.
\end{equation}

\item {\bf Case 2:} When $j \in \supp(\x)$, we have $|x_j| > 0$. In this case, we can 
derive an upper and a lower bound for $\mathbb{E}\left[Z_{ i,j}\right]$, respectively.
First, since $\big ( \frac{|x_j|^2}{\|\x\|^2} \big )^N> 0$  for $N = 2, 3, \cdots$, it follows from~\eqref{eq:Expectation} that 
\begin{eqnarray}
	\mathbb{E}\left[Z_{ i,j}\right] >  \frac{\pi}{4} \|\mathbf{x}\| + \frac{\pi}{16}\frac{|x_j|^2}{\|\x\|}. \label{eq:Expectation1}
\end{eqnarray} 
As for the upper bound, since $\frac{|x_j|^{2}}{\|\x\|^{2}} \leq 1$, we have 
\begin{eqnarray}
	\mathbb{E}\left[Z_{ i,j}\right]  & \hspace{-2mm}  \leq & \hspace{-2mm}  \frac{\pi}{4} \|\mathbf{x}\|  \left (1 + \left(\frac{1}{4} + \frac{1}{64}+\cdots \right)\frac{|x_j|^2}{\|\x\|^2} \right ) \nonumber \\
	& \hspace{-2mm} = & \hspace{-2mm} \frac{\pi}{4} \|\mathbf{x}\| \nonumber \\
	& \hspace{-2mm}   & \hspace{-2mm} \times \left (1 + \left(F\left(-\frac{1}{2}, - \frac{1}{2}; 1; 1 \right)-1  \right)\frac{|x_j|^2}{\|\x\|^2} \right ) \nonumber \\
	& \hspace{-2mm} \overset{(a)}{=} & \hspace{-2mm}   \frac{\pi}{4} \|\mathbf{x}\|    \left (1 + \left(\frac{\Gamma(1)\Gamma(2)}{\Gamma \big (\frac{3}{2} \big )\Gamma \big (\frac{3}{2} \big )}-1\right) \frac{|x_j|^2}{\|\x\|^2}  \right )\nonumber\\
	& \hspace{-2mm} = & \hspace{-2mm}  \frac{\pi}{4} \|\mathbf{x}\|    \left (1 + \frac{4-\pi}{\pi}\frac{|x_j|^2}{\|\x\|^2}  \right ) \nonumber\\
	& \hspace{-2mm} < & \hspace{-2mm}  \frac{\pi}{4} \|\mathbf{x}\|    \left (1 + \frac{11}{40}\frac{|x_j|^2}{\|\x\|^2}  \right )\nonumber\\
	& \hspace{-2mm} = & \hspace{-2mm}  \frac{\pi}{4} \|\mathbf{x}\|   + \frac{11\pi}{160}\frac{|x_j|^2}{\|\x\|}, 
\end{eqnarray} 
where (a) is from the Gauss's summation identity~\cite[Appendix III, Eq.~(III.3)]{geofunctionSlater}.
\end{itemize}

In summary,\footnote{If all quantities are real, $\mathbb{E}\left[Z_{ i,j}\right] =  \frac{2}{\pi} \|\mathbf{x}\| + \frac{1}{6}\frac{|x_j|^2}{\|\x\|}$, as shown in~\cite{AltMin}.}  
\begin{equation}
\begin{cases}
	\mathbb{E}\left[Z_{ i,j}\right] - \frac{\pi}{4} \|\mathbf{x}\| = 0,                     & \text{if}~j\not \in \supp(\x), \\ 
	\mathbb{E}\left[Z_{ i,j}\right] - \frac{\pi}{4} \|\mathbf{x}\|  \in \left (\frac{\pi}{16}\frac{|x_j|^2}{\|\x\|}, \frac{11\pi}{160}\frac{|x_j|^2}{\|\x\|} \right ),                       & \text{if}~j\in \supp(\x). 
\end{cases} \label{eq:jjjjffff}
\end{equation}

One can observe a nontrivial difference in $\E[{Z_{i,j}}]$ for $j \in\supp(\x)$ or not. The lower and upper bounds of this difference will play an important role in our analysis. 

Recall the definition of $\bar{S}$:
\begin{equation}\label{eq:defbarS}
	\bar{S} = {\arg \min}_{S:  \|\x_S\|^2 \geq 0.999 \|\x\|^2}~|S|, 
\end{equation}
and $\bar{s} = |\bar{S}|$. Intuitively, $\bar{S}$ can be constructed by sequentially embracing the largest $|x_j|$'s until their squared energy exceeds $0.999\|\x\|^2$.  Let $S^\dagger$ be a subset of $S^0$ that contains $\bar{s}$  indices corresponding to the most significant $\bar{s}$ values among $Z_j$'s. Then, we shall prove that with high probability, \begin{equation}\label{eq:sstarbound}
	\frac{ \| \x_{S^\dagger} \|^2}{\| \x \|^2} > \frac{9}{10}.
\end{equation}
For simplicity, we define 
\begin{equation} \label{eq:uivi}
\begin{cases}
	V_i = \frac{1}{\bar{s}}\sum_{j\in \bar{S}} Z_{i, j}, \\ 
	U_i = \frac{1}{\bar{s}}\sum_{j\in S^\dagger} Z_{i, j}. 
\end{cases}  
\end{equation} 
Note that 
\begin{align}
\nonumber	\frac{1}{m}\sum_{i=1}^mU_i &=  \frac{1}{m}\sum_{i=1}^m\frac{1}{\bar{s}}\sum_{j\in S^\dagger} Z_{i, j} \\
\nonumber	&=\frac{1}{\bar{s}}\sum_{j\in S^\dagger}  \frac{1}{m}\sum_{i=1}^m Z_{i, j}\\
&=\frac{1}{\bar{s}}\sum_{j\in S^\dagger}  Z_j
\end{align}
and
\begin{align}
\nonumber	\frac{1}{m}\sum_{i=1}^mV_i &=  \frac{1}{m}\sum_{i=1}^m\frac{1}{\bar{s}}\sum_{j\in \bar{S}} Z_{i, j} \\
\nonumber	&=\frac{1}{\bar{s}}\sum_{j\in \bar{S}}  \frac{1}{m}\sum_{i=1}^m Z_{i, j}\\
&=\frac{1}{\bar{s}}\sum_{j\in \bar{S}}  Z_j
\end{align}
By the definition of $S^\dagger$,  it is easy to see that
\begin{equation}\label{eq:UiVi}
\frac{1}{m}\sum_{i=1}^mU_i \geq \frac{1}{m}\sum_{i=1}^m V_i.
\end{equation}

Then, we proceed to show~\eqref{eq:sstarbound} by contradiction.

\begin{itemize}
\item On the one hand, we obtain from~\eqref{eq:jjjjffff} and~\eqref{eq:defbarS} that
\begin{equation}\label{eq:Evi}
	\E\left[ V_i \right]> \left(\frac{\pi}{4}+\frac{999\pi}{16000\bar{s}}\right)\|\x\|.
\end{equation}

Note that 
\begin{equation}
V_i= |\a_{i}^*\x| \left(\frac{1}{\bar s} \sum_{j\in \bar{S}} |a_{i,j}| \right). 
\end{equation} 
It is easy to verify that $|\a_{i}^*\x|$ is a subgaussian variable with parameter $c_1\|\x\|^2$ (i.e., squared subgaussian norm) and that $\frac{1}{\bar{s}} \sum_{j\in \bar{S}} |a_{i,j}|$ is the average of $\bar{s}$ independent subgaussian variables with parameter $c_2\frac{1}{\bar{s}}$. Thus, $V_i$  is the product of two subgaussian variables, which is subexponential with parameter less than $c_3\frac{\|\x\|^2}{\bar{s}}$ (i.e., squared subexponential norm)~\cite[Lemma 2.7.7]{hdpBook}. Here, $c_1,c_2$ and $c_3$ are some positive constants. 

~~Applying Bernstein's inequality in Lemma~\ref{lam:B} yields the tail bound
\begin{eqnarray}
	\lefteqn{ \P\left(  \frac{1}{m}\sum_{i=1}^m V_i \leq \E\left[V_i\right] -\epsilon\right) } \nonumber \\
	&&~~~~\leq 2\exp\left(-c_4m \min\left(\frac{\bar{s}\epsilon^2}{\|\x\|^2}, \frac{\sqrt{\bar{s}}\epsilon}{\|\x\|}\right)
	\right),~~\label{EQ:V}
\end{eqnarray}
where $c_4$ is an absolute positive constant. Using~\eqref{eq:Evi} and taking $\epsilon=\frac{\pi}{16000\bar{s}}\|\x\|$, we further have 
\begin{equation}\label{Eq:Vconclusion0}
	\P\left( \frac{1}{m}\sum_{i=1}^m V_i\hspace{-0.5mm} \leq \hspace{-0.5mm}\left(\frac{\pi}{4}\hspace{-0.5mm}+\hspace{-0.5mm}\frac{998\pi}{16000\bar{s}}\right)\|\mathbf{x}\| \right)
	\hspace{-1mm}\leq 2\exp\left(-\frac{c_5 m}{\bar{s}}\right).	    
\end{equation}
Therefore, when $m \geq \frac{\bar{s}^2}{c_5}\log (n/\bar{s})$, it holds with probability exceeding $1-2\exp\left(-c_5  \bar{s}\log (n/\bar{s})\right) $
that 
\begin{equation}\label{Eq:Vconclusion}
	\frac{1}{m}\sum_{i=1}^mV_i >  \left(\frac{\pi}{4}+\frac{998\pi}{16000\bar{s}}\right)\|\mathbf{x}\|.
\end{equation}

\item On the other hand, if 
\begin{equation}
	\frac{\|\x_{S^\dagger}\|^2}{\|\x\|^2} \leq \frac{9}{10},
\end{equation}
we aim to prove that \begin{equation}\label{Eq:Uconclusion1}
		\frac{1}{m}\sum_{i=1}^m U_i <  \left(\frac{\pi}{4}+\frac{998\pi}{16000\bar{s}}\right)\|\mathbf{x}\| < \frac{1}{m}\sum_{i=1}^m V_i,
	\end{equation}
	which contradicts with~\eqref{eq:UiVi}. However, it is difficult to directly bound $U_i$'s as $V_i$'s since $a_{i,j}$'s with $j\in S^\dagger$ are dependent subgaussian variables. Thus the squared subgaussian norm of $U_i$ is no more $\mathcal O({\|\x\|^2}/{\bar{s}})$. 
	To bypass this obstacle, we will analyze the ``unconditional'' sum then apply a union bound. Specifically, we assume an arbitrary set $S^\ddagger$ of cardinality $\bar{s}$ such that 
	$\frac{\|\x_{S^\ddagger}\|^2}{\|\x\|^2} \leq \frac{9}{10}$ and define
	\begin{equation}
		U_i^\ddagger = \frac{1}{\bar{s}}\sum_{j\in S^\ddagger} Z_{i, j}. 
\end{equation}

We obtain from~\eqref{eq:jjjjffff} that
\begin{eqnarray}\label{eq:Eui}
	\E\left[ U_i^\ddagger \right] &<& \frac{\pi}{4}\|\x\| + \frac{11\pi}{160\bar{s}} \frac{\|\x_{S^\ddagger}\|^2}{\|\x\|} \nonumber \\
	& <&  \frac{\pi}{4}\|\x\| + \frac{99\pi}{1600 \bar{s}} \|\x\|.
\end{eqnarray}
Applying the same method used for $V_i$ yields that $U_i^\ddagger$ is also subexponential with parameter $c_6 {\|\x\|^2}/{\bar{s}}$.

~~Applying Bernstein's inequality produces the tail bound
\begin{eqnarray}
	\lefteqn{ \P\left(  \frac{1}{m}\sum_{i=1}^m U_i^\ddagger \geq \E\left[U_i^\ddagger\right] + \epsilon\right) } \nonumber \\
	&&~~~~\leq 2\exp\left(-c_7m \min\left(\frac{\bar{s}\epsilon^2}{\|\x\|^2}, \frac{\sqrt{\bar{s}}\epsilon}{\|\x\|}\right)
	\right),\label{EQ:U}~~~~~
\end{eqnarray}
where $c_7$ is an absolute positive constant. 

~~By taking $\epsilon =  \frac{\pi}{16000\bar{s}}\|\x\|$ and using~\eqref{eq:Eui}, we have
\begin{align}\label{Eq:Uconclusion0}\nonumber
	\P\left( \frac{1}{m}\sum_{i=1}^m U_i^\ddagger \geq  \left(\frac{\pi}{4}+\frac{991\pi}{16000\bar{s}}\right)\|\mathbf{x}\| \right)
	\\
	\leq 2\exp\left(-\frac{c_8m}{\bar{s}}\right).
\end{align}

Finally, since the total number of such $S^\ddagger$'s is no more than ${n\choose \bar{s}}$, taking a union bound yields
	\begin{align}\nonumber
		\P&\left( \frac{1}{m}\sum_{i=1}^m U_i \geq  \left(\frac{\pi}{4}+\frac{991\pi}{16000\bar{s}}\right)\|\mathbf{x}\| \right)
		\\
		\nonumber&\hspace{2cm}\leq 2{n\choose \bar{s}}\exp\left(-\frac{c_8m}{\bar{s}}\right)\\
		&\hspace{2cm}\leq 2\exp\left( c_9 \bar{s} \log \frac{n}{\bar{s}} - \frac{c_8m}{\bar{s}}\right).\label{eq:unionboundinInit}
\end{align}
Therefore, when $m\geq c_{10}{\bar{s}} ^2\log \frac{n}{\bar{s}}$, it holds with probability exceeding $1-2\exp\left( -c_{11} \bar{s} \log \frac{n}{\bar{s}} \right)$
that 
\begin{equation}\label{Eq:Uconclusion}
	\frac{1}{m}\sum_{i=1}^m U_i <  \left(\frac{\pi}{4}+\frac{991\pi}{16000\bar{s}}\right)\|\mathbf{x}\| .
\end{equation}
\end{itemize}

In the end, by relating~\eqref{Eq:Vconclusion} and~\eqref{Eq:Uconclusion}, we can conclude that with probability exceeding  $1- \exp\left( -c  \bar{s} \log \frac{n}{\bar{s}} \right)$, 
\begin{equation} \label{eq:condition2}
\frac{1}{m}\sum_{i=1}^mU_i < \frac{1}{m}\sum_{i=1}^mV_i, 
\end{equation}
which contradicts with~\eqref{eq:UiVi}.

Therefore,  when $m\geq C{\bar{s}}^2\log \frac{n}{\bar{s}}$, with probability exceeding  $1- \exp\left( -c \bar{s} \log \frac{n}{\bar{s}} \right)$,	we have
\begin{equation}
\frac{\|\x_{S^\dagger}\|^2}{\|\x\|^2} > \frac{9}{10}.
\end{equation}
Since $S^\dagger\subseteq S^0$, we obtain that
\begin{equation}
	\frac{\|\x_{S^0}\|^2}{\|\x\|^2} \geq\frac{\|\x_{S^\dagger}\|^2}{\|\x\|^2} > \frac{9}{10},
\end{equation}
which completes the proof.

\end{proof}

\subsection{Proof of Theorem \ref{thm:2}} \label{sec:proofofthm2} 

Before presenting the details of proof, we give some useful observations on the estimation operation of SPR in~\eqref{eq:subproblem1}. While this operation has no analytical solution in general, we will show from the geometric perspective that obtaining a closed-form solution is still possible in the expectation case, i.e., when there are infinitely many samples. 

To begin with, we introduce the concrete expected forms of~\eqref{eq:objfunc},~\eqref{eq:wirtingerG}, and~\eqref{eq:wirtingerH}, respectively, which were established in~\cite[Section VII.B]{WF} and~\cite[Lemma 6.1]{sunju}.  
\begin{eqnarray}
\label{eq:Expf}
\mathbb{E}[f(\z)] &\hspace{-2mm} = & \hspace{-2mm}  \|{\x}\|^{4}+\|{\z}\|^{4}-\|{\x}\|^{2}\|{\z}\|^{2}-\left|{\x}^{*} {\z}\right|^{2}, ~~\\
\label{eq:Expg}
\nabla_1 \mathbb{E}[f(\z)] & \hspace{-2mm} = & \hspace{-2mm} 
((2\|\z\|^2 - \|\x\|^2)\boldsymbol{I} - \x\x^*)\z,   \\
\label{eq:Exph}
\nabla^2 \mathbb{E}[f(\z)] & \hspace{-2mm} = & \hspace{-2mm} 
\begin{bmatrix}
\mathbf{B}_{11}& \mathbf{B}_{12}\\ \mathbf{B}_{21} & \mathbf{B}_{22}  
\end{bmatrix},
\end{eqnarray} 
where
\begin{align}
\mathbf{B}_{11} &= 2\z\z^* - \x\x^* + (2\|\z\|^2 - \|\x\|^2)\boldsymbol{I},\\
\mathbf{B}_{12} &= 2\z\z^\top,\label{eq:forrebucomment41}\\
\mathbf{B}_{21} &= 2\overline{\z}\z^*,\label{eq:forrebucomment42} \\
\mathbf{B}_{22} &= 2\overline{\z}\z^\top - \overline{\x}\x^\top +(2\|\z\|^2 - \|\x\|^2)\boldsymbol{I}.
\end{align}

For analytical convenience, let $\T$ denote the currently estimated support of SPR, over which the estimation operation is performed. In particular, $\T$ may be $S^0$ or $S^1$ in our analysis. Also, let $\mathbb C^\T$ denote the subspace $\left\{\z \in \C^n | \operatorname{supp}(\z) \subseteq \T \right\}$.  

When $m \rightarrow  \infty$, the optimization problem~\eqref{eq:subproblem1} can be rewritten as 
\begin{equation} \label{eq:subproblemExp}
\underset{\mathbf z: \operatorname{supp}(\mathbf z) = \T}{\arg \min}  \mathbb{E}[f(\z)].
\end{equation}
Since $\z\in \C^\T$, finding out the solution to~\eqref{eq:subproblemExp} is equivalent to solving 
\begin{equation} \label{eq:zeropoints}
\nabla_1 \mathbb{E}[f(\z)]_\T= \mathbf{0}.
\end{equation}
{Here $\nabla_1 \mathbb{E}[f(\z)]_\T$ is an $n$-dimensional $k$-sparse vector supported on $\T$, which keeps the entries of $\nabla_1 \mathbb{E}[f(\z)]$ indexed by $\T$ while setting the others to $0$.}

The following proposition characterizes the geometric property of the zero points of $\nabla_1 \mathbb{E}[f(\z)]_\T$. Its proof is left to Appendix~\ref{app:prop1}. 
\begin{prop}
\label{prop:expectedGeo}
If $\T \cap \operatorname{supp}(\x) \neq \emptyset$, then the zero points of $ \nabla_1 \mathbb{E}[f(\z)]_\T$ belong to the following three classes:
\begin{enumerate}[i)]
\item $\z = \mathbf{0}$,

\vspace{2mm}
\item $\z \in  \big \{\omega_\T \x_{\T}: \omega_\T \in \mathbb{C}, |\omega_\T| = \sqrt{ \frac{  \|\x\|^2 + \|\x_{\T}\|^2  }  {2\|\x_{\T}\|^2  } }\big \}$, 

\vspace{2mm}
\item $\z \in \mathcal S \doteq \big \{\z\in \mathbb{C}^{\T}: \x^* \z = 0, \|\z\| = \frac{  \|\x\|  } {\sqrt2  }  \big \}$,
\end{enumerate} 
which are the  local maximum, local minimum, and saddle points of the function $\mathbb{E}[f(\z)]$ on the subspace $\mathbb{C}^{\T}$, respectively. Also, the local maximum and each saddle point have at least one negative curvature.
\end{prop}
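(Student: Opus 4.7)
The plan is to reduce the critical-point equation $\nabla_1\mathbb{E}[f(\z)]_\T = \mathbf{0}$ on $\mathbb{C}^\T$ to a short algebraic case split and then classify each solution by differentiating $\mathbb{E}[f(\z+t\v)]$ along carefully chosen real directions $\v\in\mathbb{C}^\T$. Since $\z\in\mathbb{C}^\T$ forces $\x^*\z=\x_\T^*\z$, the closed-form expected gradient~\eqref{eq:Expg}, restricted to $\T$, reads
\[
(2\|\z\|^2 - \|\x\|^2)\,\z \;=\; (\x_\T^*\z)\,\x_\T,
\]
which produces exactly the three classes of the proposition: (i) $\z=\mathbf{0}$ satisfies the equation trivially; (ii) if $\x_\T^*\z\neq 0$, then $\z$ must be a scalar multiple of $\x_\T$, and substituting $\z=\omega_\T\x_\T$ (nonzero thanks to the hypothesis $\T\cap\operatorname{supp}(\x)\neq\emptyset$) followed by cancelling $\omega_\T\x_\T$ yields $|\omega_\T|^2 = (\|\x\|^2+\|\x_\T\|^2)/(2\|\x_\T\|^2)$; and (iii) if $\z\neq\mathbf{0}$ with $\x_\T^*\z=0$, the equation collapses to $2\|\z\|^2=\|\x\|^2$, giving the set $\mathcal{S}$.

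Next I would classify each critical point via a second-variation computation. At $\z=\mathbf{0}$, expanding
\[
\mathbb{E}[f(t\v)] = \|\x\|^4 + t^4\|\v\|^4 - t^2(\|\x\|^2\|\v\|^2 + |\x^*\v|^2)
\]
reveals second derivative $-2(\|\x\|^2\|\v\|^2 + |\x^*\v|^2)<0$ for every $\v\in\mathbb{C}^\T\setminus\{\mathbf{0}\}$, identifying the origin as a local maximum with a negative-definite restricted Hessian. At $\z\in\mathcal{S}$, choosing $\v=\x_\T$ and exploiting $\x_\T^*\z=0$ together with $\|\z\|^2=\|\x\|^2/2$, a direct differentiation of
\[
\mathbb{E}[f(\z+t\x_\T)] = \|\x\|^4 + (\|\z\|^2 + t^2\|\x_\T\|^2)^2 - \|\x\|^2(\|\z\|^2 + t^2\|\x_\T\|^2) - t^2\|\x_\T\|^4
\]
returns second derivative $-2\|\x_\T\|^4<0$ at $t=0$, supplying the promised direction of negative curvature (nonvanishing since $\T$ meets $\operatorname{supp}(\x)$). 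For $\z=\omega_\T\x_\T$, I plan to decompose an arbitrary test direction $\v\in\mathbb{C}^\T$ into its $\x_\T$-parallel and $\x_\T$-perpendicular components, plug into the Wirtinger Hessian~\eqref{eq:Exph}, and simplify using $|\omega_\T|^2\|\x_\T\|^2=(\|\x\|^2+\|\x_\T\|^2)/2$ to arrive at a manifestly nonnegative quadratic form, with the unique null direction $\mathsf{j}\,\omega_\T\x_\T$ corresponding to the phase orbit.

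The main technical obstacle is the local-minimum verification for class (ii). Unlike the origin (trivially negative definite on all of $\mathbb{C}^\T$) and the saddles (a single-direction certificate suffices), here one must establish full positive semi-definiteness of the $2|\T|\times 2|\T|$ Wirtinger Hessian on $\mathbb{C}^\T$ while correctly carrying the complex-phase symmetry that enforces an unavoidable zero eigenvalue along $\mathsf{j}\z$. I expect the computation to block-diagonalise cleanly after projecting onto the $\x_\T$-parallel/perpendicular splitting, reducing the task to two small matrix inequalities; the routine algebra is consigned to Appendix~\ref{app:prop1}.
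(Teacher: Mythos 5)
Your derivation of the critical-point equation $(2\|\z\|^2-\|\x\|^2)\z=(\x_\T^*\z)\x_\T$ and the case split on $\x_\T^*\z$ is correct and in fact cleaner than the paper's route, which instead partitions $\mathbb{C}^\T$ by the value of $\|\z\|$ (showing no critical points exist for $0<\|\z\|^2<\tfrac12\|\x\|^2$, handling $\|\z\|^2=\tfrac12\|\x\|^2$ and $\|\z\|^2>\tfrac12\|\x\|^2$ separately via an eigenvalue argument on $\x_\T\x_\T^*$). Your second-variation computations at $\z=\mathbf{0}$ and along $\v=\x_\T$ at points of $\mathcal S$ match the paper's. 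Two remarks there: to certify that points of $\mathcal S$ are \emph{saddles} rather than local maxima you also need a direction of positive curvature (the paper uses the radial direction, e.g.\ $\v=\z$ gives second variation $8\|\z\|^4>0$ since $\x^*\z=0$); your ``single-direction certificate suffices'' covers only the negative side.

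The genuine gap is in class (ii), which you yourself flag as the main obstacle and defer. First, your plan is to establish that the Wirtinger Hessian at $\omega_\T\x_{\T}$ is positive \emph{semi}-definite with a null direction $\mathsf{j}\,\omega_\T\x_{\T}$; but a critical point with a merely positive semi-definite Hessian need not be a local minimum, so even a successful completion of that computation would not, by itself, prove the claim. You would additionally have to argue that $\mathbb{E}[f(\cdot)]$ is constant along the phase orbit and that the Hessian is positive definite transverse to it. The paper sidesteps the Hessian entirely at this point: using the closed form $\mathbb{E}[f(\z)]=\|\z\|^4-\bigl(\|\x\|^2+|\x^*(\z/\|\z\|)|^2\bigr)\|\z\|^2+\|\x\|^4$, it first maximizes $|\x^*(\z/\|\z\|)|^2$ over directions in $\mathbb{C}^\T$ (forcing $\z=\rho\x_\T$) and then minimizes the resulting quadratic in $|\rho|^2$, landing exactly at $|\rho|=|\omega_\T|$. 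This is shorter, avoids the semi-definiteness pitfall, and delivers the stronger statement that $\omega_\T\x_{\T}$ is the \emph{global} minimizer on $\mathbb{C}^\T$, which is what the proof of Theorem~\ref{thm:2} actually relies on. I would recommend replacing your deferred Hessian computation with this direct two-stage optimization.
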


One can interpret from this proposition that when $\T\cap \supp(\x) \neq \emptyset$, the subproblem in~\eqref{eq:subproblem1} has a benign geometric property in the expectation sense. In particular, the minimum value of the objective function $\mathbb{E}[f(\z)]$ can only be attained at the local minimum $\omega_\T \x_{\T}$ (which is in closed-form). That is, $\omega_\T \x_{\T}$ must be the global minimum. 

Now we are ready to present the proof of Theorem \ref{thm:2}.

\vspace{2mm}
\noindent {\bf Proof of Theorem~\ref{thm:2}}
\begin{proof}
Since ${S}^{0} \cap \supp(\x) \not = \emptyset$, applying Proposition~\ref{prop:expectedGeo} with $\T = {S}^{0}$ yields
\begin{equation} \label{eq:xi}
\x^0 = \omega_{{S}^{0}} \x_{{S}^{0}}, 
\end{equation} which is the solution of the estimation operation (i.e., the solution to~\eqref{eq:subproblemExp}). Thus,
\begin{equation}
\supp(\x^0) = \supp(\x_{{S}^{0}}) =  {S}^{0} \cap \supp(\x),
\end{equation}
which has no more than $k$ indices. 

Then, the matching operation computes
\begin{equation} \label{eq:nabla1exp}
\nabla_1 \mathbb{E}[f(\x^0)] \overset{\eqref{eq:Expg}}{=} \omega_{{S}^{0}}\|\x_{{S}^{0}}\|^2 (\x_{{S}^{0}} - \x) 
\end{equation}
and adds $k$ indices to $S^0$ corresponding to $k$ most significant entries in $\nabla_1 \mathbb{E}[f(\x^0)]$. 
Note that the term $\omega_{{S}^{0}}\|\x_{{S}^{0}}\|^2$ on the right-hand side of~\eqref{eq:nabla1exp} is always a nonzero scalar. Thus,
\begin{eqnarray}
\label{matchingOP}
\supp {( \nabla_1 \mathbb{E}[f(\x^0)] )}  &=&  \supp {(\x_{{S}^{0}} - \x)} \nonumber \\ &=& \supp(\x)\backslash {{S}^{0}}, 
\end{eqnarray}
which is exactly the set of remaining elements of $\supp(\x)$ that haven't been selected before. Therefore, the added index set in the matching operation must include $\supp(x)\backslash {{S}^{0}}$. In other words, 
\begin{equation} \label{eq:cfk}
\mathcal C(\nabla_1 \mathbb{E}[f(\x^{0})], k) \supseteq \supp(\x)\backslash {{S}^{0}}. 
\end{equation}

In summary,
\begin{eqnarray}
{{S}^{1}} &=& { {S}^{0}} \cup \mathcal C(\nabla_1 \mathbb{E}[f(\x^{0})], k) \nonumber \\
&\overset{\eqref{eq:cfk}}{\supseteq}& { {S}^{0}}\cup \left(\supp(\x)\backslash {{S}^{0}}\right) \nonumber \\
&=& \supp(\x).
\end{eqnarray}
Again, by applying Proposition~\ref{prop:expectedGeo} with $
\T = {{S}^{1}} = \supp(\x),
$ we have $|\omega_{{S}^{1}}|=1$ and  
\begin{equation}
\x^{1} = \omega_{S^{1}} \x_{{S}^{1}} = \x \mathrm{e}^{\mathsf{j}\phi}
\end{equation}
for some $\phi\in[0,2\pi)$. Therefore, the signal recovery is exact in the expectation case.  
\end{proof}

\subsection{Proof of Theorem \ref{thm:3}} \label{sec:proofofthm3}

Theorem~\ref{thm:3} characterizes the behavior of SPR in the non-initial step, which consists of three major operations: i) matching, ii) estimation and iii) pruning. Before proving our theorem, we shall first analyze the accuracy of these operations, respectively. 

\subsubsection{Estimation}
Proposition~\ref{prop:expectedGeo} offers a favorable geometric property of the subproblem~\eqref{eq:subproblemExp} of signal estimation in the expectation case (i.e., when $m \rightarrow \infty$). Notably, this property suggests a closed-form solution $\omega_\T \x_{\T}$, which plays a vital role in the proof of Theorem~\ref{thm:2}. 
Inspired by this, we speculate if the same condition $\T \cap \operatorname{supp}(\x) \neq \emptyset$ (i.e., catching at least one support index) could also ensure a favorable geometric property with the finite amount of samples. 

In the finite sampling case, the solution to the subproblem~\eqref{eq:subproblem1} also belongs to the subspace $\mathbb C^\T$. However, a condition guaranteeing a favorable geometric property for this case is more demanding. In particular, $\T \cap \operatorname{supp}(\x) \neq \emptyset$ may not be enough. Thus, we consider a stronger condition $\frac{\|\x_\T\|^2}{\|\x\|^2} > \frac{9}{10}$ that represents catching sufficient energy of $\x$, as appeared in Theorem~\ref{thm:1}.

In the following proposition, we show that $\frac{\|\x_\T\|^2}{\|\x\|^2} > \frac{9}{10}$ indeed implies a benign geometric property for finite samples. Specifically, the local minima of $f(\z)$ are clustered around the expected global minimum
$\omega_\T \x_{\T}$. Moreover, the saddle points and the maximizers far away from $\omega_\T \x_{\T}$ possess at least one negative curvature.

\begin{prop} [Geometric Property in $\C^\T$] \label{prop:geoproOnsub}
{For any constant $\epsilon>0$, there exist positive absolute constants $C,c$ such that if
\begin{equation}
	\frac{\|\x_\T\|^2}{\|\x\|^2} > \frac{9}{10},
\end{equation}
all local minimizers of $f(\z)$ on the subspace $\C^\T$ locate in the area 
\begin{equation}\label{Eq:areaepsilon}
	\{\z\in\C^\T|\operatorname{dist}(\z, \omega_\T \x_{\T})\leq \epsilon\|\x\|\}
\end{equation}
with probability exceeding $1 - cm^{-1}$ when $m \geq Ck \log^3 k$. Also, the saddle points and maximizers out of this area possess at least one negative curvature.
Furthermore, if
\begin{equation}
	\frac{\|\x_\T\|^2}{\|\x\|^2} = 1,
\end{equation}
then $m \geq Ck \log^3 k$ ensures that all local minimizers of $f(\z)$ on $\C^\T$ are exactly $\x$ up to a global phase with probability exceeding $1 - cm^{-1}$. Also, the saddle points and maximizers possess at least one negative curvature.}
\end{prop}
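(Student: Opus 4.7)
The plan is to transfer the expected geometric landscape from Proposition~\ref{prop:expectedGeo} to the finite-sample landscape via uniform concentration of the Wirtinger gradient and Hessian over the $k$-dimensional subspace $\C^\T$. In the expectation case, the critical point set of $\E[f(\z)]$ on $\C^\T$ consists of the local maximum at $\z=\mathbf{0}$, the saddle slice $\mathcal S = \{\z\in\C^\T:\x^*\z=0,\|\z\|=\|\x\|/\sqrt{2}\}$, and the local minimizers $\omega_\T\x_\T$; at every non-minimum critical point, Proposition~\ref{prop:expectedGeo} provides a direction of strict negative curvature. Ruling out spurious empirical local minima therefore reduces to showing that outside an $\epsilon\|\x\|$-tube around the set $\{\omega_\T\x_\T\}$, the empirical function $f$ either has a uniformly large gradient or inherits a negative-curvature direction from the expectation.

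First I would restrict attention to a bounded region. Because $f$ is almost surely coercive and $\E[f(\z)]$ grows like $\|\z\|^4$, uniform closeness of $f$ to $\E[f]$ forces any empirical local minimum on $\C^\T$ to satisfy $\|\z\|\leq R\|\x\|$ for some absolute constant $R$. I would then split $\{\z\in\C^\T:\|\z\|\leq R\|\x\|\}$ into: (i)~an $\epsilon\|\x\|$-tube around $\{\omega_\T\x_\T\}$; (ii)~a neighborhood of $\mathcal S\cup\{\mathbf{0}\}$, where Proposition~\ref{prop:expectedGeo} together with the assumption $\|\x_\T\|^2/\|\x\|^2>9/10$ yields a quantitative lower bound on $-\lambda_{\min}(\nabla^2\E[f(\z)]_\T)$; (iii)~the remaining bulk, where a direct calculation using~\eqref{eq:Expg} produces $\|\nabla_1\E[f(\z)]_\T\|\geq c\|\x\|^3$.

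To transfer these bounds to the empirical setting, I would apply Bernstein's inequality (Lemma~\ref{lam:B}) entrywise to $\nabla f(\z)_\T$ and $\nabla^2 f(\z)_\T$. Each entry is a sum of sub-exponential random variables of the form $(|\a_i^*\z|^2-|\a_i^*\x|^2)\,q_i(\a_i)$ with $q_i$ quadratic in $\a_i$, whose sub-exponential norm is controlled by $\|\z\|^2+\|\x\|^2$. Combining a pointwise Bernstein bound with an $\epsilon$-net on the real $2k$-dimensional ball $\{\z\in\C^\T:\|\z\|\leq R\|\x\|\}$ (whose covering number is exponential in $k$) and a Lipschitz extension for $\nabla f$ and $\nabla^2 f$, I would obtain uniform concentration within an arbitrarily small multiple of $\|\x\|^3$ with probability exceeding $1-cm^{-1}$ when $m\geq Ck\log^3 k$. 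The polylog factor arises from truncating the heavy-tailed products of Gaussians before applying Bernstein, in the spirit of~\cite{sunju}. With this concentration, the empirical gradient is nonzero throughout region (iii), the empirical Hessian keeps a strict negative-curvature direction throughout region (ii), and hence every empirical local minimum must lie in the $\epsilon\|\x\|$-tube, establishing the first claim.

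For the second claim, $\|\x_\T\|^2=\|\x\|^2$ gives $|\omega_\T|=1$, so the expected minimizer set becomes $\{\x\mathrm{e}^{\mathsf{j}\phi}:\phi\in[0,2\pi)\}$ and a direct evaluation of $\nabla^2\E[f]$ at $\x\mathrm{e}^{\mathsf{j}\phi}$ on $\C^\T$ shows strict positive definiteness orthogonal to the global-phase direction. The same concentration bound then keeps the empirical Hessian strictly positive definite, modulo phase, in the $\epsilon\|\x\|$-tube; combined with the identity $f(\x\mathrm{e}^{\mathsf{j}\phi})=0$ and $f\geq 0$ everywhere, this forces every empirical local minimizer in the tube to coincide with $\x\mathrm{e}^{\mathsf{j}\phi}$ for some $\phi$. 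I expect the main obstacle to be the sharp $k\log^3 k$ rate rather than a weaker $k^2\log k$ bound: squeezing out the extra factor of $k$ requires a careful truncation and chaining-style refinement so that the concentration radius decays fast enough to match the expected-landscape gaps above, all while keeping the union-bound budget within the $2k$-real-dimensional net.
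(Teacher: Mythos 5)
Your high-level architecture coincides with the paper's: both partition $\C^\T$ into a negative-curvature region covering $\mathbf 0$ and the saddle slice, a large-gradient bulk, and a constant-radius neighborhood of $\omega_\T\x_{\T}$, and both transfer the expected landscape to the empirical one via truncation, Bernstein-type bounds, and $\epsilon$-nets in the style of Sun--Qu--Wright (these are exactly the paper's regions $\R_1,\R_2^\z,\R_2^\h,\R_3$ in \eqref{eq:arear1}--\eqref{eq:arear3} and Propositions~\ref{prop:2.3}--\ref{prop:2.7}). The genuine divergence is in how the neighborhood of $\omega_\T\x_{\T}$ is handled for the first claim. You propose to show the empirical gradient is nonzero everywhere outside the $\epsilon\|\x\|$-tube, which forces you to prove \emph{uniform} gradient concentration at level $\sim\epsilon\|\x\|^3$ over a net of the surrounding annulus, because $\|\nabla_1\mathbb E[f(\z)]_\T\|$ itself decays linearly in $\operatorname{dist}(\z,\omega_\T\x_{\T})$ as you approach the tube; this is doable but makes the net/truncation budget $\epsilon$-dependent in the most delicate part of the argument. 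The paper avoids it: Proposition~\ref{prop:2.4} gives restricted strong convexity of the \emph{empirical} Hessian in the direction $\g(\z)$ uniformly over the ball $\R_3$, and the two-sided integral Taylor expansion (Lemma~\ref{lem:Taylor}) combined with gradient concentration at the \emph{single} point $\u=\omega_\T\x_{\T}$ (where $\nabla\mathbb E[f(\u)]_\T=\mathbf 0$, so Lemma~\ref{lm:matrixconcentration} bounds $\|\nabla f(\u)_\T\|$ by $\delta\|\x\|^3$) yields $\|\nabla f(\z)\|\geq\frac{11}{100\sqrt2}\|\x\|^2\operatorname{dist}(\z,\u)-\frac{\delta}{\sqrt2}\|\x\|^3$; taking $\epsilon=\frac{100}{11}\delta$ gives the tube, and when $\T\supseteq\supp(\x)$ the error term vanishes identically (since then $\nabla f(\x\mathrm e^{\mathsf j\phi})=\mathbf 0$ exactly), which is how the paper gets the second claim without your separate Hessian-positive-definite-modulo-phase argument. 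One step of yours would not survive as written: ``uniform closeness of $f$ to $\mathbb E[f]$ forces any empirical local minimum to satisfy $\|\z\|\le R\|\x\|$'' is not a valid inference, since closeness of function values does not exclude critical points at large $\|\z\|$; you need the radial derivative $\Re[\z^*\nabla_1 f(\z)]$ to be positive there, which is precisely what the paper's region $\R_2^\z$ and Proposition~\ref{prop:2.5} provide (and which makes a separate boundedness step unnecessary).
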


Recently, an interesting work by Li {\it et al.}~\cite{newlossOn} has suggested one way to refine the geometric analysis on the subspace $\mathbb C^\T$. They showed that an analogous geometric property for a solution space in $\mathbb R^n$ can be guaranteed with only $m = \Omega(n)$ Gaussian samples with high probability. This result, however, may not be directly applied to our analysis, since it is based on a different loss function in the real case.

\subsubsection{Matching} 
Now, we proceed to analyze the accuracy of the matching operation. This operation aims to find out the remaining support indices (i.e., to identify $\supp(\x) \backslash S^0$) that haven't been selected in the initialization step. To this end, it adds to $S^0$ a set of $k$ indices corresponding to the most significant $k$ elements in $\nabla_1 f(\x^{0})$. Recall from~\eqref{eq:10daoshu} that
\begin{equation}\label{eq:gradFformatching}
\nabla_1 f(\x^{0})= \frac1m \sum_{i=1}^m \Big ((|\a_i^* \x^{0} |^2 - y_i^2)\a_i\a_i^* \Big ) \x^{0}.
\end{equation}

First, we discuss a trivial case for $\nabla_1 f(\x^{0})_l$: $l\in S^0$. Note that we derive $\x^0$ by solving the subproblem
\begin{equation} 
\mathbf x^{0}\gets    {\arg \min}_{\mathbf z : \operatorname{supp}(\mathbf z) = {S}^{0}}  f(\mathbf z).
\end{equation}
Since $\x^0$ is the local/global minimizer, it implies that it holds for all $l \in {S^0}$ that
\begin{equation}\label{eq:gradF0}
\nabla_1 f(\x^0)_l = 0.
\end{equation}

Then, we consider the indices $l \in (S^0)^c$, which can be divided into two disjoint set: $\left(\operatorname{supp}(\x) \cup S^0\right)^c$ and $ \supp(\x)\backslash S^0$. See Fig.~\ref{fig:setdividing} for an illustration.
\begin{figure}[t]
\centering
\includegraphics[width=.35\textwidth]{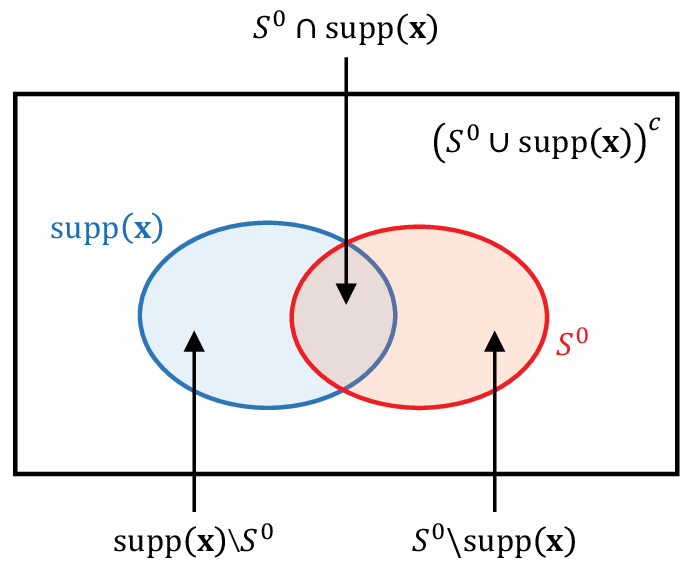}
\caption{An illustration for the separation of $S^0$ and $\supp(\x)$}
\label{fig:setdividing}
\end{figure}
Generally speaking, it is difficult to directly analyze the most significant elements of $\nabla_1 f(\x^{0})$ (i.e.~\eqref{eq:gradFformatching}) due to the random vectors $\a_i$'s. Intuitively, $\nabla_1 f(\x^{0})$ should be close to the expectation $\nabla_1 \mathbb{E} [f(\x^{0})]$ when $m$ is large enough. Here our primary novelty is to connect $\nabla_1 f(\x^{0})$ with $\nabla_1 \mathbb{E}[f(\x^{0})]$, which has a concise expression according to~\eqref{eq:Expg}.

Since $\supp(\x^{0}) = S^0$, we have
$(\x^0)_l = 0$, $\forall l \in\supp(\x)\backslash S^0$, and hence
\begin{align} \label{eq:1231}
\left|\nabla_1 \mathbb{E}[f(\x^{0})]_l \right|& = \Big |2\|\x^{0}\|^2(\x^0)_l-  \|\x\|^2(\x^0)_l- (\x^*\x^{0}) x_l \Big |\nonumber\\
&=\begin{cases}
0, & \text{if}~l \in\left(\operatorname{supp}(\x) \cup S^0\right)^c, \\ 
|\x^*\x^{0} ||x_l|, & \text{if}~l \in \supp(\x)\backslash S^0. 
\end{cases}  	
\end{align}
Therefore, this analytical solution directly allows to distinguish whether a chosen index belongs to the true support of $\x$ or not.

The next proposition reveals that the distance between $\nabla_1 f(\x^{0})_l$ and $\nabla_1 \mathbb{E} [f(\x^{0})]_l$ is indeed well bounded for all $l\in\{1,\cdots,n\}$. In particular, it is controlled by a constant $\epsilon$ that can be arbitrarily small. The proof is deferred to Appendix~\ref{sec:proofproposition3}. 

\begin{figure}[t]
\centering
\hspace{6.8mm}\includegraphics[width=0.45\textwidth]{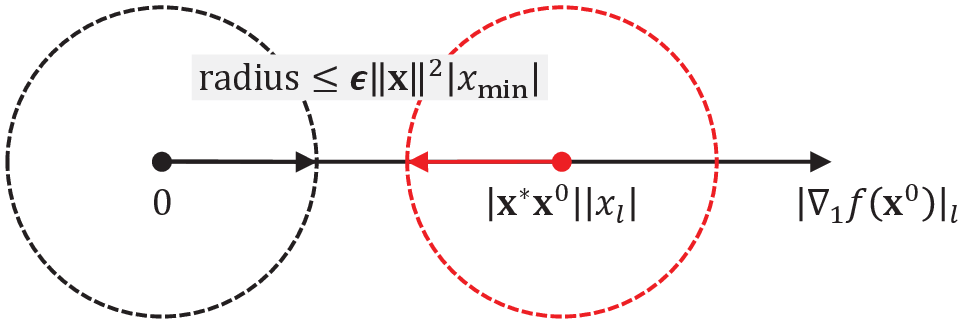}
\caption{An illustrative explanation on Proposition~\ref{lm:elemenEzxtbound}.}
\label{fig:radius}
\end{figure}

\begin{prop} [Concentration of Gradient]\label{lm:elemenEzxtbound}
Suppose $|x_{\min}| = \Omega ({\|\x\|}/{\sqrt{k}})$. For any constant $\epsilon > 0$ and vector $\z \in \mathbb C^n$ satisfying
\begin{equation}
\frac{\|\z\|^2}{\|\x\|^2} \leq \kappa
\end{equation} 
with some constant $\kappa$ it holds with probability exceeding $1-cn^{-1}$ that
\begin{equation}
\label{leq:elemenEzxtbound}
\left|\nabla_1 f(\z)_{l} - \nabla_1 \mathbb{E}[f(\z)]_{l}\right| < \epsilon \|\x\|^2 |x_{\min}|
\end{equation}
for all $l \in \{1, \cdots, n\}$ when 
\begin{equation}
m\geq C\max\{k\log n, \sqrt{k}\log^3n\},
\end{equation}
where $C$ and $c$ depend on $\kappa$ and $\epsilon$.
\end{prop}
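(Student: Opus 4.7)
The plan is to express each coordinate-wise deviation as an i.i.d.\ sum and apply a truncation-plus-Bernstein argument, since each summand is a degree-four polynomial in complex Gaussians and hence only sub-Weibull of order $1/2$ rather than sub-exponential. Starting from $y_i^2=|\a_i^*\x|^2$ together with~\eqref{eq:10daoshu} and~\eqref{eq:Expg}, I would write
\begin{equation*}
\nabla_1 f(\z)_l-\nabla_1\E[f(\z)]_l=\frac{1}{m}\sum_{i=1}^m\bigl(X_i-\E X_i\bigr),\qquad X_i:=\bigl(|\a_i^*\z|^2-|\a_i^*\x|^2\bigr)\,a_{il}\,(\a_i^*\z).
\end{equation*}
Under the hypothesis $\|\z\|^2\le\kappa\|\x\|^2$, the first factor is sub-exponential with norm $\lesssim\|\x\|^2$, while $a_{il}$ and $\a_i^*\z$ are sub-Gaussian with norms $\lesssim 1$ and $\lesssim\|\x\|$; the Orlicz product rule (with exponents satisfying $1/(1/2)=1+1/2+1/2$) gives $\|X_i\|_{\psi_{1/2}}\le c_1\|\x\|^3$, and a fourth-moment Gaussian calculation combined with Cauchy--Schwarz yields $\operatorname{Var}(X_i)\le c_2\|\x\|^6$.

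I would then truncate at the level $M:=c_3\|\x\|^3\log^2 n$; the $\psi_{1/2}$ tail $\P(|X_i|>t)\le 2\exp(-(t/c_1\|\x\|^3)^{1/2})$ together with a union bound over $i$ (taking $m=\mathrm{poly}(n)$) gives $\P(\exists\, i:|X_i|>M)\le c_4 n^{-2}$, and the induced truncation bias is super-polynomially small in $\log n$ and thus negligible against the target $\epsilon\|\x\|^2|x_{\min}|$. Applying Bernstein's inequality (Lemma~\ref{lam:B}) to the truncated centered variables $\tilde X_i$ with variance proxy $c_2\|\x\|^6$ and uniform bound $2M$ yields
\begin{equation*}
\P\!\left(\Big|\tfrac{1}{m}\sum_i(\tilde X_i-\E\tilde X_i)\Big|\ge\tfrac{t}{2}\right)\le 2\exp\!\left(-c_5\min\!\Big(\tfrac{mt^2}{\|\x\|^6},\tfrac{mt}{\|\x\|^3\log^2 n}\Big)\right).
\end{equation*}
Setting $t=\epsilon\|\x\|^2|x_{\min}|$ and requiring the exponent to exceed $2\log n$ produces the two conditions $m\gtrsim\|\x\|^2\log n/(\epsilon^2|x_{\min}|^2)$ and $m\gtrsim\|\x\|\log^3 n/(\epsilon|x_{\min}|)$. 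Under the standing sparse-signal regime $|x_{\min}|\gtrsim\|\x\|/\sqrt k$ carried through the paper, these reduce exactly to $m\gtrsim k\log n$ and $m\gtrsim\sqrt k\log^3 n$, matching the claim, and a final union bound over $l\in\{1,\dots,n\}$ contributes only a harmless $\log n$ absorbed into $C$ and delivers the uniform probability $1-cn^{-1}$.

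The main obstacle is calibrating the truncation level $M$: if it is chosen too small, neither the tail union bound over $i$ nor the truncation-bias bound survives; if it is chosen too large, the Bernstein second branch $mt/M$ degrades past $\sqrt k\log^3 n$. Because the summand is only $\psi_{1/2}$, the purely sub-Gaussian/sub-exponential Bernstein used in WF-style analyses cannot be invoked directly, and the $\log^3 n$ in the second branch factors naturally as $\log^2 n$ (the truncation level needed to dominate $\max_i|X_i|$ with high probability, owing to the $\exp(-\sqrt{t/K})$ tail) times one extra $\log n$ from requiring the Bernstein exponent to beat the union-bound factor over the $n$ coordinates. A subsidiary technicality is verifying the $\|\x\|^6$ variance bound in the Wirtinger setting, where cross-moments of the form $\E[(\a_i^*\z)^{\alpha}\overline{(\a_i^*\x)}^{\beta}]$ must be handled directly rather than via a real-valued Isserlis expansion.
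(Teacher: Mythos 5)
Your proposal is correct and follows essentially the same route as the paper's proof: a coordinate-wise truncation at level $\mathcal{O}(\|\x\|^3\log^2 n)$ followed by a bounded-variable Bernstein bound with variance proxy $\mathcal{O}(\|\x\|^6/m)$, after which the standing assumption $|x_{\min}|\gtrsim \|\x\|/\sqrt{k}$ converts the target deviation $\epsilon\|\x\|^2|x_{\min}|$ into exactly the two branches $m\gtrsim k\log n$ and $m\gtrsim \sqrt{k}\log^3 n$, with a final union bound over the $n$ coordinates. The only cosmetic differences are that the paper first splits the summand into its $|\a_i^*\z|^2$- and $|\a_i^*\x|^2$-parts, controls the un-truncated residual by Chebyshev's inequality rather than by your union-bound-plus-bias argument, and invokes the two-parameter bounded Bernstein inequality (Lemma~\ref{lm:boundedsubgaussian}) rather than the sub-exponential form of Lemma~\ref{lam:B}, which by itself would not retain the separate variance and range parameters you need.
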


Fig.~\ref{fig:radius} gives an illustrative explanation for Proposition~\ref{lm:elemenEzxtbound}. That is, $|\nabla_1 f(\z)_{l}|$ is centered at $|\nabla_1 \mathbb{E}[f(\z)]_{l}|$ with ``radius'' no more than $\epsilon \|\x\|^2 |x_{\min}|$. Combining~\eqref{eq:gradF0},~\eqref{eq:1231}, and~\eqref{leq:elemenEzxtbound}, we have the following cases. 
\begin{enumerate} [~~~~~~~~~~~~]
\renewcommand{\theenumi}{\arabic{enumi}}
\renewcommand{\labelenumi}{Case \theenumi:}
\item\label{case1} For $l \in \supp(\x)\backslash S^0$, the center point of $\nabla_1 f(\x^0)_l $ is $|\x^*\x^0| |x_l|$.

\item\label{case2}  For $l\in \left(\operatorname{supp}(\x) \cup S^0\right)^c$, the center point of $\nabla_1 f(\x^0)_l $ is $0$. 

\item\label{case3}  For $l\in S^0$, $	\nabla_1 f(\x^0)_l = 0$.
\end{enumerate}
It is trivial to see that $\left(\supp(\x)\backslash S^0\right) \cup\left( \left(\operatorname{supp}(\x) \cup S^0\right)^c\right)\cup S^0 = \{1,2,\cdots,n\}$, and the three sets are disjoint. Thus the three cases covers all $l\in  \{1,2,\cdots,n\}$.

Since the term $|\x^*\x^0| \neq 0$ (see~\eqref{eq:60comparable}), the center points $|\x^*\x^0| |x_l|$ (where $l \in \supp(\x)\backslash S^0$) and $0$ are always separated. Therefore, in order to distinguish between the cases of $l \in \supp(\x)\backslash S^0$ (Case \ref{case1}) and $l \not\in \supp(\x)\backslash S^0$ (Case \ref{case2} \& Case~\ref{case3}), i.e., whether index $l$ belongs to the remaining support, it essentially requires that the two dashed circles in Fig.~\ref{fig:radius} do not overlap. In fact, this can be readily guaranteed by properly controlling $\epsilon$ in the ``radius'' $\epsilon \|\x\|^2 |x_{\min}|$. This explains the main idea for proving Theorem~\ref{thm:3}.

\subsubsection{Pruning} 
The non-initial step of SPR involves a third operation called pruning, which narrows down the number of candidates in the estimated support set to $k$. Since $\x^1$ already achieves exact recovery in theory, its support must equals $\text{supp}(\x)$, which has no more than $k$ elements. Hence, we need not to prune the estimated support set any more.

We now have all ingredients to prove Theorem~\ref{thm:3}, except for a dependence issue caused from the random setting. That is, outputs of different operations of SPR all depend on the random matrix $\mathbf A$. In fact, this will make our analysis complicated whenever the expectation $\nabla_1 \mathbb{E}[f(\z)]$ is involved (e.g., in Proposition~\ref{lm:elemenEzxtbound}). The following example explains why this issue affects our analysis.

Since $\mathbf A$ is used to estimate $\x^0$, $\x^0$ is a random vector that depends on $\mathbf A$. When applying Proposition~\ref{lm:elemenEzxtbound} to analyze the matching operation, therefore, it is incorrect to directly plug $\z = \x^0$ into the analytical expression of $\nabla_1 \mathbb{E} [f( \z)]$ in~\eqref{eq:Expg}.
Instead, it would require to re-deduce $\nabla_1 \mathbb{E}[f(\x^{0})]$ by taking into account the dependency. In particular, the expectation should also be taken for $\x^0$ with respect to $\mathbf A$. However, this can be difficult due to the iterative nature of the algorithm.

Here, we tentatively ``ignore'' the dependency on the sampling vectors among different operations of SPR and present the proof of Theorem~\ref{thm:3}. This dependence issue will be formally addressed in the proof of Theorem~\ref{thm:4} via a simple partition strategy. 
 
\vspace{2mm}
\noindent {\bf Proof of Theorem~\ref{thm:3}}
\begin{proof}  
As mentioned, our idea for proving this theorem is to apply Proposition~\ref{lm:elemenEzxtbound} to distinguish whether a newly chosen index belongs to the set of remaining support indices. To apply this proposition, we need to establish a relation between $\|\x^0\|$ and $\|\x\|$. 

Since $S^0$ satisfies $\frac{\|\x_{S^0}\|^2}{\|\x\|^2} > \frac{9}{10}$, applying Proposition~\ref{prop:geoproOnsub} with $\mathcal T = S^0$ yields that
\begin{equation} \label{eq:x0good}
\operatorname{dist}(\x^{0}, \omega_{S^0} \x_{S^0}) < \epsilon \|\x\| 
\end{equation}
holds with probability exceeding $1 - cm^{-1}$ when $m \geq Ck \log^3 k$. 

Denote
\begin{equation}
\x^{0} = \omega_S^0 \x_{S^0} \mathrm{e}^{\mathsf{j}\phi(\x^{0})} +t\g
\end{equation} with $t\in[0, \epsilon\|\x\|)$, $\supp(\g) = S^0$ and $\|\g\| =1$. Since 
\begin{equation}
|\x^*\x^{0}| = |\x^*\omega_{S^0} \x_{S^0} \mathrm{e}^{\mathsf{j}\phi(\x^{0})} + t\x^*\g|,
\end{equation} we have
\begin{align} \label{eq:60comparable}
|\x^*\x^{0}|& \geq |\x^*\omega_S^0 \x_{S^0}|-t|\x^*\g| \nonumber\\
& \geq|\omega_{S^0}| \|\x_{S^0}\|^2 - \epsilon\|\x\|\|\x_{S^0}\| \nonumber\\	
&=\sqrt{\frac{\|\x\|^2+\|\x_{S^0}\|^2}{2}} \|\x_{S^0}\| - \epsilon\|\x\|\|\x_{S^0}\|\nonumber\\
&> \left (\frac{9}{10} - \epsilon \right ) \|\x\|^2.
\end{align}

Moreover, from~\eqref{eq:x0good} we know that
\begin{equation}\label{Eq:controlnormx01}
\|\x^0\|\leq \|\omega_{S^0} \x_{S^0}\|+\epsilon\|\x\|,
\end{equation}
Since $\frac{\|\x_{S^0}\|^2}{\|\x\|^2}\leq 1$ and $|\omega_{S^0}| = \sqrt{ \frac{  \|\x\|^2 + \|\x_{S^0}\|^2  }  {2\|\x_{S^0}\|^2  } }$,~\eqref{Eq:controlnormx01} implies that
\begin{equation}
\|\x^0\|\leq (1+\epsilon)\|\x\|.
\end{equation} 
Therefore, by taking $\epsilon<0.1$, we have
\begin{equation}
\frac{\|\x^{0}\|^2}{\|\x\|^2} \leq \frac{121}{100}.
\end{equation}

Then, we proceed to distinguish between indices $j\in \supp(\x)\backslash S^0$ and $j'\not\in \supp(\x)\backslash S^0$. Note that \begin{equation}\label{leq:Efz-fz}
\Big | |\nabla_1 f(\z)_{l}| - |\nabla_1 \mathbb{E}[f(\z)]_{l}| \Big | \leq \Big |\nabla_1 f(\z)_{l} - \nabla_1 \mathbb{E}[f(\z)]_{l} \Big |.
\end{equation}

\begin{itemize}
\item  For all $j\in\supp(\x)\backslash S^0$,  by Proposition~\ref{lm:elemenEzxtbound}, we have that 
\begin{align}
	|\nabla_1 f(\x^{0})_{j}| &\overset{\eqref{leq:elemenEzxtbound},\eqref{leq:Efz-fz}}{\geq} |\nabla_1 \E [f(\x^{0})]_{j}|-\epsilon\|\x\|^2 |x_{\min}|\nonumber \\
	&\overset{\eqref{eq:1231},\eqref{eq:60comparable}}{\geq} \left (\frac{9}{10}- \epsilon \right ) \|\x\|^2|x_j|-\epsilon\|\x\|^2 |x_{\min}|\nonumber \\
	&	~~\geq~~ \left (\frac{9}{10} -2\epsilon \right ) \|\x\|^2|x_\text{min}| \label{eq:lowerboundnabla}
\end{align}
with probability exceeding $1 - cn^{-1}$ when 
\begin{equation} \label{eq:klognmaxother}
	m\geq C\max\{k\log n, \sqrt{k}\log^3n\}.
\end{equation}

\item For $j'\in\left(\operatorname{supp}(\x) \cup S^0\right)^c$,~\eqref{eq:1231} suggests that $$|\nabla_1 \mathbb{E}[f(\x^{0})_{j'}] | = 0.$$ Plugging this into~\eqref{leq:elemenEzxtbound} and~\eqref{leq:Efz-fz}, we have 
\begin{equation} \label{eq:upperboundnabla}
	|\nabla_1 f(\x^{0})_{j'}| \leq  \epsilon\|\x\|^2 |x_{\min}| .
\end{equation}

\item For $j'' \in S^0$, we have $|\nabla_1 f(\x^{0})_{j''}|=0$ (see~\eqref{eq:gradF0}). 
\end{itemize}
By relating~\eqref{eq:lowerboundnabla} and~\eqref{eq:upperboundnabla} and taking $\epsilon<0.3$, we obtain
\begin{equation} \label{eq:nablagreater}
|\nabla_1 f(\x^{0})_{j}| > |\nabla_1 f(\x^{0})_{j'}|.
\end{equation}
Moreover, combining~\eqref{eq:gradF0} with~\eqref{eq:nablagreater} yields that
\begin{equation} 
|\nabla_1 f(\x^{0})_{j}| > |\nabla_1 f(\x^{0})_{j'}|\geq 0 = |\nabla_1 f(\x^{0})_{j''}|.
\end{equation}
holds for all $j\in\supp(\x)\backslash S^0$, $j'\in\left(\operatorname{supp}(\x) \cup S^0\right)^c$, and $j'' \in S^0$.

Since $\left(\operatorname{supp}(\x) \cup S^0\right)^c \cup S^0 = \left(\supp(\x)\backslash S^0\right)^c$, we have shown that
\begin{equation} 
|\nabla_1 f(\x^{0})_{j}| > |\nabla_1 f(\x^{0})_{j'}|,
\end{equation}
holds for all $j\in\supp(\x)\backslash S^0$ and $j'\not\in\supp(\x)\backslash S^0$.

Recall that the matching operation selects $k$ indices corresponding to the most significant $k$ elements in $\nabla_1 f(\x^{0})$. Thus,~\eqref{eq:nablagreater} implies that all the remaining support indices will be chosen in this operation. In other words, 
\begin{equation}
\mathcal C(\nabla_1 f(\x^{0}), k) \supseteq \supp(\x)\backslash S^0. 
\end{equation} 
Therefore, 
\begin{equation} \label{eq:s1good}
S^1 = S^0\cup \mathcal C(\nabla_1 f(\x^{0}), k) \supseteq  \supp(\x).
\end{equation}

Furthermore, from the condition that $|S^0|= k$, we have $|S^1| \leq 2k$. Applying Proposition~\ref{prop:geoproOnsub} again with $\mathcal T = S^1$ yields that
\begin{equation}
\x^{1} = \x  \label{eq:x1good}
\end{equation}
with probability exceeding $1 - cm^{-1}$ when 
\begin{equation}
m \geq Ck \log^3 k. \label{eq:klogk3}
\end{equation}
That is, the original signal $\x$ is exactly recovered. 

The sampling complexities in~\eqref{eq:klognmaxother} and~\eqref{eq:klogk3} are both guaranteed by
\begin{equation}
m \geq  C\max\left\{k\log^3 k, k\log n,\sqrt{k}\log^3 n\right\},
\end{equation}
which, therefore, completes the proof. 
\end{proof}

\begin{figure*}[t]
\centering
\includegraphics[width=.95\textwidth]{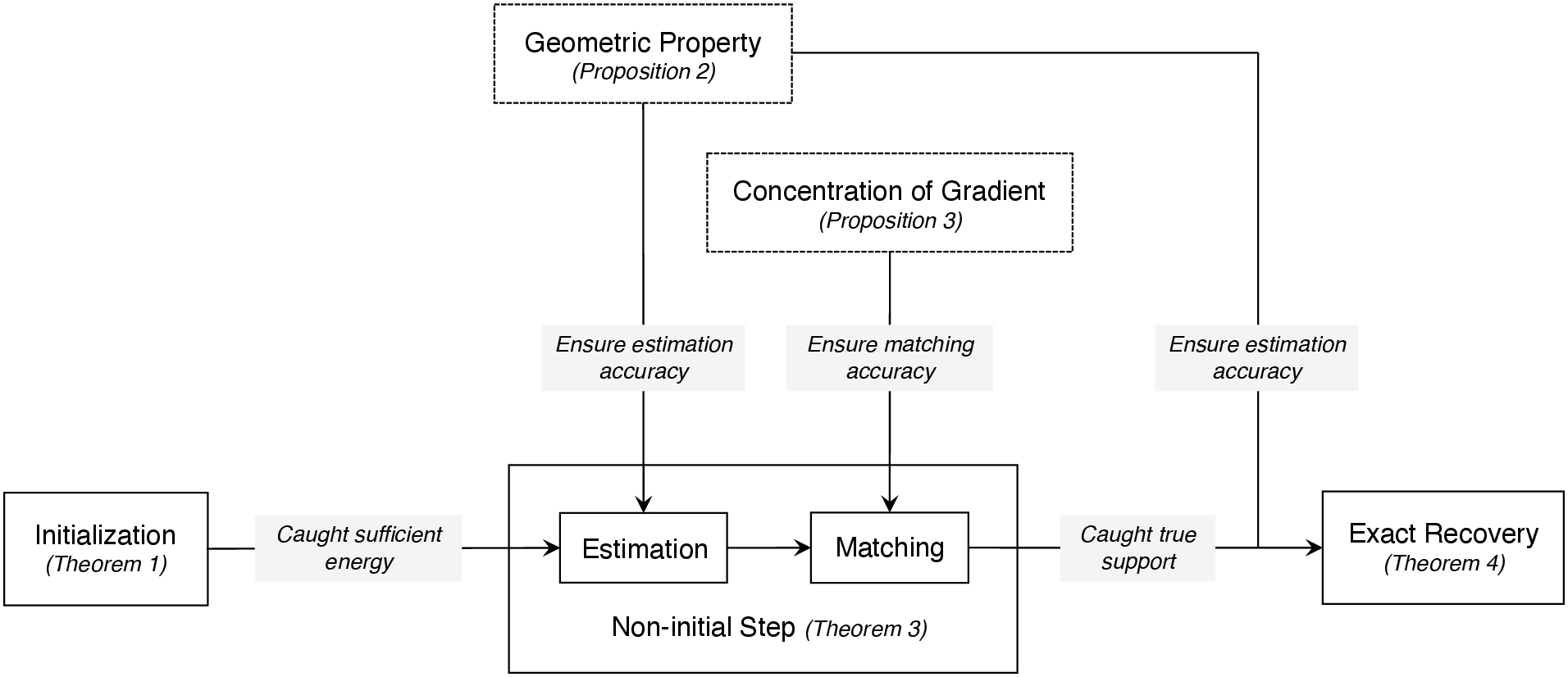}
\caption{Illustrative diagram of our proof structure for exact recovery via SPR.}
\label{fig:proof}
\end{figure*}

\subsection{Proof of Theorem \ref{thm:4}}\label{sec:proofofthm4}
In this proof, we will deal with the dependence issue among different operations of SPR. Our strategy is to  partition the sampling matrix $\mathbf A$ and the corresponding observations $\mathbf y$ into distinct groups and sequentially use each group to perform one operation. As a result, sampling vectors in different groups are statistically independent from each other. This idea is inspired from~\cite{JTROPP,WF,AltMin}. In Fig.~\ref{fig:proof}, we provide an illustrative diagram of our proof structure for exact recovery via SPR.
 
\begin{proof}
We partition $\mathbf A$ and $\mathbf y$ into 4 parts, i.e., \begin{eqnarray}
\mathbf A &=& [\mathbf A_{M_1}^*\quad \mathbf A_{M_2}^* \quad \mathbf A_{M_3}^*\quad \mathbf A_{M_4}^*]^*, \label{eq:partiA}\\
\mathbf y &=& [\mathbf y_{M_1}^\top \quad \mathbf y_{M_2}^\top \quad \mathbf y_{M_3}^\top \quad \mathbf y_{M_4}^\top ]^\top, \label{eq:partiy}
\end{eqnarray}
where \begin{eqnarray} 
|M_1|  &\hspace{-2mm} \geq& \hspace{-2mm} C_1 \bar{s}^2\log (n/\bar{s}), \label{eq:M1} \\ 
|M_2|  &\hspace{-2mm} \geq& \hspace{-2mm} C_2 k\log^3 k, \label{eq:M2}  \\
|M_3|  &\hspace{-2mm} \geq& \hspace{-2mm}  C_3\max \big \{k\log n, \sqrt{k}\log^3n \big \}, \label{eq:M3}  \\
|M_4|  &\hspace{-2mm} \geq& \hspace{-2mm} C_4 k\log^3 k,  \label{eq:M4} 
\end{eqnarray} and \begin{align}
m &= |M_1|+ |M_2|+ |M_3|+ |M_4| \nonumber\\
&\geq   C\max\left\{\bar{s}^2\log (n/\bar{s}), k\log^3 k, k\log n,\sqrt{k}\log^3 n\right\}\label{eq:m=M1_M4}.
\end{align}
Then, we sequentially use a fresh set of sampling vectors  and samples for each operations: 
\begin{eqnarray}
\begin{cases} 
\mathbf A_{M_1} ~\text{and}~\mathbf y_{M_1}~\text{for initializing}~S^0,   \nonumber \\
 \mathbf A_{M_2} ~\text{and}~\mathbf y_{M_2}~\text{for estimating}~\x^0,   \nonumber \\
 \mathbf A_{M_3} ~\text{and}~\mathbf y_{M_3}~\text{for matching}~S^1,   \nonumber \\
 \mathbf A_{M_4} ~\text{and}~\mathbf y_{M_4}~\text{for estimating}~\x^1.   \nonumber    
\end{cases}
\end{eqnarray}
 
In doing so, the output of the previous operation is statistically independent from the groups of submatrices used in the subsequent operations. 
For example, when matching $S^1$ given the previous outputs $\x^0$ and $S^0$, which are functions of $\mathbf A_{M_1}$ and $\mathbf A_{M_2}$, we can make sure that $\x^0$ and $S^0$ are statistically independent from $\mathbf A_{M_3}$. Thus, the expectation $\nabla_1 \mathbb{E}[f(\x^0)]$ in Proposition~\ref{lm:elemenEzxtbound} can be taken just with respect to $\mathbf A_{M_3}$.

Since $\x^1$ already achieves exact recovery by Theorem~\ref{thm:3}, we do not need to analyze more iterations. Thus, there is no need to partition $\mathbf A$ into more groups. For analytical convenience, we rewrite SPR in Algorithm~\ref{alg:SPR_theo} by omitting the subsequent iterations.

Consider the event $\Sigma_{\text{succ}}$ where the SPR algorithm correctly recovers
the input signal $\x$. Also, define the events $\Sigma_{1}, \cdots, \Sigma_{4}$ as follows:
\begin{eqnarray} 
\Sigma_{1} &\overset{\text{def}}{=}& \left \{ \text{Initialized an}~S^0~\text{satisfying}~\frac{ \| \x_{S^0} \|^2}{\| \x \|^2} > \frac{9}{10} \right \}, \nonumber \\ 
\Sigma_{2} &\overset{\text{def}}{=}& \left\{ \text{Estimated a desired}~\x^0~\text{satisfying~\eqref{eq:x0good}} \right \}, \nonumber \\ 
\Sigma_{3} &\overset{\text{def}}{=}& \left\{ \text{Matched a desired}~S^1 \supseteq \supp(\x)\backslash S^0  \right \}, \nonumber \\ 
\Sigma_{4} &\overset{\text{def}}{=}& \left\{ \text{Estimated a desired}~\x^1 = \x \right \},   \nonumber 
\end{eqnarray}
which exactly correspond to the four steps in Algorithm~\ref{alg:SPR_theo}. 
\begin{algorithm}[t!]
\caption{\textsc{{SPR with Partitioning}} }
\label{alg:SPR_theo}
\begin{algorithmic}[1]
\STATE {\textbf{Input}: sparsity $k$, partitioned matrix $\mathbf A$ in~\eqref{eq:partiA}, partitioned observations $\y$ in~\eqref{eq:partiy}.}
\STATE \textbf{Initialize}: ${S}^{0}$ by correlation-promoting method using $k$, $\y_{M_1}^*$ and $\mathbf A_{M_1}^*$. \\~~~~~~~~~~~~~\(\mathbf x^{0} \gets   {\arg \min}_{\mathbf z : \operatorname{supp}(\mathbf z)={S}^{0}} f(\mathbf z; \mathbf A_{M_2})\). 

\STATE {\textbf{Matching}: \({S}^{1} \leftarrow {S}^{0} \cup \mathcal C(\nabla_1 f(\x^{0}; \mathbf A_{M_3}), k)\).}

\STATE {\textbf{Estimate}: \(\mathbf x^{1}\gets   {\arg \min}_{\mathbf z : \operatorname{supp}(\mathbf z)={S}^{1}} f(\mathbf z; \mathbf A_{M_4})\).}

\STATE {\textbf{Output}: estimated signal $\hat{\mathbf x} = \mathbf x^1$.}
\end{algorithmic}
\end{algorithm} 

Then, we can combine the theoretical results of events $\Sigma_{1}$--$\Sigma_{4}$ together. 

Specifically, applying the definition of conditional probability, we have 
\begin{eqnarray}
\mathbb P (\Sigma_{\text{succ}}) 
& \hspace{-2mm} \geq & \hspace{-2mm} \mathbb P (\Sigma_{\text{succ}} \cap \Sigma_{1} \cap \Sigma_{2} \cap \Sigma_{3} \cap \Sigma_{4} ) \nonumber \\
& \hspace{-2mm} = & \hspace{-2mm} \mathbb P (\Sigma_{\text{succ}} | \Sigma_{1},  \Sigma_{2}, \Sigma_{3}, \Sigma_{4} ) \mathbb P (\Sigma_{1} \cap \Sigma_{2} \cap \Sigma_{3} \cap \Sigma_{4} ) \nonumber \\
& \hspace{-2mm} = & \hspace{-2mm} \mathbb P (\Sigma_{1} \cap \Sigma_{2} \cap \Sigma_{3} \cap \Sigma_{4} ) \nonumber \\
& \hspace{-2mm} = & \hspace{-2mm} \mathbb P ( \Sigma_{1} ) \mathbb P (\Sigma_{2} | \Sigma_{1})  \mathbb P (\Sigma_{3} | \Sigma_{2}, \Sigma_{1} )  \mathbb P (\Sigma_{4} | \Sigma_{3}, \Sigma_{2},  \Sigma_{1} ) \nonumber \\
& \hspace{-2mm} \geq & \hspace{-2mm} 1-cm^{-1}, 
\end{eqnarray}
where the last inequality is due to~\eqref{eq:M1}-\eqref{eq:M4}. Thus the proof is completed.
\end{proof}

\section{Discussion}\label{sec:discussion}
In this section, we discuss several issues that arise from our analysis.
 
\subsection{Sampling Complexity}\label{sec:disTobars}

As mentioned, the overall sampling complexity for most exisiting sparse phase retrieval algorithms (e.g.,~\cite{SWF,SPARTA,HWF,HTP,SAM,CoPRAM}) is dominated by the initialization stage.  The so far best result, obtained in~\cite{CJF}, is given by
\begin{equation}\label{eq:CJFsamplingComplexity}
m = \Omega \left( \frac{\|\x\|^2}{|x_{\max}|^2 }k \log n \right)
\end{equation} 
under the probability of $1-m^{-1}$. The sampling complexity of SPR initialization is
\begin{equation}\label{eq:ourcomplexity}
m = \Omega \left ( \bar{s}^2\log \frac{n}{\bar{s}} \right )
\end{equation}
under the probability of 
$
1-\exp \left ( -c\bar{s} \log \frac{n}{\bar{s}}\right ).
$
To compare our result with that of~\cite{CJF}, we consider three typical cases, as specified in Table~\ref{tab:relsmp},

\begin{enumerate}[i)]
\item {\bf Case 1.} $\bar{s} = \Theta (1)$: 

In this case, our result is given by 
\begin{equation}
	m = \Omega(\log n).
\end{equation}
Whereas for~\eqref{eq:CJFsamplingComplexity}, one can easily see that 
\begin{equation}
	\frac{\|\x\|}{|{x_{\max}}|}= \Theta (1)
\end{equation}	
and thus the result of~\cite{CJF} becomes 
\begin{equation}
	m = \Omega\left(k\log n\right),
\end{equation}	 
which is higher than ours.

\vspace{1mm}
\item {\bf Case 2.} $\Theta(1) < \bar{s} < \Theta (\sqrt{k})$:

In this case, one can verify that 
\begin{equation}
	\bar{s}^2\log \frac{n}{\bar{s}} < \Theta \left ( k \log n \right ), 
\end{equation}
which  is always lower than~\eqref{eq:CJFsamplingComplexity} because
\begin{equation}
	\frac{\|\x\|^2}{|x_{\max}|^2 }k \log n = \Omega \left ( k \log n \right ). 
\end{equation} 
For example, when $\bar{s} = \Theta (\sqrt[4]{k})$, our result becomes
\begin{equation}
	m = \Omega ( \sqrt k \log n ).
\end{equation}
 
\vspace{1mm}
\item {\bf Case 3.}  $\bar{s} \geq \Theta({\sqrt k})$:

In this case, it is difficult to directly compare~\eqref{eq:ourcomplexity} with~\eqref{eq:CJFsamplingComplexity}, because ${\|\x\|^2}/{|{x_{\max}}|^2}$ cannot be determined from the condition $\bar{s} = \Omega ({\sqrt k})$, and vice versa. For example, when $\bar{s} = \Theta({k^{3/4}})$, our result in~\eqref{eq:ourcomplexity} becomes 
\begin{equation}\label{eq:ourk3/2logn}
	m =  \Omega \left (k^{3/2}\log n \right ).
\end{equation}
Since it is not possible to determine ${\|\x\|^2}/{|{x_{\max}}|^2}$ from $\bar{s} \geq \Theta({\sqrt k})$, we consider two scenarios: 
\begin{itemize}
	\item 
	When ${\|\x\|^2}/{|{x_{\max}}|^2} = \Theta(k^{3/4})$, i.e., when the energy of the $\bar{s}$ most significant entries of $\x$ is pretty ``even'', the sampling complexity in~\eqref{eq:CJFsamplingComplexity} is 
	\begin{equation}\label{eq:k3/2logn}
		m = \Omega\left(k^{7/4}\log n\right),
	\end{equation} 
	which is higher than our result.
	
	\item It is also possible that ${\|\x\|^2}/{|{x_{\max}}|^2} = \Theta(1)$ given $\bar{s} = \Theta({k^{3/4}})$, i.e., only $|x_{\max}|$ is on the order of $\|\x\|$ while the others are small. Consequently, the sampling complexity in~\eqref{eq:CJFsamplingComplexity} is just 
	\begin{equation}\label{eq:k3/2logn}
		m = \Omega\left(k \log n\right),
	\end{equation} 
	which is lower than ours.
\end{itemize}

\end{enumerate}
In summary, when $\bar{s} < \Theta (\sqrt{k})$, our sampling complexity for initialization is always better than that in~\cite{CJF}. However, when $\bar{s} \geq \Theta (\sqrt{k})$, neither of them always outperforms the other. Our result can be better in some cases.

\begin{table}[t!]
\renewcommand\arraystretch{1.5}
\begin{center}
	\caption{The relationship between $m$ and $\bar{s}$ in~\eqref{eq:ourcomplexity}} 
	\label{tab:relsmp}
	\begin{tabular}{ l l l}   
		\hline
		$\bar{s}$                  &  $m$                                &  Probability \\
		\hline
		$\Theta(1)$            & $= \Omega(\log n)$            & $\geq 1 - \exp\left(-c \log n\right)$ \\
		$\Theta(\sqrt{k})$ & $= \Omega(k\log n)$          & $\geq 1 - \exp\left(-c \sqrt{k}\log \frac{n}{\sqrt{k}} \right)$ \\
		$\Theta(k)$            & $= \Omega(k^2\log n)$      & $\geq 1 - \exp\left(-c k\log \frac{n}{k} \right)$ \\
		\hline
	\end{tabular} 
\end{center}
\end{table}

Moreover, we compare our probability with that in~\cite{CJF}. Since our probability is larger than $1 - \exp\left(-c \log n\right)$, and also noting that $m<n$ in our setting, we have
 {
\begin{equation}\label{eq:forrebucomment51}
1 - \exp\left(-c \log n\right)  =  1-\mathcal O(n^{-1})
> 1-\mathcal O(m^{-1}).
\end{equation} }
Thus, our probability is always better than that in~\cite{CJF}.

\subsection{Definition of $\bar S$}

As a subset of $\supp(\x)$, $\bar{S}$ is defined with respect to the constant $0.999$, that is, 
\begin{equation}  
\bar{S} = {\arg \min}_{S:  \|\x_S\|^2 \geq 0.999 \|\x\|^2}~|S|.
\end{equation}
Based on this definition, we derive a lower bound 
\begin{equation}
\frac{\|\x_\T\|^2}{\|\x\|^2} \geq \frac{9}{10}
\end{equation}
in Theorem~\ref{thm:1}, which allows to prove a useful geometric property in Proposition~\ref{prop:geoproOnsub}. 
Our main purpose of using an absolute constant $0.999$, rather than a parameter, to define $\bar S$ is to simplify the proof of Proposition~\ref{prop:geoproOnsub}. In fact, if we use a parameter to define $\bar S$, the related analysis and results (especially the proof of Proposition~\ref{prop:geoproOnsub}) could be much more complicated.

At a first glance, the constant $0.999$ seems to impose a restrictive constraint on the energy distribution of $\x$. 
We stress that $0.999$ is not particularly given, but just for analytical convenience. If other smaller constants (e.g., $0.5$, $0.1$, or even smaller) were used to define the subset $\bar S$, then the lower bound for ${\|\x_\T\|^2}/{\|\x\|^2}$ would become smaller, accordingly.  
However, we do not pursue optimizing this constant in our paper. As long as it is in $(0, 1)$, our analysis may still hold. That is, it still allows to prove a geometric results similar to Proposition~\ref{prop:geoproOnsub}, thus leading to the same sampling complexity in Theorem~\ref{thm:3}.

\subsection{Assumption on $|x_{\min}|$}
\label{sec:DisonMin}
 It is worth noting that the assumption $|x_{\min}| = \Omega ({\|\x\|}/{\sqrt{k}})$ is necessary for our analysis (Proposition~\ref{lm:elemenEzxtbound}). This can be a drawback of our method compared to some existing approaches that do not rely on this assumption, such as CoPRAM~\cite{CoPRAM}, HTP~\cite{HTP} and SAM~\cite{SAM}.

Typically, those approaches require an initialization that falls into a $\delta$-neighbour of the gound truth $\x$, such as the one proposed in~\cite{CJF}, and then iteratively refine the signal estimate via, e.g., descent methods. As they only require a $\delta$-neighbour initialization, it does not matter if the index of $|x_{\min}|$ is selected or not. Thus, they do not require the assumption on $|x_{\min}|$ in their analysis. For their non-initial stage, a convergence analysis is commonly adopted to characterize the recovery error, which also does not need to bound the value of $|x_{\min}|$.

We intuitively explain the reason why assuming a low bound on $|x_{\min}|$ is necessary for proving Proposition~\ref{lm:elemenEzxtbound}. In essence, this is because SPR uses a ``matching'' operation to identify all support indices of $\x$ (see Algorithm~\ref{alg:SPR}). If some nonzero elements were extremely small, they would have little chance to be selected via ``matching''. Similar situation happens to~\cite{SPARTA,SWF,HWF}, where their initialization essentially implies an accurate support recovery and hence the constraint on ${|x_{\min}|}$ is unavoidable.

Actually, if there were no assumption on $|x_{\min}|$, then our algorithm could identify all those indices of nonzero elements $x_j$ satisfying 
\begin{equation}
|x_j| = \Omega \left (\frac{\|\x\|}{\sqrt k} \right ).
\end{equation}
In this case, by following some techniques developed in~\cite{CJF}, it can be possible to show that our algorithm also provides a good estimation $\hat{\x}$ within a $\delta$-neighbour of $\x$.

\subsection{Relationship between conditions on $|x_{\min}|$ and $\bar s$}
In essence, the conditions on $\bar s$ and $|x_{\min}|$, respectively, assume certain structures of the input signal $\x$:
\begin{itemize}
	\item The lower bound on $|x_{\min}|$ imposes a restriction on the smallest entries of $\x$. In general, $|x_{\min}| =\Omega\left(\frac{\|\x\|}{\sqrt{k}}\right)$ may imply {``flat''} signals, but not always. Here, the ``flat'' signal means $|x_{i}| =\Omega\left(\frac{\|\x\|}{\sqrt{k}}\right)$, $\forall i \in \supp(\x)$. 
 \item The parameter $\bar{s}$ characterizes the behavior of the largest entries of $\x$, akin to the condition on $\frac{|x_{\max}|}{\|\x\|}$ appeared in~\cite{HWF, CJF}.  In particular, when $\bar{s}$ is relatively small (e.g., $\bar{s}=1$),  some large entries already occupies the most energy of $\x$, thus indicating ``non-flat'' signals. 
\end{itemize} 
We use an example to illustrate this. Consider a $k$-sparse signal $\x$ with
\begin{equation}
	x_i = \begin{cases}
		\sqrt{0.999},&i=1\\
		\sqrt{\frac{0.001}{k-1}} ,& i = 2,3,\cdots, k,\\
		0,  &  i = k+1, k+2, \cdots, n.
	\end{cases}
\end{equation}
Then, one can easily compute that 
\begin{align}
\begin{cases} 
\bar{S} = {\arg \min}_{S:  \|\x_S\|^2 \geq 0.999 \|\x\|^2}~|S| = \{1\} & \\
\bar{s} = |\bar{S}| = 1   & 
	\end{cases},
\end{align}
which seems to imply an extremely ``non-flat'' signal. Nevertheless,
the lower bound $|x_{\min}| = \Omega\left(\frac{\|\x\|}{\sqrt{k}}\right)$ still holds because
\begin{align}
\begin{cases} 
\|\x\| = 1, &  \\
|x_{\min}| = \sqrt{\frac{0.001}{k-1}} = \Omega\left(\frac{\|\x\|}{\sqrt{k}}\right) & 
	\end{cases}.
\end{align}
\begin{figure}[t]
	\centering
	\includegraphics[width=0.33 \textwidth]{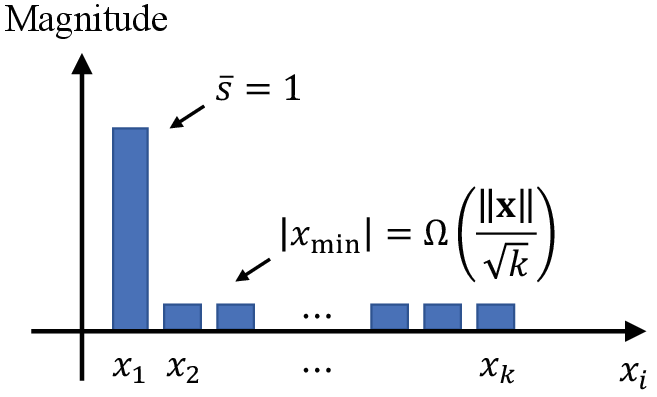}
	\caption{An illustrative example where $\bar s$ and $|x_{\min}|$ affect different parts of $\x$.}
	\label{fig:xsignal}
\end{figure}

In fact, the conditions on $\bar s$ and $|x_{\min}|$ impose restrictions on different parts of $\x$ (see  Fig.~\ref{fig:xsignal}); thus, they do not contradict each other and can hold simultaneously.

\subsection{The Computational-to-statistical Gap}\label{dis:compustatgap}
In~\cite{PRbyCSPhaseLift}, the author proposed an interesting approach to achieve the information-theoretic bound for sparse phase retrieval. Specifically, suppose in~\eqref{eq:yAxmatrixPR} that $\mathbf A$ can be decomposed into the product of matrices $\mathbf B\in\mathbb R^{m\times d}$ and $\mathbf C \in \mathbb R^{d\times n}$, where $d = \Omega (k \log (n/k))$. $\mathbf B$ allows for phase retrieval (using algorithm like PhaseLift~\cite{PhaseliftOn}), whose sampling complexity is $m = \Omega(d)$, and $\mathbf C$ allows for compressed sensing (using algorithms like  CoSaMP~\cite{Cosamp}), whose sampling complexity is $d = \Omega(k\log (n/k))$. Then, it is feasible to reconstruct an $n$-dimensional $k$-sparse complex-valued signal within $\Omega (k \log (n/k))$ measurements, thus bridging the computational-to-statistical gap. However, such measurement matrix $\mathbf A$ is not a Gaussian random matrix.

The analysis on the proposed SPR algorithm is based on Gaussian random measurements. However, we have to consider a union bound of size $\mathcal O({n\choose \bar{s}})$ in~\eqref{eq:unionboundinInit}, which leads to a sampling complexity of $\Omega (\bar{s}^2\log n)$ in the initialization stage. In the case where $|x_j| = \Theta({\|\x\|}/{\sqrt k})$ for $j\in \supp(\x)$, i.e., when the input signal is ``flat'', the required sampling complexity is still
\begin{equation}
   m = \Omega(k^2 \log n),
\end{equation}
which is significantly higher than the information-theoretic bound $\Omega(k\log n)$.

In fact, recovery of ``flat'' signals also represents a challenging case for many existing non-convex phase retrieval approaches (e.g.,~\cite{SWF, SPARTA, HWF, HTP, SAM, CoPRAM}), whose sampling complexity remains $\Omega(k^2\log n)$ for this case. 
The same story happens to many greedy algorithms for compressed sensing~\cite{OMP1,OMP2,davenport2010analysis}, which can better recover signals that have randomly distributed magnitudes of nonzero entries or exhibit a strong decay, compared to recovering ``flat'' signals (e.g., 0-1 signals).

This computational-to-statistical gap mainly results from the spectral method in initialization. More precisely, it arises due to recovering ``flat'' signals, for which it is challenging to identify the maximum nonzero entry that is not so significant to dominate the whole energy of signal. To bridge this gap, it would require to develop a new spectrum that is more capable of catching the maximum nonzero entry of input signal, or a refined analysis for initialization, especially on the concentration property of the spectrum. To date, whether the gap can be closed remains an interesting open question.

\subsection{Geometric Analysis}\label{sec:disTogeoana}
Our geometrical analysis owns a lot to~\cite{sunju}, where an analogous geometric property for the case where $\x$ is nonsparse was studied under the condition of $$m \geq C n\log^3 n.$$ 
Our analysis has two major distinctions. 
\begin{itemize}
\item Firstly, we extend the result of~\cite{sunju} to the sparse case. That is, we analyze the geometric property for the subproblem~\eqref{eq:subproblem1}, where the solution space is restricted to a subspace $\mathbb C^\T$, rather than $\mathbb C^n$. In their nonsparse case, the expected local minima and the real minima are just identical, which equals $\x$, as implied in~\cite[Theorem~2]{sunju}. As for the sparse case, however, we could neither derive an analytical solution for~\eqref{eq:subproblem1}, nor claim that it is equal to the expected global minima $\omega_\T \x_{\T}$. This obstacle is detrimental to our geometric analysis.  Our novelty here  is to estimate the gap between the expected local minima and the real ones with concentration techniques. In particular, we show that this gap is well controlled by a constant $\epsilon$ that can be arbitrarily small. This already suffices to demonstrate a promising performance for the non-initial step of SPR.

\item Secondly, our result is more general than that of~\cite{sunju} in the sense that it covers not only the case of ${\|\x_\T\|^2}/{\|\x\|^2} =1$ (i.e., caught all support indices), but also the case when ${\|\x_\T\|^2}/{\|\x\|^2} \in ( \frac{9}{10}, 1)$  (i.e., caught sufficient energy but not all). In fact, obtaining a geometric property for the case where ${\|\x_\T\|^2}/{\|\x\|^2} \in ( \frac{9}{10}, 1)$ is of vital importance for our analysis, as it allows to connect with the initialization step. Recall that we weakened the goal of initialization by letting it capture only sufficient energy. Precisely, this enables to derive an improved sampling complexity for SPR. Note that initializing with ${\|\x_\T\|^2}/{\|\x\|^2} =1$ would just require $m \geq C k^2 \log n$~\cite{SPARTA,SWF}.

~~Besides, the geometric result for the former case (${\|\x_\T\|^2}/{\|\x\|^2} =1$), which can be implied from~\cite{sunju}, is also important to our analysis. Specifically, it guarantees exact recovery of the input signal when SPR has already selected all support indices after some iterations.

\end{itemize}

\subsection{Partitioning}\label{sec:distoPartitioning}
As a commonly used strategy for eliminating the dependence issue, the partitioning used in Algorithm~\ref{alg:SPR_theo} brings considerable convenience to our analysis. We only partition $\y$ and $\mathbf{A}$ into four parts, because our analysis is just for one iteration of SPR, where the dependence issue only occurs to four variables. However, if our algorithm were analyzed for running more iterations, say, $\Theta(\log k)$ iterations, then we would need to partition $\y$ and $\mathbf{A}$ into $\Theta(\log k)$ parts for our analysis. In this case, the sampling complexity would increase $\Theta(\log k)$ times. Similar examples can be found in~\cite{WF,AltMin}.

Despite the analytical benefits, the partitioning strategy is of little practical value. Indeed, we have empirically confirmed that using the entire $\mathbf A$ for each operation leads to better performance. This is because the partition operation essentially reduces the number of samples for signal reconstruction.

\subsection{Number of Iterations} \label{eq:numberofiteration}
The SPR algorithm has a computational advantage in theory. Specifically, it can achieve an accurate support recovery using only one iteration (Theorem~\ref{thm:3}). In comparison, the CoSaMP and SP algorithm in the compressed sensing literature, from which our algorithm was inspired, need $\Omega(k)$ iterations to converge in theory. Phase retrieval approaches such as CoPRAM, HTP and SAM also require a number of iterations that is proportional to $k$, or $\log k$.

In practice, however, $m$ may not be large enough to satisfy the theoretical requirement of sampling complexity. As a result, the recovery performance of SPR may not be so satisfactory in one iteration. Thus, it is often desired for the SPR algorithm to iterate more times.

\section{Experiments}

\label{sec:simulation}
In this section, numerical experiments are carried out to test the performance of SPR for phase retrieval. Our experiments are performed in the MATLAB 2019a environment on a server with an Intel(R) Xeon(R) Silver 4116 CPU and 4 GeForce RTX 2080 Ti GPUs. The code for SPR is available at \url{https://github.com/mengchuxu97/SPR}.

\subsection{Geometric Property}
We empirically verify the geometric property in Proposition \ref{prop:geoproOnsub}. Specifically, we construct a random matrix $\mathbf{A} \in \mathbb{C}^{10,000 \times 20,000}$ with entries drawn {\it i.i.d.} from the standard complex Gaussian distribution. Also, we generate a vector $\x \in \mathbb{C}^{10,000}$ whose first $10$ entries are $1$ and the rest are $0$. Then, we use the BB algorithm~\cite{BBalg} to find out the solution $\hat{\x}$ to~\eqref{eq:subproblem1} over the estimated support $\hat{S}^t = \{1, \cdots, 5, 11, \cdots, 25\}$. In this case, $\hat{S}^t \cap \operatorname{supp}(\x) = \{1, \cdots, 5\} \neq \emptyset$.

\begin{figure}[t]
\centering
\hspace{-2mm}\includegraphics[width=0.45\textwidth, height = 70 mm]{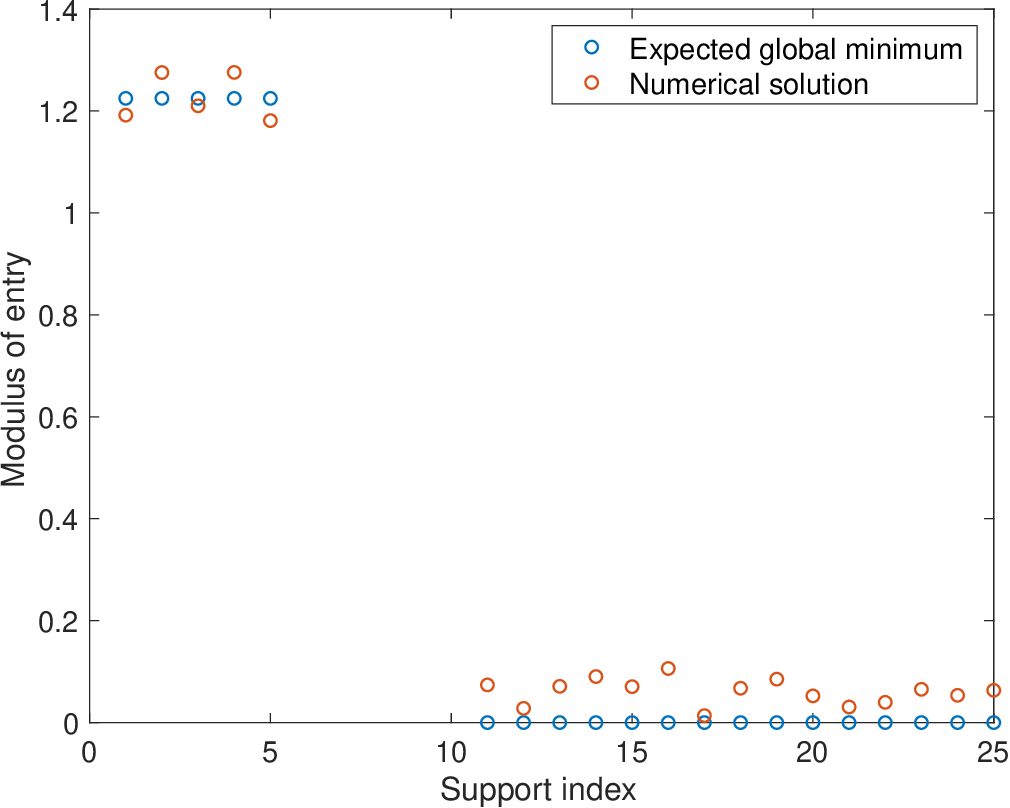}
\caption{The numerical solution is clustered around the expected global minimum.}
\label{fig:expTest}
\end{figure}

In Fig.~\ref{fig:expTest}, teal points represent the expected global optimum (i.e., $\omega_\T\x_\T$), while red points are the numerical solution to~\eqref{eq:subproblem1}. Since the original values of the solution are complex numbers, we plot their modulus for illustration. One can observe that the numerical solution is well clustered around the expected global optimum, which matches the proposed geometric property.

\begin{figure*}[t]%
\centering
\subfloat[Frequency of exact phase retrieval as a function of the number of samples.]{
\includegraphics[width=.3\linewidth]{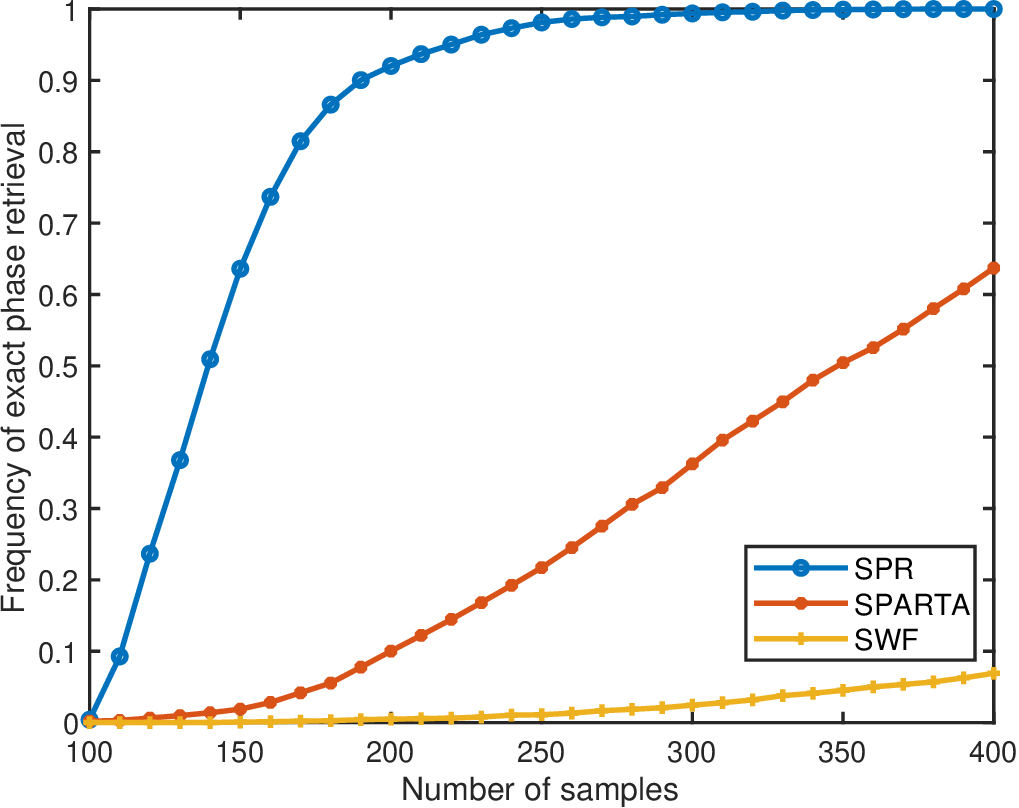}
\label{fig:cc}
}\hfill
\subfloat[Frequency of exact phase retrieval as a function of the sparsity $k$.]{
\includegraphics[width=.3\linewidth]{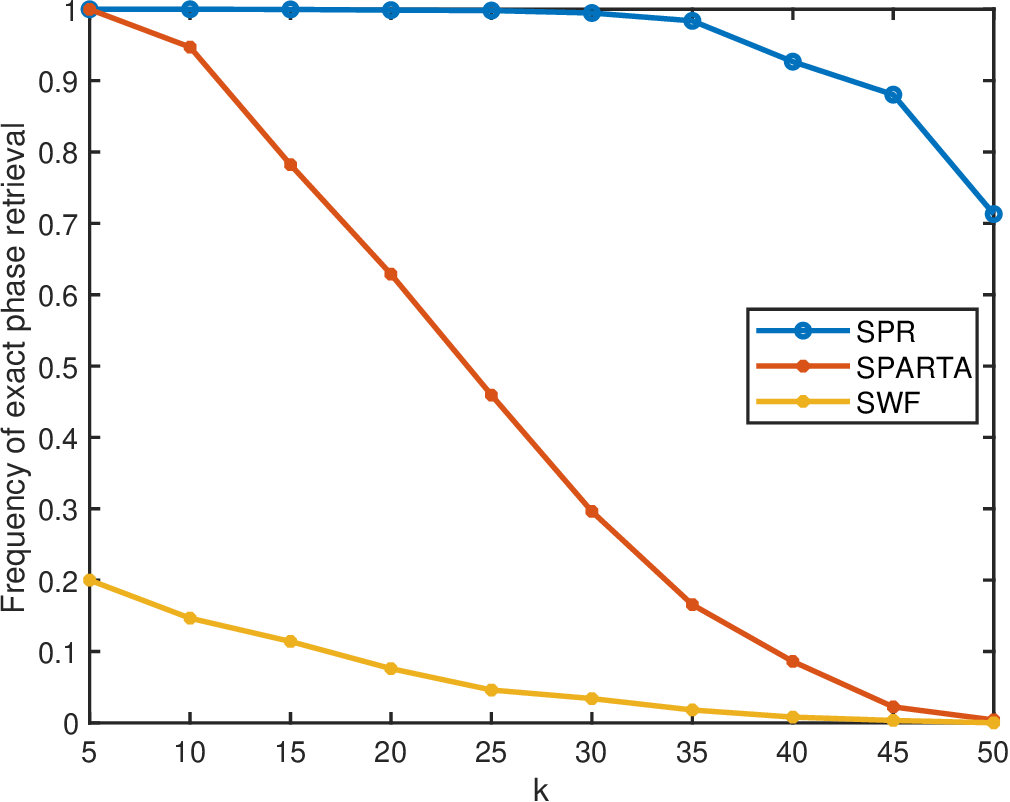}
\label{fig:kranges}
}\hfill
\subfloat[Frequency of exact phase retrieval when $k$ is overestimated.]{
\includegraphics[width=.3\linewidth]{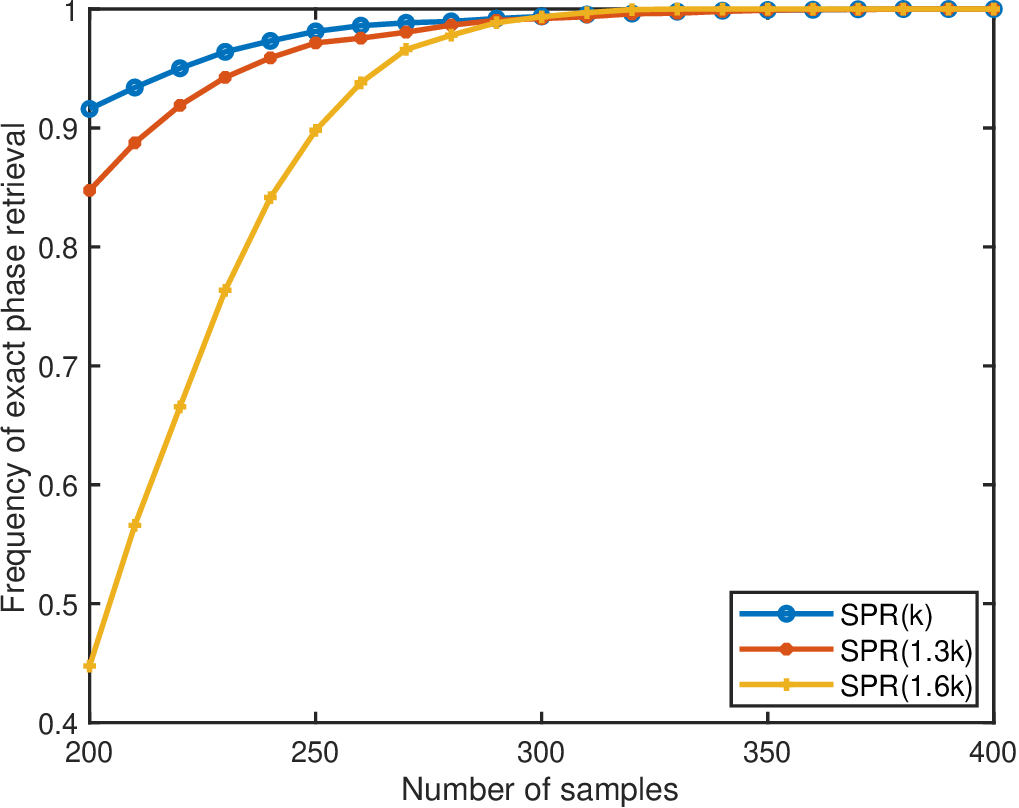}
\label{fig:moreK}
}
\caption{Numerical simulations in noiseless cases}
\end{figure*}

\begin{figure*}[t]%
\centering
\subfloat[Normalized mean squared error as a function of noise (dB) when $m=300,n=1,000,k=10$.]{
	\includegraphics[width=.31\textwidth]{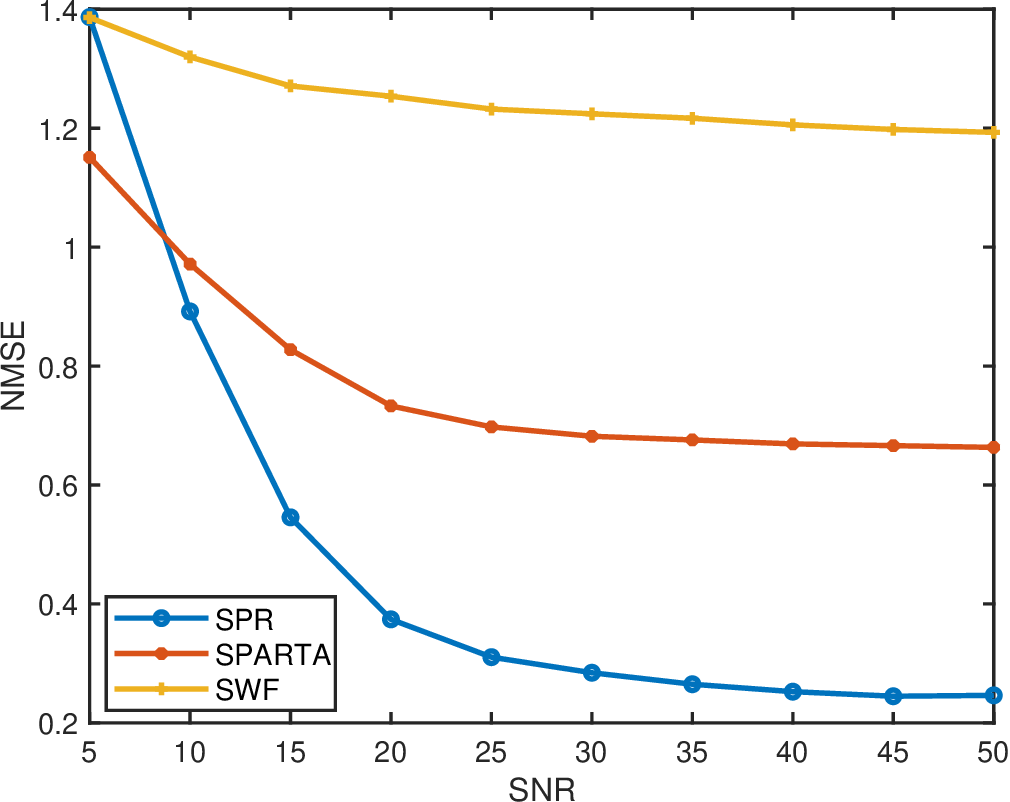}
	\label{fig:noise300}
} \hspace{10mm}
\subfloat[Normalized mean squared error as a function of noise (dB) when $m=800,n=1,000,k=10$.]{
	\includegraphics[width=.31\textwidth]{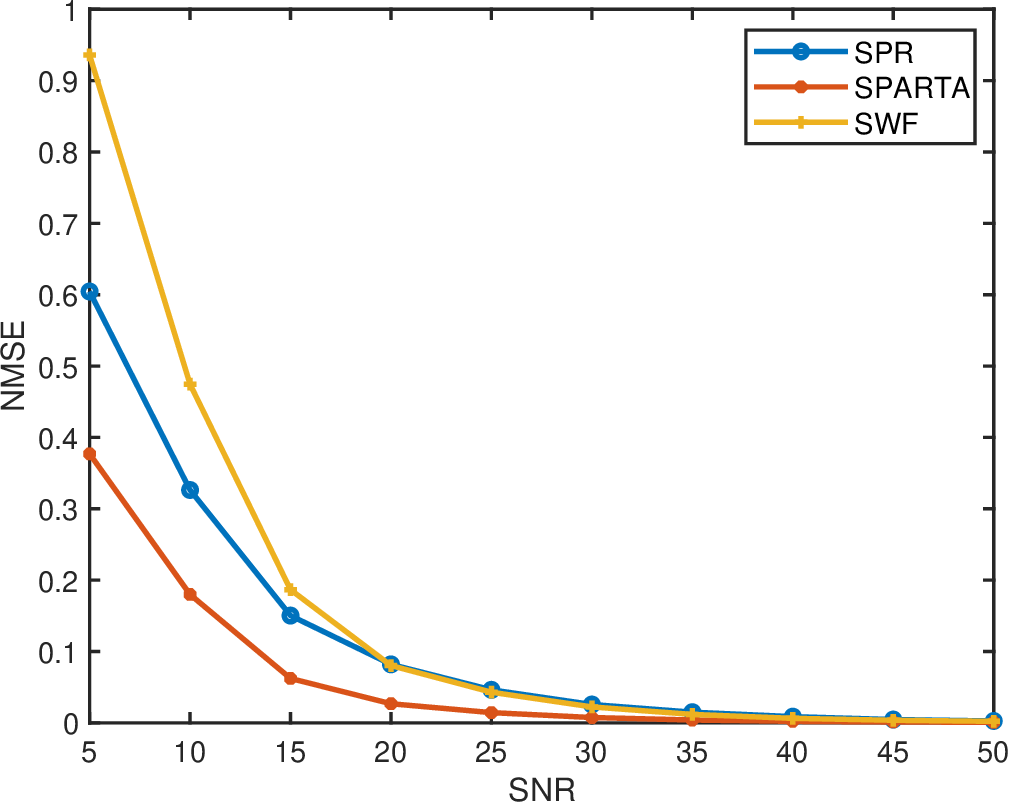}
	\label{fig:noise800}
}
\caption{Numerical simulations in noisy cases}
\end{figure*}

\subsection{Recovery of 1D signals}\label{sec:simOnoneD}
We consider the original signal $\x \in \mathbb C^{1,000}$. Each entry of $\mathbf A$ and the nonzero elements of $\x$ are drawn {\it i.i.d.} from the standard complex Gaussian distribution. For comparative purpose, SPARTA~\cite{SPARTA}, SWF~\cite{SWF} are included.\footnote{The codes of SPARTA and SWF are from \url{https://gangwg.github.io/SPARTA/index.html} and  \url{https://github.com/Ziyang1992/Sparse-Wirtinger-flow}, respectively.}  

We first briefly review these two algorithms, and compare them with SPR. 
\begin{enumerate}[i)]
\item {\it Optimization function:} The optimization function of SWF is as the same as ours, i.e.,
\begin{equation}
f(\z) = \frac{1}{2m}\sum_{i=1}^m \left(|\a_i^*\z|^2- y_i^2\right)^2,
\end{equation}
while that of SPARTA adopts the amplitude loss:
\begin{equation}
f(\z) = \frac{1}{2m}\sum_{i=1}^m \left(|\a_i^*\z|- y_i\right)^2.
\end{equation}

\item {\it Initialization:} Both SPARTA and SWF use the same initialization method, for which the required sampling complexity is given by~\cite{SPARTA,SWF}
\begin{equation}
m = \Omega(k^2\log n),
\end{equation}
under the assumption that $|x_{\min}| = \Omega ({\|\x\|}/{\sqrt{k}})$. Via this initialization, they can obtain a good estimation $\x^*$, which falls into the $\delta$-neighborhood of $\x$, i.e.,
\begin{equation}
\operatorname{dist}(\x^*, \x)\leq \delta \|\x\|.
\end{equation}
For comparison, SPR adopts a new initialization method, which does not seek an estimation that falls into the $\delta$-neighborhood of $\x$.  Instead, it only requires to capture a set $S^0$ of support indices with sufficient energy ($90\%\|\x\|^2$).

\item {\it Non-initial stage:} The main iterative steps of the SPARTA and SWF have the following form:
\begin{equation}\label{eq:swfspartarefinex}
\x^{t+1} = \mathcal H_k \left(\x^t - \mu \nabla f (\x^t)\right).
\end{equation}
Here, the gradient function $\nabla f (\cdot)$ corresponds to their respective optimization functions, and $\mathcal H_k(\cdot)$ is the hard thresholding operator which keeps the $k$ largest values in magnitude and sets others to be $0$. These steps iteratively refine the estimator $\x^t$ of $\x$.

~~The non-initial stage of SPR is different with that of SPARTA and SWF. Specifically, SPR maintains an estimated support $\hat{S}^t$ of size $k$, while refining it iteratively until convergence. In each iteration, it uses  optimization algorithms (e.g., PGD~\cite{PGD} or BB~\cite{BBalg}) to estimate the sparse signal.  
When $\hat{S}^t$ contains the true support $\supp(\x)$, SPR exactly recovers $\x$.
\end{enumerate} 

Then, we move on to the performance comparison. Both the noiseless and noisy cases are considered.

\begin{itemize}
\item {\it The noiseless case:} In order to measure the error between the recovered signal $\hat{\mathbf x}$ and the original signal $\x$, we use the normalized mean squared error (NMSE), defined as
\begin{equation}\label{eq:NMSE}
\text{NMSE}  \doteq \frac{\text{dist}(\x, \hat{\mathbf x})}{\|\x\|}.
\end{equation}
The signal recovery is considered successful if the \text{NMSE}  between the original signal $\x$ and the recovered signal $\hat{\x}$ is smaller than $10^{-6}$. We use the frequency of exact phase retrieval as a performance metric to evaluate the performance of different algorithms. 

~~We conduct two simulations for 1,000 independent Monte Carlo trials.
\begin{enumerate}[i)]
\item {\bf Case 1:} The sparsity $k$ is fixed to be $10$ and the number $m$ of measurements varies from $100$ to $400$ with step $10$. The result is shown in Fig.~\ref{fig:cc}.

\item {\bf Case 2:} The sampling number $m$ is fixed to be $800$ and $k$ varies from $5$ to $50$ with step $10$. The result is shown in  Fig.~\ref{fig:kranges}.
\end{enumerate}
~~Overall, it can be observed that SPR performs uniformly better than other algorithms under test. In particular, when $m=300$, SPR achieves $100$\% exact recovery, while SPARTA and SWF have only $35$\% and $5$\% exact recovery rate, respectively. The result indicates that refining the support set perhaps can better maintain true support indices of $\x$ than refining the signal $\x^t$ itself.  
For SPR, once the correct support sets are identified (i.e., $\supp(x) \subseteq S^t$), $\x$ is exactly recovered. However, refining $\x^t$ via~\eqref{eq:swfspartarefinex} may eliminate correct support elements in subsequent iterations, even if the support set of some midway $\x^t$ is exactly $\supp(\x)$. 

~~We discuss an issue regarding the sparsity $k$ .As shown in Algorithm~\ref{alg:SPR}, SPR is implemented under the assumption of knowing the sparsity $k$. In practice, however, the exact value of $k$ is not known in prior. We could only obtain an approximate sparsity $\tilde{k}$ of $k$. An empirical method for estimating the sparsity is based on the {correlations} $\{Z_j\}_{j=1}^n$ defined in~\eqref{eq:ZjZij}. Specifically, by sorting them in descending order, we can choose $\tilde{k}$ to be slightly greater than the number of $Z_j$'s that are significantly larger compared to the others. As such, $\tilde{k}$ is usually overestimated. In that case, SPR could perform worse than that with the exact $k$. 

~~Fig.~\ref{fig:moreK} shows such a comparison, where $m=800$, $n=1,000$, $k=10$ and $\tilde{k} \in \{1.3k, 1.6k \}$. 
The degraded performance of SPR with overestimated sparsity is due to larger ambiguity in signal space. To be specific, given a $\tilde{k}$ larger than $k$, we have to solve the subproblem~\eqref{eq:subproblem1} over a larger subspace. However, this would require more samples to ensure the subspace to have a benign geometric property. According to Proposition~\ref{prop:geoproOnsub}, the required sampling complexity would increase from $\Omega  ({k}\log^3 {k} )$ to $\Omega (\tilde{k}\log^3 \tilde{k} )$. 

\begin{figure*}[t]
\centering
\includegraphics[width=.9\textwidth]{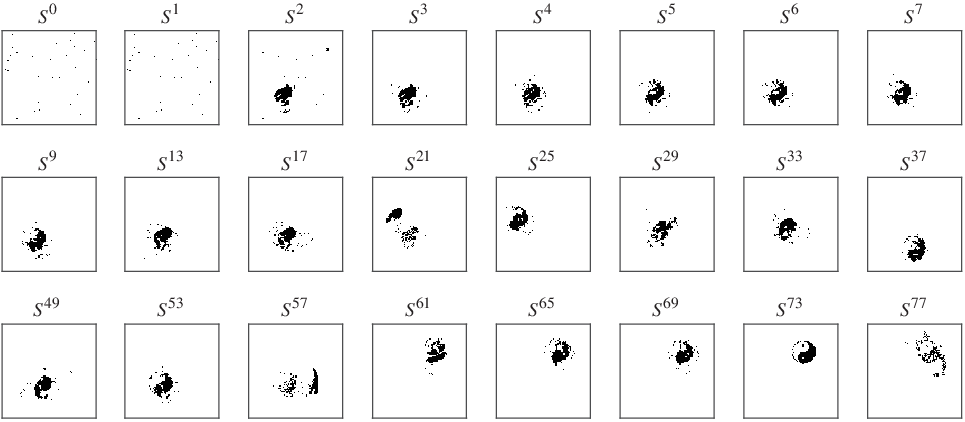}
\caption{The update of the support estimate}
\label{fig:updateofsupport}
\end{figure*}
 
\item {\it The noisy case:}
Following~\cite{SPARTA}, we consider the following noisy model to evaluate the robustness of our SPR algorithm:
\begin{equation} 
y_i = |\langle \a_i, \x \rangle| + \epsilon_i,~i = 1,\cdots, m,
\end{equation}
where $\epsilon_i$ is white Gaussian noise. In our experiment, we set $n=1,000$ and $k=10$. The Signal-to-noise ratio (SNR) varies from $5$dB to $50$dB with step size $5$dB. We still use the NMSE in~\eqref{eq:NMSE} to measure the error. 

~~Due to the presence of noise, it is impossible to exactly recover the original signal. Thus, we let SPR runs $100$ iterations at maximum, and record a best $\hat{\x}$ (i.e., the one with the smallest $f(\hat{\x})$ among all $f(\x^t)$'s). We conduct two simulations with $1,000$ independent Monte Carlo trials.

\begin{enumerate}[i)]

\item {\bf Case 1. $m=300$:} This represents a very underdetermined case. Fig.~\ref{fig:noise300} demonstrates the performance of the three algorithms. It is obvious to see that SPARTA has the smallest NMSE when the SNR is $5$dB, while SPR performs the best when the SNR is above $10$dB. 

\vspace{1mm}
\item {\bf Case 2. $m=800$:} In this setting, SPR, SWF and SPARTA all work well in the noiseless case. Fig.~\ref{fig:noise800} demonstrates the NMSE performance of three algorithms with respect to different SNR's. One can observe that SPR performs better than SWF, but worse than SPARTA, especially when the noise level is high. This is perhaps due to the matching operation of SPR, which has strong effect on the NMSE performance. Note that SPR uses a matching operation to identify support indices, and then estimates the sparse signal on the given support. However, the matching operation of SPR may be more sensitive to noise, compared to SPARTA's truncated gradient descent method.

\end{enumerate}

\end{itemize}


\begin{figure*}[t]
\centering
\includegraphics[width=.9\textwidth]{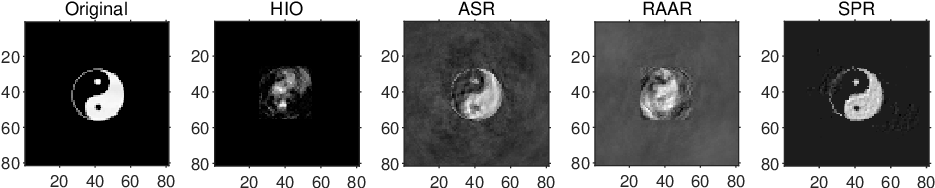}
\caption{Comparison of reconstruction results for the Tai-Chi image using HIO, ASR, RAAR, and SPR.}
\label{fig:Taiji}
\end{figure*}

\subsection{Recovery of 2D image}\label{sec:simOntwoD}

We consider the recovery of signals from Fourier samples. In this case, the matching step of SPR allows fast implementation. Specifically, the Wirtinger derivative can be computed efficiently via the fast Fourier transform (FFT):
\begin{eqnarray}\label{eq:WFFFT}
\nabla_1 f(\mathbf{z}) &=& \frac{1}{m} \sum_{i=1}^{m}{\left(\left|\mathbf{a}_{i}^{*} \mathbf{z}\right|^{2}-y_{i}^{2}\right)\left(\mathbf{a}_{i} \mathbf{a}_{i}^{*}\right) \mathbf{z}} \nonumber\\
&=& \frac{2}{m}\textsf{ifft}\left[\mathbf u \right],  \end{eqnarray}
where 
\begin{equation}\label{Eq:76}
u_i = \left(|\textsf{fft}[\z]|^{2}_i-y_{i}^{2}\right) \textsf{fft}[\z]_i,~i = 1, \cdots, m,
\end{equation}
and \textsf{fft} and \textsf{ifft} denote the FFT and inverse FFT, respectively.  
When the signal $\x$ is an $n\times n$ image, \textsf{fft} is replaced with \textsf{fft2}, i.e., the 2D FFT. 
Then, the sampling model is given by
\begin{equation}\label{eq:fft2}
\mathbf{Y} = \mathbf{FXF}^\top,
\end{equation}
where $\mathbf{F}$ is Fourier transformation matrix. \textsf{fft2} (i.e., $\mathbf{Y} = \textsf{fft2}(\mathbf{X})$) can save some matrix operations to speed up the computation. The task of phase retrieval is to recover the original image $\mathbf{x}$ from the amplitudes of elements in $\mathbf{Y}$.  {It is worth noting that in the Fourier case, $Z_j$'s of SPR are identical. As a result, $S^0$ just contain $k$ randomly selected indices, which may be far from the true support. In Fig.~\ref{fig:updateofsupport}, we report the update of support estimate over iterations. For better visualization, we have cropped the image size into $120 \times 120$ pixels.  One can observe that the support estimate converges gradually to that of Tai-Chi image. In particular, we choose the recovered image (i.e., that produced in the $73$th iteration) with the smallest NMSE\footnote{Here we use a modified version of NMSE (i.e., $\frac{\text{dist}(\y, \hat{\mathbf y})}{\|\y\|}$), since the ground-truth $\x$ is not available for computing NMSE.} as the final output.}

Fig.~\ref{fig:Taiji} shows the recovery result for a 2D Tai-Chi image of size $300 \times 300$, whose sparsity is approximately $430$. 
We choose HIO~\cite{HIO}, Averaged Successive Reflections (ASR)~\cite{ASR}, and Relaxed Averaged Alternating Reflections (RAAR)~\cite{RAAR} for comparison. The specific implementation for these algorithms is elaborated on as follows.
\begin{enumerate}[i)]
\item For HIO, ASR, and RAAR, we imposed support constraint --- a rectangular box of size $30\times 30$, on the object domain. The image \textsc{Taichi} is of size $300\times 300$.
\item For SPR, the image \textsc{Taichi} is also of size $300\times 300$ and we set the approximate sparsity to be $430$. We use~\eqref{eq:WFFFT} (the \textsf{fft2} version) as the Wirtinger derivatives which appears in Algorithm~\ref{alg:SPR}.
\end{enumerate}

To clearly show the recovery result, we crop the recovered image into a smaller size of $81\times 81$ and center the Tai-Chi symbol. One can observe that the recovery quality of SPR is the best, which implies the superiority of the proposed SPR algorithm.

Note that some works proposed to use masks to facilitate phase retrieval; see, e.g.,~\cite{maskPR,Songli}. Typically, the mask model is common in the algorithms designed for the dense case, which essentially increases the number of the samples, and thus improves the performance. We believe that it is worth future exploration to design the SPR algorithm for the mask model, which could be a promising way to further enhance its performance.

\section{Conclusion} \label{sec:conclusion}
Recently, phase retrieval has received much attention in many fields such as optical imaging. In this paper, we have proposed a new algorithm called SPR for phase retrieval. Under a mild conditions on the signal structure, our algorithm is able to reconstruct a given sparse signal when the number of magnitude-only samples is nearly linear in the sparsity level of the signal. The result outperforms some previous works, while bridging the gap to the information-theoretic bound for the sparse phase retrieval problem.

Numerical experiments have demonstrated that SPR has competitive recovery quality compared to the state-of-the-art phase retrieval techniques, especially when the number of samples is small. On account of the fast convergence and ease of implementation (e.g., no need to know the support of input signal in advance), the proposed SPR approach can serves an attractive alternative to the classic HIO~\cite{HIO}, ASR~\cite{ASR}, and RAAR~\cite{RAAR} method for phase retrieval. 

We would like to point out some directions worth of future investigation. 
\begin{itemize}
\item First of all, while in this paper we are primarily interested in analyzing SPR with complex Gaussian samples, the SPR algorithm nevertheless shows promising reconstruction result empirically for Fourier samples. Thus we speculate that our theoretical results may hold for the Fourier setup as well. In essence, this would require that the (random) Fourier samples satisfy some concentration inequalities such as those in  Lemma~\ref{lm:matrixconcentration}. 
\vspace{1mm}
\item The second direction concerns the performance guarantee of SPR in the presence of noise. A favorable geometric property and some techniques in~\cite{WUFAN,Songli} may offer a route to a theoretical guarantee for this scenario and help to uncover the whole story of SPR. This remains a topic of ongoing work. 

\vspace{1mm}
\item Thirdly, note that our result is still worse than the information-theoretic bound $\Omega (k \log n)$~\cite{infobound1, infobound2}.  Very recently, Xia and Xu~\cite{PhaseLiftOff} have proposed the Sparse PhaseLiftOff model based on the $\ell_1$-relaxation and difference-of-convex algorithm (DCA).  Through restricted isometry property (RIP) analysis, it is shown that with $\Omega(k \log \frac{n}{k})$ phaseless Gaussian samples, the global minimum of their proposed model results in a recovery error (to $\x \x^*$) that is upper bounded by a constant multiple of $\frac{k}{m}$. While this development should be considered a major step forward, the DCA solving the Sparse PhaseLiftOff model is only proved to have local convergence. It is not clear if the global convergence can be achieved. At present, whether there exist practical algorithms attaining the information-theoretic bound for exact recovery remains an interesting open question. 
\end{itemize}

\begin{appendix}

\section{Basic tools and lemmas}
\begin{lem}[Even Moments of Complex Gaussian, Lemma 25 in~\cite{sunju}]\label{Lm:ComplexGauMoment}
For any standard complex Gaussian variable $a \sim \mathcal{CN}(1)$, it holds that for any positive integer $p$
\begin{equation}
\E\left[|a|^{2p}\right] = p!.
\end{equation}

\end{lem}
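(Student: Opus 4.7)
The plan is to reduce $\E[|a|^{2p}]$ to a standard Gamma integral. Writing $a = X + \mathsf{j}Y$ with $X, Y \sim \mathcal{N}(0, 1/2)$ independent, the joint density of $(X,Y)$ on $\mathbb R^2$ is $\frac{1}{\pi}\exp(-x^2 - y^2)$, so $|a|^2 = X^2 + Y^2$ is exponentially distributed with rate $1$. I would verify this either by a direct change of variables, passing to polar coordinates $x = r\cos\theta$, $y = r\sin\theta$ and integrating out the angular variable, or by noting that $X^2 + Y^2$ is a scalar multiple of a chi-square variable with two degrees of freedom (which itself is exponential).

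Once the law of $|a|^2$ is identified, the computation reduces to the one-line integral
\begin{equation}
\E\bigl[|a|^{2p}\bigr] = \int_0^\infty t^p e^{-t}\, dt = \Gamma(p+1) = p!,
\end{equation}
by the definition of the Gamma function and its value at positive integers. Equivalently, in polar form one gets $\E[|a|^{2p}] = 2\int_0^\infty r^{2p+1} e^{-r^2}\, dr$, and the substitution $u = r^2$ returns the same integral.

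There is no real obstacle here; this is an elementary computation. The only subtlety worth flagging is the variance convention built into the symbol $\mathcal{CN}(1)$, namely that the real and imaginary parts each carry variance $\tfrac{1}{2}$ so that $\E[|a|^2] = 1$. A different normalization (e.g.\ unit variance per component) would introduce a factor of $2^p$ into the moment and destroy the clean identity $p!$, so it is important to match the convention used elsewhere in the paper before invoking the lemma.
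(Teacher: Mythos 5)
Your proposal is correct. The paper does not prove this lemma at all --- it is imported verbatim as Lemma A.1 of the cited reference --- so there is no in-paper argument to compare against; your computation (identify $|a|^2$ as an $\mathrm{Exp}(1)$ random variable and evaluate $\int_0^\infty t^p e^{-t}\,dt = \Gamma(p+1) = p!$) is the standard one and fills that gap cleanly. Your remark about the normalization is also on point and consistent with the paper's convention: Table~\ref{tab:nota} defines $\mathcal{CN}$ entries as $\mathcal{N}\left(0,\tfrac{1}{2}\right) + \mathsf{j}\,\mathcal{N}\left(0,\tfrac{1}{2}\right)$, which is exactly what makes $\E[|a|^2]=1$ and the identity come out without a factor of $2^p$.
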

\begin{lem}[Integral Form of Taylor’s Theorem, Lemma 26 in~\cite{sunju}]\label{lem:Taylor}
Consider any continuous function $f(\z): \C^n \to \mathbb R$ with continuous first and second-order Wirtinger derivatives. For any $\boldsymbol{\delta} \in \C^n$ and scalar $t \in \mathbb R$, we have
\begin{align}
f(\z+t \boldsymbol{\delta})  =&f(\z)+t \int_0^1\begin{bmatrix}
	\boldsymbol{\delta} \\
	\overline{\boldsymbol{\delta}}
\end{bmatrix}^{*} \nabla f(\z+s t \boldsymbol{\delta}) \mathrm{d} s, \\
\nonumber		f(\z+t \boldsymbol{\delta})  =&f(\z)+t\begin{bmatrix}
	\boldsymbol{\delta} \\
	\overline{\boldsymbol{\delta}}
\end{bmatrix}^{*} \nabla f(\z)\\
&+t^2 \int_0^1(1-s)\begin{bmatrix}
	\boldsymbol{\delta} \\
	\overline{\boldsymbol{\delta}}
\end{bmatrix}^{*} \nabla^2 f(\z+s t \boldsymbol{\delta})\begin{bmatrix}
	\boldsymbol{\delta} \\
	\overline{\boldsymbol{\delta}}
\end{bmatrix}\mathrm{d} s.
\end{align}
\end{lem}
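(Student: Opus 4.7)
The plan is to reduce both identities to the classical single-variable integral-remainder Taylor theorem by passing to the scalar auxiliary function $g:[0,1]\to\mathbb{R}$ defined by $g(s) := f(\z + st\boldsymbol{\delta})$. Since $f$ is real-valued with continuous first and second Wirtinger derivatives, and since $s\mapsto \z+st\boldsymbol\delta$ is smooth as a map $\mathbb{R}\to\C^n$, the function $g$ is $C^2$ on $[0,1]$, which is all that the classical one-dimensional theorem needs.

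First, I would apply the Wirtinger chain rule (treating $\z$ and $\overline{\z}$ as independent variables) to compute
\[
g'(s) \;=\; \sum_{i=1}^{n}\Big[\tfrac{\partial f}{\partial z_i}(\z+st\boldsymbol\delta)\,t\delta_i \;+\; \tfrac{\partial f}{\partial \overline{z_i}}(\z+st\boldsymbol\delta)\,t\overline{\delta_i}\Big].
\]
Using the definition of $\nabla f$ in \eqref{eq:wirtingerG}, this sum rewrites in block form as $g'(s) = t\,[\boldsymbol\delta;\overline{\boldsymbol\delta}]^{*}\nabla f(\z+st\boldsymbol\delta)$. Here one uses the fact that $f$ being real-valued forces $\overline{\partial f/\partial z_i}=\partial f/\partial \overline{z_i}$, so the two component sums are complex conjugates, $g'(s)$ is real, and it equals its own conjugate; this is what makes the conjugate-transpose block form come out correctly. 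Integrating this identity over $s\in[0,1]$ via the fundamental theorem of calculus, together with $g(0)=f(\z)$ and $g(1)=f(\z+t\boldsymbol\delta)$, yields the first identity.

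For the second identity, I would differentiate $g'$ once more. Applying the Wirtinger chain rule to each of the $2n$ components of $\nabla f(\z+st\boldsymbol\delta)$ generates four groups of second-order partials, $\partial^{2} f/(\partial z_j \partial z_i)$, $\partial^{2} f/(\partial \overline{z_j}\partial z_i)$, $\partial^{2} f/(\partial z_j \partial \overline{z_i})$, and $\partial^{2} f/(\partial \overline{z_j}\partial \overline{z_i})$, each paired with a product of two entries of $\boldsymbol\delta$ or $\overline{\boldsymbol\delta}$. Matching these four groups against the $2\times 2$ block layout of $\nabla^{2}f$ in \eqref{eq:wirtingerH} assembles them into the quadratic form
\[
g''(s) \;=\; t^{2}\begin{bmatrix}\boldsymbol\delta\\ \overline{\boldsymbol\delta}\end{bmatrix}^{*}\nabla^{2}f(\z+st\boldsymbol\delta)\begin{bmatrix}\boldsymbol\delta\\ \overline{\boldsymbol\delta}\end{bmatrix}.
\]
Then the classical real-variable integral-remainder Taylor formula $g(1) = g(0) + g'(0) + \int_{0}^{1}(1-s)\,g''(s)\,ds$ (a one-line integration by parts applied to the first identity at the level of $g$) combined with the computed expressions for $g$, $g'$, and $g''$ gives the second identity.

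The only genuinely delicate step is the bookkeeping in the computation of $g''$: one must verify coordinatewise that the expansion of $\frac{d}{ds}\nabla f(\z+st\boldsymbol\delta)$ matches the $2\times 2$ block structure of the Wirtinger Hessian in \eqref{eq:wirtingerH}, so that the resulting quadratic form in $[\boldsymbol\delta;\overline{\boldsymbol\delta}]$ appears in exactly the displayed sandwich form. Everything else is a direct application of the classical univariate Taylor theorem to the $C^{2}$ function $g$.
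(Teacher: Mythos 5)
Your proposal is correct, but note that the paper itself offers no proof of this lemma: it is imported verbatim as Lemma A.2 of~\cite{sunju}. Your argument --- reducing to the classical univariate integral-remainder Taylor theorem via $g(s)=f(\z+st\boldsymbol{\delta})$, using the Wirtinger chain rule together with the real-valuedness of $f$ (which gives $\overline{\partial f/\partial z_i}=\partial f/\partial \overline{z}_i$, so that $g'$ and $g''$ assemble into the displayed gradient and Hessian sandwich forms) --- is precisely the standard proof used in the cited reference, so nothing further is needed.
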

\begin{lem}[Lemma 21 in~\cite{sunju}]\label{lem:6.3}
Let $\a_1, \dots, \a_m$ be {\it i.i.d.} copies of $\a \sim \mathcal{CN}(n)$. For any $\delta \in (0, 1)$ and any $\mathbf v \in \mathbb{C}^n$ such that $\text{supp}(\mathbf v) = \mathcal{T}$ where $|\mathcal T| = k$, when $$m\ge C(\delta, \v) k\log k,$$ it holds with probability at least $1-c_1 \delta ^{-2}m^{-1}-c_2 \exp(-c_c \delta ^ 2 m / \log m)$ that 
\begin{align}
\left\|\left(\frac1m \sum_{i=1}^m |\a^*_i \v|^2 \a_i\a_i^* - (\v\v^* + \|\v\|^2 I)\right)_\TT\right\| &\le \delta \|\v\|^2,\label{Eq78}\\
\left\|\left(\frac1m \sum_{i=1}^m (\a_i^* \v)^2 \a_i\a_i^T - 2\v\v^T\right)_{\mathcal{T}}\right\|&\le \delta \|\v\|^2.\label{Eq79}
\end{align}
Here $C(\delta, \v)$ is a constant depending on $\delta, \v$ and $c_1, c_2$ and $c_3$ are positive absolute constant.
\end{lem}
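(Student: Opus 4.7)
The plan is to reduce both statements to concentration of random $k\times k$ matrices and to control the quadratic (or bilinear) form pointwise via truncation, then union-bound over an $\epsilon$-net.

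\emph{Step 1: Dimension reduction and centring.} Because $\operatorname{supp}(\v)\subseteq\TT$ and the bracket $(\cdot)_\TT$ keeps only the $\TT\times\TT$ block, I would first set $\tilde\a_i\doteq(\a_i)_\TT\in\C^k$ and $\tilde\v\doteq\v_\TT\in\C^k$, so that both displays become concentration bounds on $k\times k$ random matrices that no longer depend on the ambient dimension $n$. The centring terms come from the Isserlis identity for complex Gaussians, namely $\E[a_p\bar a_q a_r\bar a_s]=\delta_{pq}\delta_{rs}+\delta_{ps}\delta_{rq}$ and $\E[a_p a_r\bar a_q\bar a_s]=\delta_{pq}\delta_{rs}+\delta_{ps}\delta_{rq}$, which directly yield $\E[|\a^*\v|^2\a\a^*]=\v\v^*+\|\v\|^2 I$ and $\E[(\a^*\v)^2\a\a^T]=2\v\v^T$; this is a specialisation of Lemma~\ref{Lm:ComplexGauMoment} to fourth-order mixed moments.

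\emph{Step 2: $\epsilon$-net reduction.} For the Hermitian matrix $M$ in \eqref{Eq78}, a standard bound gives $\|M\|\leq 2\sup_{\u\in\mathcal N}|\u^*M\u|$ for a $\tfrac14$-net $\mathcal N$ of the complex unit sphere in $\C^k$, with $|\mathcal N|\leq 12^{2k}=e^{O(k)}$. For \eqref{Eq79} the matrix $N$ is complex-symmetric rather than Hermitian, so I would pass to a joint $\epsilon$-net on pairs $(\u,\w)$ and control the bilinear form $\u^*N\w$; the scalar summand becomes $(\tilde\a_i^*\tilde\v)^2(\tilde\a_i^*\u)(\overline{\tilde\a_i^*\w})$, still a product of four complex-Gaussian linear functionals, and everything below goes through symmetrically in the two cases.

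\emph{Step 3: Truncation for the heavy-tailed summands.} For a fixed unit $\u$, the scalar $\u^*M\u$ is a centred mean of $Y_i\doteq|\tilde\a_i^*\tilde\v|^2|\tilde\a_i^*\u|^2$. Each $Y_i$ is a product of two sub-exponential random variables and is therefore only sub-Weibull of order $1/2$; matrix Bernstein applied directly to $X_i\doteq|\tilde\a_i^*\tilde\v|^2\tilde\a_i\tilde\a_i^*$ does not give the desired scaling. I would instead split $Y_i=Y_i\mathbf 1_{Y_i\leq R}+Y_i\mathbf 1_{Y_i> R}$ with $R$ of order $\|\v\|^2\log^2 m$. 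Bernstein's inequality (Lemma~\ref{lam:B}) applied to the centred truncated part yields a $\delta\|\v\|^2$-deviation with probability at least $1-2\exp(-c\delta^2 m/\log m)$, which accounts for the exponential term. The residual mass is controlled through $\E[Y_i\mathbf 1_{Y_i>R}]\leq \E[Y_i^2]/R$ together with the Gaussian moments in Lemma~\ref{Lm:ComplexGauMoment}, and Markov's inequality on the average tail contribution produces the polynomial probability $c_1\delta^{-2}m^{-1}$. A final union bound over the $e^{O(k)}$ net requires $m\geq C(\delta,\v)k\log k$ to absorb the exponential in $k$.

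\emph{Main obstacle.} The technical heart of the proof is the calibration in Step~3: the summands are genuinely heavier-tailed than sub-exponential, and the truncation level $R$ must be chosen so that neither the truncated piece (which is responsible for the $\log m$ appearing in $\exp(-c\delta^2m/\log m)$) nor the untruncated residual (which produces the $\delta^{-2}m^{-1}$ term) swamps the target deviation $\delta\|\v\|^2$. Making these two error sources match, and then union-bounding over the $\epsilon$-net without losing more than a factor $k\log k$ in the sample complexity, is where most of the bookkeeping lies; the bilinear statement \eqref{Eq79} then follows from the same scheme with a joint net on $(\u,\w)$.
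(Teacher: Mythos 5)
The paper does not prove this lemma at all: it is imported verbatim as Lemma~6.3 of the cited reference~\cite{sunju} and used as a black box, so there is no in-paper argument to compare against. Your sketch reconstructs the proof from that reference along essentially the same lines --- Wick/Isserlis centring on the $k\times k$ block, an $\epsilon$-net of cardinality $e^{O(k)}$ (hence the $k\log k$ sample requirement), and a truncation-plus-Bernstein decomposition in which the truncated piece yields the $\exp(-c\,\delta^2 m/\log m)$ term and a Chebyshev bound on the residual yields the $c_1\delta^{-2}m^{-1}$ term --- and is sound as an outline; the only point still to be checked is the calibration you yourself flag, namely that a hard truncation at $R\sim\|\v\|^2\log^2 m$ fed into bounded-range Bernstein gives an exponent of order $\delta m/\log^2 m$ rather than $\delta^2 m/\log m$, which is why the source instead invokes the moment-control Bernstein inequality (Lemma~\ref{lem:A.8}) with $R\sim\log m$.
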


\begin{lem}[Lemma 22 in~\cite{sunju}]\label{lem:6.4}
Let $\a_1, \dots , \a_m$ be {\it i.i.d.} copies of $\a \sim \mathcal{CN}(n)$. For any $\delta \in (0, 1)$, when $$m\ge C(\delta) k \log k,$$ it holds with probability at least $1-c_1\exp(-c(\delta)m) - c_2 m^{-k}$ that 
\begin{align}
&\frac1m \sum_{i = 1}^m |\a_i^* \z|^2|\a_i^*\w|^2 \ge (1-\delta)(\|\w\|^2 \|\z\|^2 + |\w^* \z|),\\
&\frac1m \sum_{i =1}^m [\Re{}(\a_i^* \z)(\w^* \a_i)]^2 \ge \nonumber\\
&\hspace{10mm} (1-\delta)(\frac12 \|\z\|^2 \|\w\|^2 + \frac32 [\Re{}\z^* \w] - \frac12 [\Im{} \z^* \w])
\end{align}
for all $\z,\w \in \mathbb{C}^\T$ with $|\T| \le k$. Here $C(\delta)$ and $c(\delta)$ are constants depending on $\delta$, and $c_1$ and $c_2$ are positive absolute constants.
\end{lem}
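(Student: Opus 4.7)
The plan is to prove the two uniform fourth-moment lower bounds by combining pointwise Bernstein-type concentration with an $\epsilon$-net over the complex unit sphere of $\mathbb C^\T$, together with a truncation argument to control $\max_i\|(\a_i)_\T\|$. Because both inequalities are homogeneous of degree two in each of $\z$ and $\w$, I can normalize to $\|\z\|=\|\w\|=1$; the only relevant dimension is $k=|\T|$, which is why the sample complexity scales as $k\log k$ rather than $k\log n$.

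The first step is to identify the expected values. For $\a\sim \mathcal{CN}(n)$, a Wick/Isserlis calculation (or direct expansion together with Lemma~\ref{Lm:ComplexGauMoment}) yields
\begin{equation}
\E\bigl[|\a^*\z|^2|\a^*\w|^2\bigr]=\|\z\|^2\|\w\|^2+|\w^*\z|^2,
\end{equation}
which matches the right-hand side of the first inequality (interpreting $|\w^*\z|$ as $|\w^*\z|^2$, the only dimensionally consistent reading). For the second, writing $X=\a^*\z$, $Y=\w^*\a$ and expanding $[\Re X\cdot Y]^2=\tfrac14(X^2+2|X|^2+\bar X^2)Y^2$, then applying the same moment identities to $\E[|X|^2Y^2]$, $\E[X^2Y^2]$ and its conjugate, recovers the target $\tfrac12\|\z\|^2\|\w\|^2+\tfrac32[\Re(\z^*\w)]^2-\tfrac12[\Im(\z^*\w)]^2$.

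Next comes pointwise concentration. For each fixed $(\z,\w)$, the summand $|\a_i^*\z|^2|\a_i^*\w|^2$ is the product of two squared sub-Gaussians and hence sub-exponential with Orlicz parameter of order $\|\z\|^2\|\w\|^2$. Bernstein's inequality (Lemma~\ref{lam:B}) then gives the one-sided lower deviation $\frac1m\sum_i|\a_i^*\z|^2|\a_i^*\w|^2\ge (1-\delta/2)\,\E[\cdot]$ with failure probability at most $\exp(-c(\delta)m)$. The same treatment handles $[\Re(\a_i^*\z)(\w^*\a_i)]^2$ using its sub-exponential norm.

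The main obstacle is passing from a fixed-pair statement to uniformity over $\z,\w\in\mathbb C^\T$. I would build an $\eta$-net $\mathcal{N}$ of the unit sphere of $\mathbb C^\T$ with $|\mathcal{N}|\le(3/\eta)^{2k}$, so the union bound over $|\mathcal{N}|^2=(3/\eta)^{4k}$ pairs costs $O(k\log(1/\eta))$ in the exponent and is absorbed once $m\ge C(\delta)k\log(1/\eta)$. To lift the net estimate to a genuine uniform bound, I would control the Lipschitz variation of the empirical quartic form
\begin{equation}
g(\z,\w)=\frac1m\sum_{i=1}^m|\a_i^*\z|^2|\a_i^*\w|^2
\end{equation}
on the event $\mathcal E=\{\max_{i\le m}\|(\a_i)_\T\|\le C\sqrt{\log m+k}\}$. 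Since $\|(\a_i)_\T\|^2\sim\tfrac12\chi^2_{2k}$, a chi-square tail together with a union bound over $i\le m$ gives $\P(\mathcal E^c)\le c_2 m^{-k}$, which is precisely the second error term in the statement. On $\mathcal E$, $g$ is Lipschitz in each argument with constant $O((\log m+k)^2)$, so choosing $\eta\asymp \delta/(\log m+k)^2$ absorbs the discretization error into the remaining slack $\delta/2$. With this choice $k\log(1/\eta)=O(k\log k)$ (for $k\gtrsim\log m$), and thus $m\ge C(\delta)k\log k$ suffices to push the Bernstein exponent past the net cardinality, yielding the claimed failure probability $c_1\exp(-c(\delta)m)+c_2 m^{-k}$. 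The second inequality is proved identically, with the analogous sub-exponential and Lipschitz estimates for $\sum_i[\Re(\a_i^*\z)(\w^*\a_i)]^2$.
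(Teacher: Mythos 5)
Your overall architecture — computing the fourth-moment expectations, pointwise lower-tail concentration, an $\epsilon$-net on the unit sphere of $\mathbb{C}^\T$ with cardinality $(3/\eta)^{2k}$, and a truncation event on $\max_i\|(\a_i)_\T\|$ to make the empirical quartic form Lipschitz — is essentially the scheme of the cited source \cite{sunju} (this paper states the lemma without proof, importing it from there with $n$ replaced by $k$), and your expectation identities are correct, including the dimensionally forced reading of $|\w^*\z|$ and of $\Re[\z^*\w]$, $\Im[\z^*\w]$ as squared quantities. The genuine gap is the pointwise concentration step: you assert that $|\a_i^*\z|^2|\a_i^*\w|^2$, being the product of two squared sub-Gaussians, is sub-exponential, and then invoke Bernstein's inequality (Lemma~\ref{lam:B}). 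That claim is false. A product of two sub-exponential variables is only sub-Weibull of order $1/2$, with tails $\exp(-c\sqrt{t})$; concretely, taking $\z=\w$ the summand is $|\a_i^*\z|^4$, whose sub-exponential norm is infinite since $\E\exp(\lambda|g|^4)=\infty$ for every $\lambda>0$ when $g$ is Gaussian. So Lemma~\ref{lam:B} does not apply and the step fails as written. The standard rescue — and the structural reason the lemma is stated as a one-sided lower bound only — is that the summands are nonnegative, so the lower-tail inequality for nonnegative random variables with finite second moment (Lemma~\ref{lem:A.7}, available in this paper's own appendix and used for exactly this purpose in \cite{sunju}) yields the $\exp(-c(\delta)m)$ pointwise bound with $\sigma^2=\E[X^2]\le 24\,\|\z\|^4\|\w\|^4$ via Lemma~\ref{Lm:ComplexGauMoment}; the same applies verbatim to the nonnegative summands $[\Re(\a_i^*\z)(\w^*\a_i)]^2$.

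A secondary, quantitative flaw: your truncation level is too low to produce the stated $c_2m^{-k}$ term. With $\mathcal{E}=\{\max_i\|(\a_i)_\T\|\le C\sqrt{\log m+k}\}$ and $\|(\a_i)_\T\|^2$ distributed as $\mathrm{Gamma}(k,1)$, a Chernoff bound gives a per-vector tail of order $2^k e^{-C^2(k+\log m)/2}$, so the union bound over $i\le m$ yields $\P(\mathcal{E}^c)\lesssim m^{1-C^2/2}e^{-ck}$, a fixed polynomial in $m$ rather than $m^{-k}$. Obtaining $m^{-k}$ requires the higher threshold $\max_i\|(\a_i)_\T\|^2\lesssim k\log m$ (compare the event $\max_{i\in[m]}\|(\a_i)_\T\|^4\le 10k^2\log^2 m$ used in the paper's proof of Proposition~\ref{prop:2.6}), which raises your Lipschitz constant to $O(k^2\log^2 m)$ — harmless, since only $\log(1/\eta)$ enters the net count and $k\log(1/\eta)=O(k\log k)$ survives in the relevant regime. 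Finally, your intermediate expansion $[\Re(XY)]^2=\tfrac14(X^2+2|X|^2+\bar X^2)Y^2$ is garbled: it should read $\tfrac14(X^2Y^2+2|X|^2|Y|^2+\bar{X}^2\bar{Y}^2)$, and the term $\E[|X|^2Y^2]$ you propose to compute vanishes by circular symmetry (three unconjugated against one conjugated Gaussian factor) — the needed cross-term is $\E[X^2Y^2]=2(\w^*\z)^2$; your final expectation is nonetheless correct.
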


\begin{lem}[subgaussian Lower Tail for Nonnegative RV's Problem $2.9$ in~\cite{lemmaa7}, Lemma 31 in~\cite{sunju}]\label{lem:A.7}
Let $X_1, \dots, X_n$ be {\it i.i.d.} copies of the nonnegative random variable $X$ with finite second moment. Then it holds that
\begin{equation}
\mathbb P\left[\frac1n \sum_{i=1}^n (X_i - \mathbb E[X_i]) < -\epsilon\right]\le \exp\left(-\frac{n\epsilon^2}{2\sigma^2}\right) 
\end{equation} 
for any $\epsilon > 0$, where $\sigma^2 = \mathbb E[X^2]$.
\end{lem}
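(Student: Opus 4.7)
The statement to prove is a standard one-sided Chernoff-type bound for sums of i.i.d.\ nonnegative random variables, and the natural approach is the classical Laplace transform (Chernoff) method applied to the lower tail. The plan is to exploit the nonnegativity of $X$ to obtain a clean quadratic bound on the moment generating function of $-X$ with parameter $\lambda > 0$, and then optimize the Chernoff exponent.

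\textbf{Step 1: Set up the Chernoff bound.} Rewrite the event as $\sum_{i=1}^n (\mathbb{E}[X] - X_i) > n\epsilon$, and for any $\lambda > 0$ bound
\begin{equation*}
\mathbb{P}\left[\frac{1}{n}\sum_{i=1}^n (X_i - \mathbb{E}[X_i]) < -\epsilon\right] \le e^{-\lambda n \epsilon}\, e^{\lambda n \mathbb{E}[X]} \prod_{i=1}^n \mathbb{E}\bigl[e^{-\lambda X_i}\bigr],
\end{equation*}
using Markov's inequality applied to the exponential and the independence of the $X_i$.

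\textbf{Step 2: Control the MGF via nonnegativity.} The key elementary fact is that for every $x \ge 0$,
\begin{equation*}
e^{-x} \le 1 - x + \tfrac{1}{2}x^2,
\end{equation*}
which follows by comparing the function and its first two derivatives at $0$ (the second derivative of $1 - x + x^2/2 - e^{-x}$ is $1 - e^{-x} \ge 0$ on $x\ge 0$, while the value and first derivative agree at $0$). Since $X \ge 0$ and $\lambda > 0$, we may substitute $x = \lambda X$ and take expectations to get
\begin{equation*}
\mathbb{E}\bigl[e^{-\lambda X}\bigr] \le 1 - \lambda \mathbb{E}[X] + \tfrac{\lambda^2}{2}\mathbb{E}[X^2] \le \exp\!\bigl(-\lambda\mathbb{E}[X] + \tfrac{\lambda^2\sigma^2}{2}\bigr),
\end{equation*}
where the second inequality uses $1+y \le e^y$ and the definition $\sigma^2 = \mathbb{E}[X^2]$.

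\textbf{Step 3: Plug in and optimize.} Substituting back into the bound from Step 1 yields
\begin{equation*}
\mathbb{P}\left[\frac{1}{n}\sum_{i=1}^n (X_i - \mathbb{E}[X_i]) < -\epsilon\right] \le \exp\!\bigl(-\lambda n \epsilon + \tfrac{n\lambda^2 \sigma^2}{2}\bigr).
\end{equation*}
The exponent is a quadratic in $\lambda$, minimized at $\lambda^* = \epsilon/\sigma^2 > 0$ (which is admissible), giving the claimed bound $\exp(-n\epsilon^2/(2\sigma^2))$.

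There is no substantial obstacle here; the only subtle point is that the bound $e^{-x} \le 1 - x + x^2/2$ is not valid for negative $x$, so nonnegativity of $X$ is genuinely needed to replace the usual sub-Gaussian assumption with the much weaker second-moment assumption. The result is one-sided precisely for this reason: the same argument cannot be repeated for the upper tail without further hypotheses on $X$.
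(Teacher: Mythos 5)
Your proof is correct: the Chernoff argument with the elementary bound $e^{-x} \le 1 - x + \tfrac{1}{2}x^2$ for $x \ge 0$, followed by $1+y \le e^y$ and optimization at $\lambda^* = \epsilon/\sigma^2$, is exactly the standard derivation of this one-sided bound. The paper itself states this lemma without proof, citing Problem 2.9 of Boucheron--Lugosi--Massart and Lemma A.7 of Sun--Qu--Wright, and your argument is precisely the canonical proof behind those citations, including the correct observation that nonnegativity is what permits replacing a sub-Gaussian hypothesis by a mere second-moment condition.
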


\begin{lem}[Moment-Control Bernstein's inequality for Random Variables, Lemma 32 in~\cite{sunju}]\label{lem:A.8}
Let $X_1,\dots , X_n$ be {\it i.i.d.} copies of a real-valued random variable $X$ suppose that there exist some positive number $R$ and $\sigma_X^2$ such that for all integers $m\ge 3$,
$\E[|X|^2] \le \sigma_X^2$ and
\begin{equation}
    \E[|X|^m] \le \frac{m!}{2}\sigma_X^2 R^{m-2}.
\end{equation}
Let 
\begin{equation}
 S = \frac1n \sum_{i=1}^n X_i,
 \end{equation} 
 then it holds that
\begin{equation}
\P(|S - \E[S]|\ge \epsilon)\le 2\exp\left(-\frac{n\epsilon^2}{2\sigma_X^2 + 2R\epsilon}\right).
\end{equation}
\end{lem}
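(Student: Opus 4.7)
The plan is to prove this by the classical Chernoff / moment generating function (MGF) route, which is the standard strategy for Bernstein-type tail inequalities. By symmetry, it suffices to establish the one-sided bound $\P(S - \E[S] \geq \epsilon) \leq \exp(-n\epsilon^2/(2\sigma_X^2 + 2R\epsilon))$; then apply the same argument to $-X_i$ and union-bound to get the factor of $2$.

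First I would write, for any $\lambda > 0$,
\begin{equation*}
\P(S - \E[S]\ge \epsilon) \leq e^{-\lambda n\epsilon}\,\E\!\left[e^{\lambda \sum_{i=1}^n (X_i - \E[X_i])}\right] = e^{-\lambda n\epsilon}\bigl(\E[e^{\lambda Y}]\bigr)^n,
\end{equation*}
where $Y := X - \E[X]$, using the i.i.d.\ assumption. The next step is to control $\E[e^{\lambda Y}]$ via a Taylor expansion: writing $e^{\lambda Y} = 1 + \lambda Y + \sum_{m\ge 2}\lambda^m Y^m/m!$ and taking expectations kills the linear term, leaving $\E[e^{\lambda Y}] = 1 + \sum_{m\ge 2}\lambda^m\E[Y^m]/m!$. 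To convert the hypothesis, which is stated for $|X|^m$, into a bound on $\E[|Y|^m]$, I would use the elementary inequality $|Y|^m\leq 2^{m-1}(|X|^m+|\E X|^m)$ together with $|\E X|\leq \sigma_X$; after absorbing the factors of $2$ into $R$ and $\sigma_X^2$, one obtains a bound of the same shape $\E[|Y|^m]\leq \tfrac{m!}{2}\tilde{\sigma}^2 \tilde{R}^{m-2}$ for suitable constants. (For clarity I will work with the standard normalization, absorbing constants into $R$ and $\sigma_X^2$ so that the final form matches the statement.)

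Plugging the moment bound into the series and summing the resulting geometric tail gives, for all $\lambda$ with $\lambda R < 1$,
\begin{equation*}
\E[e^{\lambda Y}] \leq 1 + \frac{\lambda^2 \sigma_X^2}{2}\sum_{m\ge 2}(\lambda R)^{m-2} = 1 + \frac{\lambda^2\sigma_X^2}{2(1-\lambda R)} \leq \exp\!\left(\frac{\lambda^2\sigma_X^2}{2(1-\lambda R)}\right),
\end{equation*}
where the last step uses $1+x\leq e^x$. Consequently,
\begin{equation*}
\P(S-\E[S]\ge \epsilon) \leq \exp\!\left(-\lambda n\epsilon + \frac{n\lambda^2 \sigma_X^2}{2(1-\lambda R)}\right).
\end{equation*}
The final step is to optimize over $\lambda\in(0,1/R)$. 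The choice $\lambda^\star = \epsilon/(\sigma_X^2 + R\epsilon)$, which manifestly satisfies $\lambda^\star R<1$, makes the exponent equal to $-n\epsilon^2/(2\sigma_X^2 + 2R\epsilon)$ after a short algebraic simplification, yielding the claimed one-sided bound.

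The main obstacle I anticipate is a bookkeeping issue rather than a conceptual one: passing from moment control on $|X|$ to moment control on the centered variable $|Y|=|X-\E X|$ without losing the clean $\tfrac{m!}{2}\sigma_X^2 R^{m-2}$ form. A brute-force application of $|Y|^m\leq 2^{m-1}(|X|^m+|\E X|^m)$ inflates the constants by a factor that must be reabsorbed into $R$ and $\sigma_X^2$; alternatively one can argue via Jensen's inequality that $\E[|Y|^m]\leq 2^m \E[|X|^m]$ and track the constants carefully. Either way, the argument is standard and the tuning of $\lambda$ at the end is the same as in the classical Bernstein proof.
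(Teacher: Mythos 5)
There is a genuine gap, and it sits exactly where you flagged the ``bookkeeping issue'': the passage from the uncentered moment hypothesis to moment control of $Y = X - \E[X]$. Your inequality $|Y|^m \le 2^{m-1}(|X|^m + |\E[X]|^m)$ (or the Jensen variant $\E[|Y|^m] \le 2^m \E[|X|^m]$) forces you to replace $(\sigma_X^2, R)$ by strictly larger parameters, e.g. $\E[|Y|^m] \le \frac{m!}{2}\,(4\sigma_X^2)\,(2R)^{m-2}$, and running your Chernoff computation with these inflated parameters yields a bound of the form $\P(|S-\E[S]|\ge \epsilon) \le 2\exp\bigl(-n\epsilon^2/(8\sigma_X^2 + 4R\epsilon)\bigr)$, not the stated one. ``Absorbing constants into $R$ and $\sigma_X^2$'' is not available here: the hypothesis fixes $R$ and $\sigma_X^2$, and the conclusion is asserted for those same constants, so an argument that only delivers the tail bound for some $(\tilde\sigma^2, \tilde R)$ with $\tilde\sigma^2 > \sigma_X^2$ or $\tilde R > R$ proves a strictly weaker statement. (For what it is worth, the paper itself gives no proof of this lemma --- it imports it verbatim from~\cite{sunju} --- so the comparison is with the standard argument, which your proposal follows in outline.)

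The missing idea that repairs the proof with no loss is to never center the moments at all. Write $\E[e^{\lambda(X-\E[X])}] = e^{-\lambda \E[X]}\,\E[e^{\lambda X}]$ and bound
$\E[e^{\lambda X}] = 1 + \lambda\E[X] + \sum_{m\ge 2}\frac{\lambda^m \E[X^m]}{m!} \le \exp\bigl(\lambda\E[X] + \sum_{m\ge 2}\frac{|\lambda|^m \E[|X|^m]}{m!}\bigr)$
using $1+u\le e^u$; the prefactor $e^{-\lambda\E[X]}$ then cancels the linear term exactly, leaving $\E[e^{\lambda(X-\E[X])}] \le \exp\bigl(\sum_{m\ge 2}\frac{|\lambda|^m}{m!}\E[|X|^m]\bigr)$, to which the hypothesis applies directly: $\E[X^2]\le\sigma_X^2$ handles $m=2$ and $\E[|X|^m]\le \frac{m!}{2}\sigma_X^2 R^{m-2}$ handles $m\ge 3$, giving $\E[e^{\lambda(X-\E[X])}] \le \exp\bigl(\frac{\lambda^2\sigma_X^2}{2(1-\lambda R)}\bigr)$ for $0<\lambda<1/R$ --- your MGF bound, but with the \emph{original} constants. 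From there your computation is correct and tight: $\lambda^\star = \epsilon/(\sigma_X^2 + R\epsilon)$ gives $1-\lambda^\star R = \sigma_X^2/(\sigma_X^2+R\epsilon)$ and exponent $-n\epsilon^2/(2\sigma_X^2 + 2R\epsilon)$, and the two-sided bound follows by applying the same argument to $-X$. (The exchange of expectation and series is justified for $\lambda R<1$ since the moment growth condition makes $\E[e^{\lambda|X|}]$ finite there.)
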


\begin{lem}\label{lm:matrixconcentration}
For any $\z\in \mathbb{C}^\T, |\mathcal{T}| = k$ and $\delta > 0$, when $$m\ge C(\delta, \z)k\log k,$$ it holds with probability at least $1-c(\delta, \z)m^{-1}$ that 
\begin{eqnarray}
\|\nabla f(\z)_\mathcal T - \nabla\mathbb{E}[ f(\z)]_{\mathcal{T}} \| &\hspace{-2mm} \leq &\hspace{-2mm} \sqrt2 \delta\|\z\|(\|\z\|^2 +\|\x\|^2),~~~~~~~\\
\|\nabla^2 f(\z)_{\T} - \nabla^2 \mathbb{E}[f(\z)]_{\T}\| 
&\hspace{-2mm} \leq &\hspace{-2mm} \delta (\|\x\|^2+\|\z\|^2).\label{eq:concentrationforh}
\end{eqnarray}
\end{lem}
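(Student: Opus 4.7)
\textbf{Proof proposal for Lemma \ref{lm:matrixconcentration}.}

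My plan is to reduce both bounds to repeated applications of Lemma~\ref{lem:6.3}, after decomposing the gradient and Hessian into pieces that each depend on a single vector ($\z$ or $\x$). For the gradient, rewrite
\begin{equation*}
\nabla_1 f(\z)-\nabla_1\E[f(\z)] = \big(E_\z - E_\x\big)\z,
\end{equation*}
where $E_\v \doteq \frac{1}{m}\sum_{i=1}^m |\a_i^*\v|^2\a_i\a_i^* - (\v\v^* + \|\v\|^2 I)$. Since $\supp(\z)=\T$, taking the $\T$-indexed entries yields $(\nabla_1 f(\z)-\nabla_1\E[f(\z)])_\T = (E_\z - E_\x)_\T\,\z$, so it suffices to bound $\|(E_\z)_\T\|$ and $\|(E_\x)_\T\|$ each by $\delta\|\v\|^2$ and then multiply by $\|\z\|$. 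The first is the direct content of~\eqref{Eq78} in Lemma~\ref{lem:6.3}. For the full Wirtinger gradient, note that $\nabla_2 f(\z)$ is the entrywise conjugate of $\nabla_1 f(\z)$ (same for the expectation), so $\|\nabla f(\z)_\T - \nabla\E[f(\z)]_\T\| = \sqrt{2}\,\|\nabla_1 f(\z)_\T-\nabla_1\E[f(\z)]_\T\|$, producing the stated $\sqrt{2}\|\z\|(\|\z\|^2+\|\x\|^2)\delta$.

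For the Hessian, I would decompose block-by-block using the same two families of empirical moments. From~\eqref{eq:11hessian} and~\eqref{eq:Exph}, the $(1,1)$ block of $\nabla^2 f(\z)_\T - \nabla^2\E[f(\z)]_\T$ equals $2(E_\z)_\T - (E_\x)_\T$, so~\eqref{Eq78} applied with $\v=\z$ and $\v=\x$ bounds its spectral norm by $(2\|\z\|^2+\|\x\|^2)\delta$. The $(2,2)$ block is the entrywise conjugate of the $(1,1)$ block and is controlled in the same way. The $(1,2)$ and $(2,1)$ off-diagonal blocks reduce to $(\frac{1}{m}\sum(\a_i^*\z)^2\a_i\a_i^\top - 2\z\z^\top)_\T$ and its conjugate, bounded by~\eqref{Eq79} with $\v=\z$ by $\delta\|\z\|^2$. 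Combining via a $2\times 2$ block-norm inequality $\bigl\|\begin{smallmatrix}A & B\\ C & D\end{smallmatrix}\bigr\| \leq \max(\|A\|,\|D\|) + \max(\|B\|,\|C\|)$ (or a Frobenius-type bound) and absorbing the resulting constants into a redefined $\delta$ gives the stated $\delta(\|\x\|^2+\|\z\|^2)$.

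The main obstacle I anticipate is invoking Lemma~\ref{lem:6.3} with $\v=\x$, since $\supp(\x)$ need not equal $\T$. I would handle this by splitting $\x=\x_\T+\x_{\T^c}$ and expanding $|\a_i^*\x|^2$ into the three pieces $|(\a_i)_\T^*\x_\T|^2$, $|(\a_i)_{\T^c}^*\x_{\T^c}|^2$, and a cross term. Because $(\a_i)_\T$ and $(\a_i)_{\T^c}$ are independent standard complex Gaussians, the first piece falls directly under Lemma~\ref{lem:6.3} with $\v=\x_\T$; the second piece factors into a scalar sub-exponential $|(\a_i)_{\T^c}^*\x_{\T^c}|^2$ (independent of $(\a_i)_\T$) times $(\a_i)_\T(\a_i)_\T^*$, controllable by Bernstein's inequality (Lemma~\ref{lam:B}) together with a standard Wishart-type concentration on $\frac{1}{m}\sum(\a_i)_\T(\a_i)_\T^*$; the cross piece has mean zero and is dispatched by a moment-control Bernstein bound (Lemma~\ref{lem:A.8}) applied entrywise followed by a union bound over the $k^2$ coordinates of the $\T\times\T$ block. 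A check of expectations confirms the three pieces sum to the correct mean $\x_\T\x_\T^*+\|\x\|^2 I_k = (\x\x^*+\|\x\|^2 I)_\T$. The same decomposition handles the off-diagonal blocks involving $(\a_i^*\x)^2\a_i\a_i^\top$.

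Finally, I would choose the sample complexity $m\geq C(\delta,\z)k\log k$ so that each invocation of Lemma~\ref{lem:6.3} and each entrywise Bernstein step holds with probability $1-O(m^{-1})$; a union bound over the $O(1)$ constituent events gives the overall failure probability $c(\delta,\z)m^{-1}$.
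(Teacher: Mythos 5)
Your overall decomposition is the same as the paper's: both proofs reduce the gradient and each Hessian block to the two empirical moment matrices $E_\v = \frac1m\sum_i|\a_i^*\v|^2\a_i\a_i^* - (\v\v^*+\|\v\|^2 I)$ (and the analogous $(\a_i^*\z)^2\a_i\a_i^\top$ term for the off-diagonal blocks), invoke Lemma~\ref{lem:6.3}, and pick up the $\sqrt2$ from the conjugate block of the Wirtinger gradient. Where you diverge is the one place you correctly identify as the obstacle: applying Lemma~\ref{lem:6.3} with $\v=\x$ when $\supp(\x)\not\subseteq\T$. The paper dispatches this in one line: it sets $\tilde\T=\T\cup\supp(\x)$, which has size at most $2k$ and contains the supports of both $\z$ and $\x$, applies Lemma~\ref{lem:6.3} on $\tilde\T$, and uses Cauchy's interlace theorem (a principal submatrix has spectral norm no larger than the full matrix) to pass back down to the $\T$-indexed block. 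This costs only a constant factor in $m\ge Ck\log k$.

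Your alternative three-piece expansion of $|\a_i^*\x|^2$ has a genuine gap in the cross term. You propose to control $\frac{2}{m}\sum_i\Re[((\a_i)_\T^*\x_\T)\overline{((\a_i)_{\T^c}^*\x_{\T^c})}]\,(\a_i)_\T(\a_i)_\T^*$ by applying a moment-control Bernstein bound entrywise and taking a union bound over the $k^2$ coordinates. But entrywise control only yields $\|M\|\le\|M\|_F\le k\max_{ij}|M_{ij}|$, so to obtain a spectral-norm bound of order $\delta\|\x\|^2$ you would need each entry controlled at level $\delta\|\x\|^2/k$, which pushes the sample requirement to $m\gtrsim k^2\log k$ and destroys the claimed $m\ge C(\delta,\z)k\log k$. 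To stay at $k\log k$ you would need an $\epsilon$-net over unit vectors of $\C^\T$ (with truncation of the heavy-tailed summands), i.e., essentially re-prove Lemma~\ref{lem:6.3} for this cross term rather than reduce to scalar entries. A secondary issue of the same flavor affects your second piece: a scalar Bernstein inequality (Lemma~\ref{lam:B}) does not by itself control the spectral norm of $\frac1m\sum_i(|(\a_i)_{\T^c}^*\x_{\T^c}|^2-\|\x_{\T^c}\|^2)(\a_i)_\T(\a_i)_\T^*$, and for a fixed test vector the summands are products of two sub-exponential variables, which are not sub-exponential, so again truncation plus a net is needed. Both difficulties evaporate if you adopt the paper's enlargement-plus-interlacing device instead.
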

\begin{proof}
Denote $\tilde{{\T}}  \doteq {\T} \cup \operatorname{supp}(\mathbf{x})$. 	Under the given condition, we have
\begin{eqnarray}
&\hspace{-4mm}  &\hspace{-8mm} \left\|\nabla f(\mathbf{z})_{\T}-\nabla\mathbb{E}\left[ f(\mathbf{z})\right]_{\T}\right\|   \nonumber\\
&\hspace{-6mm}   = &\hspace{-2mm} \sqrt{2}\left\|\left(\frac{1}{m} \sum_{i=1}^{m}\left(\left(\left|\mathbf{a}_{i}^{*} \mathbf{z}\right|^{2}-y_{i}^{2}\right) \mathbf{a}_{i} \mathbf{a}_{i}^{*}\right)_{\TT} \right. \right. \nonumber\\
&\hspace{-6mm}  &\hspace{-2mm} \left.\left.   -\left(2\|\mathbf{z}\|^{2}-\|\mathbf{x}\|^{2}\right) \boldsymbol{I}_{\TT}+\mathbf{x}_{\T} \mathbf{x}_{\T}^{*}\right) \mathbf{z}\right\| \nonumber\\
&\hspace{-6mm}  \leq &\hspace{-2mm} \sqrt{2}\|\mathbf{z}\|\left(\left\|\left(\frac{1}{m} \sum_{i=1}^{m}\left|\mathbf{a}_{i}^{*} \mathbf{z}\right|^{2} \mathbf{a}_{i} \mathbf{a}_{i}^{*}-\left(\|\mathbf{z}\|^{2} \boldsymbol{I}+\mathbf{z z}^{*}\right)\right)_{\TT} \right\|\right.\nonumber\\
&\hspace{-6mm}  &\hspace{-2mm} \left.+\left\|\left(\frac{1}{m} \sum_{i=1}^{m}\left|\mathbf{a}_{i}^{*} \mathbf{x}\right|^{2} \mathbf{a}_{i} \mathbf{a}_{i}^{*}-\left(\|\mathbf{x}\|^{2} \boldsymbol{I}+\mathbf{x x}^{*}\right)\right)_{\TT}\right\|\right)\nonumber \\
&\hspace{-6mm}  \leq &\hspace{-2mm} \sqrt{2}\|\mathbf{z}\|\left(\left\|\left(\frac{1}{m} \sum_{i=1}^{m}\left|\mathbf{a}_{i}^{*} \mathbf{z}\right|^{2} \mathbf{a}_{i} \mathbf{a}_{i}^{*}-\left(\|\mathbf{z}\|^{2} \boldsymbol{I}+\mathbf{z z}^{*}\right)\right)_{\TT} \right\|\right. \nonumber\\
&\hspace{-6mm}  &\hspace{-2mm} \left.+\left\|\left(\frac{1}{m} \sum_{i=1}^{m}\left|\mathbf{a}_{i}^{*} \mathbf{x}\right|^{2} \mathbf{a}_{i} \mathbf{a}_{i}^{*}-\left(\|\mathbf{x}\|^{2} \boldsymbol{I}+\mathbf{x x}^{*}\right)\right)_{\mathcal{\tilde{T}}}\right\|\right),~~~
\end{eqnarray}
where the last inequality is from Cauchy's interlace theorem. 

By applying Lemma~\ref{lem:6.3} with $\mathbf{v}  =\z$ and $\mathbf{x} = \x$, we can conclude that 
\begin{equation}
\left\|\nabla f(\mathbf{z})_{\T}-\nabla\mathbb{E}\left[ f(\mathbf{z})\right]_{\T}\right\| \leq \sqrt{2}\delta \|\z\|(\|\x\|^2+\|\z\|^2).
\end{equation}
We can use the same method to show 
\begin{align}
\|\nabla^2 f(\z)_{\T} - \nabla^2 \mathbb{E}[f(\z)]_{\T}\| & \leq \delta (\|\x\|^2+3\|\z\|^2)\nonumber\\
& \leq 3\delta (\|\x\|^2+\|\z\|^2),
\end{align} which implies that~\eqref{eq:concentrationforh} holds. The proof is thus complete.  
\end{proof}

\begin{lem}[Theorem 8 and Theorem 9  in~\cite{boundedBernstein}]\label{lm:boundedsubgaussian0}
Let $X_i$ be independent real random variables satisfying $|X_i|\leq M$ for all $i\in \{1,\cdots, n\}$. Let $X=\sum_{i=1}^nX_i$ and $\sigma = \sqrt{\sum_{i=1}^n\E[|X_i|^2]}$. Then, we have
\begin{equation}
\mathbb{P}[|X-\mathbb{E}[X]|\geq \epsilon] \leq 2 \exp \left(-\frac{\epsilon^2}{2\left(\sigma^2+M \epsilon / 3\right)}\right).
\end{equation}
\end{lem}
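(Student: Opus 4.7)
The plan is to establish this via the classical Cram\'er--Chernoff exponential moment method. Write $Y_i = X_i - \mathbb{E}[X_i]$, so that the $Y_i$'s are independent, centered, and bounded (up to a harmless factor, $|Y_i| \leq 2M$; with some care, the refined bound that yields the $M/3$ constant can be obtained by working directly with $X_i$ rather than its centering). By a union bound over the two tails, it suffices to prove the one-sided estimate $\P(X - \mathbb{E}[X] \geq \epsilon) \leq \exp(-\epsilon^{2} / (2\sigma^{2} + 2M\epsilon/3))$, after which the same argument applied to $-X_i$ yields the lower tail.

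The core step is to control the moment generating function $\mathbb{E}[e^{\lambda Y_i}]$ for $\lambda > 0$. Expanding in Taylor series,
\begin{equation}
\mathbb{E}[e^{\lambda Y_i}] = 1 + \sum_{k \geq 2} \frac{\lambda^{k}}{k!}\, \mathbb{E}[Y_i^{k}],
\end{equation}
and invoking the bound $|Y_i|^{k} \leq |Y_i|^{2} (2M)^{k-2}$ for $k \geq 2$ (with a sharper accounting to yield the factor $1/3$ inside the denominator, which arises from the elementary series identity $\sum_{k \geq 2} x^{k-2}/k! \leq 1/(2(1 - x/3))$ valid for $0 \leq x < 3$), one obtains
\begin{equation}
\mathbb{E}[e^{\lambda Y_i}] \leq \exp\!\left( \frac{\lambda^{2} \mathbb{E}[|Y_i|^{2}] / 2}{1 - M\lambda/3} \right)
\end{equation}
for $0 < \lambda < 3/M$. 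Multiplying across independent $Y_i$ and using $\sum_i \mathbb{E}[|Y_i|^{2}] \leq \sigma^{2}$ (since variance is bounded by the second moment) yields $\mathbb{E}[e^{\lambda(X - \mathbb{E}[X])}] \leq \exp\!\big(\tfrac{\lambda^{2}\sigma^{2}/2}{1 - M\lambda/3}\big)$.

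Applying Markov's inequality gives, for any admissible $\lambda > 0$,
\begin{equation}
\P\bigl(X - \mathbb{E}[X] \geq \epsilon\bigr) \leq \exp\!\left( -\lambda \epsilon + \frac{\lambda^{2}\sigma^{2}/2}{1 - M\lambda/3} \right).
\end{equation}
Optimizing the right-hand side in $\lambda$ by choosing $\lambda = \epsilon/(\sigma^{2} + M\epsilon/3)$ (which lies in $(0, 3/M)$) produces exactly the exponent $-\epsilon^{2} / (2(\sigma^{2} + M\epsilon/3))$. Combining the two one-sided estimates yields the factor of $2$ in the final bound.

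The main obstacle is arithmetic rather than conceptual: obtaining the precise constant $1/3$ inside the denominator requires a careful bound on the tail of the Taylor series for $\mathbb{E}[e^{\lambda Y_i}]$, as opposed to cruder estimates that would yield a worse constant such as $1/2$. Since the statement is simply quoted from \cite{boundedBernstein} (Theorems 8 and 9 therein), in practice the proof can be concluded by citing that reference directly, but the outline above records the standard argument behind it.
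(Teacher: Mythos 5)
The paper does not actually prove this lemma---it is imported verbatim from Chung and Lu (Theorems 8 and 9 of~\cite{boundedBernstein})---and your Cram\'er--Chernoff outline is exactly the standard argument behind that result, including the correct key observation that the $M/3$ constant together with $\sigma^2=\sum_i \E[|X_i|^2]$ (second moments, not variances) comes from expanding $\E[e^{\lambda X_i}]$ directly and multiplying by $e^{-\lambda \E[X_i]}$, rather than from the crude centering bound $|X_i-\E[X_i]|\leq 2M$, which would only yield $2M/3$. Your choice $\lambda=\epsilon/(\sigma^2+M\epsilon/3)$ indeed lies in $(0,3/M)$ and reproduces the stated exponent $-\epsilon^2/\bigl(2(\sigma^2+M\epsilon/3)\bigr)$ exactly, so the proposal is correct and consistent with the cited source.
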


\begin{lem}\label{lm:boundedsubgaussian}
Let $X_i$ be independent complex random variables satisfying $|X_i|\leq M$ for all $i\in\{1,\cdots, n\}$. Let $X=\sum_{i=1}^nX_i$ and $\sigma = \sqrt{\sum_{i=1}^n\E[|X_i|^2]}$. Then, we have
\begin{equation}
\mathbb{P}[|X-\mathbb{E}[X]|\geq \epsilon] \leq 4 \exp \left(-\frac{\epsilon^2}{4\left(\sigma^2+\sqrt{2}M \epsilon/ 6\right)}\right).
\end{equation}
\end{lem}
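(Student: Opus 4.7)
The target is a complex-valued Bernstein inequality, and I would deduce it directly from the real-valued version already stated as Lemma \ref{lm:boundedsubgaussian0}. The natural route is to split each $X_i$ into its real and imaginary parts and reduce to two independent applications of the real Bernstein bound, then union-bound.

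First I would write $X_i = \Re[X_i] + \mathsf{j}\Im[X_i]$, and observe two elementary facts: $|\Re[X_i]|\le |X_i|\le M$ and $|\Im[X_i]|\le |X_i|\le M$, so each real/imaginary sequence is still bounded by $M$; moreover $\E[|\Re[X_i]|^2]+\E[|\Im[X_i]|^2]=\E[|X_i|^2]$, so in particular $\sum_i \E[|\Re[X_i]|^2]\le \sigma^2$ and $\sum_i \E[|\Im[X_i]|^2]\le \sigma^2$. Then I would use the deterministic geometric inequality
\begin{equation}
|X-\E[X]|\ge \epsilon \;\Longrightarrow\; |\Re[X-\E X]|\ge \epsilon/\sqrt{2} \ \text{or}\ |\Im[X-\E X]|\ge \epsilon/\sqrt{2},
\end{equation}
which follows from $|z|^2=|\Re z|^2+|\Im z|^2$ (if both real and imaginary deviations were strictly less than $\epsilon/\sqrt 2$, the modulus would be strictly less than $\epsilon$).

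Next I would apply Lemma \ref{lm:boundedsubgaushian0} (the real Bernstein bound) to the sequences $\{\Re[X_i]\}$ and $\{\Im[X_i]\}$ separately with threshold $\epsilon/\sqrt{2}$. A brief calculation gives
\begin{equation}
\P\!\left[|\Re[X-\E X]|\ge \tfrac{\epsilon}{\sqrt 2}\right]\le 2\exp\!\left(-\frac{\epsilon^2/2}{2(\sigma^2+ M\epsilon/(3\sqrt 2))}\right)=2\exp\!\left(-\frac{\epsilon^2}{4(\sigma^2+\sqrt 2 M\epsilon/6)}\right),
\end{equation}
and identically for the imaginary part. A union bound across the two events then doubles the prefactor, yielding the claimed $4\exp(\cdot)$ tail.

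There is no real obstacle here; the proof is essentially bookkeeping. The only points that warrant care are (i) using $\le$ rather than $=$ when bounding $\sum_i\E[|\Re[X_i]|^2]$ by $\sigma^2$ (so that the lemma's hypothesis on $\sigma$ still applies), and (ii) getting the constants right when substituting $\epsilon\mapsto \epsilon/\sqrt 2$ inside the real Bernstein denominator, in order to match the exact exponent $\epsilon^2/[4(\sigma^2+\sqrt 2 M\epsilon/6)]$ stated in the lemma.
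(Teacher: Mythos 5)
Your proof is correct and follows essentially the same route as the paper: the paper's own argument is exactly to decompose into real and imaginary parts, apply the real bounded Bernstein inequality (Lemma~\ref{lm:boundedsubgaussian0}) with $\epsilon \gets \frac{\sqrt{2}}{2}\epsilon$, and union-bound, which reproduces your constant $\sqrt{2}M\epsilon/6 = M\epsilon/(3\sqrt{2})$ and the prefactor $4$. Your write-up is in fact slightly more careful than the paper's one-line proof, since you make explicit the deviation splitting $|z|\geq\epsilon \Rightarrow |\Re z|\geq \epsilon/\sqrt{2}$ or $|\Im z|\geq\epsilon/\sqrt{2}$ and the monotonicity step $\sum_i \E[|\Re[X_i]|^2]\leq\sigma^2$.
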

\begin{proof}
Note that we can write a complex variable $z$ as $z = a+b\mathsf{j}$. Applying  Lemma~\ref{lm:boundedsubgaussian0} on the real and imaginary parts with $\epsilon\gets\frac{\sqrt{2}}{2}\epsilon$ separately and then using the union bound yields the result.
\end{proof}

\section{Proofs of propositions~\ref{prop:expectedGeo}--\ref{lm:elemenEzxtbound}}

\subsection{Proof of Proposition~\ref{prop:expectedGeo}} \label{app:prop1}
\begin{proof}
We consider the following cases.
\begin{enumerate}[i)]
\item It is trivial that $\z = \mathbf{0}$ is the zero point of $\nabla \mathbb{E}[f(\z)]_{\T}$. Also, it is obvious that
\begin{equation}
	\nabla^2 \mathbb{E}[f(\mathbf{0})]_{\T}  \prec 0.
\end{equation}
Thus $\z = \mathbf{0}$ is the local maximum of $\mathbb{E}[f(\z)]$ in the subspace $\C^\T$.

\vspace{2mm}
\item In the region of $\{\z \in \mathbb{C}^{\T}: 0<\|\z\|^2 < \frac12 \|\x\|^2\}$, observe that
\begin{equation}
	\begin{bmatrix}\z\\\overline{\z}\end{bmatrix}^* \nabla \mathbb{E}[f(\z)]_{\T} = 2(2\|\z\|^2 - \|\x\|^2)\|\z\|^2 - 2|\x_{\T}^*\z|^2 < 0.
\end{equation}
Thus, there is no zero point of $\nabla \mathbb{E}[f(\z)]_{\T}$ in this region.

\vspace{2mm}
\item When $\|\z\|^2 = \frac12 \|\x\|^2$, 
\begin{equation}
	\nabla \mathbb{E}[f(\z)]_{\T} = \begin{bmatrix}-\x_{\T}\x_{\T}^* \z \\ -\overline{\x}_{\T}\x^\top_{\T}\overline{\z}\end{bmatrix}.
\end{equation}
We have
\begin{eqnarray}
	\nabla \mathbb{E}[f(\z)]_{\T} = 0 
	& \Leftrightarrow &  \z\in \operatorname{Null}(\x_{\T}\x_{\T}^*) \nonumber \\
	& \Leftrightarrow &  \x_{\T}^* \z = 0 \nonumber \\
	& \Leftrightarrow &  \x^* \z = 0.
\end{eqnarray}
Thus, any point in $\mathcal S \doteq \big \{\z\in \mathbb{C}^{\T}: \x^* \z = 0, \|\z\| = \frac{  \|\x\|  } {\sqrt2  }  \big \}$ is the zero point of $\nabla \mathbb{E}[f(\z)]_{\T}$.

\begin{figure*}[t]%
	\centering
	\subfloat{
		\includegraphics[width=.44\linewidth]{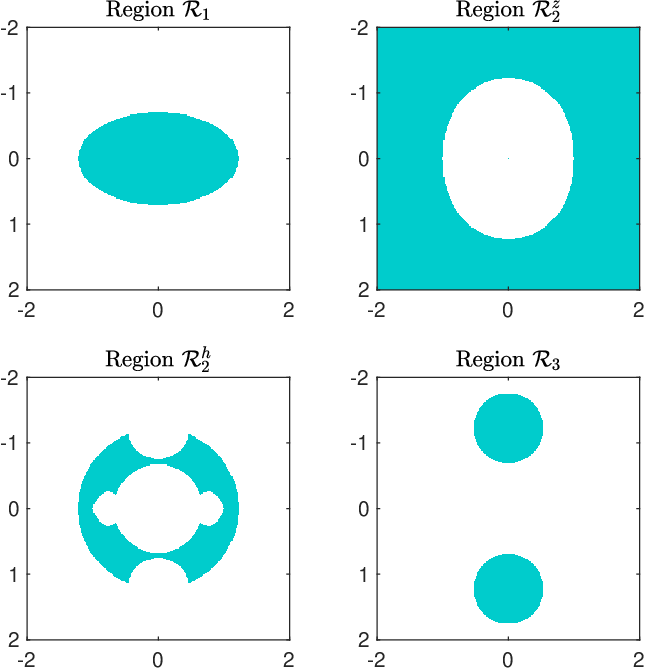}
		\label{fig:subregions}
	}\hfill
	\subfloat{
		\includegraphics[width=.5\linewidth]{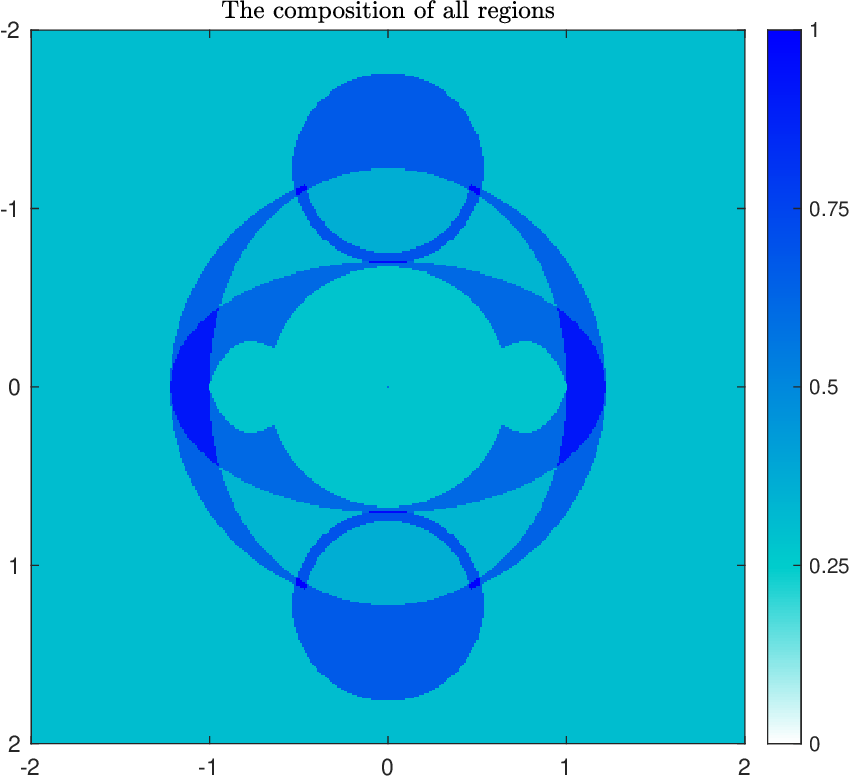}
		\label{fig:compositonregions}
	}
	\caption{\black{An illustration of the region division. Here, we suppose that all quantities are real numbers and the target  $2$-sparse signal is $\x=[1, 0, 1, 0]^\top$ (i.e., sparsity $k=2$). The estimated support is $\T=\{1,2\}$ and thus the subspace is $\R^{\T} = \{\z\in\R^4| \supp(\z) = \{1,2\}\}$. Accordingly, the solution to the subproblem~\eqref{eq:subproblem1} is $\u = \omega_{\T} \x_{\T} = \sqrt{3/2}[1, 0, 0, 0]^\top$. 
	For ease of presentation, we show only the first two dimensions of $\R^{\T}$. That is, the illustration area is $\{\z\in\mathbb R^2|-2\leq z_1\leq 2, -2\leq z_2\leq 2\}$. Regions $\R_1$, $\R_2^z$, $\R_2^h$, and $\R_3$ are displayed on the left, marked in blue, while the composition of these four regions is on the right. The overlapped areas are highlighted in darker blue.  Clearly the union $\R_1 \cup \R_2^z \cup \R_2^h \cup \R_3$ can spread over the entire illustration area. }}
	\label{fig:areaDivision}
\end{figure*}

~~For any $\z \in \mathcal S$, it holds that
\begin{eqnarray}
	\begin{bmatrix}\x_{\T}\mathrm{e}^{\mathsf{j}\phi(\z)}\\ \overline{\x}_{\T} \mathrm{e}^{-\mathsf{j}\phi(\z)}\end{bmatrix}^* \nabla^2 \mathbb{E}[f(\z)]_{\T} \begin{bmatrix}\x_{\T}\mathrm{e}^{\mathsf{j}\phi(\z)}\\ \overline{\x}_{\T} \mathrm{e}^{-\mathsf{j}\phi(\z)}\end{bmatrix} & \hspace{-1mm} = & \hspace{-1mm} -2\|\x_{\T}\|^4 \nonumber \\
	& \hspace{-1mm} < & \hspace{-1mm} 0
\end{eqnarray}
and 
\begin{equation}
	\begin{bmatrix}\z\\\overline{\z} \end{bmatrix}^* \nabla^2 \mathbb{E}[f(\z)]_{\T}\begin{bmatrix}\z\\\overline{\z} \end{bmatrix} = 8\|\z\|^2 > 0.
\end{equation}
Hence, any $\z \in \mathcal S$ is the saddle point of $\mathbb{E}[f(\z)]$ on $\C^\T$.

\vspace{2mm}	
\item Consider the region of $\{\z\in \mathbb{C}^{\T}: \|\z\|^2> \frac12 \|\x\|^2 \}$. From~\eqref{eq:Expg}, the zero points of $\nabla_1 \mathbb{E}[f(\z)]_{\T}$ satisfy
\begin{equation} \label{eq:xtxtz}
	(2\|\z\|^2 - \|\x\|^2)\z =  \x_{\T}\x_{\T}^* \z,
\end{equation}
which implies that $2\|\z\|^2 - \|\x\|^2$ is an eigenvalue of the matrix $\x_{\T}\x_{\T}^*$. Since $\x_{\T}\x_{\T}^*$ is a rank-one positive semi-definite Hermite matrix $\x_{\T}\x_{\T}^*$, and also noting that $\z$ satisfies $2\|\z\|^2 - \|\x\|^2 > 0$ in this given region, the nonzero eigenvalue of $\x_{\T}\x_{\T}^*$ must be $2\|\z\|^2 - \|\x\|^2$. We further deduce that 
\begin{eqnarray}
	2\|\z\|^2 - \|\x\|^2 &=& \sigma_{\max} (\x_{\T}\x_{\T}^*) \nonumber \\
	&=&  \operatorname{Tr} (\x_{\T}\x_{\T}^*) \nonumber \\
	&=&  \operatorname{Tr} (\x_{\T}^* \x_{\T}) \nonumber \\
	&=&  \|\x_{\T}\|^2.
\end{eqnarray} 
That is, 
\begin{equation} \label{eq:|z|}
	\|\z\| = \sqrt{ \frac{\|\x\|^2 + \|\x_{\T}\|^2}{2}}. 
\end{equation}
~~Plugging it into~\eqref{eq:xtxtz} yields
\begin{equation} \label{eq:xtxtz1}
	\|\x_{\T}\|^2 \z = \x_{\T}\x_{\T}^* \z.
\end{equation}
Thus, the zero points of $\nabla_1 \mathbb{E}[f(\z)]_{\T}$  located in the given region satisfy both~\eqref{eq:|z|} and~\eqref{eq:xtxtz1}. It is easily verified that the zero point  in this region is uniquely given by 
\begin{equation}
	\z=\omega_\T \x_{\T}, 
\end{equation}
where $\omega_\T \in \mathbb{C}$ and $|\omega_\T| = \sqrt{ \frac{  \|\x\|^2 + \|\x_{\T}\|^2  }  {2\|\x_{\T}\|^2 } }$.

~~Our remaining task is to show that $\z=\omega_\T \x_{\T}$ is not only a unique zero point of $\mathbb{E}[f(\z)]$ in the region of $\{\z\in \mathbb{C}^{\T}: \|\z\|^2> \frac12 \|\x\|^2 \}$, but also the local minimum of $\mathbb{E}[f(\z)]$ for the whole subspace $\z \in \mathbb{C}^\T$. 
To this end, recall from~\eqref{eq:Expf} that 
\begin{align} \label{eq:efz} 
	\mathbb{E}[f(\z)] & =  \|{\x}\|^{4}+\|{\z}\|^{4}-\|{\x}\|^{2}\|{\z}\|^{2} -  \left|{\x}^{*} {\z}\right|^{2} \nonumber \\
	& \hspace{-2mm}= \|{\z}\|^{4} - \left ( \|{\x}\|^{2} + \left|{\x}^{*} \left ( \frac{\z}{\|\z\|} \right ) \right|^{2} \right ) \|{\z}\|^{2} + \|{\x}\|^{4},  
\end{align}
which attains its minimum when $\left|{\x}^{*} \left ( \frac{\z}{\|\z\|} \right ) \right|^{2} $ is maximized, that is, when \begin{equation}
	\z = \rho {\x}_\T  \label{eq:rhoext}
\end{equation}
where $\rho \in \mathbb{C}$ and $|\rho| = \frac{\| \z \|}{\|\x_\T\|}$.

~~Then, we can rewrite $\mathbb{E}[f(\z)]$ as
\begin{align}
	&\mathbb{E}[f(\z)] \nonumber\\
	&=\|{\x}\|^{4}+|\rho|^4\|{\x_T}\|^{4} -|\rho|^2\|{\x}\|^{2}\|{\x_T}\|^{2}-|\rho|^2\|{\x_T}\|^4\nonumber\\
	&=\|\x_\T\|^4 |\rho|^4 - \left(\|\x\|^2\|\x_\T\|^2 + \|\x_\T\|^4\right)|\rho|^2 +\|\x\|^4. 
\end{align}
Clearly this is a second-order polynomial of $|\rho|^2$, which attains the minimum when 
\begin{equation}
	|\rho| = \sqrt{\frac{\|\x\|^2+\|\x_\T\|^2}{2\|\x_T\|^2} } = |\omega_\T|.
\end{equation}

Therefore, $\z=\omega_\T \x_{\T}$ is the local minimum of $\mathbb{E}[f(\z)]$ for $\z \in \mathbb{C}^\T$.
\end{enumerate}

The proposition is established by combining the results of the above four cases.
\end{proof}

\subsection{Proof of {Proposition}~\ref{prop:geoproOnsub}}\label{SEC:proofpro2}
 
\begin{figure*}[t] 
	\normalsize
	\begin{eqnarray}
		\mathcal{R}_{1} &\doteq &\left\{\z\in\C^\T, \begin{bmatrix}
			\u \mathrm{e}^{\mathsf{j} \phi(\z)} \label{eq:arear1}\\
			\overline{\u} \mathrm{e}^{-\mathsf{j} \phi(\z)}
		\end{bmatrix}^{*} \mathbb{E}\left[\nabla^{2} f(\z)\right]\begin{bmatrix}
			\u \mathrm{e}^{\mathrm{i} \phi(\z)} \\
			\overline{\u} \mathrm{e}^{-\mathsf{j} \phi(\z)}
		\end{bmatrix} \leq-\frac1{100}\|\u\|^2 \|\z\|^2 - \frac1{50} \|\u\|^4\right\},  \\
		\label{eq:arear2z}\mathcal{R}_2^\z &\doteq & \left\{\z\in\C^\T, \Re[\langle \z, \nabla_1 \E[f(\z)]\rangle]\ge \frac1{100}\|\z\|^4+\frac{1}{500}\|\x\|^2\|\z\|^2\right\},\\
		\mathcal{R}_2^\h& \doteq& \left\{\z\in\C^\T, \Re[\langle \h(\z), \nabla_1\E[f(\z)]\rangle]\ge \frac1{250}\|\x\|^2\|\z\|\|\h(\z)\|,
		\frac{11}{20}\|\u\|\le \|\z\|\le \|\u\|,\operatorname{dist}(\z, \u)\ge \frac13\|\x\|\right\},\\
		\mathcal{R}_3 &\doteq& \left\{\z\in\C^\T, \operatorname{dist}(\z, \u)\leq\frac{1}{\sqrt7}\|\x\|\right\}.\label{eq:arear3}
	\end{eqnarray}
	\hrulefill 
	\vspace*{4pt}
\end{figure*} 
Our proof mainly follows the techniques developed in~\cite{sunju}. To begin with, we divide the solution space $\C^\T$ of~\eqref{eq:subproblem1} into four regions (i.e., $\R_1$, $\R_2^\z$, $\R_2^h$, and $\R_3$); see~\eqref{eq:arear1}--\eqref{eq:arear3}, where for notational simplicity, we have denoted  
\begin{align}
\hspace{-4mm} \u & \doteq \omega_\T \x_{\T}, \\
\hspace{-4mm} \phi(\z) & \doteq {\arg \min }_{\phi \in[0,2 \pi)} \|\mathbf{z}-\u \mathrm{e}^{\mathsf{j}\phi}\|, \\
\hspace{-4mm} \h(\z) & \doteq \z - \u \mathrm{e}^{\mathsf{j}\phi(\z)}, \\
\hspace{-4mm} {\g}(\z) & \doteq \left\{ \hspace{-1.5mm} \begin{array}{ll}
\frac{ \z-\u \mathrm{e}^{\mathsf{j} \phi(\z)} } {\left\|\z-\u \mathrm{e}^{\mathsf{j} \phi(\z)}\right\| },~~~~~~~~~~~~~~~~\text{if}~\operatorname{dist}(\z, \u) \neq 0, \\
{\g} \in \left\{{\g}: \Im\left[{\g}^{*} \z\right]=0,\|{\g}\|=1\right\},~\text{otherwise.} 
\end{array}\right.\hspace{-4mm}
\end{align}
\black{Fig.~\ref{fig:areaDivision} visualizes the division of these regions in $\mathbb R^2$, which is plotted by sampling $40,000$ points with interval $0.01$ on the area of interest. }

For each region, we will discuss the geometric property of~\eqref{eq:subproblem1} separately. 

The key difference here is that~\cite{sunju} can directly divide the solution space using $\x$, which is precisely the solution of the problem. However, it is impossible for us to do so because our solution is in a subspace $\C^\T$. Although such a solution must exist, we could by no means derive a concrete form for it, which bring a challenge to our analysis. Our primary novelty is to connect it to the expected solution $\u$ and derive a similar geometric structure (Proposition~\ref{prop:2.3}--Proposition~\ref{prop:2.7}). 

Here we provide some preliminary insights on the way that we divide $\T$ into the four areas.

\begin{enumerate}[i)]
\item {On $\R_1$, the expected Hessian has a negative curvature. One can be convinced that the real Hessian is close to the expected one when $m$ is large. Therefore, no minimizer could occur on $\R_1$.}
\item {On $\R_2^\z$ and $\R_2^\h$, the expected gradient is always nonzero. Similarly, the real gradient should be nonzero as well when $m$ is large. This prevent minimizers from occurring on $\R_2^\z$ and $\R_2^\h$. }
\item {On $\R_3$, this area contains the potential minimizers since there is no condition on it. Fortunately, we can establish a restricted strong convexity (Proposition~\ref{prop:2.4}) on this area and then show that all minimizers can only occur in an $\epsilon\|\x\|$-neighborhood around $\u$ (Proposition~\ref{prop:geoproOnsub}). }
\end{enumerate}

{Then, we proceed to rigorously characterize the geometric property of the function $f(\z)$ in the above four regions, respectively. Our results are mathematically described in the following propositions.}

\begin{prop}[Negative Curvature]\label{prop:2.3}
Consider a given $k$-dimensional subspace $\C^\T$ satisfying $\frac{\|\x_\T\|^2}{\|\x\|^2} > \frac{9}{10}$. When $m \geq C k\log^3 k$, it holds with probability at least $1 - cm^{-1}$ that
\begin{equation}
\begin{bmatrix}
	\u \mathrm{e}^{\mathsf{j} \phi(\z)} \\
	\overline{\u} \mathrm{e}^{-\mathsf{j} \phi(\z)}
\end{bmatrix}^{*} \nabla^{2} f(\z)\begin{bmatrix}
	\u \mathrm{e}^{\mathsf{j} \phi(\z)} \\
	\overline{\u} \mathrm{e}^{-\mathsf{j} \phi(\z)}
\end{bmatrix}\leq -\frac{1}{100}\|\u\|^2
\end{equation}for all $\z\in \mathcal R_1$. Here $C,c$ are positive absolute constants.
\end{prop}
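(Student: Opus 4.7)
The plan is to combine the pointwise Hessian concentration from Lemma~\ref{lm:matrixconcentration} with the quantitative expected negative curvature built into the definition of $\mathcal{R}_1$. For any fixed $\z\in\C^\T$, Lemma~\ref{lm:matrixconcentration} gives
\begin{equation*}
\|\nabla^2 f(\z)_\T - \nabla^2\E[f(\z)]_\T\|\leq \delta(\|\x\|^2+\|\z\|^2)
\end{equation*}
with probability at least $1-c(\delta)m^{-1}$ whenever $m\geq C(\delta)k\log k$. Since both $\u,\z\in\C^\T$, pairing this operator-norm bound with the test vector $\mathbf{v}(\z)\doteq[\u\mathrm{e}^{\mathsf{j}\phi(\z)};\,\overline{\u}\mathrm{e}^{-\mathsf{j}\phi(\z)}]$, whose $\ell_2$-norm equals $\sqrt{2}\|\u\|$, yields
\begin{equation*}
\bigl|\mathbf{v}(\z)^*\bigl(\nabla^2 f(\z)-\E[\nabla^2 f(\z)]\bigr)\mathbf{v}(\z)\bigr|\leq 2\delta\|\u\|^2(\|\x\|^2+\|\z\|^2).
\end{equation*}

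Next, the definition of $\mathcal{R}_1$ in \eqref{eq:arear1} guarantees $\mathbf{v}(\z)^*\E[\nabla^2 f(\z)]\mathbf{v}(\z)\leq -\tfrac{1}{100}\|\u\|^2\|\z\|^2-\tfrac{1}{50}\|\u\|^4$ for every $\z\in\mathcal{R}_1$. The assumption $\|\x_\T\|^2/\|\x\|^2>9/10$, together with $\|\u\|^2=(\|\x\|^2+\|\x_\T\|^2)/2$, forces $\|\u\|^2\in(\tfrac{19}{20}\|\x\|^2,\|\x\|^2]$, so $\|\x\|^2\asymp\|\u\|^2$. Choosing $\delta$ to be a sufficiently small absolute constant, the perturbation $2\delta\|\u\|^2(\|\x\|^2+\|\z\|^2)$ is then absorbed by the expected negative curvature (noting that $\|\u\|^2(\|\x\|^2+\|\z\|^2)\lesssim\|\u\|^4+\|\u\|^2\|\z\|^2$), and adding the two bounds delivers the claimed strict negative curvature of the actual Hessian along $\mathbf{v}(\z)$.

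The main obstacle is upgrading this pointwise guarantee to a \emph{uniform} bound valid for all $\z\in\mathcal{R}_1$ simultaneously, and it is this step that is responsible for the extra logarithmic factors in $m\geq Ck\log^3 k$. I would handle it via a covering-net argument restricted to the $k$-complex-dimensional subspace $\C^\T$. First, truncate to a ball $\{\z\in\C^\T:\|\z\|\leq R\|\x\|\}$ for a large absolute constant $R$; outside this ball the leading $(2\|\z\|^2-\|\x\|^2)\mathbf{I}$ block of $\E[\nabla^2 f(\z)]$ dominates and makes $\mathbf{v}(\z)^*\E[\nabla^2 f(\z)]\mathbf{v}(\z)>0$, so such $\z$ cannot lie in $\mathcal{R}_1$. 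On the truncated set, construct an $\epsilon$-net of cardinality at most $(CR\|\x\|/\epsilon)^{2k}$, apply the pointwise concentration with a union bound, and interpolate between net points using local Lipschitz control of $\z\mapsto\nabla^2 f(\z)_\T$, whose Lipschitz constant is controlled by a high-probability bound on $\max_i\|(\a_i)_\T\|\lesssim\sqrt{k\log m}$. Taking $\epsilon$ polynomially small in $k$ forces the net size and the Lipschitz estimate to consume the extra $\log^2$ factor, matching the required sample complexity. The overall argument parallels the nonsparse analysis of \cite{sunju}, but its crucial adaptation here is that the covering is built inside $\C^\T$ of real dimension $2k$ rather than the full $\C^n$, which is precisely what trades a $\log n$ factor for a $\log k$ factor and underlies the improvement claimed in Proposition~\ref{prop:geoproOnsub}.
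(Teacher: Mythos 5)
Your pointwise perturbation step is sound (the absorption of $2\delta\|\u\|^2(\|\x\|^2+\|\z\|^2)$ into the expected curvature works for small absolute $\delta$, using $\|\u\|^2>\tfrac{19}{20}\|\x\|^2$), but the route is genuinely different from the paper's and your uniformity step has a real gap. The paper never evaluates the Hessian at the moving point $\z$: it expands the quadratic form and regroups it as
\begin{equation*}
\frac1m\sum_{i=1}^m\left(2|\a_i^*\z|^2|\a_i^*\u|^2-2|\a_i^*\u|^4\right)
+\begin{bmatrix}\z\\\overline{\z}\end{bmatrix}^{*}\nabla^2 f\left(\u\,\mathrm{e}^{\mathsf{j}\phi(\z)}\right)\begin{bmatrix}\z\\\overline{\z}\end{bmatrix},
\end{equation*}
so that the Hessian is only ever evaluated at the \emph{fixed} point $\u$ (the phase $\phi(\z)$ leaves the operator-norm deviation invariant). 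Each piece is then controlled by a single concentration event that is already uniform in $\z$: Lemma~\ref{lem:6.3} with $\v=\u$ gives an operator-norm bound on $\big(\frac1m\sum_i|\a_i^*\u|^2\a_i\a_i^*\big)_{\T}$, Lemma~\ref{lem:A.7} bounds the scalar $\frac1m\sum_i|\a_i^*\u|^4$, and Lemma~\ref{lm:matrixconcentration} at $\u$ bounds $\|\nabla^2 f(\u\mathrm{e}^{\mathsf{j}\phi})_\T-\nabla^2\E[f(\u\mathrm{e}^{\mathsf{j}\phi})]_\T\|$. Because the $\z$-dependence enters only through quadratic forms in these fixed random matrices, no covering net is needed, and the argument consumes only $m\gtrsim k\log k$; the $\log^3 k$ in the statement is inherited from the other regions (notably Proposition~\ref{prop:2.6}), not from this one.

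The gap in your version is the union bound over the net. Lemma~\ref{lm:matrixconcentration} (and the underlying Lemma~\ref{lem:6.3}) holds only with probability $1-c\,m^{-1}$; the polynomial tail is intrinsic, coming from the Chebyshev/truncation treatment of the heavy-tailed quartic terms $|\a_i^*\z|^4$. A union bound over an $\epsilon$-net of cardinality $(CR\|\x\|/\epsilon)^{2k}$ therefore yields a failure probability of order $e^{\Theta(k\log(1/\epsilon))}m^{-1}$, which is vacuous unless $m$ is exponential in $k$. To push a net through you would have to re-derive the Hessian concentration with exponentially small tails after truncation and treat the truncation event $\{\max_i|\a_i^*\u|>\tau\}$ separately outside the union bound, as the paper does in the proof of Proposition~\ref{prop:2.6}; your sketch gestures at this but does not carry it out, and without it the uniform claim over $\mathcal R_1$ does not follow. (Your attribution of the $\log^3 k$ factor to the covering step is likewise a misreading of where that factor originates.)
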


\begin{prop}[Restricted Strong Convexity near $\u$]\label{prop:2.4}
Consider a given $k$-dimensional subspace $\C^\T$ satisfying $\frac{\|\x_\T\|^2}{\|\x\|^2} > \frac{9}{10}$. When $m \geq C k\log^3 k$, it holds with probability at least $1 - cm^{-1}$ that 
\begin{equation}
\begin{bmatrix}
	\g(\z)\\
	\overline{\g(\z)}
\end{bmatrix}^* \nabla^2 f(\z)
\begin{bmatrix}
	\g(\z)\\
	\overline{\g(\z)}
\end{bmatrix} \geq \frac{11}{100}\|\x\|^2 ,
\end{equation}
for all $\z\in \mathcal R_3$. Here $C,c$ are positive absolute constants.
\end{prop}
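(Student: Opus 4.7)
The plan is to establish the lower bound in two stages: first, compute a strictly stronger lower bound on the \emph{population} Hessian quadratic form on $\R_3$; then transfer this bound to the finite-sample Hessian uniformly on $\R_3\cap\C^\T$ via Lemma~\ref{lm:matrixconcentration} combined with an $\epsilon$-net plus Lipschitz argument. This mirrors the strategy in~\cite{sunju} for the nonsparse setting, with the added subtlety that our reference point is $\u = \omega_\T\x_\T$ rather than $\x$ itself, and the ambient space is restricted to $\C^\T$.

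For the population step, I would expand the quadratic form using the block expression~\eqref{eq:Exph}. Since $\z^\top\overline{\g} = \g^*\z$ (the two sums coincide termwise) and $\|\g\|=1$, a direct computation gives
\begin{equation*}
\begin{bmatrix}\g\\ \overline{\g}\end{bmatrix}^* \nabla^2\E[f(\z)] \begin{bmatrix}\g\\ \overline{\g}\end{bmatrix} = 4|\g^*\z|^2 - 2|\x^*\g|^2 + 4\|\z\|^2 - 2\|\x\|^2 + 4\Re[(\g^*\z)^2].
\end{equation*}
The crucial observation is that the choice $\phi(\z)=\arg(\u^*\z)$ forces $\g(\z)^*\z\in\mathbb R$, so $\Re[(\g^*\z)^2]=|\g^*\z|^2$ and the expression collapses to $8|\g^*\z|^2 - 2|\x^*\g|^2 + 4\|\z\|^2 - 2\|\x\|^2$. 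Writing $\z = \u e^{\mathsf j\phi(\z)} + \h(\z)$ so that $\g^*\z = \g^*\u e^{\mathsf j\phi(\z)} + \|\h(\z)\|$, and combining $\|\u\|^2 = (\|\x\|^2+\|\x_\T\|^2)/2 \geq \frac{19}{20}\|\x\|^2$ (from the sufficient-energy hypothesis) with $\|\h(\z)\|\leq \|\x\|/\sqrt 7$ on $\R_3$, I would lower-bound the expression by a constant strictly greater than $\frac{11}{100}\|\x\|^2$, leaving a small gap (say $\frac{2}{100}\|\x\|^2$) that will absorb the subsequent concentration error.

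For the finite-sample step, Lemma~\ref{lm:matrixconcentration} yields, for each fixed $\z\in\C^\T$, that $\|(\nabla^2 f(\z)-\nabla^2\E[f(\z)])_\T\|\leq \delta(\|\x\|^2+\|\z\|^2)$ with probability $1-c(\delta)m^{-1}$ once $m\geq C(\delta)k\log k$. Because $\g\in\C^\T$ with $\|\g\|=1$, this controls the quadratic form deviation by $2\delta(\|\x\|^2+\|\z\|^2)\leq 4\delta\|\x\|^2$ on $\R_3$, so a small enough $\delta$ preserves the margin secured above. To upgrade this pointwise estimate to a uniform one over $\R_3$, I would build an $\epsilon$-net $\mathcal N$ on the $2k$-real-dimensional ball $\R_3\cap\C^\T$ of cardinality at most $(C\|\x\|/\epsilon)^{2k}$, apply Lemma~\ref{lm:matrixconcentration} at each net point with a union bound over $\mathcal N$, and then extend to all $\z\in\R_3$ via Lipschitz continuity of both $\z\mapsto\nabla^2 f(\z)_\T$ and $\z\mapsto\g(\z)$.

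The main technical obstacle is controlling this uniform discretization error. The Hessian is a cubic polynomial in the Gaussian rows $\a_i$, so its Lipschitz constant on $\R_3$ scales with $\max_i\|\a_{i,\T}\|^3$, which concentrates around $(k+\log m)^{3/2}$. Moreover, $\g(\z)=\h(\z)/\|\h(\z)\|$ is not globally Lipschitz near $\h(\z)=0$, though that sliver of $\R_3$ can be handled separately by invoking the strong-gradient bound already secured on $\R_2^\h$. Absorbing these polynomial and logarithmic factors forces $\epsilon$ to be polynomially small in $1/\log m$, which inflates the net cardinality and drives the required sample budget to $m\geq Ck\log^3 k$; this matches the analogous $m\geq Cn\log^3 n$ scaling derived in the nonsparse analysis of~\cite{sunju} and accounts for the two extra logarithmic factors appearing in the proposition.
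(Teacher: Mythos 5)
Your population-level computation is sound: the identity $\bigl[\begin{smallmatrix}\g\\ \overline{\g}\end{smallmatrix}\bigr]^*\nabla^2\E[f(\z)]\bigl[\begin{smallmatrix}\g\\ \overline{\g}\end{smallmatrix}\bigr]=4|\g^*\z|^2-2|\x^*\g|^2+4\|\z\|^2-2\|\x\|^2+4\Re[(\g^*\z)^2]$ is correct, the observation that $\g(\z)^*\z\in\mathbb R$ is right, and writing $s=\g^*\u\mathrm{e}^{\mathsf j\phi}$, $|\x^*\g|=|s|/|\omega_\T|$ and minimizing the resulting quadratic in $(s,t)$ over $t\in[0,\|\x\|/\sqrt7]$ does give a minimum of roughly $0.144\|\x\|^2$ at $a^2=9/10$, so the margin you reserve for concentration exists. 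This part parallels the paper's minimization of $\phi(\lambda,\eta)$, except the paper performs that minimization directly on the concentrated \emph{finite-sample} expression rather than on the expectation.

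The gap is in the uniformity step. Lemma~\ref{lm:matrixconcentration} holds for a \emph{fixed} $\z$ with failure probability $c(\delta,\z)m^{-1}$ (inherited from the Chebyshev-type $m^{-1}$ term in Lemma~\ref{lem:6.3}), which is only polynomially small; a union bound over a net of cardinality $(C\|\x\|/\epsilon)^{2k}$ with $\epsilon\sim 1/(k^2\log^2 m)$ multiplies this by $\mathrm{e}^{\Omega(k\log k)}$ and yields nothing. To make a net argument work you would have to split off the heavy-tailed bad event once (truncation) and union-bound only the exponentially concentrating remainder — precisely the machinery the paper avoids re-deriving by instead (i) reparametrizing $\R_3$ as $\z=\u\mathrm{e}^{\mathsf j\phi}+t\g$ with $\g$ ranging over the fixed sphere $\{\h\in\C^\T:\Im[\h^*\u]=0,\|\h\|=1\}$, (ii) expanding the finite-sample Hessian into three sums and bounding the $\x$-weighted one via the matrix-norm form of Lemma~\ref{lem:6.3} at the single fixed vector $\x$ (which is automatically uniform in $\g$), and (iii) invoking Lemma~\ref{lem:6.4}, which is already stated uniformly over all $\z,\w\in\C^\T$ with failure probability $c_1\exp(-c(\delta)m)+c_2m^{-k}$, for the terms where both vectors vary. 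Your proposed patch for the singularity of $\g(\z)$ at $\h(\z)=\mathbf 0$ is also incorrect: points with small $\|\h(\z)\|$ lie in the core of $\R_3$, far from $\R_2^\h$ (which requires $\operatorname{dist}(\z,\u)\geq\|\x\|/3$), so no gradient lower bound is available there — and none should be, since the minimizer sits in that neighborhood. The $(\g,t)$ parametrization removes this issue entirely because the direction $\g$ is discretized independently of the base point.
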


\begin{prop}[Large gradient]\label{prop:2.5}
Consider a given $k$-dimensional subspace $\C^\T$ satisfying $\frac{\|\x_\T\|^2}{\|\x\|^2} > \frac{9}{10}$. When $m \geq  C k\log k$, it holds with probability at least $1 - cm^{-1}$ that 
\begin{equation}
\z^* \nabla_1 f(\z) \geq \frac{1}{1000}\|\x\|^2\|\z\|^2
\end{equation}
for all $\z\in \mathcal R_2^\z$. Here $C,c$ are positive absolute constants.
\end{prop}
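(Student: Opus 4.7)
The plan is to decompose $\z^{*}\nabla_1 f(\z)$ into an expected piece plus a deviation piece, read the lower bound of the expected piece directly off the definition of $\R_2^\z$, and then tame the deviation via the matrix concentration of Lemma~\ref{lm:matrixconcentration}. Concretely, I would write
\begin{equation}
\Re\bigl[\z^{*}\nabla_1 f(\z)\bigr] \;=\; \Re\bigl[\z^{*}\nabla_1\E[f(\z)]\bigr] + \Re\bigl[\z^{*}\bigl(\nabla_1 f(\z)-\nabla_1\E[f(\z)]\bigr)\bigr].
\end{equation}
The defining inequality of $\R_2^\z$ already supplies $\Re[\z^{*}\nabla_1\E[f(\z)]]\geq \tfrac{1}{100}\|\z\|^4+\tfrac{1}{500}\|\x\|^2\|\z\|^2$, so only the cross term needs probabilistic work.

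For the cross term, since $\supp(\z)\subseteq\T$ I would invoke Cauchy--Schwarz and then Lemma~\ref{lm:matrixconcentration}:
\begin{equation}
\bigl|\z^{*}(\nabla_1 f(\z)-\nabla_1\E[f(\z)])\bigr| \;\leq\; \|\z\|\cdot\bigl\|(\nabla_1 f(\z))_\T-(\nabla_1\E[f(\z)])_\T\bigr\| \;\leq\; \sqrt{2}\,\delta\,\|\z\|^2(\|\x\|^2+\|\z\|^2),
\end{equation}
holding pointwise with high probability once $m\geq C(\delta)k\log k$. I would fix $\delta$ a small absolute constant with $\sqrt{2}\delta\leq 1/3000$, which is cheap in both constants and sample complexity.

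To upgrade this pointwise estimate to a uniform bound over the cone $\R_2^\z\subseteq\C^\T$, I would use positive homogeneity to reduce to the unit sphere of $\C^\T$ and then run a standard $\epsilon$-net argument of real dimension $2k$: covering that sphere by a net of size $(3/\epsilon)^{2k}$, applying Lemma~\ref{lm:matrixconcentration} (whose failure probability is $c_1\delta^{-2}m^{-1}+c_2\exp(-c_3\delta^2 m/\log m)$ via Lemma~\ref{lem:6.3}) at each net point, and taking a union bound absorbs the net cardinality into the exponential tail when $m\geq Ck\log k$. Discretization error is handled by a Lipschitz estimate on $\z\mapsto(\tfrac{1}{m}\sum_i |\a_i^{*}\z|^2\a_i\a_i^{*})_\T$ obtained from the sub-exponential tails of $|\a_i^{*}\z|^2\|\a_i\|^2$. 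Combining the two estimates on $\R_2^\z$ yields
\begin{equation}
\Re[\z^{*}\nabla_1 f(\z)] \;\geq\; \Bigl(\tfrac{1}{100}-\tfrac{1}{3000}\Bigr)\|\z\|^4 + \Bigl(\tfrac{1}{500}-\tfrac{1}{3000}\Bigr)\|\x\|^2\|\z\|^2 \;\geq\; \tfrac{1}{1000}\|\x\|^2\|\z\|^2,
\end{equation}
which is exactly the claim.

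The main obstacle I expect is the uniform lift: the random matrix $\tfrac{1}{m}\sum_i|\a_i^{*}\z|^2\a_i\a_i^{*}$ depends non-linearly on $\z$, so the Lipschitz control on the sphere, together with the covering number blow-up, must be handled carefully in order to keep the sampling budget at $m\geq Ck\log k$ rather than degrading to $k\log^3 k$. Polarization to reduce $|\a_i^{*}\z|^2$ to bilinear forms, combined with the estimates already assembled in Lemmas~\ref{lem:6.3}--\ref{lem:6.4}, is the route I would take to overcome this.
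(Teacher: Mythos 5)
Your overall decomposition (read the lower bound on $\Re[\langle\z,\nabla_1\E[f(\z)]\rangle]$ off the definition of $\R_2^\z$, then control the deviation of the empirical gradient) is the right skeleton, and your bookkeeping of constants would close the argument if the deviation bound held uniformly. But the way you propose to control the deviation has a genuine gap, and in fact the uniform inequality you need is false at the sample size $m\geq Ck\log k$. Writing $\z^*\nabla_1 f(\z)=\frac1m\sum_i|\a_i^*\z|^4-\frac1m\sum_i|\a_i^*\x|^2|\a_i^*\z|^2$, your Cauchy--Schwarz step $|\z^*(\nabla_1 f(\z)-\nabla_1\E[f(\z)])|\leq\sqrt2\,\delta\,\|\z\|^2(\|\x\|^2+\|\z\|^2)$ is a \emph{two-sided} bound, and in particular it forces a uniform \emph{upper} bound on $\frac1m\sum_i|\a_i^*\z|^4$ over the unit sphere of $\C^\T$. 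Taking $\z=(\a_{i_0})_\T/\|(\a_{i_0})_\T\|$ gives $\frac1m\sum_i|\a_i^*\z|^4\geq\frac1m\|(\a_{i_0})_\T\|^4\asymp k^2/m$, which for $m=Ck\log k$ is far larger than the expectation $2\|\z\|^4$; such $\z$ (rescaled to $\|\z\|=\|\x\|$) generically lies in $\R_2^\z$. So no refinement of the net or Lipschitz argument can rescue the claimed uniform deviation bound — a two-sided fourth-moment bound genuinely needs $m\gtrsim k^2$. A secondary problem is the union bound itself: the failure probability of Lemma~\ref{lm:matrixconcentration} (inherited from Lemma~\ref{lem:6.3}) contains a polynomial term $c_1\delta^{-2}m^{-1}$ that is \emph{not} absorbed by an exponential tail when multiplied by a net of cardinality $(3/\epsilon)^{2k}$.

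The paper's proof avoids both issues by exploiting the sign structure of the two summands separately. For the quartic term it only needs a one-sided \emph{lower} bound, $\frac1m\sum_i|\a_i^*\z|^4\geq\E[\cdot]-\frac1{100}\|\z\|^4$ uniformly over $\C^\T$, which is exactly what Lemma~\ref{lem:6.4} supplies at $m\geq Ck\log k$ (one-sided lower bounds for nonnegative sums survive truncation and hence a net argument with exponential tails). For the cross term, $\frac1m\sum_i|\a_i^*\x|^2|\a_i^*\z|^2$ is a quadratic form in $\z$ whose random matrix $\frac1m\sum_i|\a_i^*\x|^2\a_i\a_i^*$ does not depend on $\z$, so a single application of Lemma~\ref{lem:6.3} with the fixed vector $\v=\x$ yields the uniform upper bound $\E[\cdot]+\frac1{1000}\|\x\|^2\|\z\|^2$ with no net over $\z$ at all. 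Combining these with the defining inequality of $\R_2^\z$ gives the stated $\frac1{1000}\|\x\|^2\|\z\|^2$. If you want to keep your "expectation plus deviation" framing, you must split the deviation into these two pieces and treat them asymmetrically; bounding $\|\nabla_1 f(\z)_\T-\nabla_1\E[f(\z)]_\T\|$ in norm is both too crude and unattainable here.
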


\begin{prop}[Large gradient]\label{prop:2.6}
Consider a given $k$-dimensional subspace $\C^\T$ satisfying $\frac{\|\x_\T\|^2}{\|\x\|^2} > \frac{9}{10}$. When $m \geq C k\log^3 k$, it holds with probability at least $1 - cm^{-1}$ that 
\begin{equation}
\Re\left[\h(\z)^* \nabla_1 f(\z) \right] \geq\frac1{1000} \|\x \|^2 \|\z\|\|\z - \u \mathrm{e}^{\mathsf{j}\phi}\|
\end{equation}
for all $\z\in \mathcal R_2^\h$. Here $C, c$ are positive absolute constants.
\end{prop}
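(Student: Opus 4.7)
The plan is to combine the pointwise concentration of $\nabla_1 f$ about $\nabla_1\E[f]$ (Lemma~\ref{lm:matrixconcentration}) with the built-in lower bound on $\Re[\langle \h(\z),\nabla_1\E[f(\z)]\rangle]$ that defines the region $\mathcal{R}_2^\h$, and then to upgrade the pointwise inequality to a uniform one via a net argument. Intuitively, since the expected gradient is already at least $\frac{1}{250}\|\x\|^2\|\z\|\|\h(\z)\|$ on $\mathcal{R}_2^\h$, it suffices to show that the deviation $|\Re[\h(\z)^*(\nabla_1 f(\z)-\nabla_1\E[f(\z)])]|$ does not exceed roughly $\frac{3}{1000}\|\x\|^2\|\z\|\|\h(\z)\|$.

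For the pointwise step, fix $\z\in\mathcal{R}_2^\h$. Because $\operatorname{supp}(\h(\z))\subseteq\T$, I have $\Re[\h(\z)^*\nabla_1 f(\z)]=\Re[\h(\z)^*(\nabla_1 f(\z))_\T]$, and Cauchy--Schwarz together with Lemma~\ref{lm:matrixconcentration} give
\[
|\Re[\h(\z)^*(\nabla_1 f(\z)-\nabla_1\E[f(\z)])]|\leq \sqrt{2}\,\delta\,\|\h(\z)\|\,\|\z\|\,(\|\z\|^2+\|\x\|^2).
\]
On $\mathcal{R}_2^\h$ one has $\|\z\|\leq \|\u\|\leq \|\x\|$, since $\|\u\|^2=(\|\x\|^2+\|\x_\T\|^2)/2\leq \|\x\|^2$, so $\|\z\|^2+\|\x\|^2\leq 2\|\x\|^2$. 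Choosing $\delta\leq 3/(2000\sqrt{2})$ and invoking the defining inequality of $\mathcal{R}_2^\h$ then produces
\[
\Re[\h(\z)^*\nabla_1 f(\z)]\geq \Bigl(\tfrac{1}{250}-\tfrac{3}{1000}\Bigr)\|\x\|^2\|\z\|\|\h(\z)\|=\tfrac{1}{1000}\|\x\|^2\|\z\|\|\h(\z)\|,
\]
which is the desired inequality at a single point $\z$.

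To make this uniform over $\mathcal{R}_2^\h$, I would cover the compact set $\{\z\in\C^\T:\|\z\|\leq\|\x\|\}$ by an $\eta$-net $\mathcal{N}_\eta$ of cardinality at most $(C/\eta)^{2k}$, apply the pointwise estimate to every element of $\mathcal{N}_\eta$ via a union bound, and extend to arbitrary $\z\in\mathcal{R}_2^\h$ by a Lipschitz argument for $\z\mapsto\Re[\h(\z)^*\nabla_1 f(\z)]$. The Lipschitz constant is at most a cubic polynomial in $\|\z\|$ multiplied by data-dependent factors such as $\|\mathbf A_\T\|^2$ and $\max_i|\a_i^*\v|$, all of which concentrate well under $m\geq Ck\log k$ by standard sub-Gaussian tools. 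Taking $\eta$ polynomially small in $k$ absorbs the union-bound cost into an additional $k\log k$ factor, and applying the moment-control Bernstein inequality (Lemma~\ref{lem:A.8}) to the quartic integrand $|\a_i^*\z|^2|\a_i^*\h(\z)|^2$ inflates $\log k$ to $\log^3 k$, yielding the stated sampling complexity $m\geq Ck\log^3 k$ together with failure probability at most $cm^{-1}$.

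The main obstacle is this last uniformization: the integrand is a product of four Gaussian-linear factors and is therefore only sub-Weibull of order $1/2$, so naive Bernstein applications produce spurious logarithmic overheads. A clean bound requires a truncation or slicing argument in the spirit of~\cite[Sec.~6]{sunju}, and one also has to verify that the ambient estimates used there restrict gracefully to the $k$-dimensional subspace $\C^\T$, with principal submatrices indexed by $\T$ replacing their full counterparts, without sacrificing the favourable scaling in $k$. Once this adaptation is in place the argument proceeds essentially as in the nonsparse analysis, with $n$ replaced by $k$ throughout.
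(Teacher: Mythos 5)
Your overall architecture is the right one --- lower-bound the expected gradient using the defining inequality of $\mathcal{R}_2^\h$, then control the deviation $\Re[\h(\z)^*(\nabla_1 f(\z)-\nabla_1\E[f(\z)])]$ uniformly --- and your constant bookkeeping in the pointwise step ($\tfrac{1}{250}-\tfrac{3}{1000}=\tfrac{1}{1000}$) is consistent with the paper's. However, there is a genuine gap in the probabilistic backbone of your uniformization. Lemma~\ref{lm:matrixconcentration} is a \emph{pointwise} statement whose failure probability is $c(\delta,\z)m^{-1}$, i.e., only polynomially small in $m$. A union bound over an $\eta$-net of cardinality $(C/\eta)^{2k}$ multiplies this by a factor exponential in $k$, so the resulting failure probability is $(C/\eta)^{2k}m^{-1}$, which is vacuous unless $m$ itself is exponential in $k$; increasing $m$ by polylogarithmic factors cannot repair this, because the tail decays only polynomially in $m$. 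This is precisely why the paper does not run the argument through Lemma~\ref{lm:matrixconcentration} on the net.

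What the paper actually does is split $\h(\z)^*\nabla_1 f(\z)$ into the part driven by $|\a_i^*\x|^2$ and the part driven by $|\a_i^*\z|^2$. The first part is linear in $\z$ through a \emph{fixed} random matrix $\frac1m\sum_i|\a_i^*\x|^2\a_i\a_i^*$, so a single application of Lemma~\ref{lem:6.3} with $\v=\x$ gives an operator-norm bound $\|\Delta\|\le\delta\|\x\|^2$ that is automatically uniform over all $\z\in\C^\T$ --- no net is needed there. The second part, $S(\z)=\frac1m\sum_i|\a_i^*\z|^2\Re[(\z-\u\mathrm{e}^{\mathsf{j}\phi})^*\a_i\a_i^*\z]$, is cubic in $\z$ and cannot be reduced to one matrix event; for it the paper truncates at $\tau=\sqrt{10\log m}\,\|\u\|$, applies the moment-control Bernstein inequality (Lemma~\ref{lem:A.8}) to the truncated sum to obtain a per-point tail of order $\exp(-c_1 m/\log^2 m)$, and only \emph{then} union-bounds over the net --- the exponential tail is what survives the $\exp(2k\log(3\|\u\|/\epsilon))$ cardinality and produces the $m\geq Ck\log^3 k$ requirement, with the Lipschitz transfer controlled by $\max_i\|(\a_i)_\T\|^4\lesssim k^2\log^2 m$. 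You gesture at truncation in your closing paragraph, but your main argument does not incorporate it, and the quantity you propose to truncate ($|\a_i^*\z|^2|\a_i^*\h(\z)|^2$) is not the term that needs it. As written, the proof does not go through; it would need to be restructured around the split-plus-truncation scheme to yield a nonvacuous uniform bound.
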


\begin{prop}\label{prop:2.7}
It satisfies that
\begin{equation}
\C^\T = \R_1\cup\R_2^\z\cup\R_2^\h\cup\R_3.
\end{equation}
\end{prop}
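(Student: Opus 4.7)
The statement is purely deterministic (the regions are defined via the expected gradient and Hessian, not the random ones), so the task is a geometric case analysis on $\C^{\T}$. The plan is a contrapositive argument: assume $\z \in \C^{\T}$ with $\z \notin \R_3$, i.e., $\operatorname{dist}(\z,\u) > \tfrac{1}{\sqrt 7}\|\x\|$, and show $\z \in \R_1 \cup \R_2^\z \cup \R_2^\h$. A useful first observation is that $\tfrac{1}{\sqrt 7} > \tfrac{1}{3}$, so the distance condition appearing in the definition of $\R_2^\h$ is automatically inherited, and the hypothesis $\|\x_{\T}\|^2/\|\x\|^2 > 9/10$ yields the auxiliary bound $\|\u\|^2 = (\|\x\|^2+\|\x_{\T}\|^2)/2 \ge \tfrac{19}{20}\|\x\|^2$, which I would use throughout.

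The first technical step is to derive clean expressions for the three scalar quantities defining $\R_1$, $\R_2^\z$, $\R_2^\h$ from~\eqref{eq:Expg} and~\eqref{eq:Exph}. Writing $\v = \u \mathrm{e}^{\mathsf{j}\phi(\z)}$, the phase convention $\phi(\z)=\arg\min_\phi\|\z - \u\mathrm{e}^{\mathsf{j}\phi}\|$ forces $\v^*\z = |\u^*\z|\ge 0$, so the curvature collapses to
\begin{equation}
    \begin{bmatrix}\v\\\overline{\v}\end{bmatrix}^* \nabla^2\E[f(\z)] \begin{bmatrix}\v\\\overline{\v}\end{bmatrix} = 8|\u^*\z|^2 - 2|\u^*\x|^2 + (4\|\z\|^2 - 2\|\x\|^2)\|\u\|^2.
\end{equation}
In parallel, $\Re\langle\z,\nabla_1\E[f(\z)]\rangle = (2\|\z\|^2-\|\x\|^2)\|\z\|^2 - |\x^*\z|^2$, and $\Re\langle\h(\z),\nabla_1\E[f(\z)]\rangle$ expands, via $\z = \v + \h(\z)$, into an explicit polynomial in $\|\z\|$, $\|\u\|$, $\|\h(\z)\|$, and $|\x_{\T}^*\z|$. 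Because $\z\in\C^{\T}$ and $\u = \omega_{\T}\x_{\T}\in\C^{\T}$, the inner products $\x^*\z=\x_{\T}^*\z$ and $\u^*\z=\omega_{\T}\x_{\T}^*\z$ are parallel, which eliminates the out-of-subspace part of $\x$ from every expression.

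Next I would split on $\|\z\|$. For small $\|\z\|$ (below a constant multiple of $\|\x\|$) the term $(4\|\z\|^2-2\|\x\|^2)\|\u\|^2$ dominates the curvature from below, pushing it past the $\R_1$-threshold $-\tfrac{1}{100}\|\u\|^2\|\z\|^2 - \tfrac{1}{50}\|\u\|^4$; as a sanity check, at $\z=\mathbf{0}$ the quadratic form equals $-2\|\u\|^2(\|\x\|^2+\|\x_{\T}\|^2) = -4\|\u\|^4$, giving plenty of slack. For large $\|\z\|$ the quartic $2\|\z\|^4$ dominates $\Re\langle\z,\nabla_1\E[f(\z)]\rangle$ and places $\z\in\R_2^\z$. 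In the intermediate window (after a little algebra, contained in $[\tfrac{11}{20}\|\u\|,\|\u\|]$ thanks to $\|\u\|^2\ge \tfrac{19}{20}\|\x\|^2$), I combine the inherited distance bound $\operatorname{dist}(\z,\u)\ge \tfrac{1}{3}\|\x\|$ with the expansion of $\Re\langle\h(\z),\nabla_1\E[f(\z)]\rangle$ to deliver the $\R_2^\h$-threshold $\tfrac{1}{250}\|\x\|^2\|\z\|\|\h(\z)\|$.

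The main obstacle is calibrating these cases so that they genuinely exhaust $\C^{\T}\setminus\R_3$ without leaving an uncovered annulus or direction. A delicate subtlety is posed by the expected saddle set $\mathcal S$ of Proposition~\ref{prop:expectedGeo}, on which $\|\z\|=\|\x\|/\sqrt 2$, $\x^*\z = 0$, and $\nabla_1\E[f(\z)]$ vanishes, so neither the $\R_2^\z$ nor the $\R_2^\h$ bound can hold locally. One therefore has to verify that any $\z$ near $\mathcal S$ with $\z\notin\R_3$ already lies in $\R_1$: at such a $\z$ the negative contribution $-2|\u^*\x|^2 = -2\|\u\|^2\|\x_{\T}\|^2$ to the curvature, combined with $\|\u\|^2\ge \tfrac{19}{20}\|\x\|^2$, drives the quadratic form well below the $\R_1$-threshold. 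The nonsparse partition lemma in~\cite{sunju} supplies the roadmap, but every inequality must be rephrased in terms of $\x_{\T}$ rather than $\x$, and the hypothesis $\|\x_{\T}\|^2/\|\x\|^2 > 9/10$ is exactly what keeps the restricted geometry close enough to the full-space geometry for the argument to close without gaps.
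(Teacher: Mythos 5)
Your overall strategy---a deterministic case analysis on $\C^\T$, with small $\|\z\|$ sent to $\R_1$, large $\|\z\|$ to $\R_2^\z$, the middle to $\R_2^\h$, and special care near the saddle sphere---is in the right spirit, and your curvature identity and the observation $\tfrac{1}{\sqrt7}>\tfrac13$ are correct. But the decomposition does not close, because splitting on $\|\z\|$ alone is too coarse: all three membership tests depend jointly on $\|\z\|$ and on the correlation $\eta = |\x_\T^*\z|/(\|\x_\T\|\|\z\|)$, and their thresholds interlock differently in different $\eta$ ranges. The paper accordingly partitions $\C^\T$ into $\R_a,\dots,\R_d$ indexed by both $\eta$ (cut at $\tfrac12$ and $\tfrac{99}{100}$) and $\|\z\|$. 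Two concrete failures of the norm-only split: (i) your claim that the intermediate window is contained in $[\tfrac{11}{20}\|\u\|,\|\u\|]$ is false---for $\eta$ near $1$ the $\R_2^\z$ test only succeeds once $\|\z\|^2 \ge \tfrac{100}{199}\big(\tfrac{501}{500}+\eta^2 a^2\big)\|\x\|^2$, which at $\eta=a=1$ exceeds $\|\x\|^2\ge\|\u\|^2$, leaving a band $\|\u\|<\|\z\|\lesssim 1.003\|\x\|$ that is excluded from $\R_2^\h$ by its constraint $\|\z\|\le\|\u\|$. The paper's $\R_d$ analysis must separately prove that this band satisfies $\operatorname{dist}(\z,\u)\le\|\x\|/\sqrt7$ and hence lies in $\R_3$ (so your contrapositive hypothesis is vacuous there); you never supply this verification. (ii) Inside $[\tfrac{11}{20}\|\u\|,\|\u\|]$ the assignment to $\R_2^\h$ fails for small $\eta$ near $\|\z\|=\|\x\|/\sqrt2$, which you do flag, but repairing it forces a secondary split on $\eta$ (the $\R_a$ versus $\R_c$ distinction), and your $\R_1$ argument there (``$-2|\u^*\x|^2$ drives the form below threshold'') is only valid when $|\u^*\z|$ is small, not throughout the window: at $\xi=1/\sqrt2$, $\eta=0.7$, $a^2=0.9$ the point is in neither $\R_1$ nor $\R_2^\z$ and must be caught by $\R_2^\h$.

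The second gap is that the hardest part of the paper's proof is absent from yours: verifying the quantitative gradient inequality defining $\R_2^\h$, namely $p(\xi,\eta)\ge\tfrac{1}{250}q(\xi,\eta)$ in the notation of \eqref{eq:p}--\eqref{eq:q}, over the relevant rectangles of $(\xi,\eta)$ and uniformly over $a^2=\|\x_\T\|^2/\|\x\|^2\in[\tfrac{9}{10},1]$. The paper does this by explicit minimization (the vertex of the downward parabola in $\eta$, the critical point $\xi^*(\eta)$ of $p$ in $\xi$, monotonicity in $a$), and the constants $\tfrac12$, $\tfrac{99}{100}$, $\tfrac{56}{100}$, $\tfrac{47}{50}$, $\tfrac{11}{20}$ in $\R_a$--$\R_d$ are chosen precisely so these minimizations come out positive. ``Combining the inherited distance bound with the expansion to deliver the threshold'' names the goal but is not an argument; without these computations (or an equivalent) the covering claim is not established.
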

Up to this point, we have established all the detailed geometric properties on $\C^\T$. Here we present the proof of Proposition~\ref{prop:geoproOnsub}.
\begin{proof} 
One can see from Proposition~\ref{prop:2.3} to  Proposition~\ref{prop:2.7} that all minimizers of $f(\z)$ on $\C^\T$ can only occur in $\R_3$. Moreover, it is trivial that the saddle points and the maximizers of $f(\z)$ that are out of $\R_3$ possess at least one negative curvature. Next we will show that the critical points of $f(\z)$ in $\R_3$ only occur in the area $ \{\z|\operatorname{dist}(\z,\u)\leq \epsilon\|\x\|\}$.

For any points $\z\in \R_3$ satisfying  $\operatorname{dist}(\z, \u)>0$, it can be written  as $$\z=\u \mathrm{e}^{\mathsf{j} \phi(\z)}+t \g,$$ where $\g \doteq \frac{\boldsymbol{h}(\z) }{\|\boldsymbol{h}(\z)\|}$, and $t \doteq \operatorname{dist}(\z, \u)$.
It is easy to see that $$\frac{1}{2}\|\x\|^2\leq\|\u\|^2\leq\|\x\|^2.$$ Therefore, when applying Lemma~\ref{lm:matrixconcentration}, we can simply write
\begin{equation}
\|\nabla f(\u)_\mathcal T - \nabla \mathbb{E}[ f(\u)]_{\mathcal{T}} \|\le \delta \|\x\|^3.
\end{equation}

Noting that $\nabla\E\left[f\left(\u \mathrm{e}^{\mathsf{j} \phi(\z)}\right)\right]_\T =\mathbf0$ and $\supp(\g) = \T$, applying Talyor's expansion (Lemma~\ref{lem:Taylor}) and also using Lemma~\ref{lm:matrixconcentration} and the property in $\R_3$ yields
\begin{align}
f(\z)=~&f\left(\u \mathrm{e}^{\mathsf{j} \phi(\z)}\right)+t\begin{bmatrix}
	\g\\
	\overline{\g}
\end{bmatrix}^{*}\nabla f\left(\u \mathrm{e}^{\mathsf{j} \phi(\z)}\right) \nonumber\\
&+t^2 \int_0^1(1-s)\begin{bmatrix}
	\g\\
	\overline{\g}
\end{bmatrix}^{*}\nabla^2 f\left(\u \mathrm{e}^{\mathsf{j} \phi(\z)}+s t \g\right)\begin{bmatrix}
	\g\\
	\overline{\g}
\end{bmatrix} \mathrm{d} s\nonumber \\
=~&f\left(\u \mathrm{e}^{\mathsf{j} \phi(\z)}\right)\nonumber\\
&+t\begin{bmatrix}
	\g\\
	\overline{\g}
\end{bmatrix}^* \left[\nabla f\left(\u \mathrm{e}^{\mathrm{j} \phi(\z)}\right)_\T - \nabla\E f\left(\u \mathrm{e}^{\mathsf{j} \phi(\z)}\right)_\T\right] \nonumber\\
&+t^2 \int_0^1(1-s)\begin{bmatrix}
	\g\\
	\overline{\g}
\end{bmatrix}^* \nabla^2 f\left(\u \mathrm{e}^{\mathsf{j} \phi(\z)}+s t \g\right)\begin{bmatrix}
	\g\\
	\overline{\g}
\end{bmatrix} \mathrm{d} s\nonumber \\
\geq~ &f\left(\u \mathrm{e}^{\mathsf{j} \phi(\z)}\right) + \frac{11}{200}\|\x\|^2 t^2 - t\delta \|\x\|^3.\label{eq:applylemmawon}
\end{align}
Similarly,
\begin{align}
f\left(\u \mathrm{e}^{\mathsf{j} \phi(\z)}\right) \geq~ &f(\z)-t\begin{bmatrix}
	\g\\
	\overline{\g}
\end{bmatrix}^* \nabla f(\z)\nonumber\\
&\hspace{-2mm}+t^2 \int_0^1(1-s)\begin{bmatrix}
	\g\\
	\overline{\g}
\end{bmatrix}^* \nabla^2 f(\z-s t \g)\begin{bmatrix}
	\g\\
	\overline{\g}
\end{bmatrix}\mathrm{d} s\nonumber \\
\geq~& f(\z)-t\begin{bmatrix}
	\g\\
	\overline{\g}
\end{bmatrix}^* \nabla f(\z)+\frac{11}{200}\|\x\|^2 t^2 .
\end{align}

Finally, we have
\begin{equation}
t\begin{bmatrix}
	\g\\
	\overline{\g}
\end{bmatrix}^* \nabla f(\z) \geq \frac{11}{100}\|\x\|^2 t^2 -t\delta\|\x\|^3,
\end{equation}
which implies 
\begin{equation}\label{leq:gradient>0}
\|\nabla f(\z)\| \geq \frac{11}{100 \sqrt{2}}\|\x\|^2 t - \frac{\delta}{\sqrt 2}\|\x\|^3.
\end{equation}
Therefore, when $t>\frac{100}{11} \delta\|\x\|$, $\|\nabla f(\z)\| >0$. In other words, take $\epsilon = \frac{100}{11}\delta$, and then the local minimizers of $f(\z)$ only occur in the area $ \{\z|\operatorname{dist}(\z,\u)\leq \epsilon\|\x\|\}$. 

For the special case, i.e., $\T\supseteq\supp(\x)$, we have that $\u=\x$ and \begin{equation}
    \nabla f\left(\u \mathrm{e}^{\mathsf{j} \phi(\z)}\right)=\nabla \E\left[f\left(\u \mathrm{e}^{\mathsf{j} \phi(\z)}\right)\right] = \mathbf{0}.
\end{equation} 
Then,~\eqref{leq:gradient>0} is in fact
\begin{equation}
\|\nabla f(\z)\| \geq \frac{11}{100 \sqrt{2}}\|\x\|^2 t,
\end{equation}
which implies that when $t>0$, the Wirtinger gradient is nonzero. Therefore, all local minimizers must be $\x$, which completes the proof.
\end{proof}

\subsection{Proof of Proposition~\ref{lm:elemenEzxtbound}} \label{sec:proofproposition3}

Observe that 
\begin{align}\label{leq:elementbound}
&~|\nabla_1 f(\z)_{l} - \nabla_1 \mathbb{E}[f(\z)]_{l}| \nonumber \\
\overset{\eqref{eq:10daoshu},\eqref{eq:Expg}}{=}~&\left|\left[\frac{1}{m}\sum_{i=1}^{m} |\a_i^*\z|^2 (\a_i^*\z) a_{il}- \frac{1}{m}\sum_{i=1}^{m} |\a_i^*\x|^2 (\a_i^*\z) a_{il}\right] \right.\nonumber\\
&~\left.-  \left[2\|\z\|^2z_l-\left( \|\x\|^2z_l+(\x^*\z)x_l \right)\right]\right|. \nonumber \\
\leq~~~~&\left|\left[\frac{1}{m}\sum_{i=1}^{m} |\a_i^*\z|^2 (\a_i^*\z) a_{il}- 2\|\z\|^2z_l\right] \right|\nonumber\\
&~+\left|  \left[\frac{1}{m}\sum_{i=1}^{m} |\a_i^*\x|^2 (\a_i^*\z) a_{il}-\left( \|\x\|^2z_l+(\x^*\z)x_l \right)\right]\right|.\nonumber\\
\end{align}
To prove Proposition~\ref{lm:elemenEzxtbound}, we can show that the right-hand side of~\eqref{leq:elementbound} is upper bound by $\epsilon \|\x\|^2|x_{\min}|$. It suffices to show that both terms on the right-hand side of~\eqref{leq:elementbound} are upper bounded by $\frac{\epsilon}{2} \|\x\|^2|x_{\min}|$.

First, we establish an upper bound for the second term, i.e., 
\begin{equation}\label{leq:elementxbound}
\left|\frac{1}{m}\sum_{i=1}^{m} |\a_i^*\x|^2 (\a_i^*\z) a_{il}\hspace{-0.25mm}-\hspace{-0.25mm}\left( \|\x\|^2z_l\hspace{-0.25mm}+\hspace{-0.25mm}(\x^*\z)x_l \right)\right|\hspace{-0.25mm}<\hspace{-0.25mm}\frac{\epsilon}{2} \|\x\|^2|x_{\min}|
\end{equation} 
holds for all $l\in\{1,2,\cdots, n\}$. 

\vspace{2mm}
\noindent {\bf 1. Proof of~\eqref{leq:elementxbound}}
\begin{proof}

To derive this bound, we use a truncation strategy, which is inspired from~\cite[Lemma~1]{HWF} and~\cite[Proposition~2.6]{sunju}. 

Define three indicator functions
\begin{align} \label{eq:indicator1}
s_{i,1} &= \mathbf{1}_{\left\{|\a_i^*\x| < \sqrt{32\log n}\|\x\|\right\}}, \\
s_{i,2} &= \mathbf{1}_{\left\{|\a_i^*\z| < \sqrt{32\log n}\|\z\|\right\}}, \label{eq:indicator2} \\
s_{i,3} &= \mathbf{1}_{\left\{|a_{il}| < \sqrt{32\log n}\right\}}, \label{eq:indicator3}
\end{align}
and denote
\begin{align}
R_l &= \frac{1}{m}\sum_{i=1}^{m} |\a_i^*\x|^2 (\a_i^*\z) a_{il},\\
R_{l1,i} &= \frac{1}{m} |\a_i^*\x|^2 (\a_i^*\z) a_{il} \cdot (s_{i,1}s_{i,2}s_{i,3}),\\
R_{l1} &= \sum_{i=1}^{m}R_{l1,i} ,\\
R_{l2} &= R_l-R_{l1}.
\end{align}
Then,~\eqref{leq:elementxbound} can be rewritten as
\begin{align}
\big |R_{l}-\E[R_{l}] \big | < \frac{\epsilon}{2} \|\x\|^2|x_{\min}|. 
\end{align}

Note that 
\begin{align}
|R_{l}-\E[R_{l}]| &= \big | R_{l1}+R_{l2}-\E[R_{l1}]-\E[R_{l2}] \big | \nonumber \\
&\leq  \big | R_{l1}-\E[R_{l1}] \big | + \big |R_{l2}-\E[R_{l2}] \big |.
\end{align}
Thus, to prove~\eqref{leq:elementxbound}, it suffices to show that $|R_{l1}-\E[R_{l1}]|$ and $|R_{l2}-\E[R_{l2}]|$ are both upper bounded by $\frac{\epsilon}{4}\|\x\|^2|x_{\min}|$. To this end, we shall bound these two terms separately with probabilistic arguments.

\begin{enumerate}[i)]
\item {\bf Upper bound of $\big | R_{l1}-\E[R_{l1}] \big |$:}

We will show that the probability 
$
\P\left(|R_{l1}-\E[R_{l1}] | \geq \frac{\epsilon}{4}\|\x\|^2|x_{\min}|\right).
$
is vanishing by means of Lemma~\ref{lm:boundedsubgaussian}.

~~Before we proceed, we estimate the ``$\sigma^2$'' in this lemma:
\begin{align}
	\sigma^2 &=~\sum_{i=1}^{m}\E \Big [|R_{l1,i}|^2 \Big ]\nonumber\\
	&\leq~\frac{1}{m^2}\sum_{i=1}^{m} \E \Big [|\a_i^*\x|^4 |\a_i^*\z|^2|a_{il}|^2 \Big ]  \nonumber \\
	&\overset{(a)}{\leq}~\frac{1}{m} \sqrt[3]{\E \big [|\a_i^*\x|^{12} \big ]\E \big [|\a_i^*\z|^6 \big ]\E \big [|a_{il}|^6 \big ]}\nonumber\\
	&\overset{(b)}{\leq}~ \frac{30}{m}\|\x\|^4\|\z\|^2, \label{eq:sigma2_160}
\end{align}
where (a) is from the H{\"o}lder's inequality and (b) is from Lemma~\ref{Lm:ComplexGauMoment} and the fact that the complex Gaussian distribution is invariant to unitary transform.

~~Besides, it is trivial to show that $R_{l1,i}$ are mutually independent and that
\begin{equation}
	|R_{l1,i}| \overset{\eqref{eq:indicator1}-\eqref{eq:indicator3}}{<} \frac{1024\log^2n}{m}\|\x\|^2\|\z\|.
\end{equation}

~~Then, applying Lemma~\ref{lm:boundedsubgaussian}, we have
\begin{align}
	&~~~~\P\left(|R_{l1}-\E[R_{l1}] | \geq \frac{\epsilon}{4}\|\x\|^2|x_{\min}|\right) \nonumber\\
	& \leq 4\exp \left(-\frac{\frac{\epsilon^2}{64}\|\x\|^4|x_{\min}|^2}{\frac{30}{m}\|\x\|^4\|\z\|^2 + \frac{512\sqrt{2}\epsilon\log^2 n}{3m} \|\x\|^4\|\z\||x_{\min}| }\right) \nonumber\\    
	& = 4\exp \left(-\frac{m\epsilon^2 |x_{\min}|^2}{1920 \|\z\|^2 + \frac{32768\sqrt{2}}{3} \epsilon \log^2 n \|\z\||x_{\min}|} \right) \nonumber\\
	& \overset{(c)}{\leq} 4\exp \left(-\frac{m\epsilon^2 |x_{\min}|^2}{1920\kappa\|\x\|^2 + \frac{32768\sqrt{2}}{3}\sqrt{\kappa}\epsilon\log^2 n \|\x\||x_{\min}|} \right) \nonumber\\
	&  \overset{(d)}{\leq} 4 \exp\left(-\frac{c_1m}{k + c_2\sqrt{k}\log^2 n)} \right), \label{eq:p_162}
\end{align}
where (c) uses the condition that $\frac{\|\z\|^2}{\|\x\|^2} \leq \kappa$ and (d) is because $|x_{\min}| = \Omega  (\|\x\|/\sqrt{k}).$

~~Therefore, when $m\geq C_1\max\{k\log n, \sqrt{k}\log^3n\}$,  
\begin{equation}\label{leq:rl1}
	|R_{l1}-\E[R_{l1}] | < \frac{\epsilon}{4}\|\x\|^2|x_{\min}|
\end{equation}
holds with probability exceeding $1-\frac{c_3}{n^2}$.

\vspace{2mm}
\item {\bf Upper bound of $\big | R_{l2}-\E[R_{l2}] \big |$:}

We will use Chebyshev's inequality to show that the probability 
$
\P\left( \big |R_{l2}-\E[R_{l2}] \big | \geq \frac{\epsilon}{4} \|\x\|^2|x_{\min}|\right) 
$
is vanishing.

~~For notational simplicity, we define
\begin{align}
	p_{i,1} &= \P\left(|\a_i^*\x| \geq \sqrt{32\log n}\|\x\|\right),\\
	p_{i,2} &= \P\left(|\a_i^*\z| \geq \sqrt{32\log n}\|\z\|\right),\\
	p_{i,3} &= \P\left(|\a_{il}| \geq \sqrt{32\log n}\right),
\end{align}
which correspond to three indicators in~\eqref{eq:indicator1}--\eqref{eq:indicator3}. Then, it can be easily verified that $p_{i,1}$, $p_{i,2}$ and $p_{i,3}$ are all upper bounded by $n^{- 16}$.

~~In order to apply the Chebyshev's inequality, we estimate the variance of $R_{l2}$ as follows:
\begin{eqnarray}
	&& \hspace{-12mm} \operatorname{Var} \left[R_{l2}\right] \nonumber \\
	& \hspace{-2mm} = & \hspace{-2mm}  \operatorname{Var}\left[\frac{1}{m}\sum_{i=1}^{m} |\a_i^*\x|^2 (\a_i^*\z) a_{il}(1-s_{i,1}s_{i,2}s_{i,3})\right] \nonumber \\ 
	& \hspace{-2mm} \leq  & \hspace{-2mm}  \frac{1}{m^2}\sum_{i=1}^{m}\operatorname{Var} \Big [ |\a_i^*\x|^2 (\a_i^*\z) a_{il}(1-s_{i,1}s_{i,2}s_{i,3}) \Big ] \nonumber \\ 
	& \hspace{-2mm} \leq & \hspace{-2mm} 
	\frac{1}{m^2} \sum_{i=1}^{m}\E\Big[|\a_i^*\x|^4 |\a_i^*\z|^2 |a_{il}|^2 (1-s_{i,1}s_{i,2}s_{i,3})^2 \Big] \nonumber \\ 
	& \hspace{-2mm} \overset{(e)}{\leq} & \hspace{-2mm}    \frac{1}{m^2}\sum_{i=1}^{m} \Big (\left.\E \big[ |\a_i^*\x|^{16}\big] \E \big[|\a_i^*\z|^8 \big ] \E \big[|a_{il}|^8\big] \right. \nonumber \\
	& &  \times~ \E \big[ (1-s_{i,1}s_{i,2}s_{i,3})^8 \big ] \Big )^{-4} \nonumber \\ 
	& \hspace{-2mm} \overset{(f)}{\leq} & \hspace{-2mm}    \frac{1}{m^2}\sum_{i=1}^{m} \sqrt[4]{\E \big[ |\a_i^*\x|^{16}\big] \E \big[|\a_i^*\z|^8\big ] \E \big[|a_{il}|^8\big]} \nonumber \\ 
	&    &  
	\times \sqrt[4]{p_{i,1}+p_{i,2}+p_{i,3}} \nonumber \\
	& \hspace{-2mm} \overset{(g)}{\leq} & \hspace{-2mm}   \frac{92\|\x\|^4\|\z\|^2}{mn^2}, \label{eq:var167}
\end{eqnarray}
where (e) is from the H{\"o}lder's inequality and (g) is from Lemma~\ref{Lm:ComplexGauMoment} and the fact that the complex Gaussian distribution is invariant to unitary transform. (f) is because $1 - s_{i,1}s_{i,2}s_{i,3} = 1$ if any of $s_{i,1}$, $s_{i,2}$, $s_{i,3}$ is zero and $1 - s_{i,1}s_{i,2}s_{i,3} = 0$ if $s_{i,1}=s_{i,2}=s_{i,3}=1$, and thus,
\begin{align}
	& \E \left[(1-s_{i,1}s_{i,2}s_{i,3})^8\right] \nonumber \\
	&= \E [1-s_{i,1}s_{i,2}s_{i,3}]\nonumber\\
	&=\P\Big( \{ s_{i,1} = 0\} \cup \{ s_{i,2} = 0\} \cup \{ s_{i,3} = 0 \} \Big)\nonumber\\
	&  \leq  p_{i,1}+p_{i,2}+p_{i,3}.
\end{align}

~~Hence, by Chebyshev's inequality, we have
\begin{align}
	\nonumber	&~\P\left(|R_{l2}-\E[R_{l2}] |\geq \frac{\epsilon}{4}\|\x\|^2|x_{\min}|\right)\\
 \nonumber\leq&~ \frac{92\|\x\|^4\|\z\|^2}{mn^2\frac{\epsilon^2}{16}\|\x\|^4|x_{\min}|^2} \nonumber \\
	\overset{(h)}{\leq}&~\frac{1472\kappa\|\x\|^2}{mn^2\epsilon^2|x_{\min}|^2} \nonumber \\
	\overset{(i)}{\leq}&~ \frac{c_4k}{mn^2}, \label{eq:chebysheve169}
\end{align}
where (h) uses the condition that $\frac{\|\z\|^2}{\|\x\|^2} \leq \kappa$ and (i) uses the condition that $|x_{\min}| = \Omega   (\|\x\|/\sqrt{k}).$
Therefore, when $m\geq C_2 k \log n$, it holds that
\begin{equation} \label{leq:rl2}
	\P\left(|R_{l2}-\E[R_{l2}] | < \frac{\epsilon}{4}\|\x\|^2|x_{\min}|\right) > 1- \frac{c_5}{n^2}.
\end{equation}
\end{enumerate}

By combining~\eqref{leq:rl1} with~\eqref{leq:rl2}, we obtain~\eqref{leq:elementxbound} with probability exceeding $1-\frac{c_6}{n^2}$. Taking the union bound for all $l \in \{1, \cdots, n\}$,~\eqref{leq:elementxbound} holds with probability exceeding $1-\frac {c_6}{n}$ when $$m\geq C_3\max\{k\log n, \sqrt{k}\log^3n\}.$$ 
\end{proof}

Now, what remains is to show that the first term on the right-hand side of~\eqref{leq:elementbound} is upper bounded by $\frac{\epsilon}{2}\|\x\|^2|x_{\min}|$. That is, 
\begin{equation} \label{leq:elementxboundz}
\left|\left[\frac{1}{m}\sum_{i=1}^{m} |\a_i^*\z|^2 (\a_i^*\z) a_{il}- 2\|\z\|^2z_l\right] \right|<\frac{\epsilon}{2} \|\x\|^2|x_{\min}|.
\end{equation}
In fact, this can be proved by following the same technique for proving~\eqref{leq:elementxbound}.

\vspace{2mm}
\noindent {\bf 2. Proof of~\eqref{leq:elementxboundz}}
\begin{proof}
Denote
\begin{align}
R'_l &= \frac{1}{m}\sum_{i=1}^{m} |\a_i^*\z|^2 (\a_i^*\z) a_{il},\\
R'_{l1,i} &= \frac{1}{m} |\a_i^*\z|^2 (\a_i^*\z) a_{il} \cdot (s_{i,2}s_{i,3}),\\
R'_{l1} &= \sum_{i=1}^{m} R'_{l1,i} ,\\
R'_{l2} &= R'_l-R'_{l1}, 
\end{align}
where the indicators $s_{i,2}$ and $s_{i,3}$ are defined in~\eqref{eq:indicator2} and~\eqref{eq:indicator3}, respectively. 
Then,~\eqref{leq:elementxboundz} can be rewritten as
\begin{align}
\big |R'_{l}-\E[R'_{l}] \big | < \frac{\epsilon}{2} \|\x\|^2|x_{\min}|. 
\end{align}
Since
\begin{align}
|R'_{l}-\E[R'_{l}]| &= \big | R'_{l1}+R'_{l2}-\E[R'_{l1}]-\E[R'_{l2}] \big | \nonumber \\
&\leq  \big | R'_{l1}-\E[R'_{l1}] \big | + \big |R'_{l2}-\E[R'_{l2}] \big |, 
\end{align}
to prove~\eqref{leq:elementxboundz}, it suffices to show that $|R'_{l1}-\E[R'_{l1}]|$ and $|R'_{l2}-\E[R'_{l2}]|$ are both upper bounded by $\frac{\epsilon}{4}\|\x\|^2|x_{\min}|$. Similar to the proof of~\eqref{leq:elementxbound}, we shall bound these two terms separately with probabilistic arguments.

\begin{enumerate}[i)]
\item {\bf Upper bound of $\big | R'_{l1}-\E[R'_{l1}] \big |$:}

Following the analysis in~\eqref{eq:sigma2_160}--\eqref{eq:p_162}, we will also use Lemma~\ref{lm:boundedsubgaussian} to bound the probability 
\begin{equation}
	\P\left(|R'_{l1}-\E[R'_{l1}] | \geq \frac{\epsilon}{4}\|\x\|^2|x_{\min}|\right),
\end{equation}
where the ``$\sigma^2$'' in Lemma~\ref{lm:boundedsubgaussian} are estimated as follows:
\begin{align}
	\sigma^2 &=~\sum_{i=1}^{m}\E \Big [|R_{l1,i}|^2 \Big ]\nonumber\\
	&\leq~\frac{1}{m^2}\sum_{i=1}^{m} \E \Big [|\a_i^*\z|^6 |a_{il}|^2 \Big ]  \nonumber \\
	&\overset{(a)}{\leq}~\frac{1}{m} \sqrt{\E \big [|\a_i^*\z|^{12} \big ]\E \big [|a_{il}|^4 \big ]}\nonumber\\
	&\overset{(b)}{\leq}~ \frac{38}{m}\|\z\|^6.
\end{align}
Here, (a) is from the H{\"o}lder's inequality and (b) is from Lemma~\ref{Lm:ComplexGauMoment} and that the complex Gaussian distribution is invariant to unitary transform. Moreover, it is not difficult to see that $R_{l1,i}$ are mutually independent and that
\begin{equation}
	|R'_{l1,i}| \overset{\eqref{eq:indicator2},\eqref{eq:indicator3}}{<} \frac{1024\log^2n}{m}\|\z\|^3.
\end{equation}

~~Now we can apply Lemma~\ref{lm:boundedsubgaussian} to obtain
\begin{align}
	& \hspace{1.3mm} \P\left(|R'_{l1}-\E[R'_{l1}] | \geq \frac{\epsilon}{4}\|\x\|^2|x_{\min}|\right) \nonumber\\
	& \hspace{-1mm} \leq 4\exp \left( \frac{ - \frac{\epsilon^2}{64}\|\x\|^4|x_{\min}|^2}{\frac{38}{m}\|\z\|^6 + \frac{512\sqrt{2}\epsilon\log^2 n}{3m} \|\x\|^2\|\z\|^3|x_{\min}| }\right) \nonumber\\    
	&  \hspace{-1mm} \overset{(c)}{\leq}  4\exp \left(\frac{- m\|\x\|^4\epsilon^2 |x_{\min}|^2}{3432 \kappa^3\|\x\|^6 \hspace{-.5mm} + \hspace{-.5mm} \frac{32768\sqrt{2}}{3} \epsilon \log^2 n \kappa^{\frac{3}{2}}\|\x\|^5|x_{\min}|} \right) \nonumber\\
	&  \hspace{-1mm} \overset{(d)}{\leq} 4 \exp\left(-\frac{c_1'm}{k + c_2'\sqrt{k}\log^2 n)} \right),
\end{align}
where (c) uses the condition that $\frac{\|\z\|^2}{\|\x\|^2} \leq \kappa$ and (d) is because $|x_{\min}| = \Omega   ({\|\x\|}/{\sqrt{k}}  ).$ Thus,  when $m\geq C_1'\max\{k\log n, \sqrt{k}\log^3n\}$, we have that
\begin{equation}\label{leq:rl1z}
	|R'_{l1}-\E[R'_{l1}] | < \frac{\epsilon}{4}\|\x\|^2|x_{\min}|
\end{equation}
holds with probability exceeding $1-\frac{c_3'}{n^2}$.

\vspace{2mm}
\item {\bf Upper bound of $\big | R'_{l2}-\E[R'_{l2}] \big |$:}

Similar to the derivation in~\eqref{eq:var167}--\eqref{eq:chebysheve169}, we will also use Chebyshev's inequality to bound the probability 
\begin{equation}
	\P\left( \big |R'_{l2}-\E[R'_{l2}] \big | \geq \frac{\epsilon}{4} \|\x\|^2|x_{\min}|\right).
\end{equation} 
The variance of $R'_{l2}$ can be estimated as follows:
\begin{eqnarray}
	&& \hspace{-12mm} \operatorname{Var} \left[R'_{l2}\right] \nonumber \\
	& \hspace{-2mm} = & \hspace{-2mm}  \operatorname{Var}\left[\frac{1}{m}\sum_{i=1}^{m} |\a_i^*\z|^2 (\a_i^*\z) a_{il}(1-s_{i,2}s_{i,3})\right] \nonumber \\ 
	& \hspace{-2mm} \leq  & \hspace{-2mm}  \frac{1}{m^2}\sum_{i=1}^{m}\operatorname{Var} \Big [ |\a_i^*\z|^2 (\a_i^*\z) a_{il}(1-s_{i,2}s_{i,3}) \Big ] \nonumber \\ 
	& \hspace{-2mm} \leq & \hspace{-2mm} 
	\frac{1}{m^2} \sum_{i=1}^{m}\E\Big[|\a_i^*\z|^6 |a_{il}|^2 (1-s_{i,2}s_{i,3})^2 \Big] \nonumber \\ 
	& \hspace{-2mm} \overset{(e)}{\leq} & \hspace{-2mm}    \frac{1}{m^2}\sum_{i=1}^{m} \Big (\left.\E \big[ |\a_i^*\z|^{18}\big] \E \big[|a_{il}|^6\big] \right. \nonumber \\
	& &  \times~ \E \big[ (1-s_{i,2}s_{i,3})^6 \big ] \Big )^{-3} \nonumber \\ 
	& \hspace{-2mm} \overset{(f)}{\leq} & \hspace{-2mm}    \frac{1}{m^2}\sum_{i=1}^{m} \sqrt[3]{\E \big[ |\a_i^*\z|^{18}\big]  \E \big[|a_{il}|^6\big]} \nonumber \\ 
	&    &  
	\times \sqrt[3]{p_{i,2}+p_{i,3}} \nonumber \\
	& \hspace{-2mm} \overset{(g)}{\leq} & \hspace{-2mm}   \frac{164\|\z\|^6}{mn^2},
\end{eqnarray}
where (e) is from the H{\"o}lder's inequality and (g) is from Lemma~\ref{Lm:ComplexGauMoment} and the fact that the complex Gaussian distribution is invariant to unitary transform. (f) is because $1 - s_{i,2}s_{i,3} = 1$ if any of $s_{i,2}$, $s_{i,3}$ is zero and $1 - s_{i,2}s_{i,3} = 0$ if $s_{i,2}=s_{i,3}=1$, and thus,
\begin{align}
	\E \left[(1-s_{i,2}s_{i,3})^6\right]&= \E [1-s_{i,2}s_{i,3}]\nonumber\\
	&=\P\Big( \{ s_{i,2} = 0\} \cup \{ s_{i,3} = 0 \} \Big) \nonumber \\
	&  \leq  p_{i,2}+p_{i,3}.
\end{align}

By Chebyshev's inequality, we obtain
\begin{align}
	\nonumber	\P\left(|R'_{l2}-\E[R'_{l2}] |\hspace{-0.5mm}\geq \hspace{-0.5mm}\frac{\epsilon}{4}\|\x\|^2|x_{\min}|\right) \hspace{-1mm}&\leq \frac{164\|\z\|^6}{mn^2\frac{\epsilon^2}{16}\|\x\|^4|x_{\min}|^2} \nonumber \\
	&\overset{(h)}{\leq}\frac{2624\kappa^3\|\x\|^2}{mn^2\epsilon^2|x_{\min}|^2} \nonumber \\
	&\overset{(i)}{\leq} \frac{c'_4k}{mn^2},
\end{align}
where (h) uses the condition that $\frac{\|\z\|^2}{\|\x\|^2} \leq \kappa$ and (i) uses the condition that $|x_{\min}| = \Omega   ({\|\x\|}/{\sqrt{k}}).$
Therefore, when $m\geq C'_2 k \log n$, it holds that
\begin{equation} \label{leq:rl2z}
	\P\left(|R'_{l2}-\E[R'_{l2}] | < \frac{\epsilon}{4}\|\x\|^2|x_{\min}|\right) > 1- \frac{c'_5}{n^2}.
\end{equation}
\end{enumerate}

Combining~\eqref{leq:rl1z} with~\eqref{leq:rl2z} yields~\eqref{leq:elementxboundz} with probability exceeding $1-\frac{c'_6}{n^2}$.
Furthermore, by taking the union bound for all $l \in \{1,\cdots,n\}$, we further obtain that~\eqref{leq:elementxboundz} holds with probability exceeding $1 - \frac {c'_6}{n}$ when $m\geq C'_3\max\{k\log n, \sqrt{k}\log^3n\}$. 
\end{proof}

Finally, by combining the upper bounds in~\eqref{leq:elementxbound} and~\eqref{leq:elementxboundz}, we conclude that
\begin{equation}
|\nabla_1 f(\z)_{l} - \nabla_1 \mathbb{E}[f(\z)]_{l}| \leq \epsilon \|\x\|^2|x_{\min}|
\end{equation}
holds for all $l \in \{1,\cdots,n\}$ with probability exceeding $1-cn^{-1}$ when $m\geq C\max\{k\log n, \sqrt{k}\log^3n\}$.

\section{Proofs of Propositions~\ref{prop:2.3}--\ref{prop:2.7}}\label{Appendix:C}
\subsection{Proof of {Proposition}~\ref{prop:2.3}}
\begin{proof}
Direct calculation shows that 
\begin{align}
&\hspace{-11mm}  \begin{bmatrix}
	\u \mathrm{e}^{\mathsf{j} \phi(\z)} \\
	\overline{\u} \mathrm{e}^{-\mathsf{j} \phi(\z)}
\end{bmatrix}^{*} \nabla^{2} f(\z)
\begin{bmatrix}
	\u \mathrm{e}^{\mathsf{j} \phi(\z)} \\
	\overline{\u} \mathrm{e}^{-\mathsf{j} \phi(\z)}
\end{bmatrix}  \nonumber \\ 
=~& \frac{1}{m} \sum_{i=1}^m\left(4\left|\a_i^* \z\right|^2\left|\a_i^*\u\right|^2-2\left|\a_i^*\u\right|^4 \right. \nonumber	\\
&+2 \left.\Re\left[\left(\a_i^* \z\right)^2\left(\z^* \a_i\right)^2 \mathrm{e}^{-2 \mathsf{j} \phi(\z)}\right]\right) \nonumber	\\
=~& \frac{1}{m} \sum_{i=1}^m\left(2\left|\a_i^* \z\right|^2\left|\a_i^*\u\right|^2-2\left|\a_i^*\u\right|^4\right)  \nonumber \\
& +\frac 1m \sum_{i=1}^m \left(2\left|\a_i^* \z\right|^2\left|\a_i^*\u\right|^2 \right. \nonumber \\
&+\left. 2 \Re\left[\left(\a_i^* \z\right)^2\left(\z^* \a_i\right)^2 \mathrm{e}^{-2 \mathsf{j} \phi(\z)}\right]\right).
\end{align}

Lemma \ref{lem:6.3} implies that when $m\ge C_1 k\log k$, w.h.p,
\begin{align}
\nonumber	\frac{2}{m} \sum_{i=1}^m\left|\a_i^*\u\right|^2\left|\a_i^* \z\right|^2 \leq~& \mathbb{E}\left[\frac{2}{m} \sum_{i=1}^m\left|\a_i^*\u\right|^2\left|\a_i^* \z\right|^2\right]\\
&+\frac{1}{200} \|\u\|^2\|\z\|^2 .
\end{align}
On the other hand, by  Lemma \ref{lem:A.7}, we have that 
\begin{equation}
\frac2m \sum_{i=1}^m |\a_i^*\u|^4 \ge \mathbb E\left[ \frac2m \sum_{i=1}^m |\a_i^*\u|^4\right] - \frac1{100} \|\u\|^4 
\end{equation} 
holds with probability at least $1 - \exp(-c_2 m)$. Note that $$\|\u\|^2=\frac{\|\x_\T\|^2+\|\x\|^2}{2}\geq \frac{\|\x\|^2}{2},$$ i.e., $\|\x\|^2\leq 2 \|\u\|^2$. Then, for the second summation, we have 
\begin{align}
\nonumber	&\hspace{-8mm} \frac{1}{m} \sum_{i=1}^m\left(2\left|\a_i^* \z\right|^2\left|\a_i^*\u\right|^2+2 \Re\left[\left(\a_i^* \z\right)^2\left(\u^* \a_i\right)^2 \mathrm{e}^{-2 \mathsf{j} \phi(\z)}\right]\right) \\
\nonumber	=~& {\begin{bmatrix}
		\z \\
		\overline{\z}
	\end{bmatrix}^* \nabla^2 f\left(\u \mathrm{e}^{\mathsf{j} \phi(\z)}\right)\begin{bmatrix}
		\z \\
		\overline{\z}
\end{bmatrix} } \\
\leq~& \begin{bmatrix}
	\z \\
	\overline{\z}
\end{bmatrix}^* \nabla^2 \mathbb{E}\left[f\left(\u \mathrm{e}^{\mathsf{j} \phi(\z)}\right)\right]\begin{bmatrix}
	\z \\
	\overline{\z}
\end{bmatrix}+\frac{1}{600}(\|\u\|^2+\|\x\|^2)\|\z\|^2 \nonumber\\
\leq~& \begin{bmatrix}
	\z \\
	\overline{\z}
\end{bmatrix}^* \nabla^2 \mathbb{E}\left[f\left(\u \mathrm{e}^{\mathsf{j} \phi(\z)}\right)\right]\begin{bmatrix}
	\z \\
	\overline{\z}
\end{bmatrix}+\frac{1}{200}\|\u\|^2\|\z\|^2, 
\end{align} 
w.h.p, provided $m\ge C_3 k\log k$, according to Lemma \ref{lm:matrixconcentration}.

Collecting the above estimates, we have that when $m\ge C_4 k\log k$ for a sufficiently large constant $C_4$, w.h.p,
\begin{align}
\nonumber	& \hspace{-10mm} \begin{bmatrix}
	\u \mathrm{e}^{\mathsf{j} \phi(\z)} \\
	\overline{\u} \mathrm{e}^{-\mathsf{j} \phi(\z)}
\end{bmatrix}^{*} \nabla^{2} f(\z)\begin{bmatrix}
	\u \mathrm{e}^{\mathsf{j} \phi(\z)} \\
	\overline{\u} \mathrm{e}^{-\mathsf{j} \phi(\z)}
\end{bmatrix} \\
\nonumber	\le~& \mathbb E\left[\begin{bmatrix}
	\u \mathrm{e}^{\mathsf{j} \phi(\z)} \\
	\overline{\u} \mathrm{e}^{-\mathsf{j} \phi(\z)}
\end{bmatrix}^{*} \nabla^{2} f(\z)\begin{bmatrix}
	\u \mathrm{e}^{\mathsf{j} \phi(\z)} \\
	\overline{\u} \mathrm{e}^{-\mathsf{j} \phi(\z)}
\end{bmatrix}\right]\\
&\nonumber + \frac1{100}\|\u\|^2 \|\z\|^2  + \frac1{100}\|\u\|^4\\
\le~& -\frac1{100}\|\u\|^4
\end{align}
for all $\z \in \mathcal R_1$.
\end{proof}

\subsection{Proof of {Proposition}~\ref{prop:2.4}}
\begin{proof}
Denote $a = \frac{\|\x_\T\|}{\|\x\|}$, where $a^2 > \frac{9}{10}$.
\begin{claim}
It is enough to prove that for all unit vector $\g\in \mathcal{S} \doteq \{\h\in\C^\T: \Im[\h^*\u] = 0, \|\h\| = 1\}$ and all $t\in [0, \frac{\|\x\|}{\sqrt 7} ]$, the following holds:
\begin{equation}
	\begin{bmatrix}
		\g\\
		\overline{\g}
	\end{bmatrix}^{*} \nabla^{2} f(\u+t \g)\begin{bmatrix}
		\g\\
		\overline{\g}
	\end{bmatrix} >0.11\|\x\|^2. \label{gfg}
\end{equation}
\end{claim}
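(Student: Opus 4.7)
The plan is to exploit the phase invariance of the loss $f$ to reduce the restricted strong convexity bound of Proposition~\ref{prop:2.4} to the hypothesized inequality~\eqref{gfg} on the canonical line segment $\u + t\g$ with $\g\in\mathcal{S}$. Concretely, since $f(\z \mathrm{e}^{\mathsf{j}\phi}) = f(\z)$ for every $\phi \in [0,2\pi)$, differentiating twice in $s$ along $s \mapsto (\z + s\g)\mathrm{e}^{\mathsf{j}\phi}$ at $s = 0$ yields the covariance identity
\begin{equation*}
    \begin{bmatrix} \g \mathrm{e}^{\mathsf{j}\phi} \\ \overline{\g}\,\mathrm{e}^{-\mathsf{j}\phi} \end{bmatrix}^{*}\nabla^{2} f(\z \mathrm{e}^{\mathsf{j}\phi})\begin{bmatrix} \g \mathrm{e}^{\mathsf{j}\phi} \\ \overline{\g}\,\mathrm{e}^{-\mathsf{j}\phi} \end{bmatrix} = \begin{bmatrix} \g \\ \overline{\g} \end{bmatrix}^{*}\nabla^{2} f(\z)\begin{bmatrix} \g \\ \overline{\g} \end{bmatrix}.
\end{equation*}
Equivalently, inspecting~\eqref{eq:11hessian} shows that under $\z\mapsto \z \mathrm{e}^{\mathsf{j}\phi}$ the off-diagonal blocks acquire factors $\mathrm{e}^{\pm 2\mathsf{j}\phi}$, which are exactly cancelled by the outer $\mathrm{e}^{\mp \mathsf{j}\phi}$ factors; the diagonal blocks are invariant because both $|\a_i^{*}\z|^{2}$ and $y_i^{2}$ are.

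Next, given any $\z\in\R_3$, I would set $t\doteq\operatorname{dist}(\z,\u)\le \|\x\|/\sqrt{7}$ and $\g_0\doteq\g(\z)\mathrm{e}^{-\mathsf{j}\phi(\z)}$. By the definition of $\g(\z)$, $\z \mathrm{e}^{-\mathsf{j}\phi(\z)} = \u + t\g_0$ with $\|\g_0\| = 1$. The first-order optimality of $\phi(\z)=\arg\min_{\phi}\|\z - \u \mathrm{e}^{\mathsf{j}\phi}\|^{2}$ reads $\Im\bigl[(\z \mathrm{e}^{-\mathsf{j}\phi(\z)})^{*}\u\bigr] = 0$, which, combined with $\Im[\|\u\|^{2}] = 0$, forces $\Im[\g_0^{*}\u] = 0$ whenever $t > 0$. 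In the boundary case $t = 0$, the definition of $\g(\z)$ already imposes $\Im[\g^{*}\z] = 0$, and rotating by $\mathrm{e}^{-\mathsf{j}\phi(\z)}$ turns this into $\Im[\g_0^{*}\u] = 0$. In either case $\g_0 \in \mathcal{S}$.

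Applying~\eqref{gfg} to $\g_0$ and this $t$ gives
\begin{equation*}
    \begin{bmatrix} \g_0 \\ \overline{\g_0} \end{bmatrix}^{*}\nabla^{2} f(\u + t\g_0)\begin{bmatrix} \g_0 \\ \overline{\g_0} \end{bmatrix} > 0.11\,\|\x\|^{2},
\end{equation*}
and invoking the covariance identity with $\phi = \phi(\z)$, $\z \leftarrow \u + t\g_0$, $\g \leftarrow \g_0$ transports this bound back to $\z$, yielding the restricted strong convexity claimed in Proposition~\ref{prop:2.4}. I anticipate no real obstacle beyond notational bookkeeping: the only nontrivial step is verifying the covariance identity against~\eqref{eq:11hessian} and tracking the phase of $\g_0$ versus $\g(\z)$, both of which are short direct computations.
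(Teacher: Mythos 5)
Your proposal is correct and follows essentially the same route as the paper: both reduce Proposition~\ref{prop:2.4} to the canonical segment by setting $\g = \mathrm{e}^{-\mathsf{j}\phi(\z)}\g(\z)$, verifying $\Im[\g^*\u]=0$ via the optimality of $\phi(\z)$, and invoking the phase covariance of the Hessian quadratic form. You are somewhat more explicit than the paper about justifying the covariance identity from~\eqref{eq:11hessian} and about handling the $\operatorname{dist}(\z,\u)=0$ case, but these are the same argument.
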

\begin{proof}
First, when $\operatorname{dist}(\z, \u) \not = 0$, let $\g = \mathrm{e}^{-\mathsf{j}\phi(\z)}\g(\z)$. Then, 
\begin{equation}
	\g^*\u = \mathrm{e}^{\mathsf{j}\phi(\z)}\frac{\z^*\u - \mathrm{e}^{-\mathsf{j}\phi(\z)}\|\u\|^2}{\|\z - \u \mathrm{e}^{\mathsf{j}\phi(\z)}\|} = \frac{|\z^*\u| - \|\u\|^2}{\|\z - \u \mathrm{e}^{\mathsf{j}\phi(\z)}\|}.
\end{equation}
Therefore, $\Im[\g^*\u] = 0$. Since  
\begin{equation}
	\begin{bmatrix}
		\g\\
		\overline{\g}
	\end{bmatrix}^{*} \nabla^{2} f(\u+t \g)\begin{bmatrix}
		\g\\
		\overline{\g}
	\end{bmatrix}= \begin{bmatrix}
		\g(\z)\\
		\overline{\g(\z)}
	\end{bmatrix}^* \nabla^2 f(\z)
	\begin{bmatrix}
		\g(\z)\\
		\overline{\g(\z)}
	\end{bmatrix}
\end{equation}
and $\|\g\|=1$, it suffice to show~\eqref{gfg} for $\g\in \mathcal{S}$. When $\operatorname{dist}(\z,\u)=0$, note that $\g\in\mathcal S$. Thus, we proved the claim.
\end{proof}
Using the claim above, we now consider the equation~\eqref{gfg}. Direct calculation shows
\begin{align}
\nonumber	&\begin{bmatrix}\g\\ \overline{\g}\end{bmatrix}^{*} \nabla^{2} f(\u+t \g)\begin{bmatrix}\g\\ \overline{\g}\end{bmatrix} \\
\nonumber	=~& \frac{1}{m} \sum_{i=1}^{m} 4\left|\a_{i}^{*}(\u+t \g)\right|^{2}\left|\a_{i}^{*} \g\right|^{2}-2\left|\a_{i}^{*} \x\right|^{2}\left|\a_{i}^{*} \g\right|^{2}\\
\nonumber	&+2 \Re\left[\left(t \a_{i}^{*} \g+\a_{i}^{*} \u\right)^{2}\left(\g^{*} \a_{i}\right)^{2}\right] \\
\nonumber	\geq~& \frac{1}{m} \sum_{i=1}^{m} 4\left|\a_{i}^{*}(\u+t \g)\right|^{2}\left|\a_{i}^{*} \g\right|^{2}-2\left|\a_{i}^{*} \x\right|^{2}\left|\a_{i}^{*} \g\right|^{2}\\
\nonumber	&+4\left[\Re\left(t \a_{i}^{*} \g+\a_{i}^{*} \u\right)\left(\g^{*} \a_{i}\right)\right]^{2}-2\left|\left(t \a_{i}^{*} \g+\a_{i}^{*} \u\right)\left(\g^{*} \a_{i}\right)\right|^{2} \\
\nonumber	\geq~& \frac{1}{m} \sum_{i=1}^{m} 2\left|\a_{i}^{*}(\u+t \g)\right|^{2}\left|\a_{i}^{*} \g\right|^{2}-2\left|\a_{i}^{*} \x\right|^{2}\left|\a_{i}^{*} \g\right|^{2}\\
&+4\left[\Re\left(t \a_{i}^{*} \g+\a_{i}^{*} \u\right)\left(\g^{*} \a_{i}\right)\right]^{2}.
\end{align}

Notice that $\supp(\g)\subseteq\T$. Lemma \ref{lem:6.4} implies when $m\ge C_1 k\log k$ for sufficiently large constant $C_1$, w.h.p.,
\begin{align}
\nonumber	& \hspace{-10mm} \frac{1}{m} \sum_{i=1}^{m} 2\left|\a_{i}^{*}(\u+t \g)\right|^{2}\left|\a_{i}^{*} \g\right|^{2}\\ 
\geq~&\frac{199}{100}(\left|(\u+t \g)^{*} \g\right|^{2}+\|\u+t \g\|^{2}\|\g\|^{2}).
\end{align}
Lemma \ref{lem:6.3} implies that when $m\ge C_2 k\log k$ for sufficiently large constant $C_2$, w.h.p.,
\begin{equation}
\frac{1}{m} \sum_{i=1}^{m} 2\left|\a_{i}^{*} \x\right|^{2}\left|\a_{i}^{*} \g\right|^{2} \leq \frac{201}{100}(\left|\x^{*} \g\right|^{2}+\|\x\|^{2}\|\g\|^{2})
\end{equation}
for all $\g\in \mathcal{S} $. Moreover, Lemma \ref{lem:6.4} implies when $m\ge C_3 k\log k$ for sufficiently large constant $C_3$, w.h.p.,
\begin{align}
\nonumber	& \hspace{-10mm} \frac{4}{m} \sum_{i=1}^{m}\left[\Re\left(t \a_{i}^{*} \g+\a_{i}^{*} \u\right)\left(\g^{*} \a_{i}\right)\right]^{2} \\
\geq~&\frac{398}{100} \left ( \frac12 \|\u+t \g\|^{2}\|\g\|^{2}+\frac32\left|(\u+t\g)^{*} \g\right|^{2} \right )
\end{align}
for all $\g \in \mathcal{S}$, where we have used that $\Im{(\g^*\x)} = 0\Rightarrow \Im{(\x + \g) ^* \g} = 0$ to simplify the results.

Collecting the above estimates, we obtain that when $m\ge C_4 k\log k$, w.h.p.,
\begin{align}
\nonumber	&\begin{bmatrix}\g\\ \overline{\g}\end{bmatrix}^{*} \nabla^{2} f(\u+t \g)\begin{bmatrix}\g\\ \overline{\g}\end{bmatrix} \\
\nonumber	\geq~&\frac{199}{100}(|(\u + t\g)^* \g|^ 2 + \|\u + t\g\|^2) \\
\nonumber	&- \frac{201}{100}(|\x^* \g| + \|\x\|^2\|\g\|^2) \\
\nonumber	&+\frac{199}{100}(\|\u + t\g\|^2 + 3|(\u + t\g )^* \g|^2)\\
\nonumber	\geq~& \frac{199}{25}|\u^*\g + t|^2 + \frac{199}{50}\|\u + t\g\|^2-\frac{201}{100}(|\x^*\g|^2 + \|\x\|^2)\\
=~&\frac{199}{25}|\u^*\g + t|^2 + \frac{199}{50}\|\u + t\g\|^2-\frac{201}{100}(|\x_\T^*\g|^2 + \|\x\|^2).
\end{align}
To provide a lower bound for the above, we let $$\Re{(\x_\T^* \g)} = \x_\T^* \g = \lambda \|\x_\T\|$$ with $\lambda \in [-1, 1]$ and $t = \eta \|\x\|$ with $\eta \in [0, 1/\sqrt 7]$. Then 
\begin{align}
\nonumber	& \hspace{-10mm} \begin{bmatrix}\g\\ \overline{\g}\end{bmatrix}^{*} \nabla^{2} f(\u+t \g)\begin{bmatrix}\g\\ \overline{\g}\end{bmatrix} \\
\nonumber	=~&\|\x\|^2\left[\frac{597}{50}\eta^2+\frac{597}{25}\sqrt{\frac{1+a^2}2}\eta\lambda + (\frac{199}{50}(1+a^2)\right.\\
\nonumber	&\left.-\frac{201}{100}a^2)\lambda^2+ \frac{199}{50}\frac{1+a^2}2-\frac{201}{100}\right]\\
\doteq~ &\phi(\lambda, \eta) \|\x\|^2.
\end{align}

For any fixed $\eta$, it is easy to see that minimizer occur when 
\begin{equation}
\lambda^* = -3\sqrt2 \times  \frac{\sqrt{1+a^2}}{2+\frac{197}{199}a^2}\eta.
\end{equation}
Plugging the minimizer into $\phi(\lambda, \eta)$, one obtains
\begin{align}
\nonumber	\phi(\lambda^*, \eta) &= \frac{597}{50}\eta^2 \frac{-1-\frac{400}{199}a^2}{2+\frac{197}{199}a^2}+\frac{199}{50}\frac{1+a^2}{2}-\frac{201}{100}\\
\	&\ge \frac{597}{350} \frac{-1-\frac{400}{199}a^2}{2+\frac{197}{199}a^2}+\frac{199}{50}\frac{1+a^2}{2}-\frac{201}{100}.\label{leq:philambdaeta}
\end{align}

It is easy to verify that~\eqref{leq:philambdaeta} increases as $a^2$ increases. Thus $$\left.\phi(\lambda^*, \eta)\right|_{a^2=0.9}  \ge 0.11.$$ Finally, we obtain
\begin{equation}
\begin{bmatrix}\g\\ \overline{\g}\end{bmatrix}^{*} \nabla^{2} f(\u+t \g)\begin{bmatrix}\g\\ \overline{\g}\end{bmatrix} \geq \frac{11}{100}\|\x\|^2 ,
\end{equation}
as claimed.
\end{proof}

\subsection{Proof of {Proposition}~\ref{prop:2.5}}
\begin{proof}
Note that 
\begin{equation}
\z^* \nabla_1 f(\z) = \frac1m \sum_{i = 1}^m |\a_i^* \z|^4 - \frac1m \sum_{i = 1}^m |\a_i^* \x|^2 |\a_i^* \z|^2.
\end{equation}
By Lemma \ref{lem:6.4}, when $m\ge C_1 k\log k$ for some sufficiently large $C_1$, w.h.p.,
\begin{equation}
\frac1m \sum_{i=1}^m |\a_i^* \z|^4 \ge \E \left [\frac1m \sum_{i=1}^m |\a_i^* \z|^4 \right ] -\frac{1}{100} \|\z\|^4
\end{equation}
for all $\z \in \mathbb{C}^\T$. On the other hand, Lemma \ref{lem:6.3} implies that when $m\ge C_2k\log k$ for some sufficiently large $C_2$, w.h.p.,
\begin{flalign}
\nonumber&\frac1m \sum_{i=1}^m |\a_i^* \x|^2 |\a_i^* \z|^2\nonumber \\
\le~ & \E \left [\frac1m \sum_{i=1}^m |\a_i^*\x|^2|\a_i^* \z|^2 \right] + \frac1{1000}\|\x\|^2 \|\z\|^2.
\end{flalign}

Then, w.h.p., it holds that
\begin{equation}
\z^* \nabla_1 f(\z) \ge \frac{1}{1000}\|\x\|^2\|\z\|^2.
\end{equation}
This completes the proof. 
\end{proof}

\subsection{Proof of {Proposition}~\ref{prop:2.6}}
\begin{proof}
We abbreviate $\phi(\z)$ as $\phi$ and denote $a = \frac{\|\x_\T\|}{\|\x\|}$, where $a^2 > \frac{9}{10}$. Note that
\begin{align}
\nonumber	(\z - \u \mathrm{e}^{\mathsf{j}\phi })^* \nabla_1 f(\z) =& \frac1m \sum_{i=1}^m |\a_i^* \z|^2 (\z - \u \mathrm{e}^{\mathsf{j}\phi})^* \a_i\a_i^*\z \\
&- \frac1m \sum_{i=1}^m |\a_i^* \x|^2 (\z - \u \mathrm{e}^{\mathsf{j}\phi})^* \a_i\a_i^* \z.
\end{align}

We first bound the second term. By Lemma \ref{lem:6.3}, when $m\ge C_1 k\log k$ for a sufficiently large constant $C_1$, w.h.p., for all $\z\in \mathbb{C}^\T$,
\begin{align}
\nonumber	&\Re\left[\frac1m \sum_{i=1}^m |\a_i^* \x|^2 (\z - \u \mathrm{e}^{\mathsf{j}\phi})^*\a_i\a_i^* \z \right]\\
\nonumber	=~&\Re\left[(\z-\u \mathrm{e}^{\mathsf{j}\phi})^* \E\left[\frac1m \sum_{i=1}^m |\a_i^* \x|^2 \a_i\a_i^*\right]\z\right] \\
\nonumber	&+ \Re\left[(\z - \u \mathrm{e}^{\mathsf{j}\phi})^* \Delta \z\right]~~~\text{(where $\|\Delta\|\le \delta \|\x\|^2$)} \\
\nonumber	\le~& \Re\left[(\z-\u \mathrm{e}^{\mathsf{j}\phi})^* \E\left[\frac1m \sum_{i=1}^m |\a_i^* \x|^2 \a_i\a_i^*\right]\z \right]\\
&+ \frac1{1000} \|\x\|^2 \|\z - \u \mathrm{e}^{\mathsf{j}\phi}\|\|\z\|.
\end{align}

To bound the first term, for a fixed $\tau$ to be determined later, define:
\begin{align}
{S(\z)} & \doteq \frac1m \sum_{i=1}^m |\a_i^* \z|^2 \Re\left[(\z - \u \mathrm{e}^{\mathsf{j}\phi})^* \a_i\a_i^* \z \right],\\
S_1(\z) & \doteq \frac1m \sum_{i=1}^m \left[|\a_i^* \z|^2 \Re\left[(\z - \u \mathrm{e}^{\mathsf{j}\phi})^* \a_i\a_i^* \z\right]\bf{1}_{\{|\a_\mathit{i}^*\u|\le \tau\}}\right],\\
\nonumber	S_2(\z) & \doteq \frac1m \sum_{i=1}^m\left[ |\a_i^* \z|^2 \Re\left[(\z - \u \mathrm{e}^{\mathsf{j}\phi})^* \a_i\a_i^* \z\right]\right.\\
&\left.\hspace{30mm}\bf{1}_{\{|\a_\mathit{i}^*\u|\le \tau\}}\bf{1}_{\{|\a_\mathit{i}^* \z|\le \tau\}}\right].
\end{align}

Obviously, $S_1(\z) \ge S_2(\z)$ for all $\z$ as 
\begin{align}
\nonumber	&S_1(\z) - S_2(\z)\\
\nonumber	=~& \frac1m \sum_{i=1}^m |\a_i^* \z|^2 \Re\left[(\z - \u \mathrm{e}^{\mathsf{j}\phi})^* \a_i\a_i^* \z\right]\bf{1}_{\{|\a_\mathit{i}^*\u|\le \tau\}}\bf{1}_{\{|\a_\mathit{i}^*\z|>\tau\}}\\
\nonumber	\ge~& \frac1m \sum_{i=1}^m |\a_i^* \z|^2 (|\a_i^* \z|^2 - |\a_i^*\u| |\a_i^* \z|)\bf{1}_{\{|\a_\mathit{i}^*\u|\le \tau\}}\bf{1}_{\{|\a_\mathit{i}^*\z|>\tau\}}\\
\ge~& 0.
\end{align}

Now for an $\epsilon \in (0, \|\u\|)$ to be fixed later, consider an $\epsilon$-net $N_\epsilon$ for the ball $\mathbb{C}\mathbb{B}^k(\|\u\|)$, with 
\begin{equation}
    |N_\epsilon| \le \left(\frac{3\|\u\|}{\epsilon}\right)^{2k}.
\end{equation}
On the complement of the event $\left\{\max_{i\in [m]}|\a_i^*\u|> \tau\right\}$, we have for any $t>0$ that 
\begin{align}
\nonumber	& \P \big (S(\z) - \E[S(\z)] < -t, \exists \z \in N_\epsilon \big )\\
\nonumber	\le~& |N_\epsilon| \cdot \P \big (S(\z) - \E[S(\z)]<-t \big )\\
\le~& |N_\epsilon| \P(S_1(\z) - \E[S_1(\z)]<-t + |\E[S_1(\z)] - \E[S(\z)]|).
\end{align}
Because $S_1(\z)\ge S_2(\z)$ as shown above
\begin{align}
\nonumber	&\P(S_1(\z) - \E[S_1(\z)]<-t + |\E[S_1(\z)] - \E[S(\z)]|)\\
\nonumber	\le~& \P\left(S_2(\z) - \E[S_2(\z)]<-t + |\E[S_1(\z)]-  \E[S(\z)]| \right.\\
&\hspace{40mm}\left. + |\E[S_1(\z)] - \E[S_2(\z)]|\right).
\end{align}

Thus, the unconditional probability can be bounded as:
\begin{align}
\nonumber	&\P\left(S(\z) - \E[S(\z)]<-t, \exists \z\in N_\epsilon\right)\\
\nonumber	 \le~& |N_\epsilon| \cdot \P \Big ( \left.S_2(\z) - \E[S_2(\z)]  -t + \big |\E[S_1(\z)] - \E[S(\z)] \big |  \right. \nonumber \\
& \left. +~\big | \E[S_1(\z)] - \E[S_2(\z)] \big | \right.\Big ) + \P\left(\max_{i\in [m]} |\a_i^*\u| > \tau\right).
\end{align}

Taking $\tau = \sqrt{10 \log m}\|\u\|$, we obtain
\begin{equation}
\P\left(\max_{i\in [m]}|\a_i^*\u|>\tau\right)\le m\exp\left(-\frac{10\log m}{2}\right) = m^{-4},
\end{equation}
\begin{align}
\nonumber	&\hspace{-10mm} |\E[S_1(\z)] - \E[S(\z)]|\\
\nonumber	\le~& \sqrt{\E[|\a_i^* \z|^6 |\a_i^* (\z-\u \mathrm{e}^{\mathsf{j}\phi})|^2]}\sqrt{\P(|\a_i^* \u|>\tau)}\\
\nonumber	\le~& \sqrt{\E_{Z\sim \mathcal{CN}}\left[|Z|^8\right]}\sqrt{\P_{Z\sim \mathcal{CN}}(|Z|> \sqrt{10\log m})}\\
\nonumber&\hspace{10mm}\cdot\|\z\|^3 \|\z - \u \mathrm{e}^{\mathsf{j}\phi}\|\\
\le~& 2\sqrt{6}m^{-\frac52}\|\z\|^3 \|\z - \u \mathrm{e}^{\mathsf{j}\phi}\|,
\end{align}
and
\begin{align}\label{leq:Es1-Es2}
\nonumber	&\hspace{-10mm} |\E[S_1(\z)] - \E[S_2(\z)]|\\
\nonumber	\le~& \sqrt{\E[|\a_i^* \z|^6 |\a_i^* (\z-\u \mathrm{e}^{\mathsf{j}\phi})|^2\bf{1}_{|\a_\mathit{i}^* \u|\le \tau}]}\sqrt{\P(|\a_\mathit{i}^* \z|>\tau)}\\
\le~& 2\sqrt{6}m^{-\frac52}\|\z\|^3 \|\z - \u \mathrm{e}^{\mathsf{j}\phi}\|,
\end{align}
where we have used $\|\z\|\le \|\u\|$ to simplify~\eqref{leq:Es1-Es2}. Now we use the moment-control Bernstein's inequality (Lemma \ref{lem:A.8}) to get a bound for probability on deviation of $S_2(\z)$. To this end, we have
\begin{align}
\nonumber	&\E\left[|\a_i^* \z|^6 |\a_i^* (\z - \u \mathrm{e}^{\mathsf{j}\phi})|^2 \bf{1}_{\{|\a_\mathit{i}^* \u|\le \tau\}}\bf{1}_{\{|\a_\mathit{i}^* \z|\le \tau\}}\right]\\
\nonumber	&\le~ \tau^4 \E[|\a_i^* \z|^2 |\a_i^* (\z - \u \mathrm{e}^{\mathsf{j}\phi})|^2]\\
\nonumber	&\le~ \tau^4 \E_{Z\sim \mathcal{CN}}[|Z|^4]    \|\z\|^2\|\z - \u \mathrm{e}^{\mathsf{j}\phi}\|^2\\
\nonumber	&=~ 200\log^2 m \|\u\|^4 \|\z\|^2\|\z - \u \mathrm{e}^{\mathsf{j}\phi}\|^2\\
&\overset{(a)}{\leq}~ 200\log^2 m \|\x\|^4 \|\z\|^2\|\z - \u \mathrm{e}^{\mathsf{j}\phi}\|^2
\end{align}
and
\begin{align}
\nonumber &\E\left[|\a_i^* \z|^{3p} |\a_i^* (\z - \u \mathrm{e}^{\mathsf{j}\phi})|^p \bf{1}_{\{|\a_\mathit{i}^* \u|\le \tau\}}\bf{1}_{\{|\a_\mathit{i}^* z|\le \tau\}}\right]\\
\nonumber	&\le~ \tau^{2p}\E[|\a_i^* \z|^p |\a_i^* (\z - \u \mathrm{e}^{\mathsf{j}\phi})|^p]\\
\nonumber	&\le~ (10\log m\|\u\|^2)^p p! \|\z\|^p \|\z - \u \mathrm{e}^{\mathsf{j}\phi}\|^p\\
&\overset{(b)}{\leq}~ (10\log m\|\x\|^2)^p p! \|\z\|^p \|\z - \u \mathrm{e}^{\mathsf{j}\phi}\|^p,
\end{align}
where (a) and (b) use the fact that $$\|\u\|=\sqrt{\frac{\|\x_\T\|^2+\|\x\|^2}{2}}\leq \|\x\|,$$ and the second inequality holds for any integer $p\ge 3$. 

Hence, one can take 
\begin{align}
\sigma^2 &= 200 \log^2 m \|\x\|^4\|\z\|^2 \|\z - \u \mathrm{e}^{\mathsf{j}\phi}\|^2,\\
R &= 10 \log m \|\x\|^2 \|\z\|\|\z - \u \mathrm{e}^{\mathsf{j}\phi}\|
\end{align}
in Lemma \ref{lem:A.8}, and 
\begin{equation}
t = \frac{1}{1000} \|\x\|^2 \|\z\| \|\z - \u \mathrm{e}^{\mathsf{j}\phi}\|
\end{equation}
in the deviation inequality of $S_2(\z)$ to obtain
\begin{align}
\nonumber	&\P \Big (S_2(\z) - \E[S_2(\z)]\\
\nonumber	&<-t +  \big |\E[S_1(\z)] - \E[S(\z)] \big | + \big |\E[S_1(\z)] - \E[S_2(\z)] \big | \Big )\\
\nonumber	\le~& \P\left(S_2(\z) - \E[S_2(\z)]<-\frac{1}{2000} \|\x\|^2 \|\z\| \|\z - \u \mathrm{e}^{\mathsf{j}\phi}\|\right)\\\le~&  \exp\left(-\frac{c_1 m}{\log^2 m }\right),
\end{align} 
where we have used the fact $\|\z\|\le \|\u\|$ and assume $$2\sqrt 6 m^{-5/2}\le \frac{1}{4000}$$ to simplify the probability.

Thus, with probability at least $$1 - m^{-4} - \exp(-c_1 m / \log ^2 m + 2 k \log (3\|\u \|/\epsilon)),$$ it holds that 
\begin{equation}
S(\z)\ge \E[S(\z)] - \frac{1}{1000}\|\x\|^2 \|\z\| \|\z - \u \mathrm{e}^{\mathsf{j}\phi}\|, \quad \forall \z \in N_\epsilon.
\end{equation}

Moreover, for any $\z, \z' \in \mathcal{R}_2^{\h}$, we have
\begin{align}
\nonumber	&|S(\z) - S(\z')|\\
\nonumber	=~ &\bigg  | \frac1m \sum_{i=1}^m |\a_i^* \z|^2 \Re\left[(\z - \u \mathrm{e}^{\mathsf{j}\phi})^* \a_i\a_i^* \z\right] \\
\nonumber	&- \frac1m \sum_{i=1}^m |\a_i^* \z'|^2 \Re\left[(\z' - \u \mathrm{e}^{\mathsf{j}\phi})^*\a_i\a_i^* \z'\right] \bigg | \\
\nonumber	\le~& \frac 1m\sum_{i=1}^m \left||\a_i^* \z|^2 - |\a_i^* \z'|^2 \right||\h(\z)^* \a_i\a_i^* \z| \\
&+ \frac1m \sum_{i=1}^m |\a_i^* \z'|^2 \Big | \h(\z)^* \a_i\a_i^* \z - \h(\z')\a_i\a_i^* \z' \Big |.
\end{align}
For the first term, we have
\begin{align}
\nonumber	& \frac 1m\sum_{i=1}^m ||\a_i^* \z|^2 - |\a_i^* \z'|^2| |\h(\z)^* \a_i\a_i^* \z|\\
\nonumber	=~& \frac1m \sum_{i=1}^m (|\a_i^* \z| + |\a_i^* \z'|)||\a_i^* \z| - |\a_i^* \z'|| |\h(\z)^* \a_i\a_i^* \z|\\
\nonumber	\le~&\frac1m \sum_{i=1}^m (|\a_i^* \z| + |\a_i^* \z'|)||\a_i^* (\z - \z')|| |\h(\z)^* \a_i\a_i^* \z|\\
\nonumber	\le~&\frac1m \sum_{i=1}^m 2\|(\a_i)_\T\|^4 \max\{\|\z\|, \|\z'\|\} \|\z - \z'\|(\|\z\|^2 + \|\z\|\|\u\|)\\
\nonumber	\le~& \frac 4 m \sum_{i=1}^m \|(\a_i)_\T\|^4 \|\u\|^3\|\z - \z'\|\\
\le~&4\max_{i\in [m]}\|(\a_i)_\T\|^4 \|\u\|^3\|\z - \z'\|.
\end{align}
For the second term, we have
\begin{align}
\nonumber	&\frac1m \sum_{i=1}^m |\a_i^* \z'|^2 |\h(\z)^* \a_i\a_i^* \z - \h(\z')^* \a_i\a_i^* \z'|\\
\nonumber	\le~& \frac1m \sum_{i=1}^m \|(\a_i)_\T\|^2 \|\u\|^2 (|\z^*\a_i\a_i^* \z - (\z')^*\a_i\a_i^* \z'|\\
& + |\u^* \mathrm{e}^{-\mathsf{j}\phi}\a_i\a_i^*\z - \u^*  \mathrm{e}^{-\mathsf{j}\phi'}\a_i\a_i^* \z'|),
\end{align}
where 
\begin{equation}
|\z^*\a_i\a_i^* \z - (\z')^*\a_i\a_i^* \z'| \le 2\|(\a_i)_\T\|^2 \|\u\|\|\z - \z'\|.
\end{equation}

Moreover,
\begin{align}
\nonumber	 & |\u^* \mathrm{e}^{-\mathsf{j}\phi}\a_i\a_i^*\z - \u^* \mathrm{e}^{-\mathsf{j}\phi'}\a_i\a_i^* \z'|\\
\nonumber	=~&|\a_i^*\z(\u \mathrm{e}^{\mathsf{j}\phi})^* \a_i - \a_i^* \z'(\u \mathrm{e}^{\mathsf{j}\phi'})^* \a_i|\\
\nonumber	\le~&\|(\a_i)_\T\|^2 \|(\z \mathrm{e}^{-\mathsf{j}\phi} - \z' \mathrm{e}^{-\mathsf{j}\phi'})\u^* \|\\
\nonumber	=~& \|(\a_i)_\T\|^2 |\u^* (\z \mathrm{e}^{-\mathsf{j}\phi} - \z' \mathrm{e}^{-\mathsf{j}\phi'})|\\
\nonumber	=~& \|(\a_i)_\T\|^2 | (|\u^* \z| - |\u^* \z'|)|\\
\le~& \|(\a_i)_\T\|^2 \|\u\|\|\z - \z'\|.
\end{align}
Thus, we have
\begin{align}
\nonumber	&\frac1m \sum_{i=1}^m |\a_i^* \z'|^2 |\h(\z)^* \a_i\a_i^* \z - \h(\z')^* \a_i\a_i^* \z'| \\
\le~& 3 \max_{i\in[m]}\|(\a_i)_\T\|^4 \|\u\|^3 \|\z -\z'\|
\end{align}
and 
\begin{align}
\nonumber	|S(\z) - S(\z')| &\le 7 \max_{i\in[m]}\|(\a_i)_\T\|^4 \|\u\|^3 \|\z -\z'\|\\
&\le 70 k^2 \log ^2 m \|\u\|^3 \|\z - \z'\|,
\end{align}
since $$\max_{i\in [m]} \|(\a_i)_\T\|^4\le 10 k^2 \log ^2 m$$ with probability at least $1 - c_2 m ^{-4}$. Note that every $\z \in \mathcal{R}_2^\h$ can be written as $$\z = \z' + \e,$$ with $\|\e\|\le \epsilon$ and $\z' \in N_\epsilon$.  Therefore,
\begin{align}
\nonumber	S(\z) &\ge S(\z') - 70 n^2 \log^2 m \|\u\|^3 \epsilon\\
\nonumber	&\ge 2\|\z'\|^4 - 2\|\z'\|^2 |\u^* \z'| - \frac1{1000} \|\u\|^2 \|\z'\|\|\z' - \u \mathrm{e}^{\mathsf{j}\phi}\| \\
\nonumber	&\hspace{10mm}- 70 k^2 \log ^2 m \|\u \|^3 \epsilon\\
\nonumber	&\ge \E[S(\z)] - \frac1{1000} \|\u\|^2 \|\z\| \|\z- \u \mathrm{e}^{\mathsf{j}\phi}\| \\
&\hspace{10mm}- 11\epsilon \|\u\|^3 - 70 k^2 \log^2 m \|\u\|^3 \epsilon,
\end{align}
where the additional $11\epsilon \|\u\|^3$ term in the third line is to account for the change from $\z'$ to $\z$, which has been simplified by assumptions (i.e., $\z\in\R_2^{\h}$) that $$\frac{11}{20}\|\u\|\le \|\z\|\le \|\u\|$$ and that $\epsilon \le \|\u\|$.

Choosing $$\epsilon = \frac{\|\u\|}{c_3 k^2 \log^2 m }$$ for a sufficiently large $c_3 > 0 $, and additionally using $$\operatorname{dist}(\z, \u)\ge \frac{1}{3} \|\x\| = \frac{1}{3}\sqrt{2/(1+a^2)}\|\u\|$$ since $\z\in\R_2^{\h}$ (i.e., $\|\u\|\leq \frac{3}{\sqrt{2/(1+a^2)}}\|\z- \u \mathrm{e}^{\mathsf{j}\phi}\|$), we obtain that 
\begin{align}
\nonumber	S(\z)\ge~& \E[S(\z)] - \frac 1{500} \|\u\|^2 \|\z\|\|\z - \u \mathrm{e}^{\mathsf{j}\phi}\|\\
\ge~& \E[S(\z)] - \frac 1{500} \|\x\|^2 \|\z\|\|\z - \u \mathrm{e}^{\mathsf{j}\phi}\|
\end{align} 
for all $\z \in \mathcal{R}_2^\h$, with probability at least $$1 - c_4 m^{-1} - c_5 \exp\left(-c_1\frac{m}{\log^2 m} + c_7 k\log (c_6 k\log m)\right).$$

Combining the above estimates, when $m\ge Ck\log ^3 k $ for sufficiently large constant $c$, w.h.p.,
\begin{equation}
\Re\left[(\z - \u \mathrm{e}^{\mathsf{j}\phi})^* \nabla_1 f(\z)\right]\ge \frac1{1000} \|\x \|^2 \|\z\|\|\z - \u \mathrm{e}^{\mathsf{j}\phi}\|>0
\end{equation}
for all $\z \in \mathcal{R}_2 ^\h$ as claimed.
\end{proof}

\subsection{Proof of {Proposition}~\ref{prop:2.7}}
\begin{figure*}[t]
\normalsize
\begin{eqnarray}
	\mathcal{R}_a& \doteq& \left\{\z\in\C^\T,|\x_\T^* \z|\le \frac12\|\x_\T\| \|\z\|\right\},\label{area:a}\\
	\mathcal{R}_b &\doteq& \left\{\z\in\C^\T,|\x_\T^* \z|\ge \frac12\|\x_\T\|\|\z\|, \|\z\|\le \frac{56}{100}\|\x\|\right\},\label{area:b}\\
	\mathcal{R}_c &\doteq& \left\{\z\in\C^\T,\frac12\|\x_\T\|\|\z\|\le |\x_\T^* \z|\le \frac{99}{100}\|\x_\T\|\|\z\|, \|\z\|\ge \frac{11}{20}\|\x\|\right\},\label{area:c}\\
	\mathcal{R}_d &\doteq& \left\{\z\in\C^\T,\frac{99}{100}\|\x_\T\|\|\z\|\le |\x_\T^* \z|\le \|\x_\T\|\|\z\|, \|\z\|\ge \frac{11}{20}\|\x\|\right\},\label{area:d}\\
	\mathcal{R}_2^{\h '} &\doteq& \left\{\z\in\C^\T, \Re[\langle \h(\z), \nabla_1\E[f(\z)]\rangle]\ge \frac1{250}\|\x\|^2\|\z\|\|\h(\z)\|,  \|\z\|\le \|\u\|\right\}.\label{area:r2hrelax}
\end{eqnarray}
\hrulefill
\vspace*{4pt}
\end{figure*}
It is difficult to directly show that $\C^\T = \R_1\cup\R_2^\z\cup\R_2^\h\cup\R_3$. We follow the methods in~\cite{sunju} and construct four new area $\R_a$, $\R_b$, $\R_c$, and $\R_d$ (See~\eqref{area:a}-\eqref{area:d}). Besides, we construct a relaxed area $\mathcal{R}_2^{\h '}$, as specified in~\eqref{area:r2hrelax}.

Before showing the proposition, we further explore some detailed conditions and equivalence on the areas $\R_1$, $\R_2^\z$, and $\R_2^{\h'}$. These will be beneficial to our proceeding proof. Here, we denote $a = \frac{\|\x_\T\|}{\|\x\|}$, and consider $a^2 \geq \frac{9}{10}$.
\begin{itemize}
\item For $\mathcal R_1$, notice that $\u = \omega_\T \x_\T$, then 
\begin{align}
\nonumber	&\begin{bmatrix}
	\u \mathrm{e}^{\mathsf{j} \phi(\z)} \\
	\overline{\u} \mathrm{e}^{-\mathsf{j} \phi(\z)}
\end{bmatrix}^{*} \mathbb{E}\left[\nabla^{2} f(\z)\right]\begin{bmatrix}
	\u \mathrm{e}^{\mathsf{j} \phi(\z)} \\
	\overline{\u} \mathrm{e}^{-\mathsf{j} \phi(\z)}
\end{bmatrix}\\
\nonumber	=~& 8|\u^* \z|^2 - 2|\u^* \x|^2 + 4\|\z\|^2 \|\u\|^2 - 2\|\x\|^2 \|\u\|^2\\
\nonumber	=~& |\omega_\T|^2\left(8|\x_\T^* \z|^2 -2\|\x_\T\|^4 \right.\\
&\left.+4\|\z\|^2 \|\x_\T\|^2 - 2\|\x\|^2 \|\x_\T\|^2\right)
\end{align} 
and 
\begin{align}
\nonumber	&-\frac1{100}\|\u\|^2 \|\z\|^2 - \frac1{50} \|\u\|^4 \\
=~& -\frac1{100} |\omega_\T|^2 \|\x_\T\|^2 \|\z\|^2 - \frac{1}{50}|\omega_\T|^4\|\x_\T\|^4 .
\end{align}
Thus, the inequality in $\mathcal R_1$ is indeed
\begin{align}
\nonumber	&8|\x_\T^* \z|^2 - 2\|\x_\T\|^4 + 4\|\z\|^2 \|\x_\T\|^2 - 2\|\x\|^2 \|\x_\T\|^2 \\
\le&  -\frac1{100} \|\x_\T\|^2 \|\z\|^2 - \frac{1}{50}|\omega_\T|^2\|\x_\T\|^4,
\end{align} 
which can be simplified as 
\begin{equation}
8|\x_\T^* \z|^2 + \frac{401}{100}\|\x_\T\|^2 \|\z\|^2 \le \frac{199}{100}\|\x_\T\|^2(\|\x_\T\|^2 + \|\x\|^2) . 
\end{equation}

\item For $\mathcal R_2^\z$, we have 
\begin{align}
\nonumber	\langle \z, \nabla_1 \E[f(\z)]\rangle & = \z^* \left(2\|\z\|^2 \boldsymbol{I}-\|\x\|^2 \boldsymbol{I}-\x \x^*\right) \z\\
\nonumber	& = 2\|\z\|^4 - \|\x\|^2 \|\z\|^2 - |\x^* \z|^2\\
& = 2\|\z\|^4 - \|\x\|^2 \|\z\|^2 - |\x_\T^* \z|^2.
\end{align} 

Thus, the inequality in $\mathcal R_2^\z$ is indeed
\begin{equation}
\frac{501}{500}\|\x\|^2 \|\z\|^2 + |\x_\T^* \z|^2 \le \frac{199}{100}\|\z\|^4.
\end{equation}  
\item For $\mathcal R_2^{\h'}$, we have 
\begin{align}
\nonumber	&\langle \h(\z), \nabla_1 \E[f(\z)]\rangle\\
\nonumber	=~ & (\z - \u \mathrm{e}^{\mathsf{j}\phi})^* \left(2\|\z\|^2 \boldsymbol{I}-\|{\x}\|^2 \boldsymbol{I}-{\x} {\x}^*\right) \z\\
\nonumber	=~ &\left(2\|\z\|^2 -\|{\x}\|^2\right)\\
\nonumber	&\cdot\left(\|\z\|^2 - |\u^* \z|\right) - |\x^* \z|^2 + \|\x_\T\|^2 |\u^* \z|\\
\nonumber	=~ &\left(2\|\z\|^2 -\|{\x}\|^2\right)\\
~& \cdot\left(\|\z\|^2 - |\omega_\T| |\x_\T^* \z|\right) - |\x_\T^* \z|^2 + |\omega_\T|\|\x_\T\|^2 |\x_\T^* \z|.
\end{align} 
Denote $\xi = \frac{\|\z\|}{ \|\x\|}$ and $\eta = \frac{|\x_\T^* \z|} {\|\x_\T\|\|\z\|}$, then
\begin{align}\label{eq:p}
\nonumber		& \langle \h(\z), \nabla_1 \E[f(\z)]\rangle \\
\nonumber		=~ &\left(2\xi^2 \|\x\|^2 -\|\x\|^2\right) \left(\xi^2 \|\x\|^2 - |\omega_\T|\cdot \eta a \xi \|\x\|^2\right) \\
\nonumber		&~  - \eta^2 \xi^2 a^2 \|\x\|^4 + |\omega_\T| \xi a^3 \eta \|\x\|^4\\
\nonumber		=~ &\xi \|\x\|^4 \left[\left(2\xi^2 -1\right)\left(\xi - |\omega_\T|\eta a \right) - \eta^2 \xi a^2 + |\omega_\T| a^3 \eta\right]\\
\doteq~ 	& \xi \|\x\|^4 p(\xi, \eta),
\end{align}
where $$|\omega_\T| = \sqrt{\frac{\|\x_\T\|^2 + \|\x\|^2}{2 \|\x_\T\|^2}} = \sqrt{\frac{1 + a^2}{2a^2}}.$$ For the right hand side of the inequality in $\mathcal R_2^{\h'}$, 
\begin{align}\label{eq:q}
\nonumber	&\frac1{250}\|\x\|^2\|\z\|\|\h(\z)\| \\
\nonumber	=~ & \frac1{250} \|\x\|^2 \xi \|\x\| \sqrt{\|\z\|^2 + \|\u\|^2 - 2|\u^* \z|}\\
\nonumber	=~ &\frac1{250} \xi \|\x\|^4 \sqrt{\xi^2 + |\omega_\T|^2 a^2 - 2|\omega_\T| \eta \xi a}\\
\doteq~ &  \frac1{250} \xi \|\x\|^4 q(\xi, \eta).
\end{align}
Hence, the inequality in $\mathcal R_2^{\h'}$ is equivalent to 
\begin{equation}\label{area:R2hleq}
p(\xi, \eta) \ge  \frac1{250} q(\xi, \eta). 
\end{equation}
\end{itemize}
Now we proceed to prove that 
\begin{equation}
\C^\T = \R_1\cup\R_2^\z\cup\R_2^\h\cup\R_3.
\end{equation}
\begin{proof}
First, it is trivial to verify that $\C^\T = \mathcal{R}_a\cup \mathcal{R}_b\cup \mathcal{R}_c\cup \mathcal{R}_d$. Then we will show that $\R_a$, $\R_b$, $\R_c$, and $\R_d$ are subsets of $\R_1\cup\R_2^\z\cup\R_2^{\h'}\cup\R_3$.
\begin{enumerate}[i)]
\item $\mathcal{R}_a\subseteq \mathcal{R}_1 \cup \mathcal{R}_2^\z$.
\begin{proof}
	$\forall \z \in \mathcal{R}_a$, when $\|\z\|^2<\frac{199}{601}(a^2+1)$, we have
	\begin{align}
		\nonumber	8|\x_\T^* \z|^2+\frac{401}{100}\|\z\|^2\|\x_\T\|^2&\le \frac{601}{100}\|\z\|^2\|\x_\T\|^2\\
		&< \frac{199}{100}(a^2+1)\|\x\|^2\|\x_\T\|^2,
	\end{align}
	which means $$\mathcal{R}_a\cap \left\{\z\in\C^\T, \|\z\|^2<\frac{199}{601}(a^2+1)\|\x\|^2\right\}\subseteq \mathcal{R}_1.$$
	
	For all $\z \in \mathcal{R}_a$, when $$\|\z\|^2>\frac{125a^2+501}{995}\|\x\|^2,$$ we have
	\begin{align}
		\nonumber	\frac{501}{500}\|\x\|^2 \|\z\|^2 + |\x_\T^* \z|^2 &\le \frac{501}{500}\|\x\|^2 \|\z\|^2 + \frac14 \|\x_\T\|^2 \|\z\|^2\\
		&<\frac{199}{100}\|\z\|^4,
	\end{align}
	which means $$\mathcal{R}_a\cap \left\{\z\in\C^\T,\|\z\|^2>\frac{125a^2+501}{995}\|\x\|^2 \right\}\subseteq \mathcal{R}_2^\z.$$ 
	
	~~Since
	\begin{equation}\label{c1}
		\frac{125a^2+501}{995} < \frac{199}{601}(a^2 + 1),\quad \forall a^2\in \left [\frac{9}{10}, 1 \right ],
	\end{equation}
	we can conclude that $\mathcal{R}_a\subseteq \mathcal{R}_1 \cup \mathcal{R}_2^\z$.
\end{proof}
\item $\mathcal{R}_b\subseteq \mathcal{R}_1$.
\begin{proof}
	$\forall \z \in \mathcal{R}_b$, we have 
	\begin{align}
		\nonumber	8|\x_\T^* \z|^2+\frac{401}{100}\|\z\|^2\|\x_\T\|^2 &\le \frac{1201}{100}\|\z\|^2\|\x_\T\|^2\\
		&\le \frac{58849}{15625} \|\x\|^2 \|\x_\T\|^2.
	\end{align}
	
	Since
	\begin{equation}
		\frac{58849}{15625} < \frac{199}{100} (1 + a^2),\quad \forall a^2\in \left [\frac{9}{10}, 1 \right ],
	\end{equation} 
	we have $\mathcal{R}_b\subseteq \mathcal{R}_1$.
\end{proof}

\item $\mathcal{R}_c\subseteq \mathcal{R}_2^\z\cup \mathcal{R}_2^{\h '}$.
\begin{proof}
	$\forall \z \in \mathcal{R}_c$, when $$\|\z\|^2>\frac{10020 + 9801a^2}{19900} \|\x\|^2,$$ we have
	\begin{align}
		\nonumber&	\frac{501}{500}\|\x\|^2 \|\z\|^2 +|\x_\T^* \z|^2\nonumber	\\
		\nonumber\overset{(b)}{\leq}~& \left(\frac{501}{500}+ \left(\frac{99}{100}\right)^2 a^2\right)\|\x\|^2 \|\z\|^2\\
		<~&\frac{199}{100}\|\z\|^4,
	\end{align}
	where (b) is due to $|\x_\T^* \z|<\frac{99}{100}\|\x\|$ for $\z\in \R_c$.
	Thus, we conclude that $$\mathcal{R}_c\cap\left\{\z\in\C^\T,\|\z\|^2>\frac{10020 + 9801a^2}{19900} \|\x\|^2\right\}\subseteq \mathcal{R}_2^\z.$$ 
	
	~~Next, we will show the rest of $\R_c$ is covered by $\R_2^{\h'}$.
	First, for $$\z\in\C^\T,\|\z\|\leq \sqrt{\frac{10020 + 9801a^2}{19900} }\|\x\|.$$ It is easy to see that 
 	\begin{equation}
  \sqrt{\frac{10020 + 9801a^2}{19900}}\le \sqrt{\frac{1+a^2}{2}}
  	\end{equation}
   	holds for all $a^2\in \left [\frac{9}{10}, 1 \right ]$, and $$\sqrt{\frac{1+a^2}{2}}\|\x\|=\|\u\|.$$ Therefore, 
	\begin{align}\label{b12}
		\|\z\|\le \|\u\|.
	\end{align}
	Then, we consider $p(\xi, \eta)$ and $q(\xi, \eta)$ with $$\xi\in \left[\frac{11}{20}, \sqrt{\frac{10020 + 9801a^2}{19900}}\right],$$ $\eta \in \left[\frac12, \frac{99}{100}\right]$, where $p(\xi, \eta)$ and $q(\xi, \eta)$ are defined in~\eqref{eq:p} and~\eqref{eq:q}. We will show~\eqref{area:R2hleq}, i.e., verify that
	\begin{equation}\label{b111}
		\min_{\xi, \eta} \left(p(\xi, \eta) - \frac1{250} q(\xi, \eta)\right)>0.
	\end{equation}
	Note that 
	\begin{flalign}
		&q(\xi, \eta) \nonumber&\\
		\le& \max\left\{q\left(\frac{11}{20},\frac12\right), q\left(\sqrt{\frac{10020 + 9801a^2}{19900}}, \frac12\right)\right\},
	\end{flalign}
	since it is obvious that $q^2(\xi, \eta)$ is a parabola of $\xi$ opens upward and $q(\xi, \eta)$ is a decreasing function of $\eta$. Then, one can verify that $q\left(\frac{11}{20},\frac12\right)$ and $q\left(\sqrt{\frac{10020 + 9801a^2}{19900}}, \frac12\right)$ are both increasing functions of $a$. Take $a = 1$ and we will find them both smaller than $1$. It means
	\begin{equation}
		\frac1{250} q(\xi,\eta) < \frac{1}{250},\quad\forall a^2 \in \left [\frac{9}{10}, 1 \right ]. 
	\end{equation}

	~~On the other hand, it is trivial that $p(\xi_0, \eta)$ is a parabola of $\eta$ opens downward. Thus, for any fixed $\xi_0$, the function $p(\xi_0, \eta)$ are minimized at either $\eta = \frac12$ or $\eta = \frac{99}{100}$. Thus,
	\begin{equation}
		\min_{\xi, \eta} p(\xi, \eta) = \min\left\{\min_{\xi}p\left(\xi, \frac12\right), \min_{\xi}p\left(\xi, \frac{99}{100}\right)\right\}. 
	\end{equation}
	We can see that 
	\begin{equation}
		\frac{\partial p}{\partial \xi} (\xi, \eta)= 6\xi^2 - 4a|\omega_\T| \eta \xi - 1- a^2\eta^2 ,
	\end{equation}
	and its right zero point is
	\begin{equation}
		\xi^*(\eta) = \frac{4a|\omega_\T| \eta + \sqrt{16a^2 |\omega_\T|^2 \eta^2 + 24(1 + a^2 \eta^2)}}{12} .
	\end{equation}

	~~Hence, for any fixed $\eta$, $p(\xi, \eta)$ decreases in $[0, \xi^*(\eta)]$ and increases on $[\xi^*(\eta), \infty]$. Further, we get 
	\begin{flalign}
		&\min_{\xi, \eta} p(\xi, \eta) &\nonumber\\
		\geq~& \min\left\{p\left(\xi^*\left(\frac12\right), \frac12\right), p\left(\xi^*\left(\frac{99}{100}\right), \frac{99}{100}\right)\right\}.
	\end{flalign}
	One can verify that the right hand side is increasing with $a$. Thus we take $a^2 = \frac{9}{10}$ and get 
	\begin{equation}
		\min_{\xi, \eta} p(\xi, \eta)\ge 0.017,\quad\forall a^2 \in \left [\frac{9}{10}, 1 \right ].  
	\end{equation}

	~~Hence, in this region we have 
	\begin{equation}
		p(\xi, \eta) >  \frac1{250} q(\xi, \eta), \forall a\in \left [\frac{9}{10}, 1 \right ]. 
	\end{equation}
\end{proof}
\item $\mathcal R_d \subseteq \mathcal R_2^\z \cup \mathcal R_2^{\h'}\cup \mathcal R_3$.
\begin{proof}
	First, for any $\z\in \mathcal R_d$, when $\|\z\|\ge \sqrt{\frac{501}{995}(1 + a^2)} \|\x\|$, 
	\begin{align}
		\nonumber	\frac{501}{500}\|\x\|^2\|\z\|^2 + |\x_\T^* \z|^2 &\le \frac{501}{500}(1 + a^2)\|\x\|^2 \|\z\|^2\\
		&\le \frac{199}{100}\|\z\|^4 .
	\end{align}
	So, $$\mathcal R_d \cap \left\{\z\in\C^\T:\|\z\|\ge \sqrt{\frac{501}{995}(1 + a^2)}\|\x\|\right\}\subseteq\R_2^\z.$$ 
	
	~~Next, we show that any $\z\in \mathcal R_d$ with $\|\z\|\le \frac{47}{50} \|\x\|$ is contained in $\mathcal R_2^{\h '}$. Similar to the above argument for $\mathcal R_c$, it is enough to show that 
	\begin{equation}
		\min_{\xi, \eta} \left(p(\xi, \eta) - \frac1{250} q(\xi, \eta)\right)>0,\quad\forall a^2 \in \left [\frac{9}{10}, 1 \right ], 
	\end{equation} 
	with $\xi \in \left[\frac{11}{20}, \frac{47}{50}\right]$ and $\eta \in \left[\frac{99}{100}, 1\right]$. Notice that in this region, 
	\begin{equation}
		q(\xi, \eta) \le \max\left\{q\left(\frac{11}{20}, \frac{99}{100}\right), q\left(\frac{47}{50}, \frac{99}{100}\right)\right\}. 
	\end{equation}
	It is easy to verify that the right hand side is increasing with $a$. Take $a = 1$ and we can obtain that
	\begin{equation}
		\frac{1}{250} q(\xi,\eta) < 0.0018,\quad\forall a^2 \in \left [\frac{9}{10}, 1 \right ].  
	\end{equation}

	~~On the other hand, recall that $p(\xi_0, \eta)$ is a parabola of $\eta$ opens downward. Then the minimizer must occur on the boundary. For fixed $\xi_0$, the function $p(\xi_0, \eta)$ are minimized at either $\eta = \frac{99}{100}$ or $\eta = 1$. Thus, 
	\begin{equation}
		\min_{\xi, \eta} p\left(\xi, \eta\right) = \min\left\{\min_{\xi}p\left(\xi, 1\right), \min_{\xi}p\left(\xi, \frac{99}{100}\right)\right\}. 
	\end{equation} 
	However, notice that $\xi^* (\eta) \ge \xi^*(\frac{99}{100})$ since $\eta \in \left[\frac{99}{100}, 1\right]$ and $\xi^* (\eta)$ increases as $\eta$ increases. Moreover, it increases as $a$ increases. Therefore, taking $a^2 = \frac{9}{10}$, we get 
	\begin{equation}
		\xi^* (\eta) \ge  \xi^*\left.\left(\frac{99}{100}\right)\right|_{a^2 = \frac{9}{10}} \geq 0.944437 > \frac{47}{50}. 
	\end{equation} 
	Note that $\xi \in \left[\frac{11}{20}, \frac{47}{50}\right]$. It means that for all $a^2 \in [\frac{9}{10}, 1]$, it holds that
	\begin{align}
		\nonumber\min_{\xi, \eta} p(\xi, \eta)& = \min\left\{p\left(\frac{47}{50}, 1\right), p\left(\frac{47}{50}, \frac{99}{100}\right)\right\}\\
		&\ge 0.0046.
	\end{align}
	Hence, in this region we have 
	\begin{equation}
		p(\xi, \eta) >  \frac1{250} q(\xi, \eta),\quad \forall a\in \left [\frac{9}{10}, 1 \right ]. 
	\end{equation}
	Therefore, \begin{equation}
		\mathcal R_d \cap \left\{\z\in\C^\T:\|\z\|\le \frac{47}{50} \|\x\|\right\}\subseteq \mathcal R_2^{\h'}.
	\end{equation}
	
	~~Finally, we consider the case where \begin{equation}
		\frac{47}{50} \|\x\|\le \|\z\|\le \sqrt{\frac{501}{995}(1 + a^2)} \|\x\|.
	\end{equation} A similar $\xi, \eta$ argument as above leads to 
	\begin{equation}
		\|\h(\z)\|^2 = q^2(\xi, \eta)\|\x\|^2 \leq \frac17 \|\x\|^2,\quad \forall a^2 \in \left [\frac{9}{10}, 1 \right ], 
	\end{equation} 
	implying that \\$\mathcal R_d \cap \left\{\z: \frac{47}{50} \|\x\|\le \|\z\|\le \sqrt{\frac{501}{995}(1 + a^2)} \|\x\|\right\} \subseteq \mathcal R_3$.
\end{proof}
\end{enumerate}

In summary, now we obtain that $\C^\T= \mathcal{R}_a\cup \mathcal{R}_b\cup \mathcal{R}_c\cup \mathcal{R}_d \subseteq \mathcal{R}_1\cup \mathcal{R}_2^\z\cup \mathcal{R}_2^{\h'}\cup \mathcal{R}_3 \subseteq \C^\T$, which gives 
\begin{equation}
\mathcal{R}_1\cup \mathcal{R}_2^\z\cup \mathcal{R}_2^{\h'}\cup \mathcal{R}_3 = \z\in\C^\T. 
\end{equation}
Observe that $\mathcal R_2^{\h'}$ is only used to cover $\mathcal R_c\cup \mathcal R_d$, which is in turn a subset of $\{\z: \|\z\|\ge \frac{11}{20} \|\x\|\}$. Thus, 
\begin{align}
\nonumber	\C^\T
&= \mathcal{R}_1\cup \mathcal{R}_2^\z\cup \mathcal{R}_2^{\h'}\cup \mathcal{R}_3\\
\nonumber	& = \mathcal{R}_1\cup \mathcal{R}_2^\z\cup \left(\mathcal{R}_2^{\h'}\cap \Big \{\z: \|\z\|\ge \frac{11}{20} \|\x\| \Big \} \cap \mathcal R_3^c\right)\cup \mathcal{R}_3\\
\nonumber	& \subseteq \mathcal{R}_1\cup \mathcal{R}_2^\z\cup \mathcal{R}_2^{\h}\cup \mathcal{R}_3.\\
& \subseteq \C^\T.
\end{align} 
Therefore, \begin{equation}
\C^\T = \R_1\cup\R_2^\z\cup\R_2^\h\cup\R_3.
\end{equation}
Our proof is thus complete.
\end{proof}

\end{appendix}

\end{document}